\newcommand{\rrvert}{\vert}
\newcommand{\llvert}{\vert}
\newcommand{\eqref}[1]{(\ref{#1})}
\def\mod{\mathrm{mod}}
\newcommand{\bigtimes}{\mathop{\mbox{\fontsize{17}{17}\selectfont{$\!\times$}}}}
\def\integers{{\mathbb Z}}
\def\cal{\mathcal}
\newtheorem{claim}{Claim}[section]
\newtheorem{lemma}[claim]{Lemma}
\newtheorem{theorem}{Theorem}
\def
\def\E{{\mathbb E}} 
\def\const{C}
\def\Po{\operatorname{Poisson}}
\def\Binom{\operatorname{Binom}}
\def\nbr{\partial}
\def\dG{d_G}
\def\Ev{\mathsf{E}}
\def\BPzero{{$\mathrm{BP}_0$}}
\def\BPstar{{$\mathrm{BP}_*$}}
\def\F{\mathsf{F}}
\def\naturals{\mathbb{N}}
\def\reals{\mathbb{R}}
\def\T{{\cal T}}
\def\tT{\widetilde{\cal T}}
\def\U{\mathsf{U}}
\def\root{\varnothing}
\def\O{{\cal O}}
\newcommand{\bH}{\mathbb{H}}
\newcommand{\bQ}{{\mathbb{Q}}}
\newcommand{\ad}{\alpha_{\mathrm{ d}}}
\newcommand{\as}{\alpha_{\mathrm{ s}}}
\newcommand{\peel}{{\mathsf{J}}}
\newcommand{\colp}{{\mathsf{T}}}
\newcommand{\GC}{G_{\mathtt{C}}}
\newcommand{\FC}{F_{\mathtt{C}}}
\newcommand{\VC}{V_{\mathtt{C}}}
\newcommand{\EC}{E_{\mathtt{C}}}
\newcommand{\TC}{T_{\mathtt{C}}}
\newcommand{\NC}{N_{\mathtt{C}}}
\newcommand{\uxC}{{\underline{x}_{\mathtt{C}}}}
\newcommand{\core}{\mathtt{C}}
\newcommand{\hVC}{\widehat{V}_{\mathtt{C}}}
\newcommand{\GP}{G_{\mathtt{P}}}
\newcommand{\FP}{F_{\mathtt{P}}}
\newcommand{\VP}{V_{\mathtt{P}}}
\newcommand{\VPstar}{V_{\mathtt{P,*}}}
\newcommand{\EP}{E_{\mathtt{P}}}
\newcommand{\GB}{G_{\mathtt{B}}}
\newcommand{\FB}{F_{\mathtt{B}}}
\newcommand{\VB}{V_{\mathtt{B}}}
\newcommand{\EB}{E_{\mathtt{B}}}
\newcommand{\GNC}{G_{\mathtt{NC}}}
\newcommand{\FNC}{F_{\mathtt{NC}}}
\newcommand{\VNC}{V_{\mathtt{NC}}}
\newcommand{\ENC}{E_{\mathtt{NC}}}
\newcommand{\Nl}{N_{\mathtt{ l}}}
\newcommand{\ECNC}{E_{\mathtt{C,NC}}}
\def\HP{{\mathbb H}_{\mathtt{P}}}
\def\mP{m_{\mathtt{P}}}
\def\nP{n_{\mathtt{P}}}
\def\RP{R^{\mathtt{P}}}
\def\aP{\alpha_{\mathtt{P}}}
\def\Bu{B_{\mathrm{ u}}}
\def\Bl{B_{\mathrm{ l}}}
\def\mC{m_{\mathtt{C}}}
\def\aC{\alpha_{\mathtt{C}}}
\def\etaC{\eta_{\mathtt{C}}}
\def\mNC{m_{\mathtt{NC}}}
\def\nNC{n_{\mathtt{NC}}}
\def\RNC{R^{\mathtt{NC}}}
\def\mb{m_{\mathrm{b}}}
\def\nb{n_{\mathrm{b}}}
\newcommand{\D}{\mathbb{D}}
\newcommand{\polylog}{\operatorname{polylog}}
\newcommand{\coeff}{\operatorname{coeff}}
\newcommand{\spn}{\operatorname{span}}
\def\cS{{\cal S}}
\def\cN{{\cal N}}
\newcommand\Sci[1]{{\cal S}_{\mathtt{C}, #1}}
\def\cC{{\cal C}}
\def\Inst{{\cal I}}
\def\G{{\mathbb G}}
\def\C{{\mathbb C}}
\def\ind{{\mathbb I}}
\def\ux{\underline{x}}
\def\GF{\mathsf{ G}{\mathbb F}}
\def\cG{{\cal G}}
\def\cV{{\cal V}}
\def\cE{{\cal E}}
\def\cH{{\cal H}}
\def\ve{\varepsilon}
\def\cut{\mathrm{cut}}
\def\deg{\mathrm{deg}}
\def\uh{\underline{h}}
\def\ub{\underline{b}}
\def\uc{\underline{c}}
\def\tmu{\tilde{\mu}}
\def\bm{\underline{m}}
\def\di{{\partial i}}
\def\droot{{\partial\o}}
\def\da{{\partial a}}
\def\rank{\operatorname{rank}}
\def\TV{\mathrm{TV}}
\def\cR{\mathcal{R}}
\def\cP{\mathcal{P}}
\def\bI{\mathbb{I}}
\def\bD{{\mathbb D}}
\newcommand\nuia[2]{\nu_{#1 \rightarrow #2}}
\newcommand\nuai[2]{\hat{\nu}_{#1 \rightarrow #2}}
\def\bK{{\mathbb K}}
\def\Ball{\mathsf{ B}}
\def\Hcore{{\mathbb H}_{\mathtt{C}}}
\def\Score{{\cal S}_{\mathtt{C}}}
\def\Lcore{{\cal L}_{\mathtt{C}}}
\def\ba{\bar{\alpha}}
\def\bR{\bar{R}}
\newcommand{\ul}[1]{\underline{#1}}
\def\hat{\widehat}
\def\eps{{\varepsilon}}
\def\wh{\widehat}
\def\hz{\hat{z}}
\def\hQ{\hat{Q}}
\def\N{{\cal{N}}}
\newcommand\term{{\cal{T}}}
\newcommand{\vtoc}[2]{\nu_{{#1}\rightarrow{#2}}}
\newcommand{\ctov}[2]{\wh{\nu}_{{#1}\rightarrow{#2}}}
\def\vtcv{\underline{\nu}}
\def\ctvv{\underline{\widehat{\nu}}}
\def\ed{\stackrel{{\mathrm{ d}}}{=}}
\def\K{{\cal K}}
\def\ve{\varepsilon}
\def\de{\mathrm{ d}}
\def\tR{\widetilde{R}}
\def\tm{\tilde{m}}
\def\tn{\tilde{n}}
\begin{document}
\begin{frontmatter}

\title{The set of solutions of random XORSAT formulae\thanksref{T1}}
\runtitle{The set of solutions of random XORSAT formulae}
\thankstext{T1}{Supported by NSF
Grants CCF-0743978, CCF-0915145, DMS-08-06211 and a Terman fellowship.}

\begin{aug}
\author[A]{\fnms{Morteza}~\snm{Ibrahimi}\ead[label=e1]{ibrahimi@stanford.edu}},
\author[B]{\fnms{Yash}~\snm{Kanoria}\corref{}\thanksref{T2}\ead[label=e2]{ykanoria@columbia.edu}},
\author[C]{\fnms{Matt}~\snm{Kraning}\ead[label=e3]{matt@qadium.com}}
\and
\author[D]{\fnms{Andrea}~\snm{Montanari}\ead[label=e4]{montanari@stanford.edu}}\vspace*{-3pt}
\thankstext{T2}{Supported by a 3Com Corporation Stanford Graduate
Fellowship.}
\runauthor{Ibrahimi, Kanoria, Kraning and Montanari}
\affiliation{Urban Engines, Columbia Business School,
Qadium, Inc. and Stanford University}
\address[A]{M. Ibrahimi\\
Urban Engines\\
Los Altos, California 94022\\
USA\\
\printead{e1}} 
\address[B]{Y. Kanoria\\
Decision, Risk and Operations Division\\
Graduate School of Business\\
Columbia University\\
New York, New York 10027\\
USA\\
\printead{e2}}
\address[C]{M. Kraning\\
Qadium, Inc.\\
San Francisco, California 94107\\
USA\\
\printead{e3}}
\address[D]{A. Montanari\\
Department of Electrical Engineering\\
\quad and Department of Statistics\\
Stanford University\\
Stanford, California 94305\\
USA\\
\printead{e4}}
\end{aug}

\received{\smonth{2} \syear{2012}}
\revised{\smonth{8} \syear{2014}}

%
\begin{abstract}
The XOR-satisfiability (XORSAT) problem requires finding an assignment
of $n$
Boolean variables that satisfy $m$ exclusive OR (XOR) clauses, whereby each
clause constrains a subset of the variables. We consider random XORSAT
instances, drawn uniformly at random from the ensemble of
formulae containing $n$ variables and $m$ clauses of size $k$. This model
presents several structural similarities to other ensembles of constraint
satisfaction problems, such as $k$-satisfiability ($k$-SAT),
hypergraph bicoloring and graph coloring. For many of these
ensembles, as the number of constraints per variable grows, the set of
solutions shatters into an exponential number of well-separated components.
This phenomenon appears to be related to the difficulty of solving random
instances of such problems.

We prove a complete characterization of this clustering phase
transition for
random $k$-XORSAT. In particular, we prove that the clustering
threshold is
sharp and determine its exact location. We prove that the set of
solutions has
large conductance below this threshold and that each of the clusters
has large
conductance above the same threshold.

Our proof constructs a very sparse basis for the set of solutions (or the
subset within a cluster). This construction is intimately tied to the
construction of specific subgraphs of the hypergraph associated with an
instance of
$k$-XORSAT. In order to study such subgraphs, we establish
novel local weak convergence results for them.\vspace*{-3pt}
\end{abstract}

%
\begin{keyword}[class=AMS]
\kwd[Primary ]{68Q87}
\kwd[; secondary ]{82B20}
\end{keyword}
\begin{keyword}
\kwd{Random constraint satisfaction problem}
\kwd{clustering of solutions}
\kwd{phase transition}
\kwd{random graph}
\kwd{local weak convergence}
\kwd{belief propagation}
\end{keyword}
\end{frontmatter}

\section{Introduction}\label{sec1}

An instance of XOR-satisfiability (XORSAT) is specified by an integer
$n$ (the
number of variables) and by a set of $m$ clauses of the form
$x_{i_a(1)}\oplus\cdots\oplus x_{i_a(k)}=b_a$ for $a\in[m]\equiv\{
1,\ldots,m\}$.
Here, $\oplus$ denotes modulo-$2$ sum, $\ub=(b_1,\ldots,b_m)$ is a
Boolean vector, $b_a\in\{0,1\}$, specified by the problem instances,
and $\ux=(x_1,\ldots,x_n)$ is a vector of Boolean variables
$x_i\in\{0,1\}$ that must be chosen to satisfy the clauses.

Standard linear algebra methods allow us to determine whether a given XORSAT
instance admits a solution, to find a solution, and even to count the
number of
solutions, all in polynomial time. In this paper,\vadjust{\goodbreak} we shall be
interested in the
structural properties of the set of solutions $\cS\subseteq\{0,1\}^n$
of a
random $k$-XORSAT formula. More explicitly, we consider a random XORSAT
instance $\Inst$ that is drawn uniformly at random within the set $\G(n,k,m)$
of instances with $m$ clauses over $n$ variables, whereby each clause involves
exactly $k$ variables. The set of solutions $\cS=\cS(\Inst)$ is then defined
as the set of binary vectors $\ux$ that satisfy all $m$ clauses.

Since $\Inst$ is a random formula, $\cS$ is a random subset of the Hamming
hypercube. The structural properties of $\cS$ are of interest for several
reasons. First of all, linear systems over finite fields are combinatorial
objects that emerge naturally in a number of fields. Dietzfelbinger and
collaborators \cite{Cuckoo} use a mapping between XORSAT and the matching
problem to establish tight thresholds for the performances of Cuckoo Hashing,
an archetypal load balancing scheme. Such thresholds are computed by
determining thresholds above which the set of solutions $\cS$ of a random
XORSAT formula becomes empty. The existence of solutions is in turn
related to
the existence of an even-degree subgraph in a random hypergraph. Random sparse
linear systems over finite fields are used to construct capacity achieving
error correcting codes \cite{Luby98,Luby01,RiUBOOK}. The decodability
of such
codes is related to the emergence of a nontrivial $2$-core in the same random
hypergraph---a phenomenon that will play a crucial role in the following.
Finally, structured linear systems over finite fields are generated by popular
factoring algorithms \cite{Factorization}.

In the present paper, we are also motivated by the close analogy between
random $k$-XORSAT and other random ensembles of constraint satisfaction
problems (CSPs). The prototypical example of this family is random
$k$-satisfiability ($k$-SAT). The random $k$-SAT ensemble can be
described in
complete analogy to random $k$-XORSAT with the modification of replacing
exclusive OR clauses by OR clauses among variables or their negations. Namely,
in $k$-SAT each clause takes the form $(x_{i_a(1)}'\vee\cdots\vee
x_{i_a(k)}')$,\vspace*{1pt} whereby $x_{i_a(\ell)}'=x_{i_a(\ell)}$ or
$x_{i_a(\ell)}'=\overline{x}_{i_a(\ell)}$. An extensive literature
\cite
{Monasson99Nature,MezParZec_science,AchlioetalNature,KMRSZ07,MM09,AchlioCoja}
provides strong support for the existence of two sharp thresholds in random
$k$-SAT, as the number of clauses per variable $\alpha= m/n$ grows.
First, as
$\alpha$ crosses a ``satisfiability threshold'' $\as(k)$, random
$k$-SAT formulae
pass from being with high probability (w.h.p., i.e., with probability
converging to $1$ as $n\to\infty$) satisfiable [for $\alpha<\as
(k)$] to
being w.h.p.
unsatisfiable [for $\alpha>\as(k)$]. For any $\alpha<\as(k)$ the
set of
solutions is therefore nonempty. However, it undergoes a dramatic
structural change as $\alpha$ crosses a second threshold $\ad(k)<\as
(k)$. While
for $\alpha<\ad(k)$, $\cS$ is w.h.p. ``well connected'' (more precise
definitions
will be given below), for $\alpha\in[\ad(k),\as(k)]$ it shatters
into an
exponential number of clusters. It has been argued that such a ``clustered''
structure of the space of solutions can have an intimate relation with the
failure of standard polynomial time algorithms when applied to random formulae
in this regime. The same scenario is thought to hold for a number of random
constraint satisfaction problems (including, e.g., proper
coloring of
random graphs, bicoloring random hypergraphs, Not All Equal-SAT, etc.).

Unfortunately, this fascinating picture is so far only conjectural.
Even the
best understood element, namely the existence of a satisfiability threshold
$\as(k)$ has not been established (with the exception of the special case
$k=2$). In an early breakthrough, Friedgut \cite{Friedgut} used
Fourier-analytic methods to prove the existence of a---possibly
$n$-dependent--- sequence of thresholds $\as(k;n)$. Proving that in
fact this
sequence can be taken to be $n$-independent is one of the most challenging
open problems in probabilistic combinatorics and random graph theory.
Understanding the precise connection between clustering of the space of
solutions and computational complexity is an even more daunting task.

Given such outstanding challenges, a fruitful line of research has
pursued the
analysis of somewhat simpler models. A very interesting possibility
is to study $k$-SAT formulae for large but still bounded values of $k$.
As explained in \cite{AchlioCoja}, each SAT clause eliminates only one binary
assignment of its $k$ variables, out of $2^k$ possible assignments of
the same
variables. Hence, for $k$ large, a single clause has a small effect on
the set
of solutions, and most binary vectors are satisfying unless the formula
includes about $2^k$ clauses per variable. This results in an ``averaging''
effect and suitable moment methods provide asymptotically sharp results for
large $k$. In particular, Achlioptas and Peres \cite{AchlioPeres}
proved upper
and lower bounds on $\as(k)$ that become asymptotically equivalent
(i.e., whose ratio converges to $1$) as $k$ gets
large. Achlioptas and Coja-Oghlan \cite{AchlioCoja,AchlioCojaRicci}, proved
that clustering indeed takes place in an interval of values of $\alpha
$ below
the satisfiability threshold and obtained upper and lower bounds on the
corresponding threshold $\ad(k)$ that are asymptotically equivalent
for large $k$. Finally, Coja-Oghlan \cite{CojaBetter} proved that
solutions can be found
w.h.p. in polynomial time for any $\alpha<\alpha_{\mathrm{ d},
\mathrm
{alg}}(k)$, whereby
$\alpha_{\mathrm{ d}, \mathrm{alg}}(k)$ is asymptotically equivalent
to $\ad(k)$ for large
$k$. Intriguingly, no algorithm is known that can provably find
solutions in
polynomial time for $\alpha\in((1+\delta)\ad(k),\as(k))$, for any
$\delta>0$,
and all $k\ge3$.

XORSAT is a very different example on which rigorous mathematical analysis
proved possible, thus providing precious complementary insights. The key
simplification is that the set of solutions $\cS$ is, in this case, an affine
subspace of the Hamming hypercube $\{0,1\}^n$ (viewed as a vector space over
$\GF[2]$). This implies a high degree of symmetry that can be
exploited to
obtain very sharp characterizations for large $n$, and any $k$ (we assume
throughout that $k\ge3$, since $2$-XORSAT is significantly simpler).

It was proved in \cite{DuboisFOCS} that, for $k= 3$, there exists an
$n$-independent threshold $\as(k)$ such that a random $k$-XORSAT
instance is
w.h.p. satisfiable if $\alpha<\as(k)$ and unsatisfiable if $\alpha
>\as
(k)$. The
proof constructs a subformula, by considering the $2$-core of the hypergraph
associated with the XORSAT instance. One can then prove that the original
formula is satisfiable if and only if the $2$-core subformula is. The
threshold for the latter can be determined exactly using the second moment
method. The proof was extended to all $k\ge4$ in \cite{Cuckoo}.

The existence of a $2$-core in a random XORSAT formula has a sharp
threshold when the number of clauses per variable $\alpha$ crosses a value
$\alpha_{\mathrm{core}}(k)$. This was argued to be intimately related
to the
appearance of clusters. In particular, \mbox{\cite{MezRicZec_XOR,CoccoXOR}}
give an
argument\setcounter{footnote}{2}\footnote{The argument of \cite{MezRicZec_XOR,CoccoXOR} is essentially
rigorous, but does not deal with several technical steps.} showing
that, above
$\alpha_{\mathrm{core}}(k)$, the space of solutions shatters
into exponentially many clusters. In other words, $\alpha_{\mathrm
{core}}(k)$ is an upper bound on the clustering threshold.
\cite{MezRicZec_XOR} further shows that, for $\alpha<\alpha_{\mathrm
{core}}(k)$, a particular coordinate of a solution can be changed by
changing $O(1)$ other variables on
average, without leaving the space of solutions. If this argument is
pushed a step further, one can show that, w.h.p., any coordinate can be changed
by flipping at most $O(\log n)$ other coordinates. This suggests that
it \emph{may} be possible to concatenate a sequence of such flips to
connect any two solutions via a path through the solution space, with
$O(\log n)$ steps. However, the analysis \cite{MezRicZec_XOR}
does not imply that this is the case, as it does not address the main
challenge, namely
to construct a path from any solution to any other solution.
In this work, we solve this problem
and provide the first proof of a lower bound of $\alpha_{\mathrm{core}}(k)$
on the clustering threshold $\ad$, thus establishing that
indeed $\ad(k) =\alpha_{\mathrm{core}}(k)$. For $\alpha>\ad(k)$ we prove
a sharp
characterization of the decomposition into clusters.

As mentioned above, random $k$-XORSAT formulae can be solved in
polynomial time
using linear algebra methods, and this appears to be insensitive to the
clustering threshold. Nevertheless, an intriguing algorithmic phase transition
might take place \emph{exactly} at the clustering threshold $\ad(k)$.
For any
$\alpha<\ad(k)$, solutions can be found in time linear in the number of
variables (the algorithm is in fact an important component of our
proof). On
the other hand, no algorithm is known that finds a solution in linear
time for
$\alpha\in(\ad(k),\as(k))$. We think that our proof sheds some light
on this
phenomenon.

\subsection{Main result}

In this paper, we obtain two sharp results characterizing the clustering
phase transition for random $k$-XORSAT:
\begin{longlist}[(ii)]
\item[(i)] We exactly determine the clustering threshold $\ad(k)$, proving
that the space of solutions is w.h.p. well connected for $\alpha<\ad
(k)$, and
instead shatters into exponentially many clusters for
$\alpha\in(\ad(k),\as(k))$.

\item[(ii)] We determine the exponential growth rate of the number of
clusters, that is, we show that this is w.h.p. $\exp\{n\Sigma(\alpha
;k)+o(n)\}$ where
$\Sigma(\alpha;k)$ is a nonrandom function which is explicitly given.
We prove
that each of the clusters is itself ``well connected.''
\end{longlist}

This is therefore the first random CSP ensemble for which a sharp threshold
for clustering is proved.

Earlier literature fell short of establishing (i) since it did not provide
any argument for connectedness below $\ad(k)$. Also, informal
calculations only suggested a lower bound on the number of clusters,
but did
not establish (ii) since they did not prove connectedness of each
cluster by
itself. The situation is akin to the analysis of Markov Chain Monte Carlo
methods: It is often significantly more challenging to prove rapid mixing
(connectedness of the space of configurations) than the opposite
(i.e., to find bottlenecks).

One important novelty is that the notion of connectedness used here is very
strong and goes beyond path connectivity, which was used earlier for $k$-SAT
\cite{AchlioCoja,AchlioCojaRicci}. We use a properly defined notion of
\emph{conductance} which we think can be applied to a broader set of
CSPs, and
has the advantage of being closely related to important algorithmic notions
(fast mixing for MCMC and expansion). Given a subset of the hypercube
$\cS\subseteq\{0,1\}^n$, and a positive integer $\ell$, we define the
conductance of $\cS$ as follows.
Construct the graph $\cG(\cS,\ell)$ with
vertex set $\cS$ and an edge connecting $\ux,\ux'\in\cS$ if and
only if
$d(\ux,\ux')\le\ell$ [here and below, $d( \cdot, \cdot)$
denotes the
Hamming distance, i.e., $d(\ux,\ux') = |\{i\dvtx1\leq i \leq n, x_i
\neq
x'_i \}|$,
where $\ux= (x_1, x_2, \ldots, x_n)$ and similarly for $\ux'$, and
$|B|$ denotes the cardinality of the set $B$].
Then we define the \emph{$\ell$th conductance of $\cS$} as the
graph conductance of $\cG(\cS,\ell)$, namely
%
%
\begin{equation}
\Phi(\cS;\ell) \equiv\min_{A\subseteq\cS} \frac{\cut_{\cG(\cS,\ell)}(A,\cS\setminus A)}{\min(|A|,|\cS
\setminus
A|)}
,\label{eq:ConductanceDef} %
\end{equation}
where, for a graph $\cG= (\cV, \cE)$, and any $B \subseteq\cV$, we define
\[
\cut_{\cG}(B,\cV\setminus B) \equiv\bigl|\{e \in\cE\dvtx\mbox {Exactly one
of the two endpoints of $e$ is in $B$} \}\bigr|.
\]
Notice that we measure
the volume of a set by the number of its vertices instead of the sum of its
degrees.\footnote{This difference is irrelevant for $\alpha<\alpha
_{\mathrm{d}}(k)$
since in this case $\cS$ will be taken to be an affine subspace of $\{
0,1\}^n$, and hence
$\cG(\cS,\ell)$ will be a regular graph. For $\alpha\in(\alpha
_{\mathrm{d}}(k),\alpha_{\mathrm s}(k))$, $\cS$ will be constructed
as the union of
a ``small'' number of affine spaces, and hence $\cG(\cS,\ell)$
should be
approximately regular. We keep the definition
(\ref{eq:ConductanceDef}) since it simplifies our statements.}

We define the distance between two subsets of the hypercube $\cS_1,
\cS_2
\subseteq\{0,1\}^n$ as
\[
d(\cS_1, \cS_2) \equiv\min_{\ux\in\cS_1, \ux' \in\cS_2} d
\bigl(\ux , \ux '\bigr).
\]
For our statements, $k\ge3$ is always fixed, together with a sequence
$m(n)=\alpha n$.

We say that a sequence of events $(\Ev_n)_{n>0}$ occurs \emph{with
high probability} (w.h.p.) if $\lim_{n\rightarrow\infty}
\prob(\Ev_n)=1$.
(We refer to Section~\ref{sec:Definitions} for a formal definition of
the underlying probability space.)

%
%
\begin{theorem}\label{thm:Main}
Let $\cS$ be the set of solutions of a random $k$-\emph{XORSAT}
formula with
$n$ variables and $m = n\alpha$ clauses. For any $k\ge3$, let $\ad(k)$
be defined as
%
%
\begin{equation}
\label{eq:AlphaDdef} \ad(k) \equiv\sup \bigl\{\alpha\in[0,1] \dvtx
z>1-e^{-k\alpha
z^{k-1}},
\forall z\in(0,1) \bigr\}.
\end{equation}
\begin{enumerate}
\item If $\alpha<\ad(k)$, there exists $\const=\const(\alpha
,k)<\infty$
such that, w.h.p., $\Phi(\cS;\break (\log n)^{\const}) \ge1/2$.
\item If $\alpha\in(\ad(k),\as(k))$, then there
exists $\ve=\ve(k;\alpha)>0$ such that, w.h.p., $\Phi(\cS;n\ve) =0$.
\item If $\alpha\in(\ad(k),\as(k))$, and
$\delta>0$ is arbitrary, then there exist constants $\const=\const
(\alpha,k)<\infty$,
$\ve=\ve(\alpha,k)>0$, $\Sigma=\Sigma(\alpha,k)>0$,
and a partition of the set of solutions
$\cS=\cS_1\cup\cdots\cup\cS_N$, such that, w.h.p., the following
properties hold:
\begin{enumerate}[(a)]
\item[(a)] For each $a\in[N]$, we have $\Phi(\cS_a;(\log n)^{\const})
\ge1/2$.
\item[(b)] For each $a\neq b\in[N]$, we have $d(\cS_a,\cS_b)\ge n\ve$.
\item[(c)]$\exp\{n(\Sigma-\delta)\}\le N\le\exp\{n(\Sigma+\delta)\}$.
Further, letting $Q$ be the largest positive solution of
$Q=1-\exp\{-k\alpha Q^{k-1}\}$ and $\hQ\equiv Q^{k-1}$, we have
$\Sigma(\alpha,k) = Q-k\alpha\hQ+(k-1)\alpha Q\hQ$.
\end{enumerate}
\end{enumerate}
\end{theorem}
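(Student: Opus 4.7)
The overall strategy is to analyze $\cS$ through the lens of the $2$-core of the hypergraph $\cH$ whose vertices are the $n$ variables and whose hyperedges correspond to the $m$ clauses. The $2$-core is obtained by iteratively removing any vertex of degree $\le 1$ together with the incident hyperedge; the classical branching-process analysis pinpoints the threshold for the emergence of a giant $2$-core at exactly $\ad(k)$ as defined in \eqref{eq:AlphaDdef}. Below $\ad(k)$ the peeling procedure strips the entire hypergraph \whp, whereas above $\ad(k)$ it leaves a linear-sized residual core. The three parts of the theorem each exploit a different facet of this dichotomy.

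For part~1 ($\alpha<\ad(k)$), the plan is to construct a sparse basis $\{e^{(1)},\dots,e^{(d)}\}$ for the kernel of the XORSAT constraint matrix such that each $e^{(j)}$ has Hamming weight at most $(\log n)^{\const}$ \whp. The peeling sequence itself supplies such a construction: every peeled variable carries a unique incident clause, and unrolling these clauses backwards attaches, to each ``free'' variable, a kernel element supported on a local peeling tree whose size is $(\log n)^{\const}$ \whp by coupling with a subcritical branching process. Since $\cS$ is an affine coset of the kernel, any two solutions $\ux,\ux'$ differ by a sum of basis vectors, which immediately yields path connectedness of $\cG(\cS,(\log n)^{\const})$. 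To upgrade path connectedness to the conductance bound $\Phi(\cS;(\log n)^{\const})\ge 1/2$, I would exploit the translation invariance of $\cG(\cS,\ell)$: it is a Cayley graph of $(\GF[2])^d$, and its edge set contains all short XOR combinations of basis vectors. A Fourier/character computation on the hypercube, or equivalently a canonical multicommodity flow that routes pairs through short combinations of basis elements, would then yield the desired expansion.

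For parts~2 and~3 ($\alpha\in(\ad(k),\as(k))$), the essential new input is that the restriction of the constraint matrix to the $2$-core has linear minimum distance: any nonzero kernel element of the core has Hamming weight at least $\ve n$ \whp. I would establish this through the expansion of the random bipartite variable-clause graph on the core, since every small subset of core variables spans strictly more clauses than variables, so any small-weight kernel element would leave some clause with an odd number of touched variables, a contradiction. Define clusters $\cS_1,\dots,\cS_N$ by declaring two solutions equivalent when they agree on the core coordinates. Property~(b) of part~3, $d(\cS_a,\cS_b)\ge n\ve$, and hence part~2 itself, follow directly from this minimum-distance bound. Property~(a) reduces to the non-core portion, handled by the same sparse-basis/Cayley-expansion argument as in part~1 applied conditionally on each core assignment, with a union bound over clusters. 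The count in~(c) equals the number of satisfying core assignments; its asymptotics come from a first/second moment analysis on the core subformula, with the formula $\Sigma(\alpha,k)=Q-k\alpha\hQ+(k-1)\alpha Q\hQ$ emerging as the entropy of the uniform measure on core solutions, expressed through the cavity fixed-point equations $Q=1-\exp\{-k\alpha\hQ\}$ and $\hQ=Q^{k-1}$.

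The main obstacle will be the conductance bound, as opposed to mere connectivity, in part~1 (and in part~3(a), which reduces to it). Earlier informal arguments \cite{MezRicZec_XOR} showed that individual coordinates can be flipped at a cost of $O(\log n)$ other flips, which is enough to exclude isolated configurations but not narrow bottlenecks. Upgrading to $\Phi\ge 1/2$ will require controlling not only the weight of the basis elements but also their algebraic independence and the correlations they induce, so that the resulting Cayley graph on $\cS$ is a genuine expander. A careful iterative construction of the basis, together with a second-moment or character-sum estimate on the subgroup it generates, is in my plan the heart of the proof.
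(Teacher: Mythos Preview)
Your overall architecture matches the paper's, but there are two substantive issues.

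First, the step you flag as the ``main obstacle''---upgrading sparse-basis connectivity to $\Phi\ge 1/2$---is in fact immediate. If $\cS$ admits an $s$-sparse basis $\{e^{(1)},\dots,e^{(d)}\}$, then $\cG(\cS,s)$ contains as a spanning subgraph the Cayley graph of $(\GF[2])^d$ with generators $e^{(1)},\dots,e^{(d)}$, i.e.\ the $d$-dimensional hypercube. The standard edge-isoperimetric inequality on the hypercube gives $\Phi\ge 1/2$ directly; no character sums, multicommodity flows, or second-moment estimates are needed. The paper states this as a one-line lemma (Lemma~\ref{lemma:Sparse}). So the real work is entirely in building the sparse basis, not in extracting conductance from it.

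Second, and more seriously, your claim that ``any nonzero kernel element of the core has Hamming weight at least $\ve n$ \whp'' is false, and the paper explicitly warns against it. The $2$-core contains variable nodes of degree exactly $2$; these induce a subcritical random graph which \whp\ has a bounded number of short cycles, and each such cycle gives a nonzero core solution of weight $O(1)$. Consequently your proposed cluster definition---two solutions are equivalent iff they agree on the core---would produce clusters at pairwise distance $O(1)$, destroying part~3(b). The paper's fix is to first group core solutions into ``core-clusters'' by quotienting out the span of these few sparse core solutions (the set $\Lcore(\ve n)$), and only then lift to clusters in $\cS$. This in turn forces the sparse basis for each cluster to have two pieces: a small set of vectors extending each sparse core solution to a sparse global solution, and a second set spanning the kernel of the \emph{periphery} (not the full non-core). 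The periphery has degree-$2$ check nodes, so peeling must be run on a collapsed graph, and one must verify that the periphery's degree profile is exponentially peelable. None of this is visible in your outline, and your expansion argument for the core minimum distance would not survive the presence of degree-$2$ core variables.
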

%
%

\subsection{Conductance and sparse basis}

We will prove Theorem~\ref{thm:Main} by obtaining a fairly complete description
of the set $\cS$ both above and below $\ad(k)$. In a nutshell, for
$\alpha<\ad(k)$, $\cS$ admits a sparse basis, while for $\alpha>\ad(k)$
each of
the clusters $\cS_1, \ldots, \cS_N$ admits a sparse basis but their
union does not.
This is particularly suggestive of the connection between the
clustering phase
transitions and algorithm performance. Below $\ad(k)$ the space of solutions
admits a succinct explicit representation [in $O(n(\log n)^\const)$
bits]. Above $\ad(k)$, we
can produce a representation that is succinct but implicit (as
solutions of a
given formula), or explicit but prolix [no basis is known that can be encoded
in $o(n^2)$ bits].

Given a linear subspace $\cS\subseteq\{0,1\}^n$, we say that it
admits an
$s$-sparse basis if there exist vectors
$\ux^{(l)}\in\cS$ for $l \in\{1, \ldots, D\}$ such that $d(\ux
^{(l)},\underline{0})\le s$ and
$(\ux^{(l)})_{l=0}^D$ form a basis for $\cS$. The latter means that
the vectors are linearly independent and $\cS=  \{\sum_{l=1}^D
a_l\ux^{(l)}\dvtx(a_l)_{l=0}^D\in\{0,1\}^D  \}$.

We say that an affine space $\cS\subseteq\{0,1\}^n$ admits an
$s$-sparse basis if,
for $\ux^{(0)}\in\cS$, the linear subspace $\cS-\ux^{(0)}$
admits an $s$-sparse basis.
The property of having a sparse basis indeed
implies large conductance. The proof is immediate.

%
%
\begin{lemma} \label{lemma:Sparse}
If the affine subspace $\cS\subseteq\{0,1\}^n$ admits a $s$-sparse
basis, then
$\Phi(\cS;s)\ge1/2$.

Vice versa, assume that $\Phi(\cS;s)=0$. Then $\cS$ does not admit a
$s$-sparse basis.
\end{lemma}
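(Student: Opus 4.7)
The plan is to reduce the claim to a standard edge-isoperimetric inequality on the Boolean hypercube by using the sparse basis to build an explicit graph embedding. Fix $\ux^{(0)}\in\cS$ so that $\cS - \ux^{(0)}$ is a linear subspace, and let $\ux^{(1)},\ldots,\ux^{(D)}$ be an $s$-sparse basis of that subspace. Define $\phi:\{0,1\}^D \to \cS$ by
$$\phi(\underline{a}) = \ux^{(0)} + \sum_{l=1}^{D} a_l\, \ux^{(l)}.$$
Linear independence of the $\ux^{(l)}$ makes $\phi$ a bijection, so it provides a labeling of the elements of $\cS$ by vertices of the $D$-dimensional hypercube.

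The key observation is that $\phi$ maps single-coordinate-flip neighbors in $\{0,1\}^D$ to points at Hamming distance at most $s$ in $\cS$. Indeed, if $\underline{a},\underline{a}'\in\{0,1\}^D$ differ only in coordinate $l$, then $\phi(\underline{a}) - \phi(\underline{a}') = \ux^{(l)}$, and $d(\phi(\underline{a}),\phi(\underline{a}'))\le s$ by $s$-sparsity. Letting $Q_D$ denote the $D$-cube graph, it follows that $\phi$ embeds $Q_D$ as a spanning subgraph of $\cG(\cS, s)$. Consequently, for every $A\subseteq \cS$,
$$\cut_{\cG(\cS,s)}(A, \cS\setminus A) \ge \cut_{Q_D}\bigl(\phi^{-1}(A), \{0,1\}^D\setminus \phi^{-1}(A)\bigr).$$

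I would then invoke the edge-isoperimetric inequality for the Boolean hypercube (Harper--Bernstein--Lindsey), which asserts that every $B\subseteq\{0,1\}^D$ with $|B|\le 2^{D-1}$ has edge boundary in $Q_D$ of size at least $|B|$. Combining this with the displayed inequality yields $\Phi(\cS; s) \ge 1$, which is stronger than the required $1/2$.

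The converse is the immediate contrapositive of what has just been proved: if $\cS$ admitted an $s$-sparse basis, the argument above would force $\Phi(\cS; s) \ge 1 > 0$, contradicting $\Phi(\cS; s) = 0$. There is no substantive obstacle here; the only care needed is in verifying that $\phi$ is genuinely a bijection (so that $\phi^{-1}$ is well defined and the cut inequality makes sense) and in quoting the correct form of the hypercube isoperimetric bound.
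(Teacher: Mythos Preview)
Your proof is correct and follows essentially the same approach as the paper: embed the hypercube $Q_D$ as a spanning subgraph of $\cG(\cS,s)$ via the sparse basis, use monotonicity of the cut in the edge set, and then invoke the hypercube edge-isoperimetric inequality. The only difference is cosmetic---you extract the sharper bound $\Phi(\cS;s)\ge 1$ from Harper--Lindsey--Bernstein, whereas the paper is content to quote the weaker $\Phi(\cH_d)\ge 1/2$; both suffice for the lemma.
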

%
%
\begin{pf}
We can assume, without loss of generality, that $\cS$ is a linear
space.
Let $d$ be its dimension. Further, given a graph $\cG$,
let, with a slight abuse of notation
%
%
\begin{equation}
\Phi(\cG) \equiv\min_{A\subseteq
\cS}\frac{\cut_{\cG}(A,\cS\setminus A)}{\min(|A|,|\cS\setminus A|)} ,
\end{equation}
so that $\Phi(\cS;\ell) = \Phi(\cG(\cS;\ell))$.

Assume that $\cS$ admits a $s$-sparse basis.
This immediately implies the graph $\cG(\cS,s)$ contains
a spanning subgraph that is isomorphic to the $d$-dimensional
hypercube $\cH_d$. Further, $\cG\mapsto\Phi(\cG)$ is monotone
increasing in the edge set of $\cG$. Therefore, $\Phi(\cS;s)\ge
\Phi(\cH_d)\ge1/2$ where the last inequality follows from
the standard isoperimetric inequality on the hypercube \cite{Hoory}.
\end{pf}
%

The characterization of the solution space in terms of sparsity of its
basis is given below.

%
\begin{theorem}\label{thm:MainSparse}
Let $\cS$ be the set of solutions of a random $k$-\emph{XORSAT}
formula with
$n$ variables and $m = n\alpha$ clauses. For any $k\ge3$, let $\ad(k)$
be defined as per equation~(\ref{eq:AlphaDdef}). Then the following hold:
\begin{enumerate}
\item If $\alpha<\ad(k)$, there exists $\const=\const(\alpha
,k)<\infty$
such that, w.h.p., $\cS$ admits a $(\log n)^\const$-sparse basis.
\item If $\alpha\in(\ad(k),\as(k))$, and
$\delta>0$ is arbitrary, then there exist constants $\const=\const
(\alpha,k)<\infty$,
$\ve=\ve(\alpha,k)>0$, $\Sigma=\Sigma(\alpha,k)>0$,
and a partition of the set of solutions
$\cS=\cS_1\cup\cdots\cup\cS_N$, such that, w.h.p., the following
properties hold:
\begin{enumerate}[(a)]
\item[(a)] For each $a\in[N]$, $\cS_a$ admits a $(\log n)^\const$-sparse basis.
\item[(b)] For each $a\neq b\in[N]$ we have $d(\cS_a,\cS_b)\ge n\ve$.
\item[(c)]$\exp\{n(\Sigma-\delta)\}\le N\le
\exp\{n(\Sigma+\delta)\}$. Further, $\Sigma$ is given by the same
expression given in Theorem~\ref{thm:Main}.
\end{enumerate}
\end{enumerate}
\end{theorem}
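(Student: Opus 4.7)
The plan is to analyze both regimes through the iterative \emph{peeling} procedure on the hypergraph associated with the XORSAT formula, where at each step one removes a variable of degree at most one together with its unique clause. The fixed point of this procedure is the classical 2-core, and the threshold $\ad(k)$ defined in Eq.~(\ref{eq:AlphaDdef}) is precisely the point at which a 2-core of linear size appears: for $\alpha<\ad(k)$ peeling eliminates all but $o(n)$ variables, while for $\alpha>\ad(k)$ a deterministic fraction $Q$ of variables survives into the core. This is standard core-emergence theory for random hypergraphs.

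For the subcritical case $\alpha<\ad(k)$, I would fix a particular solution $\ux^{(0)}\in\cS$ and, for each ``free'' variable $i$ (one that is isolated when removed, or that remains after peeling terminates), define a basis element $\ux^{(i)}$ of the linear kernel $\cS-\ux^{(0)}$ to be the unique vector equal to $1$ at coordinate $i$, equal to $0$ on all other free variables, and satisfying the homogeneous system. The last two conditions determine $\ux^{(i)}$ uniquely, because unwinding the peeling in reverse order forces each non-free coordinate one at a time. The collection $\{\ux^{(i)}\}_i$ is linearly independent and spans the kernel, and the Hamming weight of $\ux^{(i)}$ equals the size of the downstream dependency tree rooted at $i$ in the peeling order.

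The key technical step is then to show that for $\alpha<\ad(k)$, every such dependency tree has size at most $(\log n)^{\const}$ with high probability. I would do this by coupling the local neighborhood of a typical variable with a Galton--Watson branching process whose offspring distribution is a Poisson mixture determined by $\alpha$ and $k$; the inequality $z>1-e^{-k\alpha z^{k-1}}$ for all $z\in(0,1)$ is precisely the condition that this branching process is subcritical with trivial unique fixed point, so each tree dies out in $O(\log n)$ generations with polynomially small failure probability. A union bound over the $n$ possible starting variables then yields the uniform $(\log n)^{\const}$ bound, producing the sparse basis of part 1.

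For the supercritical case $\alpha\in(\ad(k),\as(k))$, I partition $\cS$ by the restriction of a solution to the 2-core, placing two solutions in the same cluster iff they agree on every core variable. A second-moment computation on the number of satisfying assignments of the core subformula (which has explicit limiting degree profiles, amenable to the tools used for the satisfiability threshold in \cite{DuboisFOCS,Cuckoo}) yields $N=\exp\{n\Sigma\pm o(n)\}$ with $\Sigma=Q-k\alpha\hQ+(k-1)\alpha Q\hQ$. Within each cluster the core assignment is fixed and the non-core part of the formula is again subcritical, so the peeling-basis construction from part 1 applied to the non-core portion produces a $(\log n)^{\const}$-sparse basis for each $\cS_a$. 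The inter-cluster separation $d(\cS_a,\cS_b)\ge n\ve$ follows from an expansion property of the 2-core: any nonzero kernel vector supported entirely on the core must have linearly many nonzero coordinates, since the 2-core contains no small even-degree sub-hypergraph by a standard first-moment estimate. The main obstacle I anticipate is the uniform $(\log n)^{\const}$ tail bound on dependency-tree size, which requires not merely subcriticality of the peeling branching process but sufficiently sharp concentration to survive the union bound in the presence of dependencies coming from both the peeling dynamics and the conditioning on a realized formula.
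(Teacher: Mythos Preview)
Your overall architecture matches the paper's, but there are two genuine gaps.

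\textbf{Part 1.} The inequality defining $\ad(k)$ says the density-evolution recursion $z\mapsto 1-e^{-k\alpha z^{k-1}}$ has $0$ as its only fixed point; it does \emph{not} say that the local exploration process of $G$ is a subcritical Galton--Watson tree. In fact the graph's local branching factor is $(k-1)k\alpha$, which is well above $1$ throughout $(0,\ad(k))$. The support of a basis vector --- bounded in the paper by $|\Ball_G(i,\TC)|$ where $\TC$ is the number of synchronous peeling rounds --- therefore grows like $(\text{branching factor})^{\TC}$. Your argument that the dependency tree ``dies out in $O(\log n)$ generations'' would yield $\TC=O(\log n)$ and hence basis vectors of \emph{polynomial} weight, not $(\log n)^{\const}$. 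The paper's key step is to prove $\TC=O(\log\log n)$: after a bounded number $\tau$ of peeling rounds the residual graph $J_\tau$ becomes subcritical in the ordinary random-graph sense, so its components are trees or unicyclic of size $O(\log n)$, and peeling a tree of size $s$ finishes in $O(\log s)$ further rounds (Lemma~\ref{lemma:peeling_fast_on_trees}). This doubly-logarithmic bound is the ingredient you are missing, and your ``main obstacle'' paragraph correctly senses the difficulty but misidentifies its source.

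\textbf{Part 2.} Your cluster definition and separation argument assert that any nonzero kernel vector supported on the core has linear weight. This is false: the subgraph of the $2$-core induced by degree-$2$ variable nodes is a subcritical ordinary random graph and w.h.p.\ contains $\Theta(1)$ short cycles, each yielding a constant-weight core solution (Lemma~\ref{lemma:core_few_low_weight}). The paper explicitly flags the linear-separation claim as an error in prior informal work. The correct partition first groups core solutions modulo the span $\Lcore(\ve n)$ of these short-cycle solutions, and defines clusters as preimages of these core-clusters. This forces two further changes to your sketch of 2(a). First, each cluster needs an additional family of basis vectors: sparse lifts of the short-cycle core solutions to full solutions (Lemma~\ref{lemma:core_sparse_basis}). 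Second, the peeling-basis for the remaining directions must be built on the \emph{periphery} (the complement of the backbone), not on the complement of the core; the periphery contains degree-$2$ checks, so one needs the collapsed-graph construction (Definition~\ref{def:collapsed_graph}) together with the fact that the periphery is, conditional on its degree profile, uniform over peelable graphs (Lemma~\ref{lemma:UniformPeelable}) before the $O(\log\log n)$ peeling bound from Part~1 can be reapplied.
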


Clearly, this theorem immediately implies Theorem~\ref{thm:Main} by applying
Lem\-ma~\ref{lemma:Sparse}. The rest of this paper is devoted to
the proof of Theorem~\ref{thm:MainSparse}.

\subsection{Further technical contributions}

To a given a XORSAT instance $\Inst$, we can associate a bipartite graph
(``factor graph'') with vertex sets $F$ (\emph{factor} or \emph
{check} nodes)
corresponding to equations, and $V$ (\emph{variable} nodes) variables.
The edge set $E$ includes those pairs $(a,i)\in F\times V$ such that
variable $x_i$ participates in the $a$th equation.
The construction of the sparse basis in Theorem~\ref{thm:MainSparse}
relies heavily on a characterization of the random factor graph
associated to a
random XORSAT instance. This could be gleaned from the proof of \cite
{DuboisFOCS,Cuckoo}
that construct the $2$-core of $G$.
In order to prove Theorem~\ref{thm:MainSparse}, we characterize a
larger subgraph that we refer to as the \emph{backbone} of $G$. This
subgraph has the following interpretation: if two solutions $\ux$ and
$\ux'$ coincide on the core, then they coincide on every vertex of the
backbone.

The $2$-core of the random graph $G$ was studied in a number of papers
\cite{PittelSpencerWormald,Luby98,Molloy,DemboFSS}. The key tool in
these works is the analysis of an iterative procedure that constructs
the $2$-core in $\Theta(n)$ iterations. This procedure has an
important property:
At each step, the resulting graph remains uniformly random,
given a small number of parameters (essentially, its degree
distribution). Thanks to this property, the analysis of
\cite{PittelSpencerWormald,Luby98,Molloy,DemboFSS}
is reduced to the study of a Markov chain in $\integers^2$.
This is done by showing that sample paths of this chain are shown to
concentrate around
solutions of a certain ordinary differential equation.

Our analysis of the backbone has a similar starting point, namely the
study of an iterative procedure that constructs the backbone (indeed
we define formally the backbone as the fixed point of this
procedure). Unfortunately, the graphs generated by this procedure are
not uniformly random, conditional on a small number of
parameters. Hence, the techniques
\cite{PittelSpencerWormald,Luby98,Molloy,DemboFSS} do not apply.
We overcome this difficulty by characterizing the large-$n$ limit
of its fixed point using the theory of local weak convergence. This
is in turn challenging because the fixed point is not, \emph{a
priori}, a local function of $G$.

We consider this characterization of the backbone, and its proof,
to be a contribution of independent interest.

For describing the iterative procedure, we use the language of message
passing algorithms, and will refer to it as to ``belief propagation'' (BP),
as the same algorithm is also of interest in iterative coding; see
\cite{RiUBOOK,MM09}.
Given a factor graph $G=(F,V,E)$, the algorithm updates $2|E|$ \emph{messages}
indexed by directed edges in~$G$. In other
words, for each $(a,v)\in E$, $a\in F$ and $v\in V$, and any iteration
number $t\in\naturals$, we have two messages $\vtoc{v}{a}^t$,
and $\ctov{a}{v}^t$, taking values in $\{0,*\}$.
For $t\geq1$, messages are computed following the update rules:
%
%
\begin{equation}\label{eq:BP1}
\vtoc{v} {a}^{t} = \cases{
*, &\quad
$\mbox{if $\ctov{b} {v}^{t-1} = *$ for all $b \in\partial v \setminus a$,}$
\vspace*{2pt}\cr
0, &\quad $\mbox{otherwise,}$}
\end{equation}
and
%
%
\begin{equation}\label{eq:BP2}
\ctov{a} {v}^t = \cases{ %
0, &\quad $\mbox{if
$\vtoc{u} {a}^t = 0$ for all $u \in\partial a \setminus v$,}$
\vspace*{2pt}\cr
*, &\quad $\mbox{otherwise.}$}
\end{equation}
We call this algorithm \BPzero\ when all
messages are initialized to $0$: $\vtoc{v}{a}^0=\ctov{a}{v}^0=0$ for
all $(a,v)\in E$. It is not hard to see that \BPzero\ is
monotone,\footnote{This can be established by induction: Since we start
with all $0$s, clearly messages can only change from $0$ to $*$ in the
first iteration. Thereafter, this holds inductively for each subsequent
iteration since each of the update rules is monotone in the sense that
if the incoming messages only change from $0$ to $*$, then the same
holds for the outgoing messages.} in the sense that messages only
change from $0$ to $*$, and hence converges to a fixed point $\nuia{v}{a}^*$.

It is easy to check (see Lemma~\ref{lemma:PeelingBPFP} below) that the
core of $G$ coincides
with the subgraph induced by the factor nodes that receive no $*$
message at the fixed point of \BPzero. The backbone is instead the
subgraph induced by factor nodes that receive at most one $*$ message
at the fixed point.

Denote by $\tmu_n^*$ the probability distribution on rooted factor graphs
with marks on the edges constructed as follows.
Draw a graph uniformly at random from $\G(n,k,m)$. Choose a uniformly
random variable node $i \in V$ as the root. Mark the edges (in each
direction) with the messages corresponding to the \BPzero\ fixed point
$\nuia{v}{a}^*$.

We next construct a random tree $\tT_*(\alpha,k)$ with marks on the
directed edges as follows.
Marks take values in $\{0,*\}$ and to each undirected edge we
associate a
mark for each of the two directions. We will refer to the direction
toward the root as to the ``upward'' direction, and to the opposite
one as to the ``downward'' direction.
The marks correspond to fixed-point BP messages,
and we will call them messages as well in what follows.
First, consider only edges directed upward.
This is a multitype Galton--Watson (GW) tree. At the root generate
$\operatorname{Poisson}(k\alpha)$ offsprings, and mark each of the edges to $0$
independently with probability $\hQ$, and to $*$ otherwise.
At a nonroot variable node, if the parent edge is marked $0$, generate
$\Po(k\alpha(1-\hQ))$ descendant edges marked $*$ and $\Po_{\ge
1}(k\alpha\hQ)$ descendant edges
marked $0$ [here $\Po_{\Ev}(\lambda)$ denotes a Poisson random variable
with parameter $\lambda$ conditional on $\Ev$]. If the parent edge is
marked $*$, generate
$\Po(k\alpha(1-\hQ))$ descendant edges marked $*$ and
no descendant edges marked~$0$.
At a factor node, if the parent edge is marked $0$, generate $k-1$
descendant edges marked $0$. If the parent node is marked $*$, generate
$M\sim\Binom_{\le k-2}(k-1,Q)$ descendants marked $0$, and $k-1-M$
descendants marked $*$.

For edges directed downward,
marks are generated recursively following the usual BP
rules, cf. equations~(\ref{eq:BP1}), (\ref{eq:BP2}), starting from
the top
to the bottom.
It is easy to check that with this construction,
the marks in $\tT_*(\alpha,k)$ correspond to a
BP fixed point. Given a factor graph $G=(F,V,E)$, we use $\Ball_{G}(v,t)$
to denote the ball of radius $t$ centered at node $v \in V$. This ball
is defined inductively as follows: The $\Ball_{G}(v,0)$ consists of
node $v$
alone and no edges. For $t>0$, the $\Ball_{G}(v,t)$ includes $\Ball
_{G}(v,t-1)$.
In addition, it includes all factor nodes connected to variable nodes
in $\Ball_{G}(v,t-1)$ and associated edges, and all
variable nodes connected to those factor nodes and associated edges. [Thus,
$\Ball_{G}(v,t)$ includes nodes and edges up to a distance $t$ from $v$,
where variable nodes are said to
be separated by distance $1$ if they are connected to the same factor node.]
%

\begin{definition}\label{def:WC}
Let $\{G_n\}$, $G_n = (F_n,V_n,E_n)$ be a sequence of (random) factor graphs.
Let $\mu_n^{(t)}$ denote the empirical probability distribution of
$\Ball_{G_n}(v,t)$ when $v\in V_n$ is uniformly random.
Explicitly, for any locally finite rooted graph $\T_0$ of depth at
most $t$,
%
%
\begin{equation}
\mu_n^{(t)}\equiv\frac{1}{n} \sum
_{v \in V_n} \ind \bigl(\Ball_{G_n}(v,t)\simeq
\T_0 \bigr),
\end{equation}
(with $\simeq$ denoting equality up to graph vertex relabeling.)
We say that $\{G_n\}$ \emph{converges locally almost surely} to the
measure $\mu$ on
rooted graphs if, for any finite $t$, and any locally finite rooted
graph $\T_0$ of depth at most $t$, we have
%
%
\begin{equation}
\lim_{n\to\infty} \mu_n^{(t)}(\T_0)
= \mu^{(t)}(\T_0) \label
{eq:LocalConvergence}
\end{equation}
holds almost surely with respect to the graph law. Here, $\mu^{(t)}$
denotes the marginal of $\mu$ with respect to a ball of radius $t$
around the root.
\end{definition}

The same notion of local graph convergence was used earlier in the
literature, for instance, in
\cite{dembo2010ising,DemboMontanariBrazil,dembo2013factor}. Given a
random graph
distribution, we first draw a sequence of $\{G_n\}_{n\ge1}$, and then
check that $\mu_n^{(t)}(\T_0) \to\mu^{(t)}(\T_0)$ with probability
one. It is worth emphasizing the difference from a weaker notion (that
we never use below),
whereby we only check $\E_{G_n}\mu_n^{(t)}(\T_0) \to\mu^{(t)}(\T_0)$,
with $\E_{G_n}$ denoting expectation with respect to the graph
distribution.
In particular, establishing almost sure local graph convergence is
more
challenging that proving convergence of the expectation
$\E_{G_n}\mu_n^{(t)}(\T_0)$ since it requires to control the
deviations of the subgraph counts $\mu_n^{(t)}(\T_0)$.
With this clarification, we shall occasionally drop the ``almost
surely'' in ``converges locally almost surely.''

As part of our proof of Theorem~\ref{thm:MainSparse}, we obtain the
following result, which may be of independent interest.
(We refer to the next section for a complete definition of the
underlying probability space.)
%

\begin{theorem}
\label{thm:BP_FP_localweaklimit}
The sequence $\{\tmu_n^*\}_{n \geq0}$ converges locally almost surely
to the
probability distribution of $\tT_*$.
\end{theorem}

Theorem~\ref{thm:BP_FP_localweaklimit} is proved in Section~\ref{sec:BP_DE}.

Besides this, our proof uses several other ideas:
\begin{itemize}
\item We show that Theorem~\ref{thm:BP_FP_localweaklimit} can be used
to extend the low weight core solutions to low weight solutions of
the whole XORSAT instance (see Section~\ref{app:core_basis}).
\item We show that the periphery (the complement of the core in $G$) is
uniformly random with a given
degree sequence, conditioned on being
``peelable.'' We estimate precisely this degree distribution, and
show that the periphery is indeed peelable with positive probability
for that degree sequence (see Section~\ref{secperrr}).
\item In addition to the fixed point characterization, we obtain a
precise characterization of the
convergence rate of \BPzero\ (see Section~\ref{sec:BP_DE}), which allows
us to bound the sparsity of the basis constructed.
\item For $\alpha> \alpha_{\mathrm{d}}$, convergence to the BP fixed point
is geometric rather than quadratic. In this regime, we show that in
fact there are ``strings'' of degree 2 variable nodes that slow down
convergence but do not prevent the construction of a sparse basis. We
bound the sparsity by defining a certain ``collapse'' operator on such
strings (see Section~\ref{sec:collapse_peeling_fast}).
\end{itemize}

\subsection{Outline of the paper}
In Section~\ref{sec:Definitions}, we define some basic concepts and notation.
Section~\ref{sec:MainProof} describes the construction of clusters and sparse
bases, and uses this construction to prove Theorem~\ref{thm:MainSparse}.
Several basic lemmas necessary for the proof are stated in this section.

Section~\ref{sec:BP_DE} introduces a certain \emph{belief
propagation} (BP)
algorithm and a technical tool called \emph{density evolution}, that
play a key
role in our analysis: The BP algorithm naturally decomposes
the linear system into a ``backbone'' (consisting roughly of the 2-core
and the variables implied by it) and a ``periphery.''
Density evolution allows us to track the progress of BP, eventually
facilitating a tight characterization of basic parameters (like number
of nodes) of the backbone and periphery.

Section~\ref{sec:collapse_peeling_fast} bounds the number
of iterations of a ``peeling'' algorithm (related to BP) that plays a
key role
in our construction of a sparse basis. Section~\ref{secperrr}
proves a sharp characterization of the periphery.
Together, this yields the first (large) set of basis vectors.

Section~\ref{app:proof_of_core_separation_lemma} shows the 2-core has
very few sparse solutions, leading to well separated, small,
``core-clusters.'' Section~\ref{app:core_basis} shows how to produce a
sparse solution of the linear system corresponding to each sparse
solution of the 2-core subsystem. This yields the second (small) set of
basis vectors in our construction.

Several
technical lemmas are deferred to the \hyperref[app]{Appendices}.

A short version of this paper was presented at the ACM-SIAM Symposium
on Discrete Algorithms SODA 2012.
%
\section{Random $k$-XORSAT: Definitions and notation}
\label{sec:Definitions}

As described in the \hyperref[sec1]{Introduction}, each $k$-XORSAT
clause is actually a linear
equation over $\GF[2]$: $x_{i_a(1)}\oplus\cdots\oplus x_{i_a(k)}
=b_a$, for
$a\in[m]\equiv\{1,\ldots,m\}$. Introducing a vector $\uh_a\in\{
0,1\}
^n$, with
nonzero entries only at positions $i_1(a),\ldots,i_k(a)$, this can be
written as
$\uh_a^T\ux= b_a$. Hence, an instance is completely specified by the pair
$(\bH,\ub)$ where $\bH\in\{0,1\}^{m\times n}$ is a matrix with rows
$\uh_1^T,\ldots,\uh_m^T$ and $\ub=(b_1,\ldots,b_m)^T\in\{0,1\}^m$.
The space of solutions is therefore $\cS\equiv\{\ux\in\{0,1\}^n
\dvtx\bH
\ux=
\ub\ \mod\ 2\}$. If $\cS$ has at least one element $\ux^{(0)}$, then
$\cS\oplus\ux^{(0)}$ is just the set of solutions of the homogeneous linear
system corresponding to $\ub=\underline{0}$ (the kernel of $\bH$).
In the
following we
shall always assume $\alpha<\as(k)$, so that $\cS$ is nonempty w.h.p.
\cite{Cuckoo}. Note that, if $\cS$ is nonempty, then $\cS= \cS_0
\oplus\ux_0$ where $\ux_0\in\cS$ is any solution of the original
system and $\cS_0$ is the set of solutions of the homogeneous linear
system $\bH\ux= \underline{0}$. Since we are only interested in geometric
properties of the set of
solutions that are invariant under translation, we will assume
hereafter that
$\ub=\underline{0}$, and hence $\cS=\cS_0$.

An XORSAT instance is therefore completely specified by a binary matrix
$\bH$,
or equivalently by the corresponding factor graph $G=(F,V,E)$. This is a
bipartite graph with two sets of nodes: $F$ (\emph{factor} or \emph
{check} nodes)
corresponding to rows of $\bH$, and $V$ (\emph{variable} nodes)
corresponding to
columns of $\bH$. The edge set $E$ includes those pairs $(a,i)$, $a\in F$,
$i\in V$ such that $\bH_{ai}=1$.
We denote by $\G(n,k,m)$ the set of all factor graphs with $n$ labeled variable
nodes and $m$ labeled check nodes, each having degree exactly $k$ (with no
double edges). Note that $|\G(n,k,m)|= {n\choose k}^m$. With a slight
abuse of notation, we will denote by $\G(n,k,m)$ also the uniform
distribution over this set, and write $G\sim\G(n,k,m)$ for a
uniformly random such graph.

For $v\in V$ or $v\in F$, we denote by $\deg_G(v)$, the degree of node
$v$ in graph $G$ (omitting the subscript when clear from the context)
and we let $\partial v$ denote the set of neighbors of $v$.
We define the distance with respect to $G$ between two variable nodes
$i$, $j\in V$, denoted by $d_G(i,j)$ as the length of the shortest path from
$i$ to $j$ in $G$, whereby the length of a path is the number of check nodes
encountered along the path. Given a vector $\ux$, we denote by $\ux_A=
(x_i)_{i\in A}$ its restriction to $A$. The cardinality of set $A$ is
denoted by $|A|$.

We only consider the ``interesting'' case $k \geq3$, and the asymptotics
$m,n\to\infty$ with $m/n\to\alpha$ and $\alpha\in[0,\as(k))$, where
$\as(k)$ is the satisfiability threshold. Hence, $\bH$ has w.h.p.
maximum rank,
that is, $\rank(\bH) = m$ \cite{MM09}.
%

\begin{definition}
Let $F_0\subseteq F$. The \emph{subgraph induced} by $F_0$ is defined as
$(F_0,V_0, E_0)$ where $V_0 \equiv\{i \in V\dvtx\di\cap F_0\neq
\varnothing
\}$
and $E_0\equiv\{(a,i)\in E\dvtx a \in F_0, i \in V_0\}$. A \emph
{check-induced}
subgraph is the subgraph $(F_0,V_0,E_0)$ induced by some $F_0\subseteq F$.
Similarly, we can define the subgraph induced by $V_0 \subseteq V$, and
\emph{variable-induced} subgraphs.

Let $F_0\subseteq F$, $V_0\subseteq V$. The \emph{subgraph induced} by
$(F_0, V_0)$ is defined as $(F_0,V_0, E_0)$ where $E_0\equiv\{(a,i)\in
E\dvtx a
\in F_0, i \in V_0\}$.
\end{definition}

%
\begin{definition}
\label{def:stopping_set}
A \emph{stopping set} is a check-induced subgraph with the property
that every
variable node has degree larger than one with respect to the subgraph.
The \emph{$2$-core} of $G$ is its maximal stopping set.
\end{definition}
Notice that the \emph{maximal} stopping set of $G$ is uniquely defined
because the union of
two stopping sets is a stopping set.

All of our statements are with respect to the following probability
space, for a fixed $k\ge3$, and an integer sequence $\{m(n)\}_{n\in
\naturals}$.
For each $n$, we let $m=m(n)$ and consider the finite set $\Omega_n =
\G
(n,k,m)$ of $k$-XORSAT
instances with $n$ variables and $m$ clauses. Formally, each element
of $\G(n,k,m)$ is given by a pair $(\bH,\ub)$ where $\bH\in\{0,1\}
^{m\times
n}$ is a matrix with $k$ nonzero elements per row and
$\ub\in\{0,1\}^m$. [In the proofs, we shall occasionally replace
$\G(n,k,m)$ by slightly different sets---defined therein---for
technical convenience. The connection will be made clear.]

Since $\Omega_m$ is finite, it is straightforward to endow it with
the uniform probability measure $\prob_n$ over the complete
$\sigma$-algebra $2^{\Omega_n}$. The probability space underlying all
of our statements is the product space $\Omega=
\bigtimes_{n\in\naturals}\Omega_n$, with product probability measure
$\prob= \bigtimes_{n\in\naturals}\prob_n$. An event $\Ev\subseteq
\Omega$ is a an
element of the product $\sigma$-algebra. As a special example, let
$f_n\dvtx\Omega_n\to\reals$ be a sequence of functions, and $\omega=
(\omega_i)_{i\in\naturals}\in\Omega$.
Then existence of the limit $\lim_{n\to\infty}f_n(\omega_n)$ is a well
defined event in $\Omega$.

With a slight abuse of
language, we will identify any set $\Ev_n\subseteq\Omega_n$ with an
event, namely with the cylindrical set $C(\Ev_n) \equiv\{\omega=
(\omega_i)_{i\in\naturals}\in\bigtimes_{i\in\naturals}\Omega
_i\dvtx\omega_n\in\Ev_n\}$. We will typically write $\Ev_n$ for
$C(\Ev_n)$ and
$\prob(\Ev_n) = \prob(\{\omega_n\in\Ev_n\})$ for the
probability of such an event. We say that $\Ev_n$ occurs \emph{with high
probability} (\textit{w.h.p.}) if $\lim_{n\to\infty} \prob(\Ev_n) = 1$. We
say that
a sequence of events
$(\Ev_n)_{n>0}$ occurs \emph{eventually almost surely} if
$\lim_{n_0\to\infty}\prob(\bigcap_{n\ge n_0}C(\Ev_n)) = 1$.

Note that, with this probability space, the notion of local almost sure
convergence in Definition~\ref{def:WC} is well defined. Note that our
main results (Theorems \ref{thm:Main} and~\ref{thm:MainSparse})
are ``with high probability results,'' and hence do not require the
definition of a common probability space for different graph
sizes. This is indeed mainly a matter of technical convenience (and is
of course needed for Theorem~\ref{thm:BP_FP_localweaklimit}).

A key fact to be used in the following is that a giant $2$-core appears
abruptly at $\ad(k)$. Forms of the following statement appear in
\cite{Luby98,Molloy,DemboFSS}.
%

\begin{theorem}[(\cite{Luby98,Molloy,DemboFSS})]\label{lemma:BasicCore}
Assume $\alpha<\ad(k)$. Then w.h.p., a graph $G\sim\G(n,k,m)$ does not
contain any stopping set.

Vice versa, assume $\alpha>\ad(k)$. Then there exists
$\const(k)>0$ such that, w.h.p., a~graph $G$ drawn uniformly at random from
$\G(n,k,m)$ contains a
$2$-core of size larger than $\const(k)n$.
\end{theorem}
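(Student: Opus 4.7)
The plan is to prove this via the classical peeling (leaf-removal) argument on the configuration-model realization of $G \sim \G(n,k,m)$. Peeling iteratively selects a variable node $i$ of degree at most one and deletes it together with its unique incident check $a$ (if any), thereby reducing the degrees of the remaining $k-1$ variables incident to $a$. It is a standard fact (and follows directly from the definition of a stopping set) that once no degree-one variables remain, the surviving check-induced subgraph coincides with the $2$-core of $G$: any stopping set is untouched by peeling, while the peeling process itself terminates at a check-induced subgraph with every variable of degree at least two, which must be the \emph{maximal} such set by construction.

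To analyze peeling I would invoke Wormald's differential equation method, tracking the fractions of variables of each degree as the process unfolds. After an appropriate change of time variable the dynamics reduce to iterating the operator $f(z) = 1 - e^{-k\alpha z^{k-1}}$ on $z \in [0,1]$, starting from $z = 1$; this is precisely the density-evolution recursion for the $2$-core, and its derivation is a routine local-weak-limit / Poissonization computation on the configuration model. Wormald's theorem then yields concentration of the scaled process around the ODE solution with $o(n)$ error, as long as the trajectory stays away from fixed points.

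For the subcritical case $\alpha < \ad(k)$, the defining property (\ref{eq:AlphaDdef}) gives $f(z) < z$ on $(0,1)$, so the iterated trajectory decreases monotonically to $0$. Wormald's bound shows that after the bulk of peeling only $o(n)$ variables and checks can possibly survive. To upgrade this to the stronger statement that \emph{no} stopping set exists, I would complement the ODE analysis with a first-moment bound: the expected number of check-induced stopping sets on exactly $s$ checks in $\G(n,k,m)$ is at most $\binom{m}{s}\binom{n}{\lfloor ks/2\rfloor}/\binom{n}{k}^{s} \cdot [\text{combinatorial factor}]$, exploiting that a stopping set of $s$ checks uses at most $ks/2$ distinct variables (each incident to the subgraph at least twice). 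A straightforward calculation shows this sum is $o(1)$ across $1 \le s \le \delta n$ for all $\alpha < \ad(k)$, ruling out stopping sets of every size. For the supercritical case $\alpha > \ad(k)$, the largest positive root $Q_* \in (0,1)$ of $f(z) = z$ is crossed transversally, and Wormald's concentration shows peeling terminates with $\Theta(n)$ surviving variables and checks, which by construction form a stopping set of linear size contained in the $2$-core.

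The main obstacle is the endgame of subcritical peeling. Wormald's theorem is well suited to tracking macroscopic fractions but is silent about the last $o(n)$ variables; closing this gap requires the small-stopping-set first-moment bound above. The quantitative content of that bound is an expansion estimate for random $k$-uniform hypergraphs: any collection of $s$ hyperedges covering at most $ks/2$ distinct vertices is exponentially unlikely, and one must verify that the exponential rate is strictly negative for all $s/n \in (0,\delta]$ whenever $\alpha < \ad(k)$. This optimization over $s/n$ is the technical heart of the argument; once it is in hand, Theorem \ref{lemma:BasicCore} follows from routine configuration-model bookkeeping together with Wormald's concentration.
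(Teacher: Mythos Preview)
The paper does not supply its own proof of Theorem~\ref{lemma:BasicCore}; the statement is quoted as an established fact with citations to \cite{Luby98,Molloy,DemboFSS}, and no argument for it appears anywhere in the text. So there is nothing in the paper to compare your proposal against.

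That said, your outline matches the classical route taken in those references: analyze leaf-removal via Wormald's differential-equation method (equivalently, the density-evolution recursion $z\mapsto 1-e^{-k\alpha z^{k-1}}$) to control the process on the linear scale, and then treat the sublinear endgame separately. One small caution: your first-moment expression for small stopping sets is written loosely (the $\binom{n}{k}^{-s}$ factor and the unspecified ``combinatorial factor'' don't combine into a correct count as stated), and in some of the cited treatments the endgame is handled instead by a subcritical branching-process argument once the residual density drops below a fixed threshold, rather than by a direct union bound over stopping-set sizes. Either route works, but if you carry out the first-moment version you will need to be careful with the actual count of check-induced subgraphs on $s$ checks with every variable of degree at least two.
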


We will often refer to the depth-$t$ neighborhood of a node $v$ in $G$.
%

\begin{definition}
\label{def:ball}
Given a node $v \in V$ and an integer $t$, let $V'=\{u \dvtx u\in V,\dG
(u,v)\leq
t\}$. Then the \emph{ball of radius $t$ around node $v$} is defined as the
(variable-induced) subgraph $\Ball_G({v},{t})$ induced by $V'$. With
an abuse
of notation, we will use the same notation for the set of variable
nodes in
$\Ball_G({v},{t})$. Lastly, we define $|\Ball_G({v},{t})|$ to be the
number of
variable nodes in the subgraph $\Ball_G({v},{t})$.
\end{definition}

We will occasionally work with certain random infinite rooted factor
graphs, with marks on the edges or vertices. (Note that a factor graph
can be regarded as an ordinary graph, with additional marks on the
vertices to distinguish ``variable nodes'' from ``factor nodes.'') A
useful concept in this context is the one of ``unimodular'' random
rooted graphs, that we briefly recall next. For a more complete
presentation, we refer to the overview paper by Aldous and Lyons \cite
{AldousLyonsUnimodular}.

Informally, a random rooted (marked) graph is unimodular if
it looks the same (in distribution), when the root is moved to any
other vertex.
In order to formalize this notion, we denote by $\cG_*$ the space of
locally finite rooted graphs, with marks on the vertices or edges (we
assume marks to belong to some fixed finite set for simplicity).
We view two graphs that differ by an isomorphism as identical.
This space can be endowed by a metric that metrizes local
convergence, and hence a Borel $\sigma$-algera.

Analogously, we denote by
$\cG_{**}$ the space of doubly rooted graphs [a \emph{doubly rooted
graph} is a graph with two distinguished vertices, i.e.,
a triple $(G,u,v)$ where $G=(V,E)$ is a graph, and $u,v\in V$].
As for the simply rooted case, $\cG_{**}$ can be made into a complete metric
space; we regard it as a measurable space endowed
with the Borel $\sigma$-algebra.
%

\begin{definition}
Let $(G,\root)$ be a random rooted graph with root $\root$.
We say that $(G,\root)$ is unimodular if, for any measurable
function $f\dvtx\cG_{**}\to\reals_{\ge0}$, $(G,u,v)\mapsto
f(G,u,v)$, we have
%
%
\begin{equation}
\E \biggl[\sum_{v\in V(G)} f(G,\root,v)
\biggr] = \E \biggl[\sum_{v\in
V(G)} f(G,v,\root) \biggr].
\end{equation}
\end{definition}

Consequences, and equivalent versions of unimodularity can be found
in~\mbox{\cite{AldousLyonsUnimodular,MontanariStFlour}}.
%
%
\section{Proof of Theorem \texorpdfstring{\protect\ref{thm:MainSparse}}{2}}
\label{sec:MainProof}

In this section, we describe the construction of clusters and sparse
bases within the clusters [or for the whole space of solutions for
$\alpha\in[0,\ad(k))$]. 
The analysis of
this construction is given in Section~\ref{sec:AnalysisConstruction}
in terms
of a few technical lemmas. Finally, the formal proof of Theorem~\ref
{thm:MainSparse} is given in Section~\ref{sec:Together}.

%
%
\begin{table}
\tablewidth=200pt
\caption{Synchronous peeling algorithm} \label{table:sync_peeling}
\begin{tabular*}{200pt}{@{\extracolsep{\fill}}l@{}}
\hline
\multicolumn{1}{c}{\textsc{\textbf{Synchronous Peeling}}
$\bolds{(}\mathbf{Graph}\ \bolds{G= (F, V, E))}$}\\
\hline
$F' \leftarrow F$\\
$V' \leftarrow V$\\
$E' \leftarrow E$\\
$J_0 \leftarrow(F,V,E)$, $t=0$\\
While $J_t$ has a variable node of degree $\leq1$ do\\
\quad $t \leftarrow t+1$\\
\quad $V_t \leftarrow\{v \in V'\dvtx\deg_{G_{t-1}}(v) \leq1\}$\\
\quad $F_t \leftarrow\{a \in F'\dvtx(v,a) \in
\quad E'\ \mathrm{for\ some}\ v \in V_t\}$\\
\quad $E_t \leftarrow\{(v,a) \in E'\dvtx a \in F_t, v \in V'\}$\\
\quad $F' \leftarrow F'\setminus F_t$\\
\quad $V' \leftarrow V'\setminus V_t$\\
\quad $E' \leftarrow E'\setminus E_t$\\
\quad $J_t \leftarrow(F',V',E')$\\
End While\\
$\TC\leftarrow t$\\
$\GC\leftarrow G'$\\
Return $ ( \GC, \TC, (F_t)_{t=1}^{\TC},
(V_t)_{t=1}^{\TC}, (J_t)_{t=1}^{\TC}  )$\\
\hline
\end{tabular*}
\end{table}
%

\subsection{Construction of the sparse basis}
\label{subsec:sparse_basis_const_periphery}
The construction of a sparse basis, which is at the heart of
Theorem~\ref{thm:MainSparse}, is based on the following
algorithm,
formally stated in Table~\ref{table:sync_peeling}. The algorithm
constructs a sequence
of residual factor graphs $(J_t)_{t \geq0}$, starting with the
instance under
consideration $J_0=G$. At each step, the new graph is constructed by removing
all variable nodes of degree one or zero, their adjacent factor nodes,
and all
the edges adjacent to these factor nodes. We refer to the
algorithm as \emph{synchronous peeling} or simply \emph{peeling}.

We denote the sets of nodes and edges removed at step (or round) $t\ge
1$
by $(F_t,V_t,E_t)$, so that $J_{t-1} = (F_t,V_t,E_t)\cup J_t$.
Notice that, at each step, the residual graph $J_t$ is check-induced.
The algorithm
halts when the residual graph does not contain any variable node of degree
smaller than two. We let the total number of iterations be $\TC(G)$,
where we
will drop the explicit dependence on $G$ when it is clear from context. The
final residual graph is then $J_{\TC} \equiv\GC$. The following elementary
fact is used in several papers on this topic \cite{Luby98,Molloy,DemboFSS}.
%

\begin{remark}
\label{rem:peeling_leaves_2core}
The residual graph $\GC$ resulting at the end of
synchronous peeling is the $2$-core of $G$.
\end{remark}

It is convenient to reorder the factors (from $1$ to $m$) and variables
(from $1$ to $n$) as follows. We index the factors in increasing order
according to $F_1, F_2, \ldots, F_{\TC}$, choosing an arbitrary order within
each $F_t$ for $1 \leq t \leq\TC$.

For the variable nodes, we first index nodes in $V_1$, then nodes in
$V_2$ and
so on. Within each set $V_t$, the ordering is chosen in such a way that nodes
that have degree $0$ in $J_{t-1}$ have lower index than those with
degree $1$
[notice that, by definition, for any $v\in V_t$, $\deg_{J_{t-1}}(v)\le1$].
Finally, for variable nodes in $V_t$ that have degree $1$ in $J_{t-1}$,
we use
the following ordering. Each such node $v \in V_t$ is connected to a unique
factor node in $F_t$. Call this the \emph{associated factor}, and
denote it by
$f_v$. We order the nodes $\deg_{J_{t-1}}(v)=1$ according to the order
of their
associated factor, choosing an arbitrary internal order for variable nodes
with the same associated factor.

For $A\subseteq F$, $B\subseteq V$, we denote by $\bH_{A,B}$ the
submatrix of
$\bH$ consisting of rows with index $a\in A$ and columns $i\in B$. The
following structural lemma is immediate, and we omit its proof.

%
\begin{lemma}\label{lemma:structural}
Let $G$ be any factor graph [not necessarily in
$\G(n,k,m)$] with no $2$-core. With the order of factors
and variable nodes defined through synchronous peeling, the matrix $\bH
$ is
partitioned in $\TC\times\TC$ blocks\break $\{\bH_{F_s,V_t}\}_{1\le s\le
\TC, 1\le t\le\TC}$
with the following structure:
\begin{longlist}[1.]
\item[1.] For any $s> t$, $\bH_{F_s,V_t} =0$.
\item[2.] The diagonal blocks $\bH_{F_s,V_s}$, have a \emph{staircase}
structure, namely for each such block the columns can be
partitioned into consecutive groups $(\cC_l)_{l=0}^\ell$, for $\ell= |F_s|$,
such that columns in $\cC_0$ are equal to $0$, columns in $\cC_1$ have
only the first entry equal to $1$, columns in $\cC_2$ have only the
second entry equal to $1$, etc. See below for an example.
\end{longlist}
\end{lemma}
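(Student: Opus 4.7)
The plan rests on one bookkeeping fact about synchronous peeling: if a factor node $a$ ends up in $F_s$, then $a$ was alive in every residual graph $J_0, J_1, \ldots, J_{s-1}$, so any original edge $(a,v)\in E$ is present in each $J_r$, $r\le s-1$, for which $v$ is still alive, and therefore contributes to $\deg_{J_r}(v)$. I would use this single observation to handle both parts.

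For Part 1, I would fix $s>t$ and an arbitrary pair $a\in F_s$, $v\in V_t$, and show $\bH_{a,v}=0$. Suppose for contradiction that $(a,v)\in E$. Since $a$ survives past round $t$ and $v$ is present in $J_{t-1}$, the edge $(a,v)$ lies in $J_{t-1}$ and contributes to $\deg_{J_{t-1}}(v)$. By definition of $V_t$, we have $\deg_{J_{t-1}}(v)\le 1$. If this degree is zero, no such edge can exist; if it is one, then $v$'s unique neighbor in $J_{t-1}$ is its associated factor $f_v$, which the algorithm places in $F_t$, forcing $a=f_v\in F_t$ and contradicting $a\in F_s$ with $s>t$.

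For Part 2, I would run the same dichotomy inside the diagonal block $\bH_{F_s,V_s}$. Columns indexed by degree-zero variables of $V_s$ (the group $\cC_0$) are identically zero by exactly the Part 1 argument applied with $s$ in place of $t$, since every $a\in F_s$ is alive in $J_{s-1}$. For a degree-one variable $v\in V_s$ with associated factor $f_v\in F_s$, the same argument shows that no factor of $F_s\setminus\{f_v\}$ is adjacent to $v$ in $G$ (otherwise $v$ would have degree at least two in $J_{s-1}$), so the column for $v$ inside the block has exactly one nonzero entry, located at the row indexed by $f_v$. Grouping the degree-one variables in $V_s$ by their associated factor, and using the chosen ordering on $F_s$ to index the resulting groups as $\cC_1,\ldots,\cC_{|F_s|}$, yields precisely the staircase shape.

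I do not foresee any real obstacle: the whole lemma is just a translation of ``when does peeling eliminate which node'' into statements about rows and columns of $\bH$. The only subtle point I would flag explicitly is the placement of degree-zero variables at the beginning of the column ordering within $V_s$ — this is what prevents zero columns from being interspersed among the nonzero steps and keeps the staircase pattern clean. Everything else reduces to tracking what ``alive at round $s-1$'' means for a factor node, combined with the elementary degree bound built into the definition of $V_t$.
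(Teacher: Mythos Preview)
Your proof is correct. The paper omits the proof entirely, calling the lemma ``immediate,'' and your argument supplies exactly the straightforward bookkeeping the authors had in mind: tracking which factor nodes are alive in which residual graph and using the degree-$\le 1$ constraint defining $V_t$.
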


An example of a staircase matrix
%
%
\begin{equation}
\left[ %
\matrix{ 0 & 0 & 1 & 1 & 0 & 0 & 0 & 0
\cr
0 & 0 & 0 & 0 & 1 & 0 & 0 & 0
\cr
0 & 0 & 0 & 0 & 0 & 1 & 1 & 0
\cr
0 & 0 & 0 & 0 & 0 & 0 & 0 & 1 }
 \right].
\end{equation}
Note that $V_t$ is not empty and $F_t$ is not empty for all $t<\TC$.
On the other hand, $F_{\TC}$ may be empty, in which case, we adopt the
convention that all
columns corresponding to $V_{\TC}$ are included in $\cC_0$.

The above ordering reduces $\bH$ to an essentially upper triangular
matrix. It
is then immediate to construct a basis for its kernel. We will do this by
partitioning the set of variable nodes as the disjoint union $V= U\cup
W$ in
such a way that $U\in\{0,1\}^{m\times m}$ and $\bH_U$ is square with
full rank, and
$W\in\{0,1\}^{m\times(n-m)}$. We then treat $\ux_W$ as independent
variables and $\ux_U$ as dependent ones. The partition is then
constructed by
letting $W= W_1\cup\cdots\cup W_{\TC}$ and $U= U_1\cup\cdots\cup
U_{\TC}$,
whereby for each $t\in\{1,\ldots, \TC\}$, $W_t\subseteq V_t$ is
chosen by
considering the staircase structure of block $\bH_{F_t,V_t}$ and the
corresponding partition over columns $V_t=\cC_0\cup\cC_1\cup\cdots
\cup
\cC_{\ell}$. We let $W_t=\cC_0 \cup\cC'_1\cup\cdots\cup\cC
'_{\ell}$, where
$\cC'_i$ includes all the elements of $\cC_i$ except the first (and is
empty if
$|\cC_i|=1$). Finally, $U_t\equiv V_t\setminus W_t$. With these definitions,
$\bH_{F,U}$ is an $m\times m$ binary matrix with full rank. In
addition, it is
upper triangular with diagonal blocks $\bH_{F_t,U_t}=I_{|U_t|}$ for
$t=1,\ldots,\TC$, where $I_r$ is the $r \times r$ identity matrix.

In order to construct a sparse basis for the clusters when
$\alpha>\ad(k)$ [and hence prove Theorem~\ref{thm:MainSparse}, point
2(a)], we will have to consider matrices $\bH$ (without a
2-core) that contain rows with exactly $2$ nonzero entries (i.e., check
nodes of degree~$2$).
Whenever this happens, the construction must be modified, by
introducing the notion of \emph{collapsed graph}. The basic idea is
that a factor node of degree $2$
constrains the adjacent variables to be identical and hence we can
replace each set of variables
that are thus constrained to be equal
by a single proxy variable (a~``super-node''). This proxy variable node
will have an edge with each factor that
was previously connected to a replaced variable
node, with a small modification:
Since we are operating in $\GF(2)$, we retain a single
edge for edges with odd multiplicity, and drop edges with even multiplicity.
%

\begin{definition}\label{def:collapsed_graph}
The \emph{collapsed graph} $ G_* = (F_*,V_*,E_*)$ of a graph $G =
(F,V,E)$ is the graph of connected components in the subgraph induced
by factor
nodes of degree $2$. Formally,
\begin{eqnarray*}
F_* &\equiv& \bigl\{f \in F \dvtx|\nbr f| \geq3 \bigr\},
\\
V_* &\equiv& \bigl\{S \subseteq V \dvtx d_{G^{(2)}}(i,j) < \infty, \forall
i,j \in S\bigr\},
\\
E_* &\equiv& \bigl\{(S,a)\dvtx S \in V_*, a\in F_*, \bigl|\bigl\{ i\in S \mbox{ s.t. }(i,a)
\in E\bigr\}\bigr| \mbox{ is odd}\bigr\},
\end{eqnarray*}
where $G^{(2)}$ is the subgraph of $G$ induced by factor nodes of degree
$2$. We
let $n_* \equiv|V_*|$, $m_* \equiv|F_*|$. An element of $V_*$ is
referred to
as a \emph{supernode}.
\end{definition}

Note that for a graph $G$ with no 2-core, the collapsed graph $G_*$
also has no
$2$-core. We let $\bQ$ denote the corresponding adjacency matrix of $G_*$.
Finally, we construct a binary matrix ${\mathbb L}$ with rows indexed
by $V$, and
columns indexed by $V_*$, and such that $L_{i,v}=1$ if and only if $i$ belongs
to connected component $v$. We apply peeling to $G_*$, thus obtaining the
decomposition of $V_*$ into $U_*\cup W_*$ as described for the original graph
$G$ above.

The following is the key deterministic lemma on the construction of the basis.
We denote the size of the component of $v\in V_*$ in $G^{(2)}$ by $S(v)$,
and for $v\in V_*$,
$t\ge0$ we let $S(v,t) = \sum_{w\in\Ball_{G_*}(v,t)} S(w)$ be the
sum of
sizes of vertices within distance $t$ from $v$.

%
\begin{lemma}\label{lemmabasic}
Assume that $G_*$ has no $2$-core, then the columns of
\[
{\mathbb L}\left[ %
\matrix{(
\bQ_{F_*,U_*})^{-1}\bQ_{F_*,W_*}
\vspace*{2pt}\cr
I_{(n_*-m_*) \times(n_*-m_*)} }
 \right] %
\]
form an $s$-sparse basis of the kernel of $\bH$, with $s = \max_{v_*\in
V_*}S(v_*,\TC(G_*))$. Here, we have ordered the super-nodes $v_* \in V_*$
as $U_*$ followed by $W_*$,
and the matrix inverse is taken over $\GF[2]$.
\end{lemma}

The proof of Lemma~\ref{lemmabasic} is
presented in the Appendix~\ref{app:Sparse_and_basis}.

%
%
\subsection{Construction of the cluster decomposition}
\label{sec:ClusterConstruction}

When $G$ does not contain a $2$-core [which happens w.h.p.for
$\alpha<\alpha_{\mathrm{d}}(k)$]
the above lemma is sufficient to characterize the space of solutions
$\cS$.
When $G$ contains a $2$-core [w.h.p. for $\alpha>\ad(k)$]
we need to construct the partition of the space of solutions
$\cS_1\cup\cdots\cup\cS_N$.

We let $\GC=(\FC
,\VC,\EC)$ denote the $2$-core of $G$, and $P_G\dvtx\{0,1\}^V\to
\{0,1\}^{\VC}$ be the projector that maps a vector $\ux$ to its restriction
$\ux_{\VC}$. Next, we let $\Hcore\equiv\bH_{\FC,\VC}$ be the
restriction of $\bH$ to the $2$-core, and denote its kernel by $\Score$.
Obviously, for any $\ux\in\cS$, we have $P_G\ux\in\Score$. Further,
%
%
\begin{equation}
\cS= \bigcup_{\uxC\in\Score} \cS(\uxC), \qquad\cS(\uxC)
\equiv \{\ux\in\cS\dvtx P_G\ux= \uxC \},
\end{equation}
with $\{\cS(\uxC)\}_{\uxC\in\Score}$ forming a partition of $\cS$.

It is easy to check $\bH_{F\setminus\FC,V\setminus\VC}$ has full row
rank. For instance, this follows from the fact that the subgraph
induced by $({F\setminus\FC,V\setminus\VC})$ is annihilated by
peeling (cf.
Remark~\ref{rem:peeling_leaves_2core}). Thus,
$\cS(\uxC)$ is nonempty for all $\uxC\in\Score$, and the sets
$\cS(\uxC)$ are simply translations of each other.

It turns out that $\{\cS(\uxC)\}_{\uxC\in\Score}$ is not exactly
the partition of $\cS$ that we seek. In our next lemma, we show that
the set of solutions
of the core $\Score$ can be partitioned in well-separated
core-clusters. Moreover, the core-clusters are
small and have a high conductance. We will form sets in our partition
of $\cS$
by taking the union of $\cS(\uxC)$ over $\uxC$ that lie in a
particular core-cluster.

We write $\ux' \preceq\ux$ for binary vectors $\ux', \ux$ if $x'_i
\leq
x_i$ for all $i$. We write $\ux' \prec\ux$ if $\ux' \preceq\ux$ and
$\ux'
\neq\ux$. We need the following definition:
%
%
\begin{equation}
\Lcore(\ell) \equiv\bigl\{\ux\dvtx\ux\in \Score(G), d(\ux,\underline{0}) \leq
\ell, \nexists\ux' \in \Score(G) \setminus \{\underline{0}\} \mbox{
s.t. } \ux' \prec\ux\bigr\}.
\end{equation}
The set $\Lcore(\ell)$ consists of minimal nonzero solutions of the
$2$-core having
weight at most $\ell$. (Here, the support of a binary vector $\ux$ is the
subset of its coordinates that are nonzero, and the weight of $\ux$
is the size of its support.)

%
\begin{lemma}
\label{lemma:core_few_low_weight}
For any $\alpha\in(\ad(k),\as(k))$, there exists
$\ve= \ve(\alpha,k)>0$ such that the following holds.
Take any sequence $(s_n)_{n \geq1} $ such that
$\lim_{n\rightarrow\infty} s_n = \infty$ and $s_n \leq\ve n$.
Let $G\sim\G(n,k,\alpha n)$.
Then w.h.p., we have:
\textup{(i)}~$\Lcore(\ve n) = \Lcore(s_n)$; \textup{(ii)}~$ |\Lcore(\ve n)| < s_n$;
\textup{(iii)} For any $\ux, \ux' \in\Lcore(\ve n)$, we have
$\ux\wedge\ux' = \underline{0}$, where $\wedge$ denotes bitwise AND.
In other words, different elements
of $\Lcore(\ve n)$ have disjoint supports.
\end{lemma}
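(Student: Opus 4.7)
The plan is to apply the first moment method to the weight enumerator of the $2$-core $\Hcore$, combined with a pair-counting argument for the disjoint-support claim. Conditioned on its size and degree sequence --- which concentrate at deterministic values for $\alpha\in(\ad(k),\as(k))$ --- the $2$-core is a configuration-model bipartite graph with all factor degrees equal to $k$ and variable degrees drawn from a truncated Poisson distribution (truncated to $d\geq 2$), with the Poisson parameter determined by the density-evolution fixed point that defines $\ad(k)$.

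First I would estimate $\E[A_w]$, the expected number of (not necessarily minimal) codewords of weight $w$ in $\Hcore$, for $w\in[2,\ve n]$. Standard configuration-model computations give $\tfrac{1}{n}\log\E[A_{\omega n}]\to\phi(\omega)$ for an explicit function $\phi$, and I would choose $\ve$ small enough that $\phi(\omega)<0$ uniformly on $(0,\ve]$; a refinement shows $\E[A_w]\leq C\rho^{w}$ with $\rho<1$ throughout the sublinear-but-growing regime, so that $\sum_{w=W_0}^{\ve n}\E[A_w]$ can be made arbitrarily small by taking $W_0$ large. By Markov's inequality, with high probability no codeword has weight in $[W_0,\ve n]$, so every element of $\Lcore(\ve n)$ has weight at most $W_0$. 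A separate direct configuration-model calculation gives $\E[A_w]=O(1)$ for each fixed $w\leq W_0$, so $|\Lcore(\ve n)|$ is bounded in probability. Since $s_n\to\infty$, both the maximum weight in $\Lcore(\ve n)$ and its cardinality are eventually less than $s_n$ with high probability, yielding claims $(i)$ and $(ii)$. For claim $(iii)$, I would bound the expected number $M$ of unordered pairs of distinct codewords $\ux,\ux'$ in $\Score\setminus\{\u0\}$ with weights at most $W_0$ and $\text{supp}(\ux)\cap\text{supp}(\ux')\neq\emptyset$. Since $\ux+\ux'$ is then a third nonzero codeword whose support is contained in $\text{supp}(\ux)\cup\text{supp}(\ux')$, a set of size at most $2W_0$, a first-moment computation on such triples (equivalently, enumerating support patterns in the configuration model) shows $\E[M]\to 0$, since two small overlapping codewords would constitute a rare dense local structure in a sparse random $2$-core.

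The main obstacle is the first step, i.e.\ establishing $\phi(\omega)<0$ uniformly on $(0,\ve]$ together with the sublinear decay $\E[A_w]\leq C\rho^{w}$. The delicacy arises because the $2$-core contains variables of degree exactly $2$, which (as with low-variable-degree LDPC codes) could in principle generate short-cycle codewords of weight $\Theta(\log n)$. The argument must carefully exploit that factor degrees are uniformly $k\geq 3$ and that, for $\alpha\in(\ad(k),\as(k))$, the degree-$2$ fraction in the $2$-core is controlled in a way that suppresses dense clusters of such short cycles. Once this is in place, the pair/triple first-moment computation needed for the disjoint-support claim is a refinement of the same machinery.
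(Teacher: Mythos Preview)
Your first-moment approach for $(i)$ and $(ii)$ is essentially what the paper does, but the paper organizes the computation differently in a way that makes $(iii)$ fall out almost for free. Rather than bounding $\E[A_w]$ uniformly, the paper splits the count of even subgraphs of the core by whether the subgraph contains any variable of degree $\ge 3$. For subgraphs with at least one such variable (parameter $j\ge 1$ in their notation), a first-moment computation very much like yours gives $\sum_{l\le \ve n}\sum_{j\ge 1}\E[\N(G;l,j)]=O(1/n)$, so w.h.p.\ every low-weight codeword is supported entirely on degree-$2$ variable nodes. The remaining codewords are then exactly the even subgraphs of the graph $G_2$ induced by degree-$2$ variables; the key inequality $\theta_2 = \eta_*(k-1)/(e^{\eta_*}-1)<1$ (your ``degree-$2$ fraction is controlled'' remark, made precise) says $G_2$ is subcritical, and Wormald's Poisson limit for short cycles gives a bounded number of bounded-length simple cycles.

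This structural split is what cleanly delivers $(iii)$: since $G_2$ is subcritical, its components are w.h.p.\ trees or unicyclic, so distinct cycles lie in distinct components and are automatically vertex-disjoint. Your proposed route to $(iii)$ via first-moment on overlapping pairs has a gap as written: you count pairs of \emph{arbitrary} low-weight codewords in $\Score\setminus\{\u0\}$, but such overlapping pairs exist generically (e.g.\ if $C_1,C_2$ are disjoint cycles then $(C_1,\,C_1\oplus C_2)$ is an overlapping pair), so $\E[M]\not\to 0$. You would need to restrict to \emph{minimal} codewords and argue that two overlapping minimal codewords force a local subgraph with $\ge 2$-dimensional kernel on $O(1)$ vertices; this can be pushed through, but the paper's subcriticality argument is both shorter and more transparent.
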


Lemma~\ref{lemma:core_few_low_weight} is proved in Section~\ref
{app:proof_of_core_separation_lemma}.

%
\begin{remark}\label{remark:core_groups}
Let
$
\Ev_n
$
be the event that points (i), (ii) and (iii) in Lemma~\ref
{lemma:core_few_low_weight} hold. Assume $\Ev_n$ and $s_n^2 < \ve
n$.
Let $\Sci{1}$ be the set of core solutions with weight less than
$\ve n$. Then $\Sci{1}$ forms a linear space over
$\GF(2)$ of dimension $|\Lcore(\ve n)|$, with $\Lcore(\ve n)$ being
a $s_n$-sparse basis for $\Sci{1}$. Moreover, every element of $\Sci
{1}$ is
$s_n^2$-sparse.
\end{remark}

Let
%
%
\begin{equation}
g \equiv2^{|\Lcore(\ve n)|}. \label{eq:g_defn}
\end{equation}
%
We partition the set $\Score$ of core solutions in
disjoint \emph{core-clusters}, as follows. For $\ux,\ux'\in\Score
$, we
write $\ux\simeq\ux'$ if
$\ux\oplus\ux'\in\spn(\Lcore(\ve n))$. It is immediate to see that
$\simeq$ is an equivalence relation. We define the core-clusters to be
the equivalence classes of $\simeq$. Obviously, the core clusters are
affine spaces that differ by a translation, each containing $g\le2^{s_n}$
solutions. Their number is to be denoted by $N$.
Denote the core-clusters by $\Sci{1}, \Sci{2},
\ldots, \Sci{N}$.
Note that for any $\ux, \ux' \in\Score$ belonging to different
core-clusters, we have $d(\ux, \ux') > n\ve$, that is,
the core-clusters are well separated.
We use the following partition of the solution space (including
noncore variables) $\cS$ into clusters, based on the core-clusters
defined above:
%
%
\begin{equation}
\label{eq:PartitionDefinition} %
\cS= \bigcup_{i=1}^N
\cS_i,\qquad \cS_i\equiv \{\ux\in\cS\dvtx P_G
\ux\in\Sci{i} \}.%
\end{equation}

A version of Lemma~\ref{lemma:core_few_low_weight} was
claimed in \cite{MezRicZec_XOR,CoccoXOR,MM09}. These papers capture the
essence of
the proof but miss some technical details, and make the
erroneous claim that, w.h.p. each pair of core solutions is separated
by Hamming distance $\Omega(n)$.

We next want to study the internal structure of clusters.
By linearity, it is sufficient to consider only one of them, say $\cS_1$,
which we can take to contain the origin~$\underline{0}$. For any $\ux
\in\cS_1$,
we have
$P_G\ux\in\Sci{1}=\spn(\Lcore(\ve n))$, and $\Lcore(\ve n)$ forms a
$s_n$-sparse basis for $\Sci{1}$, which coincides with the projection
of $\cS_1$ onto the core.
Consider the subset of solutions $\ux\in\cS$, such that $P_G\ux
=\uxC$ for
some $\uxC\in\Sci{1}$. The set of
variables that take the same value for all solutions in this set is
strictly larger than the
$2$-core. In order to capture this remark, we define the
\emph{backbone} (variables that are uniquely determined by the core
assignment) and \emph{periphery} (other variables) of a graph $G$.
%

\begin{definition}
\label{def:backboneaugmentation_backbone_periphery}
Define the \emph{backbone augmentation procedure} on $G$ with the
initial check induced
subgraph $G_{\mathrm b}^{(0)}$ as follows. Start with $G_{\mathrm b}^{(0)}$.
For any $t\ge
0$, pick
all check nodes which are not in $G_{\mathrm b}^{(t)}$ and have at most one
neighbor outside $G_{\mathrm b}^{(t)}$. Build $G_{\mathrm b}^{(t+1)}$
by adding
all these
check nodes
and their incident edges and neighbors to $G_{\mathrm b}^{(t)} $. If no such
check nodes
exist, terminate and output $G_{\mathrm b}= G_{\mathrm b}^{(t)}$.

The \emph{backbone} $\GB=(\FB,\VB,\EB)$ of a graph $G= (F, V, E)$ is
the output of backbone augmentation procedure on $G$ with the initial subgraph
$\GC$, the 2-core of the graph $G$.

The \emph{periphery} $\GP$ of a graph $G=(F,V,E)$ is the subgraph
induced by
the factor nodes $\FP= F\setminus\FB$ and variable nodes $\VP=
V\setminus\VB$
that are not in the backbone.\footnote{Notice that there may be a few
variables (w.h.p. at most a constant number) in the
periphery that also are uniquely determined by the core assignment.}
\end{definition}

We can now define our basis for $\cS_1$. This is formed by two
sets of vectors. The first set has a
vector corresponding to each element of $\Lcore(\ve n) $. For each
$\uxC\in
\Lcore(\ve n) $, we construct a sparse solution $\ux\in\cS_1$ such that
$P_G \ux= \uxC$ (Lemma~\ref{lemma:core_sparse_basis} below
guarantees the existence of such a vector, and bounds its sparsity). This
set of vectors forms a basis for the projection of $\cS_1$ onto the
backbone.

For the second set of vectors, let $\HP\equiv\bH_{\FP,\VP}$ be the matrix
corresponding to the periphery graph. We construct a sparse basis for
the kernel of the matrix~$\HP$, following the general procedure
described in Section~\ref{subsec:sparse_basis_const_periphery}.
Namely, we first collapse
the graph and then peel it to order the
nodes. Note that this second set of basis vectors vanishes on
the backbone variables. Lemma~\ref{lemmabasic} is used to
bound its sparsity.

The first set of vectors is characterized as below
(see Section~\ref{app:core_basis} 
for a proof).

%
\begin{lemma}\label{lemma:core_sparse_basis}
Consider any $\alpha\in(\ad(k),\as(k))$. Let $G$ be drawn uniformly
from $\G(n,k,m)$.
Take $\ve(\alpha, k)> 0$ from Lemma~\ref{lemma:core_few_low_weight},
and consider any sequence $(c_n)_{n \geq1}$ such that $\lim_{n
\rightarrow\infty}c_n = \infty$. Then, with high probability, the
following is true. For every $\uxC
\in\Lcore(\ve n) $, there exists $c_n$-sparse $\ux\in\cS_1$ such
that $P_G \ux= \uxC$.
\end{lemma}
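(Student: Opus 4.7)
The plan is to construct the lift $\ux$ in two stages: a deterministic extension of $\uxC$ from the $2$-core through the backbone, followed by a sparse periphery correction. Set $\ux_{\VC} = \uxC$ and run the backbone-augmentation procedure on $G$. Each variable $v \in \VB \setminus \VC$ is added via a unique forcing check $a$ whose remaining neighbors $\partial a \setminus \{v\} \subseteq \VB$ have already been determined, and we set $x_v = \bigoplus_{u \in \partial a \setminus \{v\}} x_u$. This yields $\ux_{\VB}$ uniquely. Next, define $\uc \in \{0,1\}^{\FP}$ whose entry at $a \in \FP$ equals $\bigoplus_{u \in \partial a \cap \VB} x_u$ (the residual violation on the periphery checks), and choose $\ux_{\VP}$ satisfying $\bH_{\FP, \VP} \ux_{\VP} = \uc$. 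Since $\GP$ has no $2$-core (established in Section~\ref{sec:Periphery}), such a $\ux_{\VP}$ exists and can be produced by back-substitution on the collapsed-and-peeled periphery in the spirit of Lemma~\ref{lemma:BasicConstruction}.

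To bound the sparsity of $\ux$, Lemma~\ref{lemma:core_few_low_weight} applied with a slowly growing $s_n \to \infty$ gives $|T| := |\text{supp}(\uxC)| \leq s_n$ whp, so $|T|$ is $O_p(1)$ uniformly over $\uxC \in \Lcore(\ve n)$. Each non-core backbone variable $v$ satisfies $x_v = \bigoplus_{u \in T_v} (\uxC)_u$ for some trace set $T_v \subseteq \VC$ obtained by iteratively unrolling the forcing rule, so $v \in \text{supp}(\ux)$ only if $|T_v \cap T|$ is odd. The bound on the peeling depth $\TC$ from Section~\ref{sec:collapse_peeling_fast}, combined with the locally tree-like structure of $G$ and the sharp backbone/periphery analysis of Section~\ref{sec:Periphery}, implies that the number of non-core backbone variables whose trace set hits any particular $u \in T$ is $O_p(1)$. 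Thus $|\text{supp}(\ux) \cap (\VB \setminus \VC)| = O_p(|T|) = O_p(1)$. An analogous bound handles the periphery correction: $\uc$ has support at most $k \cdot |\text{supp}(\ux) \cap \VB|$, and the back-substitution via Lemma~\ref{lemma:BasicConstruction} expands the support by only an $O_p(1)$ factor (governed by the component-size quantity $\max_{v_*} S(v_*, \TC)$ inside the collapsed periphery), again by the analysis of Section~\ref{sec:Periphery}.

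Combining, $|\text{supp}(\ux)| = O_p(1)$, so for any fixed $c_n \to \infty$ the event $\{|\text{supp}(\ux)| \leq c_n\}$ holds whp for every $\uxC \in \Lcore(\ve n)$. A union bound over $\Lcore(\ve n)$, whose cardinality Lemma~\ref{lemma:core_few_low_weight} bounds by $s_n$, completes the argument.

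The main obstacle is the sparsity estimate in the second stage: although the forcing cascade reaches variables at graph-distance up to $\TC$ (which can be polylogarithmic in $n$) from the core, one must show that only a bounded number of non-core backbone variables end up non-zero. A naive local-neighborhood bound gives only a $\polylog(n)$ estimate, which is insufficient for the "$c_n$-sparse for every $c_n \to \infty$" conclusion. The crucial ingredient is the refined structural analysis of Sections~\ref{sec:BP_DE}--\ref{sec:Periphery}, which via belief propagation and density evolution controls the combinatorial structure of the trace sets $T_v$ and quantifies the cancellations in the $\GF(2)$-valued cascade.
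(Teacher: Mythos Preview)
Your two-stage construction diverges from the paper's route and, as written, has two genuine gaps.

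\textbf{Periphery step.} You claim the back-substitution via Lemma~\ref{lemma:BasicConstruction} expands the support by an $O_p(1)$ factor ``governed by $\max_{v_*} S(v_*,\TC)$.'' But that quantity is only bounded by $(\log n)^{\const}$ in Lemma~\ref{lemma:peelability_implies_good}$(iii)$, and this is essentially tight: $\TC=\Theta(\log\log n)$ and balls of that radius have polylogarithmic size. So an $O_p(1)$-supported $\uc$ yields a $\polylog(n)$-supported periphery solution. Since the lemma demands $c_n$-sparsity for \emph{every} $c_n\to\infty$ (in particular $c_n=\log\log n$), this step fails as stated.

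\textbf{Backbone step.} The claim that for each $u\in T$ only $O_p(1)$ backbone variables have $u\in T_v$ is not supported by Section~\ref{sec:Periphery}, which treats the periphery; the paper explicitly notes (Sections~\ref{sec:AnalysisConstruction} and~\ref{sec:BPFP}) that the backbone is \emph{not} uniformly random given its degree sequence and is harder to analyze. No reverse-influence bound for a fixed core variable is proved anywhere in the paper, and the degree-$2$ cycle variables making up $\text{supp}(\uxC)$ are not obviously ``typical.'' Your closing paragraph concedes the naive bound is only $\polylog(n)$ and gestures at Sections~\ref{sec:BP_DE}--\ref{sec:Periphery} for the fix, but those sections do not establish what you need about trace sets.

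The paper avoids both issues by not extending through the backbone at all. It sets every non-core variable to $0$; the violated checks are then exactly the non-core checks with an odd number of neighbors in $\text{supp}(\uxC)$, at most $s_n^2\log s_n$ of them. For each such marked check $a$, a local fix inside $\Ball_{\GNC}(a,T_n)$ exists once $a$ receives an incoming $*$ by round $T_n$ of \BPzero. The ingredient your argument lacks is a \emph{symmetry} step (Lemma~\ref{lemma:NC_uniform_conditioned_on_GC_ECNC}): conditional on $\GC$ and the core/non-core incidences $\ECNC$, each marked check is exchangeable within its non-core degree class, so the probability it lies among the $\le n/s_n^3$ late-converging checks (Lemma~\ref{lemma:BP_few_late_changes}) is $O(s_n^{-3})$. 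A union bound and sufficiently slow choices of $s_n,T_n\to\infty$ then give total support $\le \const_3^{T_n}\, s_n\log s_n\le c_n$. Without an analogous randomness argument tying the specific violated checks to typical neighborhoods, the worst-case bounds you invoke are too coarse.
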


%
%
\subsection{Analysis of the construction}
\label{sec:AnalysisConstruction}

The main challenge in proving Theorem~\ref{thm:MainSparse} is bounding
the sparsity of the bases
constructed (either for the full set
of solutions, when $G$ does not have a core, or for the cluster $\cS
_1$, when
$G$ has a core). This involves two type of estimates: the first one
uses Lemma~\ref{lemmabasic}, while the second is stated as
Lemma~\ref{lemma:core_sparse_basis}.
In the first estimate, we need to bound all the quantities involved in
the sparsity upper
bound: the number
of iterations $T$ after which peeling (on the collapsed graph $G_*$)
halts, and
the maximum size $\max_{v\in V_*}S(v,T)$ of any ball of radius $T$ in the
collapsed graph. In particular, we will show that, w.h.p., we have $T=
O(\log\log
n)$, and that $\max_{v \in V_*}S(v,T) \le(\log n)^{\const}$ w.h.p.,
which gives
sparsity $s\le(\log n)^\const$.

Proving these bounds turns out to be a relatively simpler task when $G$ does
not have a $2$-core, partly because the graph in question has no factor nodes
of degree~$2$, and thus the collapse procedure is not needed.
A second reason is that when $G$ has a $2$-core, we need to apply
Lemma~\ref{lemmabasic} to the
periphery subgraph as discussed above. Remarkably, the periphery graph
admits a
relatively explicit probabilistic characterization. We say that a graph is
\emph{peelable} if its core is empty, and hence the peeling procedure halts
with the empty graph. It turns out that, conditional on the degree distribution,
the periphery is uniformly random among all peelable graphs.

Such an explicit characterization is not available, however, when we consider
the subgraph obtained by removing the core (the periphery is obtained
by removing
the entire backbone). Nevertheless, the proof of Lemma~\ref
{lemma:core_sparse_basis}
requires the study of this more complex subgraph. We overcome this problem
by using tools from the theory of local weak convergence \cite
{BenjaminiSchramm,AldousSteele,AldousLyonsUnimodular}.


Given a graph $G=(F,V,E)$, its check-node degree profile
$R=(R_l)_{l\in\naturals}$ is a probability distribution such that, for
any $l\in\naturals$, $mR_l$ is the number of check nodes of degree $l$.
A degree profile $R$ can conveniently be represented by its generating
polynomial $R(x) \equiv\sum_{l\ge0} R_l x^l$. The derivative of this
polynomial is denoted by $R'(x)$. In particular, $R'(1) =
\sum_{l\ge0} l R_{l}$ is the average degree.

Given integers $m$, $n$ and a probability distribution $R=
(R_l)_{l\le k}$ over $\{0,1,\break \ldots, k\}$,
we denote by $\bD(n, R, m)$ the set of \emph{check-node-degree-constrained
graphs}, that is, the set of bipartite graph with $m$ labeled check
nodes, $n$ labeled variable
nodes and check node degree profile $R$.
As for the model $\G(n,k,m)$, we will write $G\sim\bD(n, R, m)$ to
denote a graph drawn uniformly at random from this set. Note that we
have restricted the checks to have
degree no more than $k$. Further, we will only be interested in cases
with $R_0=R_1=0$.

%
\begin{lemma}\label{lemma:UniformPeelable}
Let $G= (F, V, E)\sim\G(m,n,k)$ and let $\GP$ be its periphery. Suppose
that with positive
probability, $\GP$ has $n_{\mathrm p}$ variable nodes, $m_{\mathrm p}$ check
nodes, and
check degree
profile $R^{\mathrm p}$. Then, conditioned on $\GP\in\bD(n_{\mathrm p},
R^{\mathrm p},m_{\mathrm p})$, the periphery
$\GP$ is distributed uniformly over the set $\bD(n_{\mathrm p},
R^{\mathrm
p}, m_{\mathrm p}) \cap\cP
$, where $\cP$ is the set of peelable graphs.
\end{lemma}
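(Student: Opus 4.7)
\emph{Proof proposal.} The plan is to set up a bijection between graphs $G \in \G(n,k,m)$ and \emph{admissible} triples $(\GB, \GP, E_{\rm cr})$, where $E_{\rm cr}$ denotes the set of edges of $G$ between $\FP$-checks and $\VB$-variables, and then to count, for each fixed periphery $H$, the number of admissible triples extending $H$.

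First, I would observe that no edge of $G$ connects $\FB$ to $\VP$: whenever the backbone-augmentation procedure absorbs a check $a$, it simultaneously absorbs $a$'s unique outside-backbone neighbor (if any) into $\VB$, so by induction every neighbor of every $\FB$-check lies in $\VB$. Consequently $G$ decomposes uniquely as $\GB \cup \GP \cup E_{\rm cr}$, and moreover $\deg_G(v) = \deg_{\GP}(v)$ for every $v \in \VP$. I would then verify two structural consequences: (i) the 2-core $\GC$ of $G$ equals the 2-core of $\GB$---a check-induced stopping set in $G$ can contain no $\FP$-check, since restricting it to $(\FP, \VP)$ would give a stopping set inside $\GP$, contradicting peelability; (ii) backbone augmentation inside $G$ starting from $\GC$ fills out exactly $\GB$, provided $\GB$ is \emph{self-consistent} (its own backbone augmentation from its 2-core reproduces all of $\GB$) and every $\FP$-check has degree $\ge 2$ in $\GP$. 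As a byproduct, $\GP$ is always peelable: any stopping set in $\GP$ would be a stopping set in $G$ disjoint from $\GC$, contradicting the maximality of the 2-core.

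From these facts I would read off the complete list of admissibility conditions on a triple $(\GB, H, E_{\rm cr})$: $\GB$ is a self-consistent graph on $(\FB, \VB)$ with every check of degree $k$; $H \in \bD(\np, \Rp, \mp) \cap \cP$; and $E_{\rm cr}$ is a double-edge-free set in which each $a \in \FP$ has exactly $k - d_a(H)$ incidences into $\VB$, where $d_a(H)$ denotes $a$'s degree in $H$. The key point is that these conditions \emph{decouple}: the number of self-consistent $\GB$ depends on $H$ only through the sizes $|\FB| = m - \mp$ and $|\VB| = n - \np$, while the number of valid $E_{\rm cr}$ equals
\[
\prod_{a \in \FP}\binom{n - \np}{k - d_a(H)} \;=\; \prod_{l}\binom{n - \np}{k - l}^{\mp R_l}\, ,
\]
where $R_l$ denotes the fraction of $\FP$-checks with degree $l$ in $H$, i.e.\ the $l$th coefficient of $\Rp$. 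Both factors depend on $H$ only through the parameters $(\np, \mp, \Rp)$.

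Finally, since $G$ is uniform on $\G(n,k,m)$, $\prob(\GP = H)$ is proportional, as $H$ ranges over graphs on the fixed labeled vertex sets $(\VP, \FP)$, to the number of admissible extensions; by the above display this number is the same for every $H \in \bD(\np, \Rp, \mp) \cap \cP$, yielding the claimed conditional uniformity. The main obstacle I anticipate is the decoupling step: one has to show that, given $\GP$-peelability and the degree lower bound $R_0 = R_1 = 0$ in $\Rp$, every combinatorial coupling between $H$ and the backbone side disappears, and that no additional admissibility constraint arises from the requirement that the backbone-augmentation procedure on the reconstructed $G$ genuinely reproduces $\GB$.
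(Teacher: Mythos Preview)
Your proposal is correct and follows essentially the same route as the paper: the paper introduces the notion of a \emph{rigid} graph (your ``self-consistent'' $\GB$), proves a characterization lemma stating that $(\Gb,\Gp)$ is the backbone/periphery partition if and only if $\Gp$ is peelable, $\Gb$ is rigid, and every $\Fp$-check has degree $\ge 2$ in $\Gp$, and then counts exactly as you do, obtaining the same product $\prod_l\binom{n-\np}{k-l}^{\mp R_l}$ times $|\cR(n-\np,k,m-\mp)|$. Your anticipated ``decoupling obstacle'' is precisely what that characterization lemma resolves, and your sketch of its proof (no $\FB\to\VP$ edges, core of $G$ equals core of $\GB$, peelability of $\GP$) matches the paper's argument.
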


There is a small technical issue here in that if $G' \in\bD
(n_{\mathrm
p}, R^{\mathrm p},
m_{\mathrm p})$,
then variable nodes in $G'$ have labels from $1$ to $n_{\mathrm p}$,
whereas $\GP
$ has
variable node labels that form a subset of $\{1, 2, \ldots, n\}$, and similarly
for check nodes. We adopt the convention that the variable and check
nodes in
$\GP$ are relabeled sequentially, respecting the original order, before
comparing
with elements of $\bD(n_{\mathrm p}, R^{\mathrm p}, m_{\mathrm p})$.

The above lemma establishes that the periphery is roughly uniform,
conditional on being peelable. Its proof is in Section~\ref
{subsec:perip_conditional_random}.

Conceptually, we will bound the sparsity, as estimated in Lemma~\ref
{lemmabasic} by proceeding in three steps:
$(1)$ Bound the estimated basis
sparsity $\max_{v}S(v,T)$ for \emph{check node degree constrained
graphs} $\bD(n, R, m)$, in terms of the degree distribution;
$(2)$ Estimate the ``typical'' degree distribution for the periphery, and
prove concentration around this
estimate; $(3)$ Prove that, if $R$ is close to the typical degree
distribution, then $G\sim\bD(n, R, m)$ is peelable with uniformly
positive probability. The latter allows us to transfer the sparsity
estimates from the uniform model $\bD(n, R, m)$ to the actual
distribution of the periphery.

Lemma~\ref{lemma:peelability_implies_good} below accomplishes
steps $(1)$ and $(3)$, while Lemma~\ref{lemma:periphery_peelable} takes
care of
step $(2)$.
In order to state these lemmas, it is convenient to introduce density
evolution (the terminology comes from the analysis of sparse graph codes
\mbox{\cite{Luby98,Luby01,RiUBOOK}}).
%

\begin{definition}
Given $\alpha>0$, a degree profile $R$, and an initial condition
$z_{0}\in[0,1]$, we define
the \emph{density evolution sequence} $\{z_t\}_{t\ge0}$ by letting for
any $t\ge1$,
%
%
\begin{equation}
z_{t} = 1 - \exp \bigl\{-\alpha R'(z_{t-1})
\bigr\}. \label{eq:density_evolution}
\end{equation}
Whenever not specified, the initial condition will be assumed to be
$z_0=1$. The one-dimensional recursion (\ref{eq:density_evolution}) will
be also called \emph{density evolution recursion}.

We say the pair $(\alpha, R)$ is \emph{peelable at rate $\eta$} for
$\eta>0$ if $z_t \leq(1-\eta)^t/\eta$ for all $t \geq0$. We say that
the
pair $(\alpha, R)$ is \emph{exponentially peelable}
(for short \emph{peelable}) if there exists $\eta>0$ such that it is
peelable at rate $\eta$.
\end{definition}

The density evolution recursion \eqref{eq:density_evolution} describes
the large graph asymptotics of a certain belief propagation algorithm
that captures the peeling
process, and will be described Section~\ref{sec:BP_DE}.

The next lemma is proved in Section~\ref{sec:collapse_peeling_fast}.

%
\begin{lemma}\label{lemma:peelability_implies_good}
Consider the set $\bD(n, R, \alpha n)$, where $R=(R_l)_{l\le k}$ is a check
degree profile such that $R_0=R_1=0$. Assume that the pair $(\alpha,
R)$ is
peelable at rate $\eta$. Then there exist constants $N_0 = N_0(\eta
,k) <
\infty$, $\delta= \delta(\eta,k)>0$, $\const_1=\const_1(\eta
,k)<\infty$,
$\const_2=\const_2(\eta,k)<\infty$ such that the following hold, for
$G$ a
random graph drawn from $\bD(n,R, m)$ with $n > N_0$:
\begin{longlist}[(iii)]
\item[(i)] The graph $G$ is peelable with probability at least $\delta$. Further,
if $R_2=0$, one can take $\delta$ arbitrary close to $1$ (in other
words $G$ is
peelable w.h.p.).

\item[(ii)] Conditional on $G$ being peelable, peeling on the collapsed graph
$G_*$ terminates after $T\le\const_1 \log\log n$ iterations, with
probability at least $1-n^{-1/2}$.

\item[(iii)] Letting $T_{\mathrm {ub}}=\lfloor\const_1 \log\log n \rfloor
$, we have
$\max_{v\in V_*}S(v, T_{\mathrm{ub}})\le(\log n)^{\const_2}$, with
probability at
least $1-n^{-1/2}$.
\end{longlist}
\end{lemma}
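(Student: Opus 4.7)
The plan is to prove the three claims through a combination of density evolution (a deterministic recursion tracking the peeling dynamics in expectation) with concentration arguments that lift these predictions to the random graph $G \sim \bD(n,R,\alpha n)$. The peelability assumption $z_t \leq (1-\eta)^t/\eta$ is the central input: it forces $z_t \to 0$, and moreover constrains the degree-$2$ structure through the linearization of density evolution at $z=0$.

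For part $(i)$, I would track the variable-and-check degree profile of the residual graph $J_t$ via a Doob edge-exposure martingale and apply Azuma--Hoeffding to concentrate the actual residual size around $nz_t$. When $R_2=0$ this immediately yields peelability \whp. When $R_2>0$, note that $R'(z) = 2R_2 z + O(z^2)$ near $0$, so peelability at rate $\eta$ forces $2\alpha R_2 \leq 1-\eta$, placing the random multigraph of degree-$2$ checks in the subcritical regime. Standard results on random multigraphs with a given degree sequence then show this subgraph is cycle-free with some constant probability $\delta>0$, conditional on which peeling of $G$ succeeds.

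For part $(ii)$, the key observation is that $G_*$, by construction, has minimum check degree at least $3$, so the density evolution recursion on $G_*$ satisfies $R_*'(z) = O(z^2)$ near $0$. This leads to a doubly-exponential recursion $z_{t+1} \leq C z_t^2$ once $z_t$ is sufficiently small, and hence after $O(\log\log n)$ iterations $z_t$ drops below $n^{-3/2}$. Combined with concentration of peeling, the residual graph is empty \whp within $\const_1 \log\log n$ rounds. The argument requires two phases: first use peelability of $(\alpha,R)$ (which implies peelability of $(\alpha_*,R_*)$ since the $2$-cores coincide) to reach $z_t$ below a small constant threshold after $O(1)$ rounds, then exploit the fast $z^2$ regime on $G_*$.

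For part $(iii)$, I bound $S(v,T_{\rm ub})$ as the product of the number of super-nodes in $\Ball_{G_*}(v,T_{\rm ub})$ and the maximum super-node size. The ball size in $G_*$ is controlled by the maximum variable degree in $G$, which is $O(\log n/\log\log n)$ \whp (Poisson tail), together with the bounded check degree; for $T_{\rm ub} = O(\log\log n)$ this gives $(\log n)^{O(1)}$. Each super-node is a connected component of the degree-$2$ subgraph $\GTWO$, which is subcritical by the same bound $2\alpha R_2 < 1$, so its maximum component has $O(\log n)$ variables \whp by standard branching-process estimates. Multiplying the two bounds yields $S(v,T_{\rm ub}) \leq (\log n)^{\const_2}$.

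The main technical obstacle is carrying out the concentration argument uniformly in $t$, especially during the tail phase when $z_t$ is tiny: a martingale-based bound applies, but its Lipschitz constant depends on the current residual graph size, so one must combine Azuma--Hoeffding over the bulk phase (say $z_t \geq n^{-1/4}$) with a direct combinatorial or second-moment argument for the final rounds. A secondary challenge is quantifying the positive probability $\delta$ in the $R_2>0$ case, which requires decoupling the degree-$2$ multigraph from the higher-degree component and verifying that, conditional on the former being a forest, peeling of the latter still removes the ``nuisance'' degree-$2$ edges incident to peeled variables so that the full graph is peelable.
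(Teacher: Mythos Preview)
Your approach to part $(iii)$ matches the paper's: bound $S(v,T_{\rm ub})$ as (ball size in $G_*$) $\times$ (max super-node size), use $2\alpha R_2 \le 1-\eta$ to get $O(\log n)$ for the latter, and Galton--Watson domination for the former.

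For parts $(i)$ and $(ii)$, however, your plan has a real gap and diverges from the paper's route. You propose to run density evolution on $G_*$ and exploit $R_*'(z)=O(z^2)$ to get doubly-exponential decay. The trouble is twofold. First, $G_*$ is \emph{not} a uniformly random element of any $\bD(n_*,R_*,m_*)$: the collapse couples super-node degrees to the component structure of $\GTWO$, so the standard DE$+$concentration machinery does not apply to peeling on $G_*$ directly. Second, even on $G$ itself you acknowledge the obstacle of carrying concentration through the tail phase where $z_t = o(1)$ polynomially; this is exactly where martingale bounds become useless, and you do not say how to replace them.

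The paper sidesteps both problems with one structural lemma: for any $\tau$,
\[
\TC(\colp(G)) \;\le\; \tau + \TC\big(\colp(\peel^\tau(G))\big).
\]
So instead of analyzing $G_*$, they peel $G$ (which \emph{is} uniform) for a \emph{constant} number $\tau$ of rounds, chosen so that $\alpha R'(z_\tau)<1-\eta/2$; DE$+$concentration is only needed for this bulk phase. The residual $J_\tau$ is then a subcritical random graph whose components are, w.h.p., trees or unicyclic of size $O(\log n)$. After collapsing $J_\tau$, the resulting components have check degree $\ge 3$ and are trees or unicyclic; a short deterministic argument shows each such component peels in at most $2\lceil \log_2 |V|\rceil = O(\log\log n)$ rounds. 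This gives $(ii)$ without any tail concentration.

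Part $(i)$ then falls out of the same decomposition: tree components of $J_\tau$ are always peelable; if $R_2=0$, unicyclic components are too (checks have degree $\ge 3$), giving peelability w.h.p.; if $R_2>0$, the subcritical graph $J_\tau$ has no cycles with probability bounded away from zero (Poisson approximation for short cycles plus a first-moment bound for long ones), hence is a forest and therefore peelable. Your proposed condition ``$\GTWO$ is a forest'' is not obviously sufficient for $G$ to be peelable, and you flag but do not resolve this; the paper's condition ``$J_\tau$ is a forest'' is immediately sufficient.
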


Our final lemma is proved in Section~\ref{subsec:perip_exp_peelable}
and establishes the
peelability condition for the periphery.

%
\begin{lemma}\label{lemma:periphery_peelable}
For any $\alpha> \ad$ there exist constants $\eta=\eta(k,\alpha)>0$,
$\gamma_*=\gamma_*(k,\alpha)>0$ such that the following holds. Let $G=
(F, V, E)$ be a graph drawn
uniformly at random from the ensemble
$\G(n,k,m)$, $m=n\alpha$, and
let $\GP= (\FP, \VP, \EP)$ be its periphery. Let $\mP\equiv|\FP|$,
$\nP\equiv|\VP|$, $\aP\equiv\mP/\nP$ and denote by $\RP$ the
random check
degree profile of $\GP$. Then, for any
$\ve>0$, w.h.p. we have: \textup{(i)} The pair $(\aP,\RP)$ is peelable at rate
$\eta$; \textup{(ii)}
$n(\gamma_*-\ve)\le n_{\mathrm p}\le n(\gamma_*+\ve)$.
\end{lemma}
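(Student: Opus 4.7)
\noindent\textbf{Proof proposal for Lemma \ref{lemma:periphery_peelable}.}
The plan is to use the BP / density-evolution framework of Section \ref{sec:BP_DE} to identify the asymptotic check-degree profile $R_*$ and constraint ratio $\alpha_*$ of the periphery, reduce the periphery density-evolution recursion to the original one via a clean change of variables, and then transfer exponential peelability of the limit $(\alpha_*,R_*)$ to the random pair $(\aP,\RP)$ via concentration combined with openness of the peelability condition.

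First I would characterize the limiting periphery parameters. For each directed edge $(i,a)$, introduce the BP indicator $x_{i\to a}\in\{0,1\}$ equal to $1$ iff $i$ is forced by the $2$-core without using check $a$. On the local weak limit of $G\sim \G(n,k,\alpha n)$---a Galton-Watson tree with Poisson$(\alpha k)$ variable degrees and degree-$k$ checks---the BP recursion has largest fixed point $Q\in(0,1)$, the largest root of $Q=1-\exp(-\alpha k Q^{k-1})$, which exists iff $\alpha>\ad$. A variable lies in the backbone iff at least one of its incoming check-to-variable messages equals $1$, giving $\gamma_*=1-Q$. A degree-$k$ check is in the periphery iff at least two of its $k$ incoming messages are $0$; asymptotic independence of these messages yields
\begin{align*}
R_{*,l}=\binom{k}{l}(1-Q)^l Q^{k-l}/P,\qquad l\in\{2,\ldots,k\},\qquad P\equiv 1-Q^k-k(1-Q)Q^{k-1},
\end{align*}
together with $\alpha_*=\alpha P/(1-Q)$.

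The crux of the proof is verifying that $(\alpha_*, R_*)$ is exponentially peelable. A binomial identity yields $\alpha_* R_*'(z)=\alpha k\bigl[((1-Q)z+Q)^{k-1}-Q^{k-1}\bigr]$. Under the substitution $Z_t=(1-Q)z_t+Q$, and using the fixed-point identity $1-Q=\exp(-\alpha k Q^{k-1})$, the periphery density-evolution recursion $z_{t+1}=1-\exp(-\alpha_* R_*'(z_t))$ transforms exactly into the original recursion $Z_{t+1}=1-\exp(-\alpha k Z_t^{k-1})$ with initial condition $Z_0=1$. Consequently $Z_t\downarrow Q$, so $z_t=(Z_t-Q)/(1-Q)\downarrow 0$, and the rate is governed by the derivative of the original update map at $Q$, namely $\alpha k(k-1)(1-Q)Q^{k-2}$, which is strictly less than $1$ for every $\alpha>\ad$ because $Q$ is the (stable) largest fixed point. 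This gives $z_t\le C\lambda^t$ for some $\lambda<1$, establishing peelability of $(\alpha_*, R_*)$ at some rate $\eta_*=\eta_*(\alpha,k)>0$.

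Finally, I would combine this with concentration to obtain $(i)$ and $(ii)$. Section \ref{sec:BP_DE} will establish that membership in the backbone is determined up to $o(1)$ by depth-$O(\log n)$ BP messages on $G$; hence $\nP$, $\mP$ and every coordinate of $\RP$ are Lipschitz functionals of $O(\log n)$ edges, and standard Azuma/McDiarmid bounds (on the edge-exposure martingale) give $|\nP/n-\gamma_*|<\ve$, $|\aP-\alpha_*|+\|\RP-R_*\|_\infty<\delta$ w.h.p.\ for arbitrarily small $\delta,\ve$. Peelability at rate $\eta_*$ is an open condition on $(\alpha,R)$: a sufficiently small perturbation of the one-step map $\Phi(z)=1-\exp(-\alpha R'(z))$ preserves both the contraction near $0$ (where $\Phi'(0)=2\alpha R_2<1$) and the absence of fixed points in $(0,1]$, so $(\aP,\RP)$ is peelable at a slightly reduced rate $\eta=\eta(\alpha,k)>0$ w.h.p. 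The main obstacle is precisely this concentration step for $\RP$, since the backbone/periphery decomposition is a global functional of $G$; however, once the BP analysis of Section \ref{sec:BP_DE} reduces it to logarithmic-depth local information, the required Lipschitz bounds and Azuma application are routine.
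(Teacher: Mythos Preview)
Your proposal is correct and follows essentially the same route as the paper: you identify the same limiting periphery profile $(\alpha_*,R_*)$, use the same affine substitution $Z=(1-Q)z+Q$ to reduce the periphery density evolution to the original recursion (this is exactly the content of Lemma~\ref{lemma:periphery_peelable_av}), and then transfer peelability to the random pair $(\aP,\RP)$ via concentration (Lemmas~\ref{lemma:BP_gives_core_backbone_periphery} and~\ref{lemma:FP_concentration_DE}) together with continuity of $f(z,\alpha,R)/z$ on a compact set. Two small caveats: the strict inequality $\alpha k(k-1)(1-Q)Q^{k-2}<1$ does not follow merely from $Q$ being the \emph{largest} fixed point---one must separately rule out tangency $f'(Q)=1$, which the paper does in Lemma~\ref{lem:bdd_branch_fact} by showing this can occur only at $\alpha=\ad$---and the concentration of $(\aP,\RP)$ is obtained in the paper from the constant-depth BP fixed-point analysis of Lemma~\ref{lemma:FP_concentration_DE} (whose proof requires the nontrivial exchange of limits $n\to\infty$ and $t\to\infty$), not from an Azuma argument at depth $O(\log n)$.
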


%
%
\subsection{Putting everything together}
\label{sec:Together}

At this point, we can formally summarize the proof of our main result,
Theorem~\ref{thm:MainSparse}, that builds on the construction and analysis
provided so
far.

\begin{pf*}{Proof of Theorem~\ref{thm:MainSparse}}
1. For $\alpha<\ad(k)$, w.h.p., the graph $G$ does not
contain a $2$-core (cf. Theorem~\ref{lemma:BasicCore}), hence peeling
returns an
empty graph. Using the construction in Lemma~\ref
{lemmabasic}, we
obtain an $s$-sparse basis, with $s = \max_{v\in V}|\Ball(v,\TC)|$
(notice that
in this case there is no factor node of degree $2$, and hence the collapsed
graph coincides with the original graph). The number of peeling
iterations $\TC$
is bounded by Lemma~\ref{lemma:peelability_implies_good}(ii), using
the fact
that, by definition of $\ad(k)$ the pair $(\alpha,R)$, with $R_k=1$
is peelable
at rate $\eta= \eta(\alpha,k)>0$ for $\alpha< \ad(k)$. Hence, $\TC
\le
\const_1 \log\log n$ w.h.p., for some $\const_1 = \const_1(\alpha,k)<
\infty$.
Finally, by applying Lemma~\ref{lemma:peelability_implies_good}(iii) we obtain
the thesis.

Next, consider point $2$.
The partition into clusters is constructed as per
equation~(\ref{eq:PartitionDefinition}), and in particular the number
of clusters
$N$ is equal to the number of solutions of the core linear system
$\Hcore\ux
=\underline{0}$ divided by $g$ given by equation~\eqref{eq:g_defn}.
Let us
consider the
various claims concerning this partition:

2(a). By construction, it is sufficient to construct a
basis of the
cluster $\cS_1$ containing the origin, cf. Section~\ref
{sec:ClusterConstruction}. The basis has two sets of vectors.

The first set of vectors is given by Lemma~\ref
{lemma:core_sparse_basis}. Their projection onto the core
spans the core solutions in $\Sci{1}$. Since variables in the backbone
are uniquely determined by those on the core, their projection onto
the backbone spans the backbone projection of $\cS_1$.
By Lemma~\ref{lemma:core_sparse_basis}, these vectors are, w.h.p., $c_n$-sparse
for any $c_n\to\infty$.
Lemma~\ref{lemmabasic} provides the second set of
vectors.
These span the
kernel of the adjacency matrix of the periphery, ${\mathbb H}_{\mathrm p}$
and vanish identically in the backbone. In particular, they are
independent from the first set. It is easy to check that the two sets
of vectors together form a basis for the
cluster $\cS_1$.

We are left with the task of proving that the second set of basis
vectors is sparse.
The construction in Lemma~\ref{lemmabasic} proceeds by
collapsing the
periphery graph $\GP$, and applying peeling. We thus
need to bound the sparsity $s = \max_{v\in V}S(v,\TC)$.
Define the event (implicitly indexed by $n$)
\[
\Ev_1 \equiv \bigl\{\bigl(\aP, \RP\bigr)\mbox{ is peelable at rate }
\eta>0\mbox{ and }\nP\geq n\gamma_*/2 \bigr\}.
\]
By Lemma~\ref{lemma:periphery_peelable}, we know
that $\Ev_1$ holds with high probability for suitable choices of
$\eta= \eta(k, \alpha)>0$ and $\gamma_*= \gamma_*(\alpha,k)>0$.
Further $\RP_0 = \RP_1 =0$ with probability~$1$.

From Lemma~\ref{lemma:UniformPeelable}, we know
that $\GP$ is drawn uniformly from the set $\D(\nP, \RP, \mP)\cap
\cP$.
Let $G'$ be drawn uniformly from $\D(\nP, \RP, \mP)$, with $(\nP,
\RP,
\mP)$ distributed as for $\GP$, conditional on $(\alpha_{\mathrm
p},R^{\mathrm p})\in\Ev_1$.
We can then apply Lemma~\ref{lemma:peelability_implies_good} to $G'$.
From point
(i), it follows that $G'$ is peelable with probability at least
$\delta=
\delta(\alpha,k)>0$.
Let $G_*'$ be the result of collapsing $G'$. From points (ii) and
(iii) it follows that, with probability at least $1-
\nP^{-0.5} \geq1 - (n\gamma_*/2)^{-0.5} \rightarrow1$ as $n
\rightarrow
\infty$, we have $\max_{v \in V_*'} S_{G'}(v, \TC) \le\max_{v
\in V_*'} S_{G'}(v, T_{\mathrm{ ub}})\le(\log n)^{\const}$,
for some $\const= \const(\alpha,k)< \infty$. (We use the subscript on
$S$ to indicate the graph
under consideration.)

Since $\Ev_1$ holds for $\GP$ w.h.p., and since
$G'$ is peelable with probability uniformly bounded away from zero, it
follows that the same
bound on the sparsity holds for $\GP$ as well.
In other words, w.h.p., we have that
\[
\max_{v \in\VPstar} S_{\GP}(v, \TC) = (\log
n)^{\const}.
\]
Here, $\VPstar$ is the set of super-nodes resulting from the collapse
of $\GP$.
Finally, using Lemma~\ref{lemmabasic}, we deduce that the second
set of basis vectors obtained from this construction is $s$-sparse for
$s =
(\log n)^{\const}$.

2(b).
By Lemma~\ref{lemma:core_few_low_weight}, w.h.p.,
for any two core solutions $\uxC\in\Sci{1}$, $\uxC'\in
\Sci{b}$, $b\neq1$ we have $d(\uxC, \uxC') \geq n
\ve$. This immediately implies $d(\ux, \ux') \geq n \ve$,
for any two solutions $\ux\in\cS_1$, $\ux'\in\cS\setminus\cS
_1$. By
linearity, we conclude $d(\cS_a, \cS_b) \geq n \ve$ for all $a,b$.

2(c). Let $\NC$ be the number of solutions of the core linear
system $\Hcore\ux=0$. This was proved to concentrate on the
exponential scale
in \cite{DuboisFOCS,Cuckoo}, with $n(\Sigma-\ve)\le\log\NC\le
n(\Sigma+\ve)$ with high probability, and $\Sigma$ given as in the
statement (cf. also \cite{MM09}). The number of clusters is $N = \NC/g$
for $g =
2^{\Lcore(\ve n)}$, cf. equation~\eqref{eq:g_defn}. Using the bound
$|\Lcore
(\ve n)|
\leq s_n$ from Lemma~\ref{lemma:core_few_low_weight}(ii) and
choosing $s_n$ to diverge sufficiently slowly with $n$, we
deduce that $N$ also concentrates on the exponential scale with the same
exponent as $\NC$.
\end{pf*}

%
%
\section{A belief propagation algorithm and density evolution}
\label{sec:BP_DE}

A useful analysis tool is provided by a belief propagation algorithm
[cf. equations~\eqref{eq:BP1} and \eqref{eq:BP2}] that refines the
peeling algorithm introduced in Section~\ref
{subsec:sparse_basis_const_periphery}.
The same algorithm is also of interest in iterative coding; see \cite
{RiUBOOK,MM09}.

We restate the BP update rules for the convenience of the reader.
\[
\vtoc{v} {a}^{t} = \cases{ %
*,&\quad
$\mbox{if $\ctov{b} {v}^{t-1} = *$ for all $b \in\partial v \setminus a$,}$
\vspace*{2pt}\cr
0, &\quad $\mbox{otherwise,}$}
\]
and
\[
\ctov{a} {v}^t = \cases{ %
0, &\quad $\mbox{if
$\vtoc{u} {a}^t = 0$ for all $u \in\partial a \setminus v$,}$
\vspace*{2pt}\cr
*, & \quad$\mbox{otherwise.}$}
\]
The initialization at $t=0$ depends on the context, but it is
convenient to single out two special cases. In the first case, all
messages are initialized to $0$: $\vtoc{v}{a}^0=\ctov{a}{v}^0=0$ for
all $(a,v)\in E$. In the second, they are all initialized to $*$:
$\vtoc{v}{a}^0=\ctov{a}{v}^0=*$ for all $(a,v)\in E$.
We will refer to
these two cases (resp.) as \BPzero\ and \BPstar.
We let $\vtcv^t \equiv( \vtoc{v}{a}^t)_{(a,v)\in E}$ and
$\ctvv^t \equiv( \ctov{v}{a}^t)_{(a,v)\in E}$ denote the vector of messages.

We mention here that \BPstar\ on the a graph $G \in\G(n,k,m)$ turns
out to be trivial (all messages remain $*$). However, we find it
useful to run \BPstar\ on the subgraph induced by variable and check
nodes outside the core. We describe this in detail in Section~\ref{sec:BPFP}.

The belief propagation algorithm introduced here enjoys an important
monotonicity property.
More precisely, define a partial ordering between message vectors by
letting $0\succ*$ and
$\vtcv\succeq\vtcv'$ if $\vtoc{v}{a}\succeq\vtoc{v}{a}'$ and
$\ctov
{a}{v}\succeq\ctov{a}{v}$ for all
$(a,v)\in E$.

%
\begin{lemma}[(\cite{RiUBOOK,MM09})]\label{lemma:Monotonicity}
Given two states $\vtcv^t_1\succeq\vtcv^t_2$,
we have $\vtcv^{t'}_1\succeq\vtcv^{t'}_2$ and $\ctvv^{t'}_1\succeq
\ctvv^{t'}_2$ at all $t'\ge t$.

As a consequence, the iteration \BPzero\ is monotone decreasing
(i.e., $\vtcv^{t+1}\preceq\vtcv^t$) and \BPstar\ is monotone
increasing (i.e., $\vtcv^{t+1}\succeq\vtcv^t$). In particular, both
converge to a fixed point in at most $|E|$ iterations.
\end{lemma}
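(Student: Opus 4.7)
The plan is to reduce everything to a single one-step monotonicity claim: if the joint state $(\vtcv^t, \ctvv^t)$ of one BP trajectory dominates that of another coordinatewise in the order $0 \succ *$, then the same holds after one BP update. The first assertion of the lemma then follows by a straightforward induction on $t' - t$. The key structural observation behind the one-step claim is that rule \eqref{eq:BP1} outputs $0$ as soon as one incoming check-to-variable message is $0$ (an OR-like operation), while rule \eqref{eq:BP2} outputs $0$ only when all incoming variable-to-check messages are $0$ (an AND-like operation); both operations are monotone with respect to $\succeq$ applied coordinatewise.

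For the one-step claim I would argue by contraposition on the value $0$, which is easier to track than $*$. Suppose the updated message from $v$ to $a$ in configuration $2$ equals $0$; by \eqref{eq:BP1} some neighbor $b \in \partial v \setminus a$ sent $0$ to $v$ in configuration $2$ at the previous step, and by the hypothesis $\ctvv^t_1 \succeq \ctvv^t_2$ this message is also $0$ in configuration $1$, which forces the corresponding updated variable-to-check message in configuration $1$ to equal $0$. The analogous argument through \eqref{eq:BP2}, now using the just-established dominance of the new variable-to-check vector, yields $\ctvv^{t+1}_1 \succeq \ctvv^{t+1}_2$, closing the induction.

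The consequences for \BPzero and \BPstar then follow by applying the first statement to the trajectory paired with its one-step time-shift. For \BPzero the initial state is the global maximum of the partial order, so trivially $(\vtcv^0, \ctvv^0) \succeq (\vtcv^1, \ctvv^1)$; propagating this one-step dominance yields $\vtcv^{t+1} \preceq \vtcv^t$ for every $t \geq 0$. The symmetric argument, starting from the global minimum, handles \BPstar. Convergence in at most $|E|$ rounds is then immediate: each variable-to-check message lies in the two-element set $\{0,*\}$ and can flip at most once under a monotone trajectory, and any non-fixed-point iteration must flip at least one such message, so there can be at most $|E|$ non-trivial iterations.

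I do not anticipate a serious obstacle, since the lemma is essentially a structural property of the two monotone Boolean operations underlying the BP update. The one mild subtlety is that the variable-side update at time $t+1$ uses check-side messages from time $t$, whereas the check-side update at time $t+1$ uses the variable-side messages freshly computed at time $t+1$; performing the induction jointly on the pair $(\vtcv^t, \ctvv^t)$, and updating the two sides in order within each step, sidesteps this cleanly.
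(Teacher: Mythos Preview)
Your proof is correct and is the standard argument for this well-known monotonicity property. Note, however, that the paper does not actually supply a proof of this lemma: it is stated with a citation to \cite{RiUBOOK,MM09} and treated as a known fact from the iterative-coding literature. Your write-up is exactly the kind of argument those references contain, so there is nothing to compare against here beyond confirming that what you wrote is sound---and it is.
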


It is not hard to check by induction over $t$ that \BPzero\ corresponds
closely to the peeling process.

%
\begin{lemma}\label{lemma:PeelingBP}
A variable node $v$ is eliminated in round $t$ of peeling, that is, $v
\in V_t$,
if there is at
most one incoming $0$ message to
$v$ in iteration $t-1$ of \BPzero\ but this was not true in previous
rounds. A factor node
$a$ is eliminated in round $t$ of peeling (i.e., $a \in F_t$), along
with all
its incident edges, if it receives a $*$ message for the first time in iteration
$t$ of \BPzero.
\end{lemma}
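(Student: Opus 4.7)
The plan is to prove by joint induction on $t$ the following cumulative versions of the two claims: (A) variable $v$ has been eliminated within rounds $1,\ldots,t$ of peeling iff $|\{a\in\partial v:\ctov{a}{v}^{t-1}=0\}|\leq 1$, and (B) factor $a$ has been eliminated within rounds $1,\ldots,t$ iff $\vtoc{u}{a}^{t}=*$ for some $u\in\partial a$. Monotonicity of \BPzero (Lemma \ref{lemma:Monotonicity}) guarantees that once a message flips to $*$ it remains $*$, so the cumulative statements (A), (B) differ at round $t$ from round $t-1$ precisely at the ``first-time'' events described in the lemma, yielding the per-round statement by differencing.

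The base case $t=1$ is immediate from the all-zero initialization. At iteration $0$ every incoming message to every variable equals $0$, so (A) at $t=1$ reduces to $\deg_G(v)\leq 1$, matching the definition of $V_1$. At iteration $1$, $\vtoc{u}{a}^{1}=*$ iff $\partial u\setminus\{a\}=\emptyset$, so (B) at $t=1$ reduces to ``$a$ is adjacent to a degree-one variable'', matching the definition of $F_1$.

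The crux of the inductive step is one observation I would establish first: \emph{if $v$ has not been peeled by round $t-1$, then $\vtoc{v}{a}^{t-1}=0$ for every $a\in\partial v$}. Indeed, $\vtoc{v}{a}^{t-1}=*$ would force $\ctov{b}{v}^{t-2}=*$ for all $b\in\partial v\setminus\{a\}$, giving at most one $0$ among the incoming messages to $v$ at iteration $t-2$, which by (A) at $t-1$ would place $v$ among the peeled variables, a contradiction. Combining this observation with (B) at $t-1$ (which says an unpeeled $a$ receives only $0$ messages) gives the key identity that, for any $v$ not yet peeled at round $t-1$, $\ctov{a}{v}^{t-1}=0$ iff $a$ itself is unpeeled at the end of round $t-1$; equivalently, $|\{a\in\partial v:\ctov{a}{v}^{t-1}=0\}|=\deg_{J_{t-1}}(v)$. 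This identity directly yields (A) at round $t$. For (B) at round $t$: if $a\in F_t$ is triggered by $v\in V_t$, the identity forces every $\ctov{b}{v}^{t-1}$ with $b\neq a$ to be $*$, and the variable-update rule then gives $\vtoc{v}{a}^{t}=*$. Conversely, a first-time $*$ event $\vtoc{u}{a}^{t}=*$ implies at most one $0$ among the incoming messages to $u$ at iteration $t-1$, so (A) at round $t$ places $u$ in some $V_s$ with $s\leq t$; a short case analysis on whether $a$ was peeled before round $s$ then gives that $a$ is peeled by round $t$.

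The main subtlety I expect is the asymmetry in the BP rules near a check that is about to be peeled: a check $a$ peeled in round $s$ sends $*$ along almost all of its outgoing directed edges, but continues to send $0$ along the edge to the specific degree-one variable $u\in V_s$ that triggered its removal. Consequently, the clean equivalence between $\ctov{a}{v}^{t-1}=0$ and the unpeeled status of $a$ holds only when the receiving variable $v$ is itself unpeeled. The italicized observation above is exactly what allows one to restrict to this good case, while variables that have been peeled earlier are already controlled by previous applications of (A).
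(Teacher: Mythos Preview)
Your inductive argument is correct and carefully handles the one genuine subtlety (that the equivalence $\ctov{a}{v}^{t-1}=0 \Leftrightarrow a$ unpeeled at round $t-1$ requires $v$ itself to be unpeeled, which you secure via the italicized observation). The paper does not actually give a proof of this lemma: it states that the proof ``is based on a straightforward case-by-case analysis'' and omits it. Your joint induction on the cumulative statements (A) and (B), followed by differencing via the monotonicity of \BPzero, is exactly the natural way to make that case analysis rigorous, so there is nothing to compare against beyond noting that your write-up supplies the details the paper leaves to the reader.
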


Further, the fixed point of \BPzero\ captures the decomposition of $G$
into core, backbone and periphery as follows.

%
\begin{lemma}\label{lemma:PeelingBPFP}\label
{lemma:BP_gives_core_backbone_periphery}
Let $(\vtcv^{\infty}$, $\ctvv^{\infty})$ denote the fixed point of
\BPzero.
For $v \in V$, we have:
\begin{itemize}
\item$v \in\VC$ if and only if $v$ receives two or more incoming $0$
messages under
$\ul{\wh{\nu}}^\infty$,
\item$v \in\VB\setminus\VC$ if and only if $v$ receives exactly one
incoming $0$ message under
$\ul{\wh{\nu}}^\infty$,
\item$v \in\VP$ if and only if $v$ receives no incoming $0$ messages under
$\ul{\wh{\nu}}^\infty$.
\end{itemize}
For $a \in F$, we have
\begin{itemize}
\item$a \in\FC$ if and only if $a$ receives no incoming $*$ message
under
$\ul{{\nu}}^\infty$,
\item$a \in\FB\setminus\FC$ if and only if $a$ receives one
incoming $*$
message under
$\ul{{\nu}}^\infty$,
\item$a \in\FP$ if and only if $a$ receives two or more incoming $*$
messages under
$\ul{{\nu}}^\infty$.
\end{itemize}
Finally, $\GC$ is the subgraph induced by $(\FC, \VC)$ and similarly
for $\GB$
and $\GP$.
\end{lemma}

The proofs of the last two lemmas are based on a straightforward
case-by-case analysis,
and we omit them. (In fact, this
correspondence is well known in iterative coding, albeit in a somewhat
different language \cite{RiUBOOK}.)
%
%
\subsection{Density evolution}
\label{subsec:density_evolution}

It turns out that distribution of BP messages is closely
tracked by density evolution, in the large graph limit.
Before stating this fact formally, it is useful to
introduce a different ensemble $\C(n,R,m)$ that will be used in some
of the proofs.
A graph $G$ in $\C(n,R,m)$ is constructed as follows. We label
variable nodes
$1$ through $n$ and check nodes $1$ through $m$. We choose an arbitrary
partition of the $m$ check nodes into $k+1$ sets with the $l$th set consisting
of $mR_l$ check nodes with degree $l$ each, for $l=0,1, \ldots, k$.
For each
check node of degree $l$, we draw $l$ half-edges distinct from each
other. Each
of these half-edges is connected to an arbitrary variable node.

There is a close relationship between the sets $\D(n,R,m)$ and $\C
(n,R,m)$. Any
element of $\D(n,R,m)$ corresponds to $\prod_{l=2}^k(l!)^{mR_l}$
elements of $\C(n,R,m)$, with
the ambiguity arising due to the ordering of the neighborhood of a
check node
in $\C(n,R,m)$. Conversely, any element of $\C(n,R,m)$ with no double edges
[two or more edges between the same (variable, check) pair] corresponds
to a
unique element of $\D(n,R,m)$. Moreover,
the fraction of elements of $\C(n,R,m)$ that have no double edges is
uniformly bounded away from zero as $n\to\infty$ \cite{BollobasConf}.
This leads to Lemma~\ref{lemma:Conf_implies_Uniform} below.

%
\begin{lemma}\label{lemma:Conf_implies_Uniform}
Let $\Ev$ be a graph property that does not depend on edge labels
[e.g., $\Ev(G) \equiv\{
G \mbox{ is a tree}\}$]. There exists
$\const= \const(k, \alpha_{\mathrm{ max}}) < \infty$ such that the
following is true for any $\alpha\in[0,\alpha_{\mathrm{ max}}]$. Suppose
$\Ev$
holds with probability $1-\eps$ for $G$ drawn uniformly at random from
$\C(n,R,\alpha n)$, for some $ \eps\in[0,1]$. Then $\Ev$ holds with
probability at least $1-\const\eps$ for $G'$ drawn uniformly at
random from
$\D(n,R,\alpha n)$.
\end{lemma}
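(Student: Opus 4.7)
The plan is a standard configuration-model argument. First I would let $p_{\rm s} \equiv \prob\{G \text{ is simple}\}$ where $G\sim\C(n,R,\alpha n)$ and ``simple'' means no double edges between a (variable, check) pair. The paragraph preceding the lemma states that $p_{\rm s}$ is uniformly bounded away from zero as $n\to\infty$, so there exists $c=c(k,\alpha_{\rm max})>0$ such that $p_{\rm s}\ge c$ for all $n$ and all $\alpha\in[0,\alpha_{\rm max}]$. This is the one external input I need.

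Next I would argue that, conditional on being simple, $G\sim\C(n,R,\alpha n)$ induces the uniform distribution on $\D(n,R,\alpha n)$, once we forget the ordering of half-edges at each check node. Indeed, the paragraph before the lemma also records that each element of $\D(n,R,\alpha n)$ is the image of exactly $\prod_{l=2}^k (l!)^{mR_l}$ elements of $\C(n,R,\alpha n)$ under the map that forgets the ordering of half-edges at each check, and all of these preimages are automatically simple. Hence, for any $G_0\in\D(n,R,\alpha n)$,
\begin{equation*}
\prob\bigl\{G\mapsto G_0 \,\bigl|\, G \text{ simple}\bigr\} = \frac{\prod_{l=2}^k (l!)^{mR_l}}{|\C(n,R,\alpha n)|\cdot p_{\rm s}},
\end{equation*}
which does not depend on $G_0$. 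Since the property $\Ev$ is invariant under the ordering of half-edges (the hypothesis that $\Ev$ does not depend on edge labels), $\prob\{\Ev(G')\}$ equals $\prob\{\Ev(G)\mid G\text{ simple}\}$ where $G'\sim\D(n,R,\alpha n)$.

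From here the bound is immediate:
\begin{equation*}
\prob\{\Ev(G')^{c}\} \,=\, \prob\{\Ev(G)^{c}\mid G\text{ simple}\} \,\le\, \frac{\prob\{\Ev(G)^{c}\}}{p_{\rm s}} \,\le\, \frac{\eps}{c},
\end{equation*}
so the lemma holds with $\const=1/c$. There is essentially no technical obstacle here: the only non-trivial ingredient is the classical fact (attributed to Bollob\'as in the paragraph above) that $p_{\rm s}$ stays bounded below by a positive constant under the degree bound $R_l=0$ for $l>k$ and $m/n\le \alpha_{\rm max}$, and the only subtlety to flag is that one must verify $\Ev$ really is invariant under the half-edge labelling inside a check so that the conditional distribution from $\C$ faithfully matches $\D$.
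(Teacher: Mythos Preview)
Your proposal is correct and follows exactly the approach the paper intends: the paragraph immediately preceding the lemma lays out the same three ingredients (the $\prod_l (l!)^{mR_l}$-to-one correspondence between $\D$ and simple elements of $\C$, the uniform lower bound on the probability of simplicity from \cite{BollobasConf}, and the resulting conditional-probability bound), and the paper simply asserts that these ``lead to'' the lemma without writing out the three-line calculation you give.
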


An important tool in the following will be the notion of almost sure local
convergence of graph sequences. We made this notion precise in
Definition~\ref{def:WC},
following~\cite{DemboMontanariBrazil}.

We now return to the distribution of BP messages and density evolution.

%
\begin{lemma}\label{lemma:DegreeDensityEvolution}
Let $\{z_t\}$ be the density evolution sequence defined by
(\ref{eq:density_evolution}), for a given polynomial $R$,
with $z_0=1$, and define $\hz_t\equiv R'(z_t)/R'(1)$.
Assume $G_n\sim\D(n,R,m)$ or $G_n\sim\C(n,R,m)$ with $m=n\alpha$.

Let
$R_{l_0,l_*}^{(t)}$ be the fraction of check nodes
receiving $l_0$ incoming $0$ messages and $l_*$ incoming $*$ messages after
$t$ iterations of \BPzero\ in $G_n$. Similarly, let $L_{l_0,l_*}^{(t)}$
the fraction of variable nodes
receiving $l_0$ incoming $0$ messages and $l_*$ incoming $*$ messages after
$t$ iterations of \BPzero.

%
Then for any fixed $t\ge0$,
the following occurs almost surely:
%
%
\begin{eqnarray}
\qquad \lim_{n \rightarrow\infty} R_{l_0,l_*}^{(t)} &=&
R_{l_0+l_*} \pmatrix{l_0+l_*
\cr
l_0}
z_t^{l_0}(1-z_t)^{l_*} \qquad\mbox{for }
l_0,l_*\in\{0, 1, \ldots, k\}\label{eq:DE_R},
\\
\lim_{n \rightarrow\infty} L_{l_0,l_*}^{(t)} &=& \prob \{
X_0=l_0,X_*=l_* \} \qquad \mbox{for all } l_0,l_*
\in\naturals, \label{eq:DE_L} %
\end{eqnarray}
where $X_0\sim\Po(R'(1)\alpha\hz_t)$, $X_*\sim\Po(R'(1)\alpha
(1-\hz_t))$
are two independent Poisson random variables.
\end{lemma}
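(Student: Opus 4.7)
The plan is to prove Lemma~\ref{lemma:DegreeDensityEvolution} by combining two standard ingredients: (i) after $t$ iterations of \BPzero the outgoing messages at a node only depend on its depth-$(t{+}1)$ neighborhood in $G$, so $R^{(t)}_{l_0,l_*}$ and $L^{(t)}_{l_0,l_*}$ are averages of local indicators; and (ii) the ensembles $\D(n,R,n\alpha)$ and $\C(n,R,n\alpha)$ converge locally (on average) to an explicit Galton--Watson tree $\T$. Given these, the expressions in \eqref{eq:DE_R}--\eqref{eq:DE_L} follow from a direct distributional recursion on $\T$, and concentration around the expected values will follow from a bounded-difference argument on the configuration model. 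I would work first in $\C(n,R,n\alpha)$ and transfer to $\D(n,R,n\alpha)$ via Lemma~\ref{lemma:Conf_implies_Uniform}, since the event $\{|R^{(t)}_{l_0,l_*}-\mu_{l_0,l_*}|\le\delta\}$ does not depend on edge labels.

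\textbf{Local weak convergence.} A standard first-moment computation shows that for any fixed $r$, the expected number of variable or check nodes whose depth-$r$ neighborhood contains a cycle is $O(1)$, so all but $o(n)$ nodes have a tree-like neighborhood. On these neighborhoods, the depth-$r$ ball rooted at a uniformly random variable converges in distribution to the rooted tree in which the root has $\mathrm{Poisson}(\alpha R'(1))$ check children, each reached check has total degree $l$ with probability $lR_l/R'(1)$ and produces $l-1$ grandchildren variables, and the branching repeats; rooting instead at a random check, the root has degree distributed according to $R$ itself.

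\textbf{Distributional recursion on the tree.} Running \BPzero on the tree from the leaves inward, let $z_t$ and $\hat z_t$ be the probability that a generic variable-to-check and check-to-variable message is $0$ at iteration $t$. The \BPzero initialization gives $z_0=1$. A check of edge-perspective degree $l$ emits $0$ iff all its other $l-1$ inputs are $0$, so $\hat z_t=\sum_l (lR_l/R'(1))z_t^{l-1}=R'(z_t)/R'(1)$. A variable seen along an edge has a $\mathrm{Poisson}(\alpha R'(1))$ number of other check neighbors, each independently sending $0$ with probability $\hat z_{t-1}$; the outgoing message is $0$ iff at least one of them does, giving $z_t=1-e^{-\alpha R'(1)\hat z_{t-1}}=1-e^{-\alpha R'(z_{t-1})}$, which is exactly recursion \eqref{eq:density_evolution}. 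A random check of total degree $l_0+l_*$ then receives i.i.d.\ Bernoulli$(z_t)$ messages, yielding the mean $R_{l_0+l_*}\binom{l_0+l_*}{l_0}z_t^{l_0}(1-z_t)^{l_*}$ for $R^{(t)}_{l_0,l_*}$; a random variable receives a $\mathrm{Poisson}(\alpha R'(1))$ number of incoming check messages, each independently $0$ with probability $\hat z_t$, so Poisson splitting gives the independent Poissons claimed in \eqref{eq:DE_L}.

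\textbf{Concentration and main obstacle.} To upgrade these in-expectation statements to ``with high probability'', I would expose the half-edges of the configuration model one at a time to form a Doob martingale for each empirical fraction. Because each indicator depends on a depth-$(t{+}1)$ neighborhood only, swapping the partners of two half-edges affects only a bounded number of indicators. The main delicate step will be controlling this bounded-difference constant uniformly in the exposure: one must rule out the rare event that some depth-$(t{+}1)$ neighborhood has anomalously large size, which I would handle via a Poisson tail bound on variable degrees (check degrees are deterministically at most $k$). Truncating on the high-probability event of bounded local neighborhoods, Azuma--Hoeffding yields deviations of order $O(n^{-1/2}\polylog\, n)$, which together with the $O(1/n)$ bias coming from non-tree neighborhoods and the transfer via Lemma~\ref{lemma:Conf_implies_Uniform} gives the claimed high-probability bounds for both ensembles.
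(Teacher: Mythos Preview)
Your proposal is correct and follows essentially the same approach as the paper: establish local weak convergence of $\D(n,R,n\alpha)$ and $\C(n,R,n\alpha)$ to the appropriate bipartite Galton--Watson tree, observe that the BP messages after $t$ iterations are local functions of the graph, and compute the limiting distribution on the tree via the density-evolution recursion. The paper's proof is terser---it invokes local convergence (in the almost-sure sense of Definition~\ref{def:WC}) by citation and leaves concentration implicit---whereas you spell out the bounded-difference/Azuma step and the transfer via Lemma~\ref{lemma:Conf_implies_Uniform} explicitly, but the substance is the same.
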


%
\begin{pf}
Notice that both $\D(n,R,m)$ and $\C(n,R,m)$, $m=n\alpha$ converge
locally to unimodular bipartite trees. More precisely,
if rooted at random variable nodes, they converge to Galton--Watson
trees with root
offspring distribution $\Po(R'(1)\alpha)$ at variable nodes,
and equal to the size-biased version of $R$ at check nodes.
The proof of the analogous statement in the case of nonbipartite
graphs can be found in \cite{DemboMontanariBrazil}, Proposition~2.6.
It uses an explicit calculation to show that the empirical
distribution of local neighborhoods converges in expectation, and
a martingale concentration argument to verify the assumptions of
Borel--Cantelli, and hence deduce almost sure convergence. The same
proof extends---with minimal changes---to bipartite (factor) graphs.

Messages are local functions of the graph, hence their distribution
converges to the one on the limit tree. In particular, incoming
messages on the same node are asymptotically independent because they
depend on distinct subtrees. The message distribution can be
computed through a standard tree recursion (see \cite{RiUBOOK,MM09})
that coincides with the density evolution recursion (\ref
{eq:density_evolution}).
\end{pf}
%

Using the correspondence in Lemma~\ref{lemma:PeelingBP} between
\BPzero\
and the
peeling algorithm, we can use density evolution to track the peeling
algorithm.

%
\begin{lemma}\label{lemma:concentration_DE}
Given a factor graph $H$, let $n_1(H)$ denote the number of variable
nodes of degree $1$, and $n_{2+}(H)$ the number of variable nodes of
degree $2$ or larger in $H$. For $l\in\naturals$, let $m_l(H)$ be the
number of factor nodes of degree $l$ in $H$.

Consider synchronous peeling for $t \geq1$ rounds on a
graph $G\sim\D(n,R,\alpha n)$ or $G\sim\C(n,R,\alpha n)$, with
$R_0= R_1=0$, and let $J_t$ denote the residual graph after $t$ iterations.
Let $\omega\equiv\alpha R'(1)$.
Then for any $\delta>0$, there exists
$N_0 = N_0(\delta,k,t, \alpha)$ such that with probability at least $1-1/n^2$
%
%
\begin{eqnarray}
\biggl\llvert \frac{m_l(J_t)}{n} - \alpha R_l z_t^l
\biggr\rrvert &\leq&\delta
\nonumber
\\[-8pt]
\\[-8pt]
\eqntext{\mbox{for $l\in\{2, 3, \ldots, k\}$},}
\\
\biggl\llvert \frac{n_1(J_t)}{n} - \omega\hat{z}_t \exp(-\omega
\hat{z}_t) \bigl(1-\exp\bigl(-\omega(\hat{z}_{t-1}-
\hat{z}_t)\bigr) \bigr) \biggr\rrvert &\leq&\delta,\label{eq:N1Estimate}
\\
\biggl\llvert \frac{n_{2+}(J_t)}{n} - 1+\exp(-\omega\hat{z}_{t}) (1 +
\omega\hat{z}_{t} ) \biggr\rrvert &\leq&\delta.
\end{eqnarray}
\end{lemma}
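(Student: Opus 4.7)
The strategy is to lift the density evolution statement already given in Lemma \ref{lemma:DegreeDensityEvolution} for \BPzero to a statement about the peeling sequence $(J_t)_{t\ge 0}$. The bridge is the dictionary of Lemma \ref{lemma:PeelingBP}: combined with the monotonicity of \BPzero (Lemma \ref{lemma:Monotonicity}), it rewrites each of $m_l(J_t)$, $n_1(J_t)$, $n_{2+}(J_t)$ as a count of nodes with a prescribed incoming-message pattern at iterations $t-1$ and $t$ of \BPzero, which the density evolution statement then controls.

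For the factor-node count, I would first note that a check $a$ belongs to $J_t$ iff it has not received a $*$ message by iteration $t$, equivalently (by monotonicity) iff $\vtoc{u}{a}^t = 0$ for every $u\in\partial a$. Since edges are removed only together with factor nodes in peeling, the degree of such an $a$ in $J_t$ equals its degree in $G$. Hence $m_l(J_t)$ is exactly $m\cdot R_{l,0}^{(t)}$ in the notation of Lemma \ref{lemma:DegreeDensityEvolution}, and the first bound follows from Eq.~(\ref{eq:DE_R}) with $l_0 = l$, $l_*=0$.

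For the variable-node counts, I would verify the following easy consequence of the dictionary: for $v\in J_s$, the number of $0$ incoming messages at iteration $s$ of \BPzero equals $\deg_{J_s}(v)$, and therefore $v\in J_t$ iff $v$ had at least two $0$ incoming messages at iteration $t-1$ (using monotonicity to collapse earlier iterations). Writing $l_0^{(s)}(v)$ for the count of $0$ incoming messages at iteration $s$,
\[
n_{2+}(J_t)=|\{v:l_0^{(t)}(v)\ge 2\}|, \qquad n_1(J_t)=|\{v:l_0^{(t-1)}(v)\ge 2,\ l_0^{(t)}(v)=1\}|.
\]
The first count is already controlled by Eq.~(\ref{eq:DE_L}), which yields $\prob(\Po(\omega\hat z_t)\ge 2) = 1-e^{-\omega\hat z_t}(1+\omega\hat z_t)$. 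The second requires the joint distribution of $l_0^{(t-1)}$ and $l_0^{(t)}$; I would derive it by the same local-weak-convergence argument as in Lemma \ref{lemma:DegreeDensityEvolution}. On the Galton--Watson tree limit, monotonicity forces each factor neighbor $b$ of the root to contribute an independent pair $(\ctov{b}{v}^{t-1},\ctov{b}{v}^{t})$ taking values in $\{(0,0),(0,*),(*,*)\}$ with probabilities $\hat z_t$, $\hat z_{t-1}-\hat z_t$, $1-\hat z_{t-1}$. Because the number of factor neighbors is $\Po(\omega)$, Poisson thinning makes $l_0^{(t)}(v)$ and $l_0^{(t-1)}(v)-l_0^{(t)}(v)$ independent Poisson variables with means $\omega\hat z_t$ and $\omega(\hat z_{t-1}-\hat z_t)$, which gives exactly the product formula $\omega\hat z_t e^{-\omega\hat z_t}(1 - e^{-\omega(\hat z_{t-1}-\hat z_t)})$ claimed for $n_1(J_t)/n$.

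Finally, to upgrade `with high probability' in Lemma \ref{lemma:DegreeDensityEvolution} to the explicit $1-n^{-2}$ bound, I would run an edge-exposure Azuma argument. For fixed $t$, each BP message at iteration $t$ depends only on the depth-$t$ neighborhood of its edge, and on the high-probability event that all variable degrees are $O(\log n)$ the neighborhood has $\poly(\log n)$ vertices. Thus, swapping a single edge in $\C(n,R,\alpha n)$ perturbs each of the three counts by at most $\poly(\log n)$; Azuma--Hoeffding then gives deviations $O(\sqrt{n}\,\polylog n)$ with probability $1-\exp(-\Omega(n/\polylog n))$, far stronger than $1-n^{-2}$. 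Lemma \ref{lemma:Conf_implies_Uniform} transfers the bound from $\C$ to $\D$. The main obstacle I anticipate is the two-iteration extension of density evolution, specifically checking that the monotone coupling of $(\ctov{b}{v}^{t-1},\ctov{b}{v}^{t})$ passes cleanly to the Galton--Watson limit; this should nevertheless reduce to the same local-functional argument used for the marginal case, since the pair is still a function of a finite-depth neighborhood.
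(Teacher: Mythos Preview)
Your proposal is correct and mirrors the paper's proof almost step for step: translate $m_l(J_t)$, $n_{2+}(J_t)$, $n_1(J_t)$ into \BPzero message counts via Lemma \ref{lemma:PeelingBP} and monotonicity, read off the limits from density evolution (using the Poisson-thinning argument for the joint law of $(l_0^{(t-1)},l_0^{(t)})$ exactly as you describe), and then upgrade to the $1-n^{-2}$ bound by an Azuma argument with bounded differences controlled by the polylog ball size. The only cosmetic differences are that the paper exposes the check-node neighborhoods $X_1,\dots,X_m$ rather than edges as the martingale increments, and it handles $\D$ and $\C$ simultaneously by conditioning on check degrees rather than transferring via Lemma \ref{lemma:Conf_implies_Uniform}.
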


%
\begin{pf}
For the sake of simplicity, let us consider $n_{1}(J_t)$.
By Lemma~\ref{lemma:PeelingBP}, a~node $v$ has degree
$1$ in the residual graph $J_t$ if and only if there is one incoming
$0$ message to $v$ at
time $t$, and there were two or more incoming $0$ messages to $v$ at time
$t-1$. By Lemma~\ref{lemma:DegreeDensityEvolution}, the number of
incoming $0$ messages to $v$ at time $t$
converges in distribution to $Z_1 \sim\Po(\omega\hz_t)$. Using
monotonicity of the algorithm, and again Lemma~\ref
{lemma:DegreeDensityEvolution},
the number of incident edges such that the message incoming to $v$ at time
$t-1$ is $0$ but changes to $*$ at time $t$, converges to $Z_2 \sim
\Po(\omega(\hz_{t-1} - \hz_t))$, and is asymptotically independent of
the number of $0$ messages (converging to $Z_1$). Therefore,
$n_{1,t}/n$ converges as $n\to\infty$ to
\[
\prob[Z_1 = 1] \prob[Z_2 \geq1] = \omega
\hat{z}_t \exp(-\omega\hat{z}_t) \bigl(1-\exp\bigl(-\omega(
\hat{z}_{t-1}-\hat{z}_t)\bigr)\bigr).
\]

This establishes that the estimate (\ref{eq:N1Estimate}) holds with
high probability. In order to obtain the desired probability bound,
one can use a standard concentration of measure argument \cite
{RiUBOOK,PanconesiBook}.
Namely, we first condition on the degrees of the check nodes.
Since the unconditional distributions $\D(n,R,m)$ and $\C(n,R,m)$ are
recovered by a random
relabeling of the check nodes, such conditioning is irrelevant.
We then regard $n_1(J_t)$ as a function of the independent random
variables $X_1, \ldots, X_m$ whereby $X_a$ is the neighborhood of the
$a$th check node. We denote by $\Ev_n$ the event that all the balls
$\Ball_G(v,2t)$ of radius $t$ in $G$ have
size smaller than $(\log n)^C$. We have
\[
 \bigl|\E\bigl\{n_1(J_t)|X_1,
\ldots,X_{a-1},X_a;\Ev_n\bigr\}-\E\bigl\{
n_1(J_t)|X_1,\ldots ,X_{a-1},X_a';
\Ev_n\bigr\} \bigr|\le (\log n)^C. %
\]
The desired probability estimate then follows by applying Azuma's
inequality (in a form that allow for exceptional events; see, e.g., \cite{PanconesiBook}, Theorem~7.7) and bounding
$\prob(\Ev_n^{\mathrm c})$ (see, e.g., Section~\ref
{sec:PeelabilityGood}).
\end{pf}
%
%
\subsection{BP fixed points}
\label{sec:BPFP}

For our purposes, it is important to characterize the fixed point of the
\BPzero\ algorithm introduced above. Indeed, the structure of
this fixed point is directly related to the decomposition of $G$ into
core, backbone and periphery (cf. Lemma~\ref
{lemma:BP_gives_core_backbone_periphery}), which is in turn crucial for our
definition of clusters. Let us start from an easy remark on density
evolution.

%
\begin{lemma}\label{lemma:DEConvergence}
Let $\{z_t\}_{t\ge0}$ be the density evolution
sequence defined by equation~(\ref{eq:density_evolution}) with initial
condition $z_0=1$.
Then $t\mapsto z_t$ is monotone decreasing, and hence has a limit
$Q\equiv\lim_{t\to\infty} z_t$ which is given by
%
%
\begin{equation}
Q = \sup \bigl\{z \mbox{ s.t. } z = 1-\exp\bigl\{-\alpha
R'(z)\bigr\} \bigr\}. %
\end{equation}
\end{lemma}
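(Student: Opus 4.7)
The plan is to exploit the monotonicity of the density evolution map $F(z) \equiv 1-\exp\{-\alpha R'(z)\}$ on $[0,1]$, which follows immediately from two observations: $R'(z) = \sum_{l\ge 1} l R_l z^{l-1}$ has non-negative coefficients (so $R'$ is non-decreasing on $[0,1]$), and $x \mapsto 1-e^{-x}$ is also non-decreasing. Hence $F$ is monotone non-decreasing on $[0,1]$, and moreover $F$ maps $[0,1]$ into $[0,1]$.

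First I would establish monotonicity of the sequence. Since $z_0 = 1$ and $F(1) = 1 - \exp\{-\alpha R'(1)\} \le 1$, we get $z_1 \le z_0$. Proceeding by induction: assuming $z_t \le z_{t-1}$, monotonicity of $F$ gives $z_{t+1} = F(z_t) \le F(z_{t-1}) = z_t$. Thus $\{z_t\}$ is non-increasing, and since it is bounded below by $0$, it converges to some limit $Q \in [0,1]$. Continuity of $F$ (it is a polynomial composed with an exponential) then forces $Q = F(Q)$, i.e., $Q = 1 - \exp\{-\alpha R'(Q)\}$, so $Q$ is a fixed point of the iteration.

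Next I would show $Q$ is the largest fixed point. Let $z^\star \in [0,1]$ be any fixed point, i.e., $z^\star = 1 - \exp\{-\alpha R'(z^\star)\}$. Since $z^\star \le 1 = z_0$, monotonicity of $F$ together with $F^t(z^\star) = z^\star$ gives, by induction, $z^\star = F^t(z^\star) \le F^t(z_0) = z_t$ for every $t \ge 0$. Passing to the limit yields $z^\star \le Q$. Therefore $Q = \sup\{z \in [0,1] : z = 1 - \exp\{-\alpha R'(z)\}\}$, which is the claim.

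There is no real obstacle here; the whole argument is the standard fact that iterating a monotone continuous self-map of $[0,1]$ from the top endpoint produces a non-increasing sequence converging to the largest fixed point. The only subtlety worth noting is that one should restrict attention to $z \in [0,1]$ in the supremum (which is implicit, since the density evolution sequence stays in $[0,1]$ and any meaningful fixed point must lie there because $1 - e^{-\alpha R'(z)} \in [0,1]$ whenever $\alpha R'(z) \ge 0$).
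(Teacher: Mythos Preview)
Your proof is correct and follows essentially the same approach as the paper. The paper's proof is terser, establishing only the monotonicity step explicitly (via $f$ increasing and $z_1<z_0$, then induction) and leaving the convergence to the largest fixed point implicit; you have spelled out the continuity argument and the supremum characterization, which the paper omits as routine.
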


%
\begin{pf}
Monotonicity follows from the fact that $z\mapsto
f(z) \equiv1-\break  \exp\{-\alpha R'(z)\} $ is monotone increasing, and that $z_1=
1-\exp\{-\alpha R'(1)\} <z_0$,
whence $z_2=f(z_1)\le f(z_0)=z_1$, and so on.
\end{pf}
%
Notice that the definition of $Q$ given in this lemma is consistent
with the one in Theorem~\ref{thm:Main}, that corresponds to the special
case of regular, degree-$k$ check nodes, that is, $R(x) = x^k$.
We further let $\hQ\equiv R'(Q) /R'(1)$.

We know that both \BPzero\ and density evolution converge
to a fixed point. Since density evolution tracks \BPzero\ for any
bounded number of iterations,
it would be tempting to conclude that a description of the \BPzero\
fixed point is
obtained by replacing $z_t$ by $Q$ and $\hz_t$ by $\hQ$ in Lemma~\ref
{lemma:DegreeDensityEvolution}. This is, of course, far from obvious
because it requires an inversion of the limits $n\to\infty$ and
$t\to\infty$. Despite this caveat, this substitution is essentially
correct.
%

\begin{lemma}\label{lemma:FP_concentration_DE}
Assume $G_n\sim\G(n,k,m)$ with $m=n\alpha$, and
$\alpha\in[0,\ad(k))\cup(\ad(k),\infty)$.

Let
$R_{l_0,l_*}^{(\infty)}$ be the fraction of check nodes
receiving $l_0$ incoming $0$ messages and $l_*$ incoming $*$ messages
at the fixed point of \BPzero. Similarly, let $L_{l_0,l_*}^{(\infty)}$
the fraction of variable nodes
receiving $l_0$ incoming $0$ messages and $l_*$ incoming $*$ messages
at the fixed point of \BPzero.

The following occurs with probability $1$:
%
%
\begin{eqnarray}
\lim_{n \rightarrow\infty} R_{l_0,l_*}^{(\infty)}& =&
\pmatrix{k
\cr
l_0} Q^{l_0}(1-Q)^{l_*} \qquad\mbox{for }
l_0\in\{0, 1, \ldots, k\}, l_* = k-l_0,\label{eq:REstFP}
\\
\lim_{n \rightarrow\infty} L_{l_0,l_*}^{(\infty)}& =& \prob \{
X_0=l_0,X_*=l_* \}\qquad \mbox{for all } l_0,l_*
\in\naturals,\label{eq:LEstFP} %
\end{eqnarray}
where $X_0\sim\Po(k\alpha\hQ)$, $X_*\sim\Po(k\alpha(1-\hQ))$
are two independent Poisson random variables.
\end{lemma}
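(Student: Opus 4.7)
The plan is to sandwich the empirical fixed-point message distribution between two matching bounds. The upper bound on the number of incoming $0$-messages will come from combining BP$_0$ monotonicity (Lemma~\ref{lemma:Monotonicity}) with the finite-time density-evolution concentration (Lemma~\ref{lemma:DegreeDensityEvolution}); the matching lower bound will come from the combinatorial identification of the fixed point with the core/backbone/periphery decomposition (Lemma~\ref{lemma:PeelingBP_FP}) together with known asymptotics of the $2$-core size. The regime $\alpha<\ad(k)$ is degenerate: by definition of $\ad$ the density-evolution recursion drives $\hat z_t\to \hat Q=0$, Theorem~\ref{lemma:BasicCore} says there is no $2$-core w.h.p., and Lemma~\ref{lemma:PeelingBP_FP} then forces every fixed-point message to be $*$, which is what (\ref{eq:LEstFP})--(\ref{eq:REstFP}) assert when the Poisson rates vanish. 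So assume $\alpha\in(\ad(k),\as(k))$.

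For the upper bound, fix a large constant $t$. By Lemma~\ref{lemma:Monotonicity} the number of incoming $0$-messages at every node at time $t$ dominates the number at the fixed point, so Lemma~\ref{lemma:DegreeDensityEvolution} gives, for any $\delta>0$ and w.h.p.,
\[
\sum_{l_0'\ge l_0,\,l_*'\le l_*} L^{(\infty)}_{l_0',l_*'} \;\le\; \prob\!\Big[\Po(k\alpha\hat z_t)\ge l_0,\ \Po(k\alpha(1-\hat z_t))\le l_*\Big]+\delta,
\]
the two Poissons independent. Sending $t\to\infty$ and applying Lemma~\ref{lemma:DEConvergence} ($\hat z_t\to\hat Q$) shows that the empirical marginal of the $0$-message count at variables is stochastically dominated by $\Po(k\alpha\hat Q)$; the parallel check-side argument, exploiting that check nodes have fixed degree $k$ so that $l_*=k-l_0$, produces the analogous one-sided bound for (\ref{eq:REstFP}).

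The matching lower bound comes from Lemma~\ref{lemma:PeelingBP_FP}: at the fixed point, the fractions of variables with exactly $0$, exactly $1$, and at least $2$ incoming $0$-messages are $|\VP|/n$, $|\VB\setminus\VC|/n$, and $|\VC|/n$; similarly on the check side with $|\FC|/m$, $|\FB\setminus\FC|/m$, and $|\FP|/m$. Theorem~\ref{lemma:BasicCore} together with the refinements of \cite{Molloy,DemboFSS} shows that all six cardinalities concentrate at the values predicted by marginalizing the right-hand sides of (\ref{eq:LEstFP})--(\ref{eq:REstFP}) appropriately: $e^{-k\alpha\hat Q}$, $k\alpha\hat Q\,e^{-k\alpha\hat Q}$, $\prob[\Po(k\alpha\hat Q)\ge 2]$ for variables, and $Q^k$, $kQ^{k-1}(1-Q)$, $1-Q^k-kQ^{k-1}(1-Q)$ for checks. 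These match the upper bound and so collapse stochastic dominance to equality at the three marginal levels; the mean identity $\sum_{l\ge 1}\prob[X_0^{(\infty)}\ge l]=k\alpha\hat Q$ (total $0$-edge count on the variable side, obtainable from the check-side cardinalities) then forces equality at every level of the marginal of $X_0$.

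To upgrade from the marginal of $X_0$ to the full joint distribution in (\ref{eq:LEstFP}), I would condition on the total degree $d$ of $v$, which is asymptotically $\Po(k\alpha)$ by local weak convergence of $\G(n,k,m)$ to a bipartite Galton-Watson tree; by the symmetry of $\G(n,k,m)$ the $d$ incident-edge labels are exchangeable, and by local weak convergence they become asymptotically independent Bernoulli$(\hat Q)$ at the fixed point (distinct incident edges look at disjoint subtrees in the limit), so Poisson splitting delivers the independent product $\Po(k\alpha\hat Q)\times\Po(k\alpha(1-\hat Q))$. The parallel check-side argument, with a $\Binom(k,Q)$ split of the fixed $k$ incident edges, gives (\ref{eq:REstFP}). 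Quantitative $\delta$-concentration is obtained exactly as in Lemma~\ref{lemma:concentration_DE}, by applying Azuma's inequality to $L^{(\infty)}_{l_0,l_*}$ viewed as a function of the independent check-neighborhood variables on the high-probability event that all depth-$t$ balls are polylogarithmic in size. The main obstacle is the interchange of the limits $n\to\infty$ and $t\to\infty$: BP$_0$ on $G$ may take a number of iterations that itself grows with $n$, so finite-$t$ density evolution does not directly reach the fixed point. Monotonicity bypasses this on one side; the other side genuinely requires the a priori handle supplied by Lemma~\ref{lemma:PeelingBP_FP} together with the standard $2$-core asymptotics.
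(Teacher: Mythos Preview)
Your upper-bound half is fine and matches the paper: monotonicity of \BPzero plus Lemma~\ref{lemma:DegreeDensityEvolution} and $\hat z_t\to\hat Q$ give the one-sided estimate $L_{\ell+}^{(\infty)}\le \prob\{\Po(k\alpha\hat Q)\ge\ell\}+\delta$ w.h.p. The gap is in the matching lower bound.

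You assert that \cite{Molloy,DemboFSS} give concentration of all six cardinalities $|\VC|,|\VB\setminus\VC|,|\VP|,|\FC|,|\FB\setminus\FC|,|\FP|$. They do not: those papers establish $|\VC|/n\to\prob\{\Po(k\alpha\hat Q)\ge 2\}$ and $|\FC|/m\to Q^k$, but say nothing about the backbone. The paper flags exactly this point (paragraph after Lemma~\ref{lemma:FP_concentration_DE}): the ``estimated backbone'' is not monotone in $t$, and the backbone is not uniformly random conditioned on its degree sequence, so the standard core arguments do not transfer. Your ``mean identity'' does not rescue this: the total number of $0$ check-to-variable messages at the fixed point equals $k|\FC|+|\FB\setminus\FC|$, which still contains the unknown backbone count, so you cannot pin down $L_{1+}^{(\infty)}$ from the core data alone. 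Likewise, your upgrade to the joint law by ``local weak convergence at the fixed point'' is precisely the $n\to\infty$/$t\to\infty$ interchange you yourself identify as the obstacle; asserting asymptotic independence of the fixed-point edge labels begs the question.

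What the paper does instead is run \BPstar on the non-core (initialization $0$ on core-incident edges, $*$ elsewhere) and squeeze $\vtoc{v}{a}^{0,\infty}$ between $\vtoc{v}{a}^{0,t}$ and $\vtoc{v}{a}^{*,t}$. To control the $*$-side it proves that the core-marked graph converges locally to a marked Galton--Watson tree $\T_*$ (Lemma~\ref{lemma:Gstar_convergesto_Tstar}), constructs an edge-marked tree $\tT_*$ whose marks form a BP fixed point, shows $\F(\tT_*)\stackrel{\rm d}{=}\T_*$ via a density-matching/unimodularity argument, and then proves the key combinatorial fact (Lemma~\ref{lemma:Key}) that every backbone vertex has a \emph{finite witness} rooted in the core. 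This last step is what makes \BPstar reach the correct density in finitely many iterations on the limit tree, giving $\liminf_n \E L_{1+}^{(\infty)}\ge\prob\{\Po(k\alpha\hat Q)\ge 1\}$. You would need an argument of comparable strength to replace it.
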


Given Lemma~\ref{lemma:DegreeDensityEvolution} above,
Lemma~\ref{lemma:FP_concentration_DE} says that the messages change
very little beyond a large constant number of iterations.
A hint at the fact that Lemma~\ref{lemma:FP_concentration_DE} is
significantly more challenging than Lemma~\ref
{lemma:DegreeDensityEvolution} is given by the assumption in the
former that $\alpha\neq\ad(k)$. In fact, this turns out to be a
necessary assumption, because it implies an important correlation decay
property.

Molloy \cite{Molloy} established the analog of
equation~(\ref{eq:LEstFP}) for $\sum_{\ell_0\ge2,\ell_*\ge
0}L_{l_0,l_*}^{(\infty)} $, which corresponds to the relative size of
the core.
We find that the complete theorem presents new challenges: keeping
track of the backbone turns out to be hard. One hurdle is that the
``estimated backbone'' after $t$ iterations of \BPzero\ (i.e., the subset
of variable nodes that receive exactly one $0$ message) does not evolve
monotonically in $t$. In contrast, the ``estimated core''
(i.e., the subset
of variable nodes that receive two or more $0$ messages)
can only shrink. Another hurdle is that, unlike the periphery (cf.
Section~\ref{secperrr}), it turns out that the backbone is \emph
{not} uniformly random conditioned on the degree sequence.

The proof of Lemma~\ref{lemma:FP_concentration_DE} is quite long
and will be presented in Section~\ref{subsubsec:proof_FP_conc_DE}. The
basic idea is to run BP starting from the initialization with $0$
messages coming from vertices in the core and $*$ messages
everywhere else. This corresponds to \BPstar\ on the noncore $\GNC$
[i.e., the subgraph
induced by $(F\setminus\FC, V \setminus\VC)$], since messages
outside the noncore do not change: Messages within the core and from
core variables to noncore checks stay fixed to 0. Messages from noncore
checks to core variables stay fixed to *. We refer to this algorithm
simply as \BPstar, with the understanding that \BPstar\ is actually run
on $\GNC$.

It is not hard to check by induction over $t$ that \BPstar\ corresponds
to the backbone augmentation procedure.

%
\begin{lemma}\label{lemma:PeelingBPstar}
Consider the backbone augmentation procedure with the initial subgraph
$\GC$.
A factor node $a$ is added to the backbone in round $t$
of backbone augmentation, that is, $a \in G_{\mathrm b}^{(t)}\setminus
G_{\mathrm
b}^{(t-1)}$
(cf. Definition~\ref{def:backboneaugmentation_backbone_periphery}) if
all but one incoming message to $a$ in iteration $t$ of \BPstar\ are
$0$, but this was not the case in previous iterations.

A variable node $v$ is added to the backbone in round $t$,
of backbone augmentation, that is, $v \in G_{\mathrm b}^{(t)}\setminus
G_{\mathrm b}
^{(t-1)}$ if there is one incoming $0$ message to
$v$ in iteration $t$ of \BPstar\ but this was not true in previous iterations.
\end{lemma}

It then follows immediately from Lemma~\ref
{lemma:BP_gives_core_backbone_periphery} that \BPzero\ and \BPstar\
converge to the same fixed point. Denote the messages at this fixed
point by $\vtoc{v}{a}^{0,\infty}$.

Denote by $\vtoc{v}{a}^{*,t}$ the messages produced in iteration $t$ of
\BPstar, and $\vtoc{v}{a}^{0,t}$ the messages produced by
\BPzero. Monotonicity of BP update implies
$\vtoc{v}{a}^{0,t}\succeq\vtoc{v}{a}^{0,\infty}\succeq\vtoc{v}{a}^{*,t}$.
The proof consists in showing that the fraction of $0$ messages in
$\{\vtoc{v}{a}^{0,t}\}_{(a,v)\in E}$ is, for large fixed $t$, close to
the fraction of $0$ messages in $\{\vtoc{v}{a}^{*,t}\}_{(a,v)\in E}$.
The challenge is that no analog of Lemma~\ref
{lemma:DegreeDensityEvolution} is available for \BPstar.

Our final lemma is a straightforward consequence of Lemmas \ref
{lemma:DegreeDensityEvolution} and \ref{lemma:FP_concentration_DE} above.

%
\begin{lemma}\label{lemma:BP_few_late_changes}
Consider any $k \geq3$, any $\alpha\in(0, \ad) \cup(\ad, \as)$ and
any $\delta> 0$. There exists $T< \infty$ such that the following occurs.
Let $G_n\sim\G(n,k,\alpha n)$. Then, eventually (in $n$) almost
surely, the
fraction of (check-to-variable or
variable-to-check) messages that change after iteration $T$ of
\BPzero\ is smaller than $\delta$.
\end{lemma}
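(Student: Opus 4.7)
The plan is to combine three ingredients: the monotonicity of \BPzero, the convergence of density evolution, and the two concentration lemmas.

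First, by Lemma \ref{lemma:Monotonicity}, the message vector under \BPzero is monotone decreasing in the partial order $0 \succ *$. Consequently every message changes at most once in its lifetime, and each change is from $0$ to $*$. It follows that, letting $Z_t^{v\to c}$ and $Z_t^{c\to v}$ denote the number of variable-to-check and check-to-variable messages equal to $0$ after iteration $t$, the total number of messages that change at some iteration strictly after $T$ equals
\[
\bigl(Z_T^{v\to c}-Z_\infty^{v\to c}\bigr)+\bigl(Z_T^{c\to v}-Z_\infty^{c\to v}\bigr).
\]
Thus it suffices to control the change in the fraction of $0$ messages between iteration $T$ and the fixed point.

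Next I choose $T$ using density evolution. For the ensemble $\G(n,k,n\alpha)$ we have $R(x)=x^k$, so by Lemma \ref{lemma:DEConvergence} the sequence $z_t$ defined by $z_{t+1}=1-\exp\{-k\alpha z_t^{k-1}\}$ with $z_0=1$ is monotone decreasing to the limit $Q$ appearing in Lemma \ref{lemma:FP_concentration_DE}, and $\hz_t=z_t^{k-1}\to \hQ$ by continuity. Pick $T$ large enough that $z_T-Q<\delta/8$ and $\hz_T-\hQ<\delta/8$.

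Now apply Lemma \ref{lemma:DegreeDensityEvolution} at this fixed $T$. Summing \eqref{eq:DE_R} over the appropriate values gives that w.h.p.\ the fraction of check-to-variable messages equal to $0$ after $T$ iterations is within $\delta/8$ of $\hz_T$ (since a $c\to v$ message is $0$ exactly when all other incoming messages to $c$ are $0$, which can be read off $R_{l_0,l_*}^{(T)}$). Similarly, summing \eqref{eq:DE_L} and using the Poisson mass function shows that w.h.p.\ the fraction of variable-to-check messages equal to $0$ after $T$ iterations is within $\delta/8$ of $z_T$. Applying Lemma \ref{lemma:FP_concentration_DE} in exactly the same way at the fixed point, we conclude that w.h.p.\ the fractions at the fixed point are within $\delta/8$ of $Q$ and $\hQ$ respectively; this is where the hypothesis $\alpha\neq\ad$ is consumed. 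Combining the four estimates on a high-probability event yields
\[
\frac{Z_T^{v\to c}-Z_\infty^{v\to c}}{|E|}+\frac{Z_T^{c\to v}-Z_\infty^{c\to v}}{|E|}\le (z_T-Q)+(\hz_T-\hQ)+\tfrac{\delta}{2}\le \delta,
\]
which together with the monotonicity identity above gives the claim.

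The only mildly non-routine point is the bookkeeping that translates the check-side and variable-side statistics of Lemmas \ref{lemma:DegreeDensityEvolution} and \ref{lemma:FP_concentration_DE} into fractions of $0$ messages along directed edges; the substantive content (that the fixed-point profile can be read off density evolution) is supplied by Lemma \ref{lemma:FP_concentration_DE}, and no additional probabilistic input is needed.
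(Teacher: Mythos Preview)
Your proof is correct and follows essentially the same approach as the paper's: both combine monotonicity of \BPzero (so the fraction of changing messages equals the drop in the fraction of $0$ messages), the finite-$t$ concentration of Lemma~\ref{lemma:DegreeDensityEvolution}, the fixed-point concentration of Lemma~\ref{lemma:FP_concentration_DE}, and the convergence $z_t\to Q$ from Lemma~\ref{lemma:DEConvergence}, then finish by triangle inequality. The paper treats the variable-to-check direction and declares the other analogous, while you handle both directions together; this is a cosmetic difference only.
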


%
\begin{pf}
Let $N^{t}(n)$ be the
fraction of variable-to-check messages that are equal to $0$ after $t$
iterations
on $G_n$ (with $t=\infty$ corresponding to the fixed point). Then
equations~(\ref{eq:DE_R}) and (\ref{eq:REstFP}) imply that
\[
\bigl |N^{t}(n)- z_t \bigr|\le\frac{\delta}{3k},\qquad
\bigl|N^{\infty}(n)- Q \bigr|\le\frac{\delta}{3k} %
\]
holds eventually almost surely.
Using Lemma~\ref{lemma:DEConvergence}, there exists $T$ large
enough so that, for $t\ge T$, $|z^t-Q|\le\delta/(3k)$. By the triangle
inequality $|N^t(n)-N^{\infty}(n)|\le\delta/k$. The thesis for
variable-to-check messages follows
since, by monotonicity of \BPzero, $N^t(n)-N^{\infty}(n)$ is exactly
equal to the fraction of messages that change value from iteration $t$
to the fixed point. Each change in a variable-to-check message can lead
to a change
in at most $k-1$ check-to-variable messages. Thus, the fraction of
check-to-variable messages
that change after iteration $T$ is smaller than $\delta$.
\end{pf}
%
%
%
\subsection{Proof of Lemma \texorpdfstring{\protect\ref{lemma:FP_concentration_DE}}{4.8}}
\label{subsubsec:proof_FP_conc_DE}

Throughout this section, the notion of convergence adopted is
\emph{convergence locally} (cf. Definition~\ref{def:WC}).

For $n \geq0$, draw a graph $G_n$ uniformly at random from $\G(n,k,
\alpha n)$.
Consider equation~(\ref{eq:LEstFP}). Since the total number of incoming
messages is equal to the vertex degree, which is $\Po(k\alpha)$, it is
sufficient to control the distribution of $0$ incoming messages.
In particular, we define
\[
L^{(t)}_{\ell+} \equiv \sum
_{\ell_*=0}^{\infty}\sum_{l_0=\ell}^{\infty}
L^{(t)}_{l_0,l_*}, %
\]
that is the fraction of nodes that receive $\ell$ or more $0$ incoming
messages.

We prove a series of lemmas, leading to the desired estimate for
$L^{(t)}_{\ell+}$.

An upper bound on $L^{(\infty)}_{\ell+}$ is relatively easy to obtain.

%
\begin{lemma}
\label{lemma:L1inf_ub}
With probability $1$ with respect to the choice of $(G_n)_{n\geq0}$,
we have for all $l \geq0$,
\[
\lim\sup_{n \rightarrow\infty} L_{\ell+}^{(\infty)} \leq\prob
\bigl\{ \Po (k\alpha\hQ)\ge\ell\bigr\}.
\]
\end{lemma}

\begin{pf}
Using Lemma~\ref{lemma:DegreeDensityEvolution} (and using the fact that
$L_l \leq\const\exp(- l/\const)$ for all $l$ holds eventually almost
surely, for some $\const< \infty$) we have,
\[
\lim_{n \rightarrow\infty} L_{\ell+}^{(t)} = \prob\bigl\{
\Po(k\alpha \hz _t)\ge\ell\bigr\}
\]
holds w.p. $1$.
From Lemma~\ref{lemma:Monotonicity}, it follows that $L_{\ell+}^{(t)}$
is monotone decreasing. Thus, we have
\[
\lim\sup_{n \rightarrow\infty} L_{\ell+}^{(\infty)} = \prob\bigl
\{\Po (k\alpha\hz_t)\ge\ell\bigr\}
\]
w.p. $1$.

Fix an arbitrary $\delta> 0$.
Lemma~\ref{lemma:DEConvergence} implies that, for $t$ large
enough,
\[
\bigl[\prob\bigl\{\Po(k\alpha\hz_t)\ge\ell\bigr\} - \prob\bigl\{
\Po(k\alpha \hQ)\ge \ell\bigr\} \bigr] \leq\delta,
\]
which implies that
\[
\lim\sup_{n \rightarrow\infty} L_{\ell+}^{(\infty)} \leq\prob
\bigl\{ \Po (k\alpha\hQ)\ge\ell\bigr\} + \delta
\]
holds almost surely.
Since $\delta$ is arbitrary, we obtain the claimed result.
\end{pf}

The lower bound on $L_{\ell+}^{(\infty)}$ cannot be obtained by the
same approach. We go therefore through a detour.

Let $\mu_n \equiv\mu(G_n)$ be the measure on rooted factor graphs
with marks
(called ``networks'' in \cite{AldousLyonsUnimodular}), constructed as
follows: Choose
a uniformly random variable node $i\in V_n$ as root. Mark variable
nodes with mark $\mathsf{ c }$ if they are in the 2-core of~$G_n$.

%
\begin{lemma}\label{lemma:Gstar_convergesto_Tstar}
The sequence
$\{\mu_n\}_{n\ge0}$ converges locally
to the measure on random rooted tree with marks, $\T_*(\alpha,k)$,
defined as
follows. Construct a random bipartite Galton--Watson tree rooted at
$\root$ with offspring
distribution $\Po(k\alpha)$ at variable nodes and deterministic
$k-1$ at factor notes. 
Let $\VC(\T_*)$ be the maximal subset of its vertices
such that each variable node has degree at least $2$ and each factor
node has degree $k$ in the induced subgraph.
Mark with $\mathsf{ c }$ all vertices in $\VC(\T_*)$.
\end{lemma}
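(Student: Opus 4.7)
The plan is to combine local convergence of the underlying unmarked bipartite graph with a local approximation of the $\c$-mark via truncated peeling, and then to pass to the limit using Molloy's concentration result for the 2-core size. I would proceed in three stages.

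\emph{Stage 1 (Unmarked convergence).} Forgetting the $\c$-marks, the rooted random graph obtained from $G\sim\G(n,k,\alpha n)$ rooted at a uniformly random variable node converges locally on average to the bipartite Galton--Watson tree $\T$ underlying $\T_*(\alpha,k)$: the root has $\Po(k\alpha)$ check children, each check has exactly $k-1$ further variable children, and each visited variable has an independent $\Po(k\alpha)$ number of check children. This is a standard computation for the $\G(n,k,m)$ ensemble (the variable-side degree of a uniform root is asymptotically $\Po(k\alpha)$ and the offspring subtrees decouple in the limit) and can be cited from \cite{DemboMontanariBrazil}.

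\emph{Stage 2 (Locality of the truncated mark).} For each $T\ge 1$ and each vertex $u$, set $A_T(u)\equiv\mathbf{1}\{u\in V_1\cup\cdots\cup V_T\}$, where $V_1,\dots,V_T$ are the sets of variable nodes peeled in the first $T$ rounds of synchronous peeling on $G$. A straightforward induction on $T$ shows that $A_T(u)$ is determined by a finite-radius ball around $u$, so it is a local function. The same local rule defines $A_T^*$ on $\T_*$. By Remark \ref{rem:peeling_leaves_2core}, $u\in\VC(G)$ iff $A_T(u)=0$ for every $T$, and by the definition of $V_C(\T_*)$ the analogous statement holds on the tree. Combining with Stage 1, for any fixed $t,T$ and any depth-$t$ rooted tree $\T_0$ carrying $\{0,1\}$-labels on its vertices,
\[
\lim_{n\to\infty} \E\,\prob\big\{\big(\Ball_{G_n}(v,t),(A_T(u))_{u\in\Ball(v,t)}\big)\cong \T_0\big\}
= \prob\big\{\big(\T_*(t),(A_T^*(u))_{u\in\T_*(t)}\big)\cong \T_0\big\}.
\]

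\emph{Stage 3 (Passing $T\to\infty$).} I would then convert $A_T$-labels into true $\c$-labels by sending $T\to\infty$. On the tree side this is immediate: $A_T^*(u)$ is monotone in $T$ with limit $1-\mathbf{1}\{u\in V_C(\T_*)\}$, and monotone convergence passes through the probability. On the graph side, one direction is also easy: since $\{A_T(u)=0\}\supseteq\{u\in\VC(G_n)\}$, the $\limsup$ in $n$ of the graph-side probability is bounded above by the tree-side probability for each $T$, which tightens correctly as $T\to\infty$. The matching lower bound is the only real obstacle and is supplied by Molloy's theorem \cite{Molloy}: the fraction $|\VC(G_n)|/n$ concentrates around its density-evolution value, which equals $\prob(\rho\in V_C(\T_*))$. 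A standard sandwich between $A_T$-marked and $\c$-marked events, vertex-by-vertex in $\Ball_{G_n}(v,t)$, then yields the required convergence of marked ball distributions in the sense of Definition \ref{def:WC}.
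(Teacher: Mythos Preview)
Your overall strategy—approximate the non-local $\c$-mark by the local truncation $A_T$ and let $T\to\infty$—is sound and genuinely different from the paper's route. The paper instead argues by compactness: it extracts a subsequential local limit $\O_*$ of the marked graphs, observes that the marked set in $\O_*$ is a stopping set (a local condition, hence preserved under local limits) and is therefore contained in the marked set of $\T_*$; it then matches the two mark densities, using Molloy's theorem on the finite side and a tree recursion on the infinite side, and invokes unimodularity to upgrade ``density of the difference is zero'' to ``the difference is empty''. Your approach is more constructive and avoids the unimodularity step, at the cost of explicit error control.

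There is, however, a real gap in your Stage~3. The ``one direction is easy'' claim fails for a general marked target $\T_0$: the inclusion $\{u\in\VC(G_n)\}\subseteq\{A_T(u)=0\}$ yields a monotone comparison only for monotone events such as ``the root is marked'', not for events prescribing some vertices marked and others unmarked. What you actually need (and what your final sentence correctly gestures at) is the two-sided estimate
\[
\big|\,\E\prob\{\text{$\c$-marked ball}\cong\T_0\}-\E\prob\{\text{$A_T$-marked ball}\cong\T_0\}\,\big|
\le\E\prob\big\{\exists\,u\in\Ball_{G_n}(v,t):\ A_T(u)=0,\ u\notin\VC(G_n)\big\}.
\]
Molloy's theorem, combined with local convergence of the $A_T$-label, shows that the \emph{density} of such discrepant vertices vanishes as $n\to\infty$ and then $T\to\infty$; to lift this to the ball you still need a mass-transport step (equivalently, uniform integrability of $|\Ball_{G_n}(v,t)|$, available from Poisson domination). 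Once that is made explicit the argument closes; the separate ``easy direction'' should simply be dropped. Also, on the tree the identity $\{u:A_T^*(u)=0\ \forall T\}=\VC(\T_*)$ is not literally ``by definition'': you must check that the set of survivors of infinite peeling is itself a stopping set, which uses that each vertex has finite degree and the residual graphs are nested.
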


%
%
\begin{pf}
It is immediate to see that the sequence $\{\mu_n\}_{n\ge0}$ is
tight almost surely with respect to the choice of $(G_n)_{n\geq0}$,
that is, that for any $\ve\ge0$ there exists a compact set $\K$
such that $\prob\{H_*(n)\in\K\}\ge1-\ve$.
(E.g., take $\K$ to be the set of graphs that have maximum
degree $\Delta_t$ at distance $t$ for a suitable sequence $t\mapsto
\Delta_t$.)
Therefore \cite{AldousLyonsUnimodular},
any subsequence of $ \{\mu_n\}$ admits a further subsequence that converges
locally weakly to a limiting measure on rooted networks.
This subsequence can be constructed through a diagonal argument:
First, construct a subsequence $\{\mu_{n^t_s}\}_{s\ge0}$ such that the
depth-$t$ subtree
converges. Refine it to get a subsequence $\{\mu_{n^{t+1}_s}\}_{s\ge
0}$ such that the depth-$(t+1)$
subtree converges and so on. Finally, extract the diagonal subsequence
$\{\mu_{n^s_s}\}_{s\ge0}$.

We will prove the thesis by a standard weak convergence argument \cite
{Kallenberg}:
We will show that for any subsequence of $\{\mu_n)\}_{n\ge0}$, there
is a sub-subsequence that converges
locally weakly to the measure on $\T_*(\alpha,k)$.

Consider indeed any
sub-subsequence that converges locally weakly to limiting random rooted
graph with marks, which we denote by $\O_*$.
Define the \emph{unmarking} operator $\U$ that maps a marked rooted
graph to the corresponding unmarked rooted graph.
We have that $\U(\O_*)\ed\U(\T_*)$ (here $\ed$ denotes equality in
distribution) from local weak convergence of
random graphs to Galton--Watson trees (see, e.g.,
\mbox{\cite{AldousSteele,DemboMontanariBrazil}}). We will hereafter couple
the two trees in such a way that $\U(\O_*)= \U(\T_*)$.

Recall that a stopping set is
any subset of variable nodes of a factor
graph, such that each variable node has degree at least $2$ in the
induced subgraph.
The $2$-core of the factor graph is the maximal stopping set and is a
superset of any stopping set. These notions are well defined for
infinite graphs as well.

Now, the marks in $\T_*$ correspond to the core by definition. The
marks in $\O_*$ form a stopping set, since the measure on $\O_*$ is the
local weak
limit of $\mu_n$, and in any graph drawn from $\mu_n$, w.p. 1 a vertex
is marked only if at
least two of its neighboring checks have all marked neighboring
variable nodes. Moreover, one can show that both $\T_*$ and $\O_*$ are
unimodular. Indeed $\T_*$ is unimodular since the unmarked tree is
clearly unimodular, and the marking process does not make any
reference to the root. Unimodularity of $\O_*$ is clear since it is
the local weak limit of a marked random graph
\cite{AldousLyonsUnimodular}. Thus, in order to prove our thesis
it suffices to show that the density of marks is the same in $\T_*$
and $\O_*$.
(Because the subset of nodes that is marked in $\T_*$ contains the
subset marked in $\O_*$ and the density of their difference is equal to
the difference of the densities. Finally, for unimodular network, if
a mark type has density $0$, then the set of marked nodes is empty by
union bounds.)

Let
\[
\Ev\equiv \Bigl\{ \lim_{n \rightarrow\infty} \bigl|V_{\mathrm
c}(G_n)\bigr|/n
= \prob \bigl\{ \operatorname{Poisson}(k\alpha\hQ)\ge2 \bigr\} \Bigr\},
\]
where $Q$ and $\hQ$ are defined as at the beginning of Section~\ref
{sec:BP_DE}.
It was proved in~\cite{Molloy} that $|V_{\mathrm c}(G_n)|/n \stackrel
{\mathrm{a.s.}}{\longrightarrow} \prob \{ \operatorname
{Poisson}(k\alpha\hQ
)\ge
2 \}$, that is, the event $\Ev$ occurs with probability 1.
Now let the set of marked vertices in $\O_*$ be denoted by $\hVC(\O
_*)$. It is easy to see that if $\Ev$ holds, the density of marks in
$\O
_*$ is given by
%
%
\begin{equation}
\prob\bigl\{\root\in\hVC(\O_*)\bigr\} = \prob \bigl\{\operatorname
{Poisson}(k\alpha \hQ)\ge 2 \bigr\}.\label{eq:FiniteCore} %
\end{equation}

Proceeding analogously to the proof of \cite{BalPerPete06},
Proposition~1.2, we obtain
%
%
\begin{equation}
\prob\bigl\{\root\in\VC(\T_*)\bigr\} = \prob \bigl\{\operatorname
{Poisson}(k\alpha \hQ)\ge 2 \bigr\}. \label{eq:InfCore} %
\end{equation}
The sketch of this step is the following. Let $\Ev_t$ be the event
that $\root$ belongs to a ``depth $t$ core,'' where the requirement of
``degree at least 2 in the subgraph''
applies only to variables up to depth $t-1$.
The probability on the left-hand side is just $\prob\{\Ev\}$ for
$\Ev=\bigcap_{t\ge1}\Ev_t$. Since $\Ev_t$ is a decreasing sequence,
$\prob
\{\Ev\}=\lim_{t\to\infty}\prob\{\Ev_t\}$.
On the other hand, $\prob\{\Ev_t\}$ can be computed explicitly through
a tree calculation and converges to $\prob\{\Po(\alpha k \hQ)\ge2\}$
as $t
\rightarrow\infty$ yielding (\ref{eq:InfCore}).

Finally, the thesis follows by comparing equations~(\ref
{eq:FiniteCore}) and
(\ref{eq:InfCore}),
and recalling that $\prob(\Ev) =1$.
\end{pf}
%
We next construct a random tree $\tT_*(\alpha,k)$ with marks on the
directed edges as follows.
Marks take values in $\{0,*\}$ and to each undirected edge we
associate a
mark for each of the two directions. We will refer to the direction
toward the root as to the ``upward'' direction, and to the opposite
one as to the ``downward'' direction.
The marks correspond to fixed point BP messages,
and we will call them messages as well in what follows.
First, consider only edges directed upward.
This is a multitype GW tree. At the root generate
$\operatorname{Poisson}(k\alpha)$ offsprings, and mark each of the edges to $0$
independently with probability $\hQ$, and to $*$ otherwise.
At a nonroot variable node, if the parent edge is marked $0$, generate
$\Po(k\alpha(1-\hQ))$ descendant edges marked $*$ and $\Po_{\ge
1}(k\alpha\hQ)$ descendant edges
marked $0$ [here $\Po_{\Ev}(\lambda)$ denotes a Poisson random variable
with parameter $\lambda$ conditional to~$\Ev$]. If the parent edge is
marked $*$, generate
$\Po(k\alpha(1-\hQ))$ descendant edges marked $*$ and
no descendant edges marked $0$.
At a factor node, if the parent edge is marked $0$, generate $k-1$
descendant edges marked $0$. If the parent node is marked~$*$, generate
$M\sim\Binom_{\le k-2}(k-1,Q)$ descendants marked $0$, and $k-1-M$
descendants marked $*$.

For edges directed downward,
marks are generated recursively following the usual BP
rules, cf. equations~(\ref{eq:BP1}), (\ref{eq:BP2}), starting from
the top
to the bottom.
It is easy to check that with this construction,
the marks in $\tT_*(\alpha,k)$ correspond to a
BP fixed point.

We extend the unmarking operator $\U$ by allowing it to act on graphs
with marks on edges (and removing the marks).

%
\begin{lemma}
$\U(\tT_*)$ and $\U(\T_*)$ have the same distribution.
\end{lemma}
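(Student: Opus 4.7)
My plan is to realize $\tT_*$ on top of $\T_*$ via an explicit coupling whose unmarked skeleton equals $\T_*$ by construction. Specifically, I would first sample $\T_*$ as the bipartite Galton--Watson tree (with $\Po(k\alpha)$ offspring at variable nodes and constant $k-1$ offspring at factor nodes), and then attach upward marks as follows: each variable-to-factor upward edge $(v,a)$ receives an independent label $\vtoc{v}{a} \in \{0,*\}$ with $\prob\{\vtoc{v}{a}=0\}=Q$, and each factor-to-variable upward edge is assigned $\ctov{a}{v}=0$ iff $\vtoc{u}{a}=0$ for every sibling $u \in \nbr a \setminus \{v\}$ (the BP update at a check). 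The outcome is a marked tree whose unmarked skeleton is literally $\T_*$; if I can show its joint law agrees with that of $\tT_*$, then $\U(\tT_*) \ed \U(\T_*)$ follows immediately.

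To match the two laws I would check the top-down conditional specifications node by node. At the root, each child-factor's upward mark is $\mathrm{Bernoulli}(\hQ)$ under the coupling (it is the conjunction of $k-1$ independent $\mathrm{Bernoulli}(Q)$ marks, and $Q^{k-1}=\hQ$), so Poisson thinning splits the $\Po(k\alpha)$ children into independent $\Po(k\alpha\hQ)$ marked-$0$ and $\Po(k\alpha(1-\hQ))$ marked-$*$ pieces, which is the root rule of $\tT_*$. At a non-root variable $v$ with parent mark $m$, the $\Po(k\alpha)$ child-factor upward marks are i.i.d.\ $\mathrm{Bernoulli}(\hQ)$ under the coupling, and the BP update at a variable turns the event $\vtoc{v}{\mathrm{parent}}=0$ into ``at least one child is marked $0$''; conditioning yields the pairs $(\Po_{\ge 1}(k\alpha\hQ),\Po(k\alpha(1-\hQ)))$ when $m=0$ and $(0,\Po(k\alpha(1-\hQ)))$ when $m=*$, exactly matching the paper, and the marginal $\prob\{m=0\}=1-e^{-k\alpha\hQ}=Q$ is consistent thanks to the fixed-point identity $Q=1-e^{-k\alpha\hQ}$. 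At a factor $a$ with parent mark $m$, the $k-1$ child-variable marks are i.i.d.\ $\mathrm{Bernoulli}(Q)$ by recipe, so $\ctov{a}{\mathrm{parent}}=0$ occurs with probability $Q^{k-1}=\hQ$ and deterministically marks all children $0$, while conditioning on $m=*$ leaves the number of $0$-marked children distributed as $\Binom_{\le k-2}(k-1,Q)$, again matching $\tT_*$. A depth induction, using conditional independence of subtrees below siblings given their parent-edge marks, then promotes these local matches to equality of full joint laws on (tree, marks).

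The step I expect to need the most care is verifying that, under the paper's top-down recipe, the $k-1$ sibling child marks at a factor are actually i.i.d.\ $\mathrm{Bernoulli}(Q)$ rather than merely exchangeable -- this is what my coupling requires. Under the recipe, their joint law is the mixture of the all-zero vector (with weight $\hQ$) and an exchangeable $\Binom_{\le k-2}(k-1,Q)$ family (with weight $1-\hQ$), which is \emph{a priori} only exchangeable. A short factorial-moment computation on $\Binom(k-1,Q)$ conditioned on $\{\le k-2\}$, together with the identity $Q^{k-1}=\hQ$, gives $\prob\{M_1=\cdots=M_j=0\}=Q^j$ for every $j\le k-1$, which supplies the needed jointly-independent $\mathrm{Bernoulli}(Q)$ structure. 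Without this hidden i.i.d.\ property, sibling correlations would accumulate with depth and the inductive identification would fail.
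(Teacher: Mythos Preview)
Your coupling has an internal inconsistency that breaks the argument at non-root variable nodes. In the first paragraph you assign each upward variable-to-factor label $\vtoc{v}{a}$ as \emph{fresh independent} $\mathrm{Bernoulli}(Q)$ randomness, while in the second paragraph you treat the event $\{\vtoc{v}{\mathrm{parent}}=0\}$ as if it were the BP output at $v$, i.e.\ ``at least one child-factor mark equals $0$''. Under your coupling these are distinct events: the assigned parent label is independent of the entire subtree below $v$, so conditioning on $\vtoc{v}{\mathrm{parent}}=0$ leaves the number of $0$-marked child factors distributed as $\Po(k\alpha\hQ)$, not $\Po_{\ge 1}(k\alpha\hQ)$. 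Hence the joint law of (parent label, child labels) at a non-root variable does not match $\tT_*$, where the marks form a genuine BP fixed point and the rule $\vtoc{v}{a}=0 \Leftrightarrow \exists\,b\in\nbr v\setminus\{a\}:\ctov{b}{v}=0$ holds deterministically. Your auxiliary marked tree is therefore not distributed as $\tT_*$, and the local-match-plus-induction plan fails at this layer.

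Your ``hidden i.i.d.'' observation at factor nodes is correct and is exactly the content of the paper's Fact~1. The missing piece is the companion observation at variable nodes, carried out \emph{inside} $\tT_*$ rather than via an external coupling: integrating out the parent mark $\vtoc{v}{a}\sim\mathrm{Bernoulli}(Q)$, the mixture of $(\Po_{\ge 1}(k\alpha\hQ),\,\Po(k\alpha(1-\hQ)))$ with weight $Q$ and $(0,\,\Po(k\alpha(1-\hQ)))$ with weight $1-Q=e^{-k\alpha\hQ}$ collapses to independent $(\Po(k\alpha\hQ),\,\Po(k\alpha(1-\hQ)))$ counts, hence $\Po(k\alpha)$ total children whose upward marks are i.i.d.\ $\mathrm{Bernoulli}(\hQ)$. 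The paper then runs an induction directly on $\tT_*$, carrying the invariant that, conditional on $\U(\tT_{*,t})$, the boundary upward messages are i.i.d.\ with the correct parameter; your factor-side identity and this variable-side identity are precisely the two computations that propagate that invariant across one layer each. Dropping the auxiliary coupling and arguing this way on $\tT_*$ itself repairs the proof.
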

%
%
\begin{pf}
For this, we construct $\U(\tT_*)$ (which is $\tT_*$ without the marks
revealed) in a ``breadth first'' manner as follows: First, we draw a
$\Po(\alpha k)$ number of factor descendants for the root node. Let
$a$ be a factor descendant of the root. Then $a$ has $k-1$ variable
node descendants. The message $\nuai{a}{\root}$ is 0 with probability~$\hQ$. It immediate
to check from our construction and $\hQ=Q^{k-1}$ that:

\emph{Fact}~1: Conditional on the degree of the root
$\deg(\root)=d_1$, the $d_1(k-1)$ upward messages incoming to the check
nodes $a\in\droot$ are independent, with $\prob\{\nu_{v\to a}=0\}=Q$.

Now, we draw the number of descendants for each neighbor of $a$.
Using fact~1, together with the definition of $\tT$, one can check that:

\emph{Fact}~2: Conditional on the degree of the root
$\deg(\root)=d_1$, the number of descendants of each of the
$d_1(k-1)$ variable nodes $v$ at the first generation is an
independent $\Po(k\alpha)$ random variable. Further, the upward
messages toward these variable nodes are independent with
$\prob\{\hat{\nu}_{b\to v}=0\}=\hQ$.

This argument (outlined for simplicity for the first generation) can
be repeated almost verbatim at any generation. Denote by
$\tT_{*,d}$ the first $d$ generations of $\tT_{*,d}$ (with variable
nodes at the leaves). One then proves by induction that at any $d$,
conditional on $\U(\tT_{*,d})$, the number of descendants of
the variable nodes in the last generation are
i.i.d. $\Po(k\alpha)$,
and given these, the corresponding upward messages are i.i.d.
$\prob\{\hat{\nu}_{b\to v}=0\}=\hQ$.
This implies the thesis.
\end{pf}

%
\begin{lemma}
$\tT_*$ is unimodular.
\end{lemma}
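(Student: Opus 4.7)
The plan is to exhibit the marks on $\tT_*$ as a \emph{deterministic, root-independent} functional of the underlying unmarked tree and then deduce unimodularity of $\tT_*$ from that of its unmarked version.

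First, by the preceding lemma, $\U(\tT_*)\ed \U(\T_*)=\T_*$, and $\T_*$ is itself unimodular (its construction does not reference the root). Next, on any locally finite bipartite tree $T$ with variable and check nodes, define the \emph{maximum BP fixed point} $\mu_T$ as the pointwise monotone limit of \BPzero on $T$; this limit exists by Lemma~\ref{lemma:Monotonicity}. The value $\mu_T(v\to a)$ depends only on the connected component of $T\setminus\{a\}$ containing $v$ (and analogously for check-to-variable directed edges), so $\mu$ is a local, root-free functional of $T$.

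The key step is to verify that, in distribution, the marked tree $\tT_*$ equals $(\U(\T_*),\mu_{\U(\T_*)})$. I would prove this by induction on depth, matching the construction of $\tT_*$ to the conditional law of $\mu$ on Galton--Watson subtrees. At the root, the directed edge $a\to\root$ carries mark $0$ under $\mu$ iff every one of the $k-1$ other variable children of $a$ spawns an infinite $0$-branch in its subtree; by conditional independence of these subtrees and the identities $Q=1-e^{-k\alpha\hQ}$, $\hQ=Q^{k-1}$ provided by Lemma~\ref{lemma:DEConvergence}, this happens independently across checks with probability $\hQ$, exactly matching the construction. For a non-root variable $v$ with parent-edge mark $0$, the subtree below $v$ must contain at least one $0$-branch, and Poisson thinning together with the same fixed-point identities yields the $\Po_{\ge 1}(k\alpha\hQ)$ and $\Po(k\alpha(1-\hQ))$ offspring counts used in the construction; the case of parent mark $*$, and the analogous computations at check nodes (where the $\Binom_{\le k-2}(k-1,Q)$ count appears), are handled by the same type of computation. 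The downward marks are determined by the BP rules in both descriptions, so they agree automatically.

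Given this identification, unimodularity of $\tT_*$ reduces to that of $\T_*$. For any nonnegative Borel $f$ on doubly-rooted marked networks, setting $\tilde f(T,x,y):=f((T,\mu_T),x,y)$ yields
\[
\E\Bigl[\sum_{y}f(\tT_*,\root,y)\Bigr]=\E\Bigl[\sum_{y}\tilde f(\T_*,\root,y)\Bigr]=\E\Bigl[\sum_{y}\tilde f(\T_*,y,\root)\Bigr]=\E\Bigl[\sum_{y}f(\tT_*,y,\root)\Bigr],
\]
where the middle equality is the mass-transport principle applied to $\T_*$. The main obstacle is the distributional identification in the previous paragraph: verifying that the paper's explicit Poisson/Binomial recipe produces exactly the maximum BP fixed point on $\U(\T_*)$, with joint law (not merely marginals) agreeing at every depth. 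This amounts to careful bookkeeping with the density-evolution fixed-point equations for $Q$ and $\hQ$ applied to the conditional distributions of subtree counts given parent-edge marks at each node type.
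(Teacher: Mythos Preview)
Your proposal is correct and takes a genuinely different route from the paper's proof.

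The paper verifies \emph{involution invariance} directly: it passes to the degree-biased tree $\tT_{**}$, picks a uniform neighboring check $a$ and a uniform descendant variable $i$, and checks that the joint law of the messages around $\root$, $a$, and $i$ is symmetric under the swap $\root \leftrightarrow i$. This is a local computation exploiting that the incoming messages at $\root$, $i$, and $a$ (excluding those along the path $\root$--$a$--$i$) are conditionally i.i.d.\ with the right $Q$/$\hQ$ marginals.

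Your approach instead identifies the marks as the \emph{maximum BP fixed point} $\mu$, a deterministic root-independent Borel functional of the underlying tree, and then inherits unimodularity from the unmarked Galton--Watson tree via mass transport. This is conceptually cleaner and in fact proves a stronger intermediate statement---that $\tT_*$ is exactly the max-BP-FP decoration of the GW tree---which subsumes the paper's subsequent Lemma~\ref{lemma:Tstar_tTstar_identical}. The paper's route, by contrast, requires less: only the re-rooting symmetry, not the full identification $\tT_*\ed(\U(\T_*),\mu_{\U(\T_*)})$.

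One caveat on your execution: the phrase ``induction on depth'' is slightly imprecise, since the mark $\mu_T(v\to a)$ depends on the entire infinite subtree below $v$, not any finite truncation. The clean way to carry out the identification is to note that $(\U(\T_*),\mu)$ is itself a multi-type Galton--Watson tree (subtrees below an edge are conditionally independent of everything above, given that edge's upward mark, by the GW independence structure), and then verify that its type-transition kernel coincides with the one defining $\tT_*$---which is exactly what your Poisson-thinning calculations do. An alternative shortcut: in $\tT_*$, any $0$-marked upward edge sits atop an infinite $0$-subtree by construction, hence $\mu=0$ there; since $\prob(\text{mark}=0)=Q=\prob(\mu=0)$ marginally on each edge, the marks agree with $\mu$ almost surely edge-by-edge, and hence everywhere by countable additivity.
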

\begin{pf}
We already established unimodularity of $\U(\tT_*)$ [since
$\U(\tT_*)=\U(\T_*)$ is a unimodular Galton--Watson tree]. To
establish the
claim, let $\tT_*'$ be the random tree whose distribution has
Radon--Nikodym derivative $\deg(\root)/\E\{\deg(\root)\}$ with respect
to that of $\tT_*$.
We need to show that moving the root to a uniformly random descendant
variable node of the root (via one check) in $\tT_*'$, leaves the
distribution of $\tT_*'$ unchanged
(cf. \cite{AldousLyonsUnimodular}, Section~4).

Draw $\tT_*'$ at random, weighted by the degree of the root
$\root$. In this argument, we make the root explicit by denoting the
tree by $(\tT_*', \root)$. Reveal the degree $d_1=\deg(\root)$ of
the root. We have $d_1>0$ almost surely. Take a uniformly random
neighboring check $a \in\partial\root$, and a uniformly random
descendant $i$ of $a$ (we know that $a$ has $k-1$
descendants). Reveal the number of descendants of $i$. Let this
number be $d_2-1$, so that $i$ has $d_2$ neighbors in total. Note
that we do not reveal any of the messages in $\tT_*'$. At this
point, consider the incoming messages to the variable nodes $\root$
and $i$ except for $\nuai{a}{\root}$ and $\nuai{a}{i}$, and the
incoming messages to the check $a$ except for $\nuia{\root}{a}$ and
$\nuia{i}{a}$. Call this vector of messages $M$. The messages in $M$
are independent, with probability $\hQ$ of for each incoming message
to variable nodes to be 0, and probability $Q$ for incoming messages
to $a$ to be 0.\footnote{The argument establishing this is essentially
the one above, where we showed that $\U(\tT_{*})=\U(\T_*)$.} The
messages $\nuai{a}{\root}$, $\nuai{a}{i}$, $\nuia{\root}{a}$ and
$\nuia{i}{a}$ are deterministic functions of~$M$. Finally, notice
that $d_1$ and $d_2$ are independent, and identically distributed as
$1+ \Po(\alpha k)$. At this point, it is clear that $(\tT_*',i)$ is
distributed identically to $(\tT_*', \root)$, which establishes unimodularity.
\end{pf}
%

%
\begin{lemma}\label{lemma:Tstar_tTstar_identical}
Let $\F$ be a map from ``trees with marked edges'' to ``trees with marked
variable nodes'' defined as follows: $\F(\T)$ is obtained from $\T$ by
putting a $\mathsf{ c }$
mark on vertex $i$ if and only if at least two incoming edges have a
$0$ mark.

Then $\F(\tT_*(\alpha,k))\ed\T_*(\alpha,k)$.
\end{lemma}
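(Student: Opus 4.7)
The strategy is, under the coupling $\U(\tT_*) = \U(\T_*)$ permitted by the previous lemma (which established $\U(\tT_*) \ed \U(\T_*)$), to prove that the marked set $\F(\tT_*)$ is almost surely equal to $\VC(\U(\tT_*))$. The argument combines a deterministic inclusion derived from the BP fixed-point rules with a matching-density calculation, which are then promoted to almost-sure equality by unimodularity.

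The main deterministic step is to introduce the random set of ``fully $0$-supported'' factor nodes
\[
F_0 \;\equiv\; \{a \in F(\tT_*) : \nuia{v}{a} = 0 \text{ for all } v \in \partial a\}\,,
\]
and to verify that $F_0$ is a stopping set of $\U(\tT_*)$. Given $v \in \partial a$ with $a \in F_0$, rule \eqref{eq:BP1} applied at $v$ yields a factor $b \in \partial v$ with $b \neq a$ and $\nuai{b}{v} = 0$; then \eqref{eq:BP2} applied at $b$ forces $\nuia{v'}{b} = 0$ for every $v' \in \partial b \setminus v$. But $a \in F_0$ also gives $\nuai{a}{v} = 0$, and since $a \neq b$, applying \eqref{eq:BP1} at $v$ in the direction of $b$ yields $\nuia{v}{b} = 0$ as well. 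Consequently all $k$ incoming messages to $b$ are $0$, so $b \in F_0$, and $v$ has at least two neighbors in $F_0$. An entirely analogous use of \eqref{eq:BP1}--\eqref{eq:BP2} shows the converse: any variable with at least two incoming $0$ messages is adjacent to $F_0$. Hence the variables adjacent to $F_0$ coincide with $\F(\tT_*)$, and combining with $F_0 \subseteq \FC(\U(\tT_*))$ yields $\F(\tT_*) \subseteq \VC(\U(\tT_*))$ almost surely.

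For the densities, in $\tT_*$ the root has $\Po(k\alpha)$ factor offspring, each carrying upward mark $0$ independently with probability $\hQ$ by construction, so
\[
\prob\!\left[\root \in \F(\tT_*)\right] \;=\; \prob\!\left[\Po(k\alpha \hQ) \geq 2\right]\,,
\]
while Eq.~\eqref{eq:InfCore} in the proof of Lemma~\ref{lemma:Gstar_convergesto_Tstar} already gives $\prob[\root \in \VC(\T_*)] = \prob[\Po(k\alpha \hQ) \geq 2]$. The difference set $\VC(\U(\tT_*)) \setminus \F(\tT_*)$ therefore has root-probability zero. Since both $\tT_*$ and $\T_*$ are unimodular, the same union-bound/unimodularity argument invoked in the proof of Lemma~\ref{lemma:Gstar_convergesto_Tstar}---that a mark type of density zero must label the empty set almost surely---shows that $\F(\tT_*) = \VC(\U(\tT_*))$ a.s., which yields the desired distributional identity $\F(\tT_*) \ed \T_*$. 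The most delicate point is the stopping-set verification for $F_0$: one must carefully play the upward and downward BP rules off each other to propagate the ``all-zero'' condition from the starting factor $a$ to the sibling factor $b$, which in turn requires using that $a \in F_0$ forces $\nuai{a}{v} = 0$ to supply the second $0$ message that $v$ needs in order to send $0$ back to $b$.
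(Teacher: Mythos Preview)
Your proposal is correct and follows essentially the same approach as the paper: show that the set of variable nodes receiving at least two incoming $0$ messages forms a stopping set (hence is contained in the core $\VC(\U(\tT_*))$), match the root-probability densities at $\prob\{\Po(k\alpha\hQ)\ge 2\}$, and then invoke unimodularity to upgrade the zero-density difference to almost-sure equality. The paper simply asserts the stopping-set property as ``easy to check (since the set of messages is at a BP fixed point)'', whereas you unpack it explicitly via the factor-node set $F_0$ and the BP rules; this extra detail is sound and in fact aligns more literally with the paper's check-induced Definition~\ref{def:stopping_set}.
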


\begin{pf}
It is easy to check that the subset of variable nodes in
$\tT_*$ that receive two or more incoming $0$'s forms a stopping set
(since the set of messages is at a BP fixed point). But the density of
marked nodes in $\F(\tT_*)$ (i.e., the probability of the root being
marked) is $\prob \{ \operatorname{Poisson}(k\alpha\hQ)\ge
2 \}$,
which is
exactly the same as the density of marked nodes in $\T_*$ (recall that
$\T_*$ is also unimodular, cf. proof of Lemma~\ref
{lemma:Gstar_convergesto_Tstar}). On the other hand, the set of
marked nodes in
$\T_*$ is the core by definition
and hence includes the marked nodes in $\F(\tT_*)$. We deduce that
the set of vertices that are marked in $\T_*$ but not in $\F(\tT_*)$
has vanishing density and, therefore, $\F(\tT_*(\alpha,k))\ed\T
(\alpha,k)$.
\end{pf}

We let $B$ be the subset of variable nodes $v$
of $\tT_*(\alpha,k)$ such that at least one message
incoming to $v$ is equal to $0$.
Then this set has density
%
%
\begin{equation}
\prob\{\root\in B\} = \prob \bigl\{ \operatorname{Poisson}(k\alpha
\hQ)\ge 1 \bigr\} \equiv\hQ. %
\end{equation}
In light of Lemma~\ref{lemma:Tstar_tTstar_identical}, we further denote
the set of variable nodes in $\tT_*$ having two or more incoming $0$
messages by $\VC(\tT_*)$.

Consider running \BPstar\ on $U(\tT_*)$ [this is BP starting with zeros
from the variable nodes in $\VC(\tT_*)$ and $*$ elsewhere]. Let the
trees with marks on edges obtained after $t$ iterations be denoted by
$\tT_*^t$.

Denote by $\tmu^t_n$ the measure on the rooted factor graph with marks
on the edges constructed as follows: Choose a uniformly random variable
node $i \in V(G_n)$. Mark the edges (in each direction) with the
messages corresponding to \BPstar\ run for $t$ iterations.

%
\begin{lemma}\label{lemma:tmut_convergesto_tTt}
The measures $(\tmu^t_n)_{n \geq0}$ converge locally to the measure on
$\tT_*^t$.
\end{lemma}

\begin{pf}
This result is immediate from Lemmas \ref
{lemma:Gstar_convergesto_Tstar} and \ref{lemma:Tstar_tTstar_identical}.
\end{pf}

The following is immediate from the construction of $\tT_*$.
%

\begin{remark}
If $\root\in B$, then there exists a subtree of $\tT_*$ rooted at
$\root$
with the following properties: (i) If $j$ is a variable node
in the subtree, either $j\in\VC(\tT_*)$ or at least one descendant
factor node
is in the subtree; (ii) If $a$ is a factor node in the subtree, all
its descendants are also in the subtree.
\end{remark}

We call the subtree just defined a \emph{witness} for $\root$
(there might be more than one in principle). 
Notice that a priori a
witness can be finite [if it ends up with nodes in $\VC(\tT_*)$], or infinite.

%
\begin{lemma}\label{lemma:backbone_finite_witness}
Almost surely any node $i\in B$ has a finite witness. Thus, $\lim_{t
\rightarrow\infty} \tT_*^t = \tT_*$.
\end{lemma}

\begin{pf}
It is sufficient to prove that the following event has zero probability:
$\root\in B$ and $\root$ only has infinite witnesses.
Suppose $\root\in B$. We will look for a minimal witness for $\root$.
If $\root\in\VC(\tT_*)$, then it is itself a witness and we are done.
If not then, there is exactly one incoming $0$ message, say from factor
$a$. Then factor $a$ has $k-1$ incoming $0$ messages from descendants.
The subtrees corresponding to these descendants are independent.
Consider a descendant $i$ of $a$. We have
\begin{eqnarray*}
\prob\bigl(i \in B \setminus\VC(\tT_*)\bigr) &=& \prob \bigl\{
\Po_{\ge
1}(\alpha k\hQ)=1 \bigr\}
\\
&=&\exp(-\alpha k \hQ) \alpha k \hQ/\bigl(1-\exp(-\alpha k \hQ)\bigr)
\\
&=&\exp(-\alpha k \hQ) \alpha k Q^{k-2}.
\end{eqnarray*}
Conditioned on $i \in B \setminus\VC(\tT_*)$, the node $i$ has
exactly $k-1$ descendant variable nodes (via one check node). Thus,
conditioned on $\root\in B$, the minimal witness is a Galton--Watson
tree with
offspring distributed as $Z$, whereby
$Z=(k-1)$ with probability $\exp(-\alpha k \hQ) \alpha k Q^{k-2}$, and
$Z=0$ otherwise.
The branching factor of this tree is $\exp(-\alpha k \hQ) \alpha k
(k-1) Q^{k-2} < 1$
(cf. Lemma~\ref{lem:bdd_branch_fact} below). The lemma follows.
\end{pf}

%
\begin{lemma}\label{lemma:L1inf_lb}
Consider the setting of Lemma~\ref{lemma:FP_concentration_DE}.
We have
\[
\lim\inf_{n\to\infty} L_{1+}^{(\infty)} \geq\prob
\bigl\{\Po(k \alpha \hQ)\ge 1\bigr\},
\]
almost surely with respect to the choice of $G_n$.
\end{lemma}

%
\begin{pf}
Let $B_t$ be the subset of
variable nodes in $\tT_*^t$ that receive at least one~$0$ message.
Let $y_t$ be the density of nodes in $B_t$.
From Lemma~\ref{lemma:backbone_finite_witness}, we have immediately
%
%
\begin{equation}
\lim_{t\to\infty}y_t= \prob\bigl\{\Po(k\alpha\hQ)\ge1
\bigr\}. \label{eq:YTInfty}
\end{equation}
%

Let $B_t(n)\subseteq V(G_n)$ be the
subset of nodes having at least one incoming
$0$ after $t$ iterations of \BPstar.
Let $y_{t}(n)$ be the fraction of these nodes, that is, $y_t(n)\equiv
|B_t(n)|/n$. From Lemma~\ref{lemma:tmut_convergesto_tTt}, we have
%
%
\begin{equation}
\lim_{n\to\infty}y_{t}(n) = y_t
\end{equation}
almost surely.
%
By equation~(\ref{eq:YTInfty}), we have $\lim_{n\to\infty} y_{t}(n)
\ge
\prob\{\Po(k\alpha\hQ)\ge1\}-\delta$ for all $t\ge T(\delta)$.
By monotonicity of \BPstar, we have $\lim\inf_{n\to\infty}
L_{1+}^{(\infty)}\ge\lim_{n\to\infty}
y_t(n)\ge\prob\{\Po(k\alpha\hQ)\ge1\}-\delta
$, which implies the thesis.
\end{pf}
%

%
\begin{lemma}
Consider the setting \label{lemma:L2inf_lb}of Lemma~\ref{lemma:FP_concentration_DE}.
We have, for all
$\ell\ge2$,
\[
\lim\inf_{n\to\infty} L_{\ell+}^{(\infty)} \geq\prob
\bigl\{\Po(k \alpha\hQ )\ge\ell\bigr\},
\]
almost surely with respect to the choice of $G_n$.
\end{lemma}
%
%
\begin{pf}
The proof is very similar to that of the previous lemma.
Let $C(\ell;n)\subseteq V$ be the
subset of variable nodes in $G_n$ that are in the core and have at
least $\ell$
neighboring check nodes in the core. Then we have
(by monotonicity of \BPstar)
%
%
\begin{equation}
L^{(\infty)}_{\ell+}\ge\frac{|C(\ell;n)|}{n}. \label{eq:Ll1}
\end{equation}
On the other hand, let $y(\ell)$ be the density of variable nodes
in $\tT_*$ that receive two or more $0$ messages and have at least
$\ell$ neighboring check nodes in the set
\[
\{a\dvtx\mbox{For each $i \in\partial a$, node $i$ receives two or more $0$
messages} \}.
\]
It follows from Lemmas \ref{lemma:Gstar_convergesto_Tstar}
and~\ref{lemma:Tstar_tTstar_identical} that
%
%
\begin{equation}
\lim\inf_{n\to\infty}\frac{1}{n}\bigl |C(\ell;n)\bigr| = y(
\ell). \label{eq:Ll2} %
\end{equation}

On the other hand, it is easy to check that the construction of
$\tT_*$ implies that $y(\ell)$ coincides with the density of nodes
receiving $\ell$ or more $0$ messages (here the assumption $\ell\ge2$
is crucial). Hence, $y(\ell) = \prob\{\Po(k\alpha\hQ)\ge\ell\}$,
which together with equations~(\ref{eq:Ll1}), (\ref{eq:Ll2}) yields
the thesis.
\end{pf}
%

\begin{pf*}{Proof of Lemma~\ref{lemma:FP_concentration_DE}}
Equation~(\ref{eq:LEstFP}) follows from Lemmas \ref{lemma:L1inf_ub},
\ref{lemma:L1inf_lb} and~\ref{lemma:L2inf_lb}. Equation~(\ref
{eq:REstFP}) follows from a completely analogous argument.
\end{pf*}

Recall that $\tmu^t_n$ is the measure on the rooted factor graph with
marks on the edges constructed as follows: Choose a uniformly random
variable node $i \in V(G_n)$. Mark the edges (in each direction) with
the messages corresponding to \BPstar\ run for $t$ iterations. Recall
that $\tmu^*_n$ is defined similarly with marks corresponding to the BP
fixed point.
Denote by $\tmu^t_n(d)$, the measure obtained from $\tmu^t_n$ by
restricting the depth of the rooted graph to $d$.

%
\begin{lemma}\label{lemma:few_late_changes}
For any $d \geq0$ and any $\delta> 0$, there exists $t< \infty$ such
that almost surely,
\[
\lim\sup_{n \rightarrow\infty} \bigl\|\tmu_n^t(d) -
\tmu_n^*(d)\bigr\|_{\TV
} < \delta.
\]
\end{lemma}

\begin{pf}
Consider running \BPstar\ on $G_n$. From Lemma~\ref
{lemma:tmut_convergesto_tTt}, we know $\tmu_n^t$ converges locally to
the measure on $\tT_*^t$. From Lemma~\ref
{lemma:backbone_finite_witness}, we know $\lim_{t \rightarrow\infty}
\tT_*^{(t)} = \tT_*^\infty$. In particular, the fraction of $0$
variable-to-check messages in $\tT_*^{(t)}$ converges to $Q$ (i.e., the
fraction of $0$ variable-to-check messages in $\tT_*^{(\infty)}$). But
from Lemma~\ref{lemma:FP_concentration_DE}, the fraction of $0$
variable-to-check messages in $\tmu_n^*$ converges eventually almost
surely to the same value, and similarly for check-to-variable messages
the fraction of $0$ messages converges to $\hQ$ [using the fact that
$L_l \leq\const\exp(- l/\const)$ for all $l$ holds eventually almost
surely, for some $\const< \infty$]. Using monotonicity of \BPstar, we
deduce that for any $\eps> 0$, there exists $t$ large enough such that,
%
%
\begin{equation}
\lim\sup_{n \rightarrow\infty} \{\mbox{Number of message changes after
iteration $t$ in $G_n$}\}/n \leq\eps \label{eq:few_late_changes}
\end{equation}
holds almost surely.
Now, we can choose $\eps$ small enough such that eventually (in $n$)
almost surely, for any set of $\eps n$ edges in $G_n$, the union of
balls of radius $d$ around these edges contains no more than $\delta n$
nodes. Combining with equation~\eqref{eq:few_late_changes}, at least $(1-
\delta)$ fraction of nodes have all messages in a ball of radius $d$
unchanged after iteration $t$, almost surely. This yields the result.
\end{pf}

\begin{pf*}{Proof of Theorem~\ref{thm:BP_FP_localweaklimit}}
From Lemma~\ref{lemma:tmut_convergesto_tTt}, we know $\tmu_n^t$
converges locally to the measure on $\tT_*^t$. From Lemma~\ref
{lemma:backbone_finite_witness}, we know $\lim_{t \rightarrow\infty}
\tT_*^t = \tT_*^\infty$. Combining with Lemma~\ref
{lemma:few_late_changes}, we obtain that
%
%
\begin{equation}
\lim\sup_{n \rightarrow\infty} \bigl\|\tmu_n^*(d) - \mu\bigl(
\tT_*^\infty (d)\bigr)\bigr\| _{\TV} < \delta
\end{equation}
almost surely.
Since $\delta$ is arbitrary, we obtain, for every $d$, that
%
%
\begin{equation}
\lim\sup_{n \rightarrow\infty} \bigl\|\tmu_n^*(d) - \mu\bigl(
\tT_*^\infty (d)\bigr)\bigr\| _{\TV} =0
\end{equation}
holds almost surely. The result follows.
\end{pf*}

%
\section{Proof of Lemma
\texorpdfstring{\protect\ref{lemma:peelability_implies_good}}{3.11}:
Peelability implies a sparse basis} \label{sec:collapse_peeling_fast}

\subsection{\texorpdfstring{Proof of
Lemma \protect\ref{lemma:peelability_implies_good}\textup{(i)}
and \textup{(ii)}}{Proof of Lemma \protect\ref{lemma:peelability_implies_good}(i)
and (ii)}}
\label{sec:Peel12}

Let us begin by describing the proof strategy.

Instead of analyzing peeling on the collapsed graph $G_*$, we analyze a
different peeling process. We first run synchronous peeling on $G$ for
a large
constant $\tau$ number of iterations. We then collapse the resulting
graph, as discussed in Section~\ref
{subsec:sparse_basis_const_periphery}, that is,
coalescing variables connected to each other via degree 2 factors (cf.
Definition~\ref{def:collapsed_graph}). Finally, we run synchronous
peeling on
the collapsed graph until it gets annihilated. We show that this
process takes
at least as many iterations as synchronous peeling on $G_*$ (Lemma~\ref
{lemma:collapse_peeling_is_faster} below).
In order to bound the number of iterations under this new two-stages
process, we proceed as follows.
We choose the constant $\tau$
such that the residual graph $J_\tau$ is subcritical, and hence consists
of trees and unicyclic components of size $O(\log n)$ w.h.p. As a
consequence, the
collapsed graph---to be denoted by $\colp(J_\tau) $---contains only
checks of degree $3$ or
more, and consists of trees and unicyclic components of size $O(\log
n)$. It is
not hard to show that it takes only $O(\log\log n)$ additional rounds of
peeling to annihilate $\colp(J_\tau)$ under this condition (see
Lemma~\ref{lemma:peeling_fast_on_trees} below).

Several technical lemmas follow, which are proved in the Appendix~\ref
{app:proofs_of_technical}, except Lemma~\ref
{lemma:collapse_peeling_is_faster}, which we prove below.
At the end of the subsection, we provide a
proof of Lemma~\ref{lemma:peelability_implies_good}, parts (i) and
(ii).

Consider the peeling algorithm and define $\peel$ to be the peeling operator
corresponding to one round of synchronous peeling (cf. Table~\ref
{table:sync_peeling}). Thus, for a bipartite graph $G$, the
residual graph
after $t$ rounds of peeling is $\peel^t(G)$. Denote by
$\peel^{\infty}(G)$ the graph produced by the peeling procedure after
it halts: this is the empty graph if $G$ is peelable, and the core of
$G$ otherwise. Recall that $\TC(G)$ denotes the number of rounds of peeling
performed before halting at $\peel^{\infty}(G)$. Further, define
$\colp
$ to be the collapse
operator as per Definition~\ref{def:collapsed_graph}. For instance
$G_* =
\colp(G)$. The next lemma bounds from above the number of rounds of peeling
required to annihilate $G_*$, in terms of the modified peeling process
(consisting of $\tau$ rounds of peeling, followed by collapse, and
then peeling
until annihilation).

%
\begin{lemma}\label{lemma:collapse_peeling_is_faster}
For any constant $\tau\geq0$ and any peelable bipartite graph $G$,
\[
\TC\bigl(\colp(G)\bigr) \leq\TC\bigl(\colp\bigl(\peel^\tau(G)\bigr)
\bigr) + \tau.
\]
\end{lemma}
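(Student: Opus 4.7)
The plan is to proceed by induction on $\tau$; the base case $\tau=0$ is immediate. For the inductive step, it suffices to establish the single-round inequality $\TC(\colp(G)) \le \TC(\colp(\peel(G))) + 1$ for every peelable $G$, since applying this inequality to $G$ and then the inductive hypothesis to $\peel(G)$ (which is still peelable since $G$ is) yields the $\tau+1$ case.

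To prove the single-round inequality I would establish the sharper comparison $\TC(\peel(\colp(G))) \le \TC(\colp(\peel(G)))$, which, combined with the identity $\TC(\colp(G)) = 1 + \TC(\peel(\colp(G)))$ (the case $\colp(G)=\emptyset$ being trivial), gives the desired bound. The strategy is to exhibit $\peel(\colp(G))$ as a subgraph of $\colp(\peel(G))$, meaning that $\peel(\colp(G))$ is obtained from $\colp(\peel(G))$ by deleting certain super-variables and certain check nodes together with their incident edges. A standard coupling argument shows that synchronous peeling is monotone under such deletions --- if $H'$ is obtained from $H$ by deleting variable and check nodes then the residual variables of $H'$ after $t$ rounds are contained in those of $H$ --- and this yields the desired $\TC$ comparison.

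The embedding rests on two structural facts about one round of $\peel$ on $G$. First, every check is treated in an all-or-nothing manner: a check $a$ is removed iff it has a $G$-degree-$1$ neighbor, and otherwise its entire neighborhood is preserved in $\peel(G)$. Second, a multi-vertex super-variable $S \in \colp(G)$ does not split under $\peel$, since peeling merely strips its ``leaves'' (vertices of $G$-degree $\le 1$) while preserving the internal degree-$2$ checks connecting the remaining vertices. These facts yield the containment: (i) if $S$ survives in $\peel(\colp(G))$, i.e.\ $S$ has $\colp(G)$-degree $\ge 2$, then some $v \in S$ is incident to a degree-$\ge 3$ check and thus has $G$-degree $\ge 2$, so $S$ survives (possibly shrunk) in $\colp(\peel(G))$; (ii) if a check $a$ is removed in $\colp(\peel(G))$ because of a $G$-degree-$1$ neighbor $v$, then $\{v\}$ is a singleton super-variable of $\colp(G)$-degree $1$ adjacent to $a$, so $a$ is also removed in $\peel(\colp(G))$; (iii) the incidence parities $|\partial_G a \cap S|$ agree between the two graphs for surviving objects, because any peeled vertex of $S$ lies outside every surviving check's neighborhood.

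The main obstacle, requiring careful bookkeeping, is identifying the precise set of ``extra'' super-variables and checks present in $\colp(\peel(G))$ but not in $\peel(\colp(G))$. These turn out to be exactly the multi-vertex super-variables of $\colp(G)$-degree $\le 1$ whose shrunken version is non-empty, together with the degree-$\ge 3$ checks adjacent only to such super-variables. Verifying that deleting these extras from $\colp(\peel(G))$ recovers $\peel(\colp(G))$ exactly requires a case analysis on singleton versus multi-vertex super-variables and on whether checks have a $G$-degree-$1$ neighbor; once this is confirmed, the subgraph embedding $\peel(\colp(G)) \subseteq \colp(\peel(G))$ is established and the lemma follows.
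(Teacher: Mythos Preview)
Your approach is essentially the paper's: both establish a containment $\peel^t(\colp(G)) \subseteq \colp(\peel^t(G))$ (the paper does it for all $t$ via a peeling-time induction, you do $t=1$ by direct structural analysis and then induct on $\tau$) and then appeal to monotonicity of peeling under subgraph inclusion to finish.

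There is, however, a genuine gap in the monotonicity step as you state it. The claim ``if $H'$ is obtained from $H$ by deleting variable and check nodes then the residual variables of $H'$ after $t$ rounds are contained in those of $H$'' is \emph{false} in general. Take $H$ with checks $a=\{v_1,v_2,v_3\}$, $b=\{v_1,v_2,w\}$, $c=\{w,x_1,x_2\}$, where $v_3,x_1,x_2$ are leaves, and set $H'=H-v_3$; then $\TC(H')=3>2=\TC(H)$, because deleting $v_3$ lets $a$ survive round~1, which delays the peeling of $v_1,v_2$. (The paper's proof invokes the same monotonicity without justification, so this is not a divergence from the paper but a shared lacuna.)

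The fix is already implicit in your case analysis. The key extra fact is that every \emph{extra super-variable} of $\colp(\peel(G))$ is adjacent in $\colp(\peel(G))$ only to \emph{extra checks}: indeed, if $a$ survives in $\peel(\colp(G))$ then all its $\colp(G)$-neighbors have $\colp(G)$-degree $\ge 2$, and your parity argument (iii) shows $(S',a)\in E(\colp(\peel(G)))$ iff $(S,a)\in E(\colp(G))$, so an extra $S$ (degree $\le 1$) cannot be adjacent to such an $a$. Consequently $\peel(\colp(G))$ is obtained from $\colp(\peel(G))$ by deleting the extra checks and then deleting the extra super-variables, \emph{which are by then isolated}. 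For this restricted deletion (remove checks, then remove now-isolated variables) the $\TC$-monotonicity does hold by an easy induction, since surviving checks retain their full neighborhoods. You should replace the general monotonicity assertion by this specific one. (Also, your description of the extra checks as those ``adjacent only to such super-variables'' is inaccurate: an extra check need only have \emph{some} $\colp(G)$-degree-$\le 1$ neighbor; the crucial direction is the one above, that extra super-variables touch only extra checks.)
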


Peelability of a pair $(\alpha, R)$ immediately implies some useful properties.

%
\begin{lemma}\label{lemma:properties_of_peelable}
For a factor degree profile $(\alpha, R)$ that is peelable at rate
$\eta>0$,
we have:
\begin{longlist}[(ii)]
\item[(i)]$2\alpha R_2 \leq1- \eta$.
\item[(ii)]
$\alpha\leq1$.
\end{longlist}
\end{lemma}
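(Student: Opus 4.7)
Both bounds follow from analyzing the density evolution (DE) recursion $z_t = 1 - \exp\{-\alpha R'(z_{t-1})\}$ with $z_0 = 1$, using only that peelability at rate $\eta$ gives $z_t \leq (1-\eta)^t/\eta$ and hence $z_t \to 0$. As a preliminary, $z_t \to 0$ already forces $R_1 = 0$: otherwise the DE update gives $z_t \geq 1 - e^{-\alpha R_1} > 0$ for every $t$. So $R'(z) = 2 R_2 z + O(z^2)$ as $z \to 0$, and I would work under the implicit convention $R_0 = 0$ used throughout this section.

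For (i), I would linearize the DE near the origin. Using $R'(z) = 2 R_2 z + O(z^2)$ and $1 - e^{-x} = x(1 + O(x))$ gives $z_t = 2\alpha R_2\, z_{t-1}\bigl(1 + O(z_{t-1})\bigr)$. Since $z_{t-1} \to 0$, for every $\delta > 0$ there exists $T_0$ such that for all $t > T_0$,
\begin{align*}
z_t \;\geq\; (1-\delta)\,2\alpha R_2\,z_{t-1}.
\end{align*}
Iterating yields $z_t \geq c \cdot \bigl((1-\delta)\,2\alpha R_2\bigr)^{t}$ for some $c > 0$. Comparing $t$-th roots against the upper bound $z_t \leq (1-\eta)^t/\eta$ and sending $t \to \infty$ forces $(1-\delta)\,2\alpha R_2 \leq 1 - \eta$; letting $\delta \downarrow 0$ gives (i). (If $R_2 = 0$ the statement is automatic, since peelability forces $\eta \leq 1$ via $z_0 = 1 \leq 1/\eta$.)

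For (ii), I would use a fixed-point/integration trick. The map $f(z) = 1 - e^{-\alpha R'(z)}$ is monotone non-decreasing (since $R'' \geq 0$), and by Lemma \ref{lemma:DEConvergence} the sequence $z_t$ monotonically decreases to $0$. If $f$ had a fixed point $z^* \in (0, 1]$, monotonicity of $f$ together with $z_0 = 1 > z^*$ would force $z_t \geq z^*$ for all $t$, contradicting $z_t \to 0$. Hence $f(z) < z$ strictly for every $z \in (0, 1]$, equivalently $\alpha R'(z) < -\log(1-z)$ for every $z \in (0, 1)$. Integrating over $[0,1]$ and using
\begin{align*}
\int_0^1 R'(z)\,dz \;=\; R(1) - R(0) \;=\; 1 \qquad \text{and} \qquad \int_0^1 -\log(1-z)\,dz \;=\; 1
\end{align*}
(where $R(1) = 1$ since $R$ is a probability distribution and $R_0 = 0$) yields $\alpha \leq 1$.

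The main obstacle is the bookkeeping in (i): the multiplicative corrections $(1 + O(z_{t-1}))$ must not accumulate to something that degrades the asymptotic exponential rate $2\alpha R_2$. This is controlled by the fact that $\sum_t z_t < \infty$ by the peelability hypothesis, so the infinite product of correction factors telescopes to a positive constant. Part (ii) then reduces to a clean one-line integration identity once the fixed-point obstruction has been ruled out.
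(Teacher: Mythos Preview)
Your proposal is correct and largely matches the paper's proof.

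For (i), both you and the paper linearize the density evolution map at the origin, identify the Jacobian $f'(0)=2\alpha R_2$, and compare the resulting asymptotic rate $z_t/z_{t-1}\to 2\alpha R_2$ against the decay rate $1-\eta$ guaranteed by peelability. Your bookkeeping with the lower bound $z_t\ge c\bigl((1-\delta)2\alpha R_2\bigr)^t$ is a slightly more explicit rendering of the paper's one-line ``$\lim z_{t+1}/z_t = f'(0)$'' step.

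For (ii), the underlying idea is the same---use that peelability forces $f(z)<z$ on $(0,1]$ and integrate---but the two proofs parametrize differently. You integrate $\alpha R'(z)<-\log(1-z)$ directly over $z\in[0,1]$, using $\int_0^1 R'=R(1)-R(0)=1$ and $\int_0^1 -\log(1-z)\,dz=1$ to conclude $\alpha\le 1$. The paper instead substitutes $x=R'(z)$ and integrates $R'^{-1}(x)\ge 1-e^{-\alpha x}$ over $[0,R'(1)]$, arriving at the marginally sharper $\alpha\le 1-e^{-\alpha R'(1)}<1$. Your route is cleaner and entirely sufficient for the stated lemma; the paper's extra sharpness is never used downstream.
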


Notice that the factor graph induced by degree $2$ check nodes is
in natural correspondence with an ordinary graph (replace every check node
by an edge) which is uniformly random given the number of edges. The
average degree of this graph is $2\alpha R_2$, and
Lemma~\ref{lemma:properties_of_peelable}(i) implies that it
is subcritical, as we would expect for a
peelable degree distribution.

Lemma~\ref{lemma:peelability_implies_good} is stated for the ensemble
$\D(n,R,m)$, $m=n\alpha$. However, in parts of the proof of this lemma,
we find
it convenient to work instead with the ensemble $\C(n,R,m)$ introduced
in Section~\ref{subsec:density_evolution}.

We need to characterize the residual graph $J_t$ after $t$ rounds of peeling.
Lemmas~\ref{lemma:peeling_uniform_n1_n2} and
\ref{lemma:concentration_DE} achieves this
for $G\sim\C(n,R,m)$. Together, they show
essentially that density evolution provides an accurate
characterization of
$J_t$. Using these Lemmas, we are able to deduce [see proof of
Lemma~\ref{lemma:peelability_implies_good}(i) and (ii) below] that
$J_\tau$ consists
of small trees and unicyclic components w.h.p., for large enough $\tau
$. Finally,
using Lemma~\ref{lemma:Conf_implies_Uniform}, we apply the same
results to $G\sim\D(n,R,m)$.

Recall that $n_1(G)$ denotes the number of variable nodes of degree
$1$ in $G$, and $n_{2+}(G)$ denotes the number of variable nodes of
degree $2$ or more in $G$. Let
%
%
\begin{equation}\qquad
\label{eq:config_n1_n2_defined} \C\bigl(n,R,m;n_1',n_2'
\bigr) \equiv\bigl\{ G\dvtx G \in\C(n,R,m), n_1(G)=n_1',
n_{2+}(G)=n_2'\bigr\}.
\end{equation}
In the lemma below, we slightly modify the peeling process, choosing to
retain all variable nodes $V$ in the residual graph
(check nodes are eliminated as usual). With a slight abuse of
notation, we keep denoting by $J_t$ the residual graph, although this
is obtained from $J_t$ by adding a certain number of isolated variable
nodes.

%
\begin{lemma}\label{lemma:peeling_uniform_n1_n2}
Consider a graph $G$ drawn uniformly at random from $\C(n,\break R, m)$. For any
$t\in\mathbb{N}$, consider synchronous peeling for $t$ rounds on $G$,
resulting
in the residual graph $J_t$. Suppose that for some $(\tR, \tm,\tn_1,
\tn_2)$, we have $J_t \in\C(n,\tR,\tm;\tn_1,\tn_2)$ with positive
probability. Then,
conditioned on $J_t \in\C(n,\tR,\tm;\tn_1,\tn_2)$, the residual graph
$J_t$ is
uniformly random within $\C(n,\tR,\tm; \tn_1,\tn_2)$.
\end{lemma}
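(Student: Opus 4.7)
The plan is to show that for every $G'\in \C(n,\tilde R,\tilde m;\tilde n_1,\tilde n_2)$, the cardinality of
\[
\mathcal N(G') \equiv \big\{G\in \C(n,R,m)\,:\, J_t(G)=G'\big\}
\]
depends on $G'$ only through the parameters $(\tilde R,\tilde m,\tilde n_1,\tilde n_2)$. Since $G\sim \C(n,R,m)$ is uniform, this immediately forces the conditional law of $J_t$ to be uniform on $\C(n,\tilde R,\tilde m;\tilde n_1,\tilde n_2)$.

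To do this, I would decompose any $G$ with $J_t(G)=G'$ as $G=G'\cup P$, where $P$ is the sub-configuration consisting of the ``peeled'' check nodes $F\setminus\tilde F$ together with the half-edge assignments of those check nodes to variable nodes. The check-node degree profile of $P$, and hence the number of half-edges incident to $P$, is determined by $(R,m,\tilde R,\tilde m)$ and does not depend on $G'$. Thus counting $\mathcal N(G')$ is the same as counting the number of ways to connect these half-edges to variable nodes in $[n]$ in such a way that the resulting $G'\cup P$ peels, in exactly $t$ synchronous rounds, down to $G'$.

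The key structural observation is that this counting problem depends on $G'$ only through the partition of $V$ into three classes: type-$0$ (variable nodes of degree $0$ in $G'$), type-$1$ (degree $1$ in $G'$), and type-${\ge}2$ (degree $\ge 2$ in $G'$). Indeed, peeling only removes edges of $P$, so the residual degree of any variable node $v$ in $G$ equals $\deg_{G'}(v)$ plus the number of $P$-edges at $v$ still remaining. Consequently, a type-${\ge}2$ node always has residual degree $\ge 2$ and never triggers the removal of a check node; a type-$1$ node triggers a round iff all its current $P$-edges are gone; a type-$0$ node triggers a round iff at most one of its current $P$-edges remains. Hence, for any choice of the type-class labeling of $V$, the validity of $P$ (i.e.\ $J_t(G'\cup P)=G'$) depends only on $P$'s bipartite connection pattern together with the type of each variable node, and not at all on the finer structure of $G'$ (such as which type-${\ge}2$ node is adjacent to which check of $G'$).

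To conclude, I would invoke symmetry: permuting variable-node labels within each type induces a bijection on the set of valid $P$'s. Therefore $|\mathcal N(G')|$ is a function of the type-counts $(\tilde n_0,\tilde n_1,\tilde n_2)=(n-\tilde n_1-\tilde n_2,\tilde n_1,\tilde n_2)$ and of the peeled check-degree profile alone, which is the desired uniformity. The only point requiring minor care is ensuring the type-based description of peeling is exact at every round: I expect to verify this by a straightforward induction on rounds, where at each step the set of removed checks depends only on the multiset of types of their neighbors (and the current state of $P$), never on the identity of those neighbors in $G'$.
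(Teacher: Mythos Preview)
Your approach is correct and differs from the paper's in a useful way. The paper argues by induction on $t$: assuming $J_t$ is uniform given its parameters, it computes the one-step preimage count $|S(G';\hat{\bm},\hat n_1,\hat n_2)|=|\{J_t:\peel(J_t)=G',\ \bm(J_t)=\hat{\bm},\ n_1(J_t)=\hat n_1,\ n_2(J_t)=\hat n_2\}|$ explicitly (via a generating-function coefficient) and checks it depends on $G'$ only through $(\bm',n_1',n_2')$. Your argument instead handles all $t$ rounds at once by the type-partition observation, which is more conceptual and avoids any explicit counting; the paper's route, in exchange, yields a concrete formula for the one-round preimage count.

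One point deserves more than ``minor care'': the sentence ``peeling only removes edges of $P$'' is a \emph{consequence} of $J_t(G'\cup P)=G'$, not an assumption you may use while testing it. What you actually need is that the \emph{first} round $s^*$ at which a $G'$-check would be removed is determined by $(P,\text{types})$ alone. This is true: up to round $s^*-1$ only $P$-checks have been removed, so the residual degrees are $\deg_{G'}(v)+(\text{remaining }P\text{-edges at }v)$, and a $G'$-check is first hit exactly when some type-$1$ node loses its last $P$-edge --- an event depending only on $(P,\text{types})$. Then $J_t=G'$ iff $s^*>t$ and all $P$-checks are gone by round $t$, both of which are $(P,\text{types})$-measurable. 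Stating it this way closes the apparent circularity in your sketch. The final ``permutation within types'' step is only needed to pass from ``same type-partition'' to ``same type-counts'', which is immediate.
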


Our final technical lemma bounds the number of peeling rounds needed to
annihilate a tree or unicyclic component.

%
\begin{lemma}
\label{lemma:peeling_fast_on_trees}
Consider a factor graph $G = (F,V,E)$ with no check nodes of degree
$1$ or $2$, and that is a tree or unicyclic. Then $G$ is peelable and
$\TC(G) \leq2\lceil\log_2 |V|\rceil$.
\end{lemma}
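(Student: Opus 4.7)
The plan is to show that a single round of peeling reduces the number of variable nodes by at least a factor of two, and then iterate this halving argument $O(\log |V|)$ times.

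The combinatorial core of the proof is that any tree or unicyclic factor graph $G$ all of whose check nodes have degree at least $3$ has at least $|V|/2$ variable leaves. For a tree, $|E| = |V| + |F| - 1$, and combined with $\sum_{a \in F}\deg(a) \geq 3|F|$ this yields $|F| \leq (|V|-1)/2$; for a unicyclic graph, $|E| = |V| + |F|$ yields $|F| \leq |V|/2$. Since every non-leaf variable has degree at least $2$, the number of leaves $L$ satisfies $L + 2(|V|-L) \leq |E|$, so $L \geq |V| - |F| \geq |V|/2$ in both cases.

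Next I would verify that the structural hypothesis propagates through a round of peeling. The key observation is that any check node surviving a round retains all of its edges, because in the description of the peeling algorithm the edge set $E_t$ removed at round $t$ consists exactly of the edges incident to checks in $F_t$, and each such check is itself removed. Therefore every check in the residual graph $J_t$ still has degree at least $3$. Being a subgraph of $G$, each connected component of $J_t$ is itself a tree or a unicyclic graph, so the hypotheses of the lemma apply anew to each component.

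Applying the leaf bound to each connected component of $J_t$ (noting that isolated variables are handled trivially) gives $|V(J_{t+1})| \leq |V(J_t)|/2$, and hence by induction $|V(J_t)| \leq |V|/2^t$. Taking $t = \lceil \log_2 |V|\rceil$ leaves at most one variable in $J_t$, which is peeled in at most one additional round. This yields $\TC(G) \leq \lceil \log_2|V|\rceil + 1 \leq 2\lceil\log_2 |V|\rceil$ for $|V| \geq 2$, and in particular $G$ is peelable. I do not foresee any substantial obstacle: the only care needed is to check that both the check-degree condition and the tree/unicyclic structure are inherited by each connected component of the residual graph, which follows directly from the fact that surviving checks retain all of their edges.
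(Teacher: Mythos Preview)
Your proof is correct and in fact slightly more streamlined than the paper's. Both arguments rest on the same halving idea---at least half of the variable nodes are leaves---but differ in two respects. First, you obtain the leaf bound by a one-line edge count (combining $|E|=|V|+|F|-c$ with $|E|\ge 3|F|$ and $\sum_v\deg(v)\ge L+2(|V|-L)$), whereas the paper proves it by induction on the depth of a rooted tree. Second, and more substantively, you treat the tree and unicyclic cases uniformly: since every surviving check keeps all its edges, each component of the residual is again a tree or unicyclic graph with minimum check degree $3$, and the halving bound applies verbatim at every round. The paper instead handles the unicyclic case by a two-phase argument: it identifies the first cycle check to be peeled, bounds the round at which this happens by $\lceil\log_2|V|\rceil$ via the subtree hanging off that check, and then applies the tree bound to the remaining forest for another $\lceil\log_2|V|\rceil$ rounds. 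Your uniform treatment actually yields the sharper bound $\TC(G)\le\lceil\log_2|V|\rceil+1$; the paper's case split makes more explicit why the cycle must eventually break, at the cost of a looser constant.
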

%

%
\begin{pf*}{Proof of Lemma~\ref{lemma:peelability_implies_good}\normalfont{(i)}
and \normalfont{(ii)}}
A standard calculation (see, e.g., \cite{DemboFSS}, or Section~\ref
{subsec:even_subgraphs_of_core} which carries through a similar calculation)
shows that, for a uniformly random graph
$\C(n,\tR,\tm;\tn_1,\tn_2)$, with $\tn_1,\tn_2\ge n\ve$ and
with $\tm\tR'(1)\ge\tn_1+2\tn_2+n\ve$ for some $\ve>0$,
the asymptotic degree distribution of variable nodes is
\begin{eqnarray*}
\prob\{D=0\} &=& q_0,
\\
\prob\{D=1\} &=& q_1,
\\
\prob\{D=\ell\}& = & (1-q_0-q_1) \prob\bigl\{
\Po_{\ge
2}(\lambda)=\ell\bigr\}\qquad \mbox{for all $\ell\ge2$} %
\end{eqnarray*}
for suitable choices of $q_0$, $q_1$, $\lambda$ depending on the
ensemble parameters.
Further, by a standard breadth-first search argument, the neighborhood
of a vertex $v$ is dominated stochastically by a
(bipartite) Galton--Watson tree,\vadjust{\goodbreak} with offspring distribution equal to
the size-biased version of $\tR$ at check nodes, and equal to of
$\prob
\{D= \cdot\}$
at variable nodes.

Consider $G\sim\C(n,R,m)$.
Using Lemmas \ref{lemma:concentration_DE} and \ref
{lemma:peeling_uniform_n1_n2},
it is possible to estimate the degree distribution,
of $J_t$.
A lengthy but straightforward calculation shows that the corresponding
branching factor is $\theta(J_t) = \alpha R'(z_t)$. Now, notice that
\[
R'(z) = 2 R_2 + \sum_{l=3}^k
l(l-1) z^{l-2} \leq2 R_2 + k(k-1) z
\]
for $z \leq1$. Choose $\tau= \tau(\eta, k) < \infty$ such that
$z_\tau
\leq
\eta/(3\alpha k (k-1))$. Then we have $\alpha R'(1) \rho'(z_\tau)
\leq
2 \alpha
R_2 + \eta/3$. But Lemma~\ref{lemma:properties_of_peelable} tells us
that $2
\alpha R_2 \leq1- \eta$. It follows that $\alpha R'(z_\tau) \leq1 -
2 \eta/3$.

In particular, the branching factor $\theta=\theta(J_\tau)$ associated
with the random graph
$J_{\tau}$ satisfies $\theta\leq1 - \eta/3$, with probability at
least $1-1/n^2$.
Following a standard argument \cite{Bollo} where we
explore the neighborhood of $v$ by breadth first search, we obtain that
with probability
at least $1 - 1/n^{1.7}$ for
$n \geq N_1(\eta,k)$, the connected component containing $v$ is a tree or
unicyclic, with size less than $\const_4 \log n$, for some $\const_4 =
\const_4(\eta,k) < \infty$. Applying a union bound, we obtain that for
$n \geq
N_2 = N_2(\eta, k)$, with probability at least $1/n^{0.7}$, the event
$\Ev_n$
occurs, where
%
%
\begin{eqnarray}
&&\Ev_n \equiv\{\mbox{All connected components in $J_\tau$
are trees or unicyclic}
\nonumber
\\[-8pt]
\\[-8pt]
\nonumber
 &&\hspace*{132pt}\mbox{and have size at most $\const_4 \log n$} \}.
\end{eqnarray}
Then, from Lemma~\ref{lemma:Conf_implies_Uniform}, we infer that $\Ev
_n$ occurs
with probability at least $1/n^{0.6}$ for $G\sim\D(n,R,m)$
provided $n \geq N_3$, where $N_3 = N_3(k) < \infty$. We stick to
$G\sim\D(n,R,m)$ for the rest of this proof.

We now analyze the peeling process starting with $J_{\tau}$ and
consider only what happens on $\Ev_n$ since it occurs with sufficiently
large probability.
Let us consider first point (i).
Clearly, tree components are peelable. If $R_2=0$, then there are no
factors of
degree 2, and unicyclic components are also peelable (Lemma~\ref
{lemma:peeling_fast_on_trees}). Thus, the entire graph is
annihilated by
peeling w.h.p., as claimed. If $R_2 > 0$, then the number of unicyclic
components of size smaller than $M$ is asymptotically Poisson with parameter
$\const_5< \infty$ uniformly bounded in $M$ (this follows, e.g., by
\cite{Wormaldshortcycles81};
see also \cite{WormaldRegular,Bollo}). It follows that with probability
at least $
\exp(-\const_5)/2$ for $n\geq N_4$, there are no unicyclic components
of size smaller than $M$. The expected number of unicyclic components
of size $M$ or larger is upper bounded by $\sum_{\ell\ge M}
\theta^{\ell}/(2\ell)\le\theta^M/(1-\theta)$, and for $M$ large
enough no unicyclic component of this sizes exists, with probability
at least $1-\exp(-\const_5)/4$.
Considering these two contributions, the graph contains no cycle with
probability at least $\exp(-\const_5)/4$ for $n\geq N_4$, and hence
it is peelable. This completes part (i).

For (ii), notice that in collapsing a connected component of $J_\tau
$, the
number of variable nodes does not increase. Further, a tree component
collapses to a
tree and a unicyclic component collapses either to a tree or a unicyclic
components. Thus, we can use Lemma~\ref{lemma:peeling_fast_on_trees} with
$N\leq\const_4 \log n $ to obtain the a bound of $(\const_1/2) \log
\log n \leq\const_1 \log\log n - \tau$ on the
number of additional peeling rounds needed, with probability at least $1-
1/n^{0.6}$. Since the probability of peelability is uniformly bounded
away from
zero as $n\to\infty$, the probability that the same bound on the
number of peeling rounds holds conditioned on peelability is at least
(for some $\delta>0$) $1- 1/(\delta
n^{0.6}) \geq1- 1/n^{0.5}$ for $n \geq N_5$, as required.
\end{pf*}

\subsection{\texorpdfstring{Proof of Lemma \protect\ref
{lemma:peelability_implies_good}\textup{(iii)}}
{Proof of Lemma \protect\ref
{lemma:peelability_implies_good}(iii)}}
\label{sec:PeelabilityGood}

The following lemma bounds the size of a supercritical Galton--Watson tree,
observed up to finite depth. The proof is in the Appendix~\ref
{app:proofs_of_technical}.

%
\begin{lemma}\label{lemma:GW_bound}
Consider a Galton--Watson branching process $\{Z_t\}_{t=0}^\infty$ with
$Z_0=1$ and with offspring distribution $\prob\{Z_1=j\}= b_j$, $j\ge
0$. Suppose $b_r \leq
(1-\delta)^r / \delta$ for all $r \geq0$, for some $\delta> 0$.
Also, assume
that the branching factor satisfies $\theta\equiv\sum_{j=1}^\infty
jb_j =
\E[Z_1] > 1$. Then there exists $\const= \const(\delta) > 0$ such
that the
following happens.

For any $\beta> 3$ and $T \in\mathbb{N}$, we have
%
%
\begin{equation}
\prob \Biggl[ \sum_{t=0}^T
Z_t > (\beta\theta)^T \Biggr] \leq2 \exp\bigl(-\const(
\beta/3)^{T}\bigr). \label{eq:GWBound}
\end{equation}
\end{lemma}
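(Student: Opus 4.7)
The plan is to estimate $\prob[\sum_{t=0}^T Z_t > (\beta\theta)^T]$ by bounding each $\prob[Z_t > x_t]$ via a Chernoff/MGF argument and then taking a union bound over $t$. The tail hypothesis $b_r \leq (1-\delta)^r/\delta$ is equivalent (up to constants) to the moment generating function $M(s) \equiv \sum_r b_r e^{rs}$ being finite in a neighborhood of $0$; indeed
$$M(s) \leq \frac{1}{\delta(1-(1-\delta)e^s)} \quad \text{for } s < \log(1/(1-\delta)).$$
Writing $\phi(s) \equiv \log M(s)$, we have $\phi(0)=0$, $\phi'(0) = \theta$, and by Taylor's theorem there exist $s_0 = s_0(\delta) > 0$ and $C_\phi = C_\phi(\delta) < \infty$ with $\phi(s) \leq \theta s + C_\phi s^2$ for $s \in [0,s_0]$.

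The branching property yields
$$\E[e^{\lambda Z_t}] = \E\bigl[M(\lambda)^{Z_{t-1}}\bigr] = \E[e^{\phi(\lambda) Z_{t-1}}],$$
so if we define $\lambda_t \equiv \lambda$ and $\lambda_{s-1} \equiv \phi(\lambda_s)$ recursively, then $\E[e^{\lambda_t Z_t}] = e^{\lambda_0}$. I would choose $\lambda_t \equiv c/\theta^t$ for a sufficiently small $c = c(\delta) > 0$. Introducing the normalization $\mu_s \equiv \lambda_s \theta^s$, the bound $\phi(s) \leq \theta s + C_\phi s^2$ gives $\mu_{s-1} \leq \mu_s + C_\phi \mu_s^2/\theta^{s+1}$. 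Starting from $\mu_t = c$ and iterating backwards, a straightforward comparison argument (using $\theta > 1$ to sum the geometric series $\sum_{s\geq 0}\theta^{-s} < \infty$) shows that $\mu_s \leq 2c$ for all $0 \leq s \leq t$ whenever $c$ is small enough. In particular $\lambda_0 \leq 2c$, so the Chernoff bound yields
$$\prob[Z_t \geq x] \leq e^{-\lambda_t x + \lambda_0} \leq \exp\!\bigl(-c x/\theta^t + 2c\bigr) \qquad \text{for all } t = 0,1,\ldots,T \text{ and } x > 0.$$

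If $\sum_{t=0}^T Z_t > (\beta\theta)^T$ then $Z_t > (\beta\theta)^T/(T+1)$ for some $t$. Applying the displayed Chernoff bound with $x = (\beta\theta)^T/(T+1)$ and using $\theta^{T-t} \geq 1$, the union bound gives
$$\prob\!\Bigl[\sum_{t=0}^T Z_t > (\beta\theta)^T\Bigr] \leq (T+1)\exp\!\Bigl(-\frac{c\beta^T}{T+1} + 2c\Bigr).$$
To reach the target form, rewrite $\beta^T/(T+1) = (\beta/3)^T \cdot 3^T/(T+1)$; since $\beta > 3$ and $3^T/(T+1) \to \infty$, the prefactor $T+1$ and the constant $2c$ can be absorbed, yielding $2\exp(-\const (\beta/3)^T)$ for a suitable $\const = \const(\delta) > 0$ (for small $T$ the bound is made trivial by choosing $\const$ small enough that the right-hand side exceeds $1$).

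The main obstacle is the MGF bookkeeping in controlling the iterated map $\phi^{(t)}$: one must verify that running the recursion $\lambda_{s-1} = \phi(\lambda_s)$ backwards from $\lambda_t = c/\theta^t$ for $t$ steps stays inside the region where the quadratic Taylor bound on $\phi$ is valid, so that $\lambda_0$ remains $O(c)$ rather than blowing up. The normalization $\mu_s = \lambda_s \theta^s$ together with the summability $\sum_s \theta^{-s} < \infty$ (supercritical case) is what makes this work, and is where the hypothesis $\theta > 1$ enters essentially.
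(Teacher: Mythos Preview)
Your approach is essentially the paper's: both iterate the (log-)MGF of the offspring distribution to control $\E[e^{\lambda Z_t}]$, apply a Chernoff bound, and finish with a union bound over $t\le T$. The paper works with the probability generating function $f(s)=\E[s^{Z_1}]$ and the crude linear bound $f(1+\ve)\le 1+2\theta\ve$ for $\ve\in[0,\ve_0(\delta)]$, which iterates cleanly to $f^{(T)}(1+\ve)\le 1+(2\theta)^T\ve$; choosing $\ve=\ve_0/(2\theta)^T$ then gives $\prob[Z_T>(\beta\theta)^T]\le 2\exp\bigl(-\tfrac{\ve_0}{2}(\beta/2)^T\bigr)$ with a constant depending only on $\delta$.

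There is one small gap in your version. Your quadratic bound $\phi(s)\le\theta s+C_\phi s^2$ together with the normalization $\mu_s=\lambda_s\theta^s$ yields $\mu_{s-1}\le\mu_s+C_\phi\mu_s^2/\theta^{s+1}$; summing this backward requires the geometric series $\sum_{s\ge 0}\theta^{-(s+1)}=1/(\theta-1)$, so the bootstrap ``$\mu_s\le 2c$ for all $s$'' forces $c\lesssim(\theta-1)/C_\phi$. Since the lemma allows $\theta>1$ arbitrarily close to $1$, your $c$---and hence the final $\const$---depends on $\theta$, not only on $\delta$ as the statement demands. The fix is immediate and is exactly what the paper does: for $\lambda\le\theta/C_\phi$ (in particular for $\lambda\le 1/C_\phi(\delta)$, since $\theta>1$) the quadratic term is dominated by the linear one, so $\phi(\lambda)\le 2\theta\lambda$; normalize by $(2\theta)^s$ instead of $\theta^s$ and the iteration becomes monotone with no series to sum, yielding $\const=\const(\delta)$.
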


\begin{pf*}{Proof of Lemma~\ref{lemma:peelability_implies_good}\normalfont{(iii)}}
From Lemma~\ref{lemma:properties_of_peelable}(ii), we know that
$\alpha\leq
1$. The following occurs in the collapse process: Let $G^{(2)}=
(F^{(2)}, V, E^{(2)})$ be the subgraph of $G$ induced by the degree $2$ factor
nodes (with isolated vertices retained). We have $F_* = F \setminus F^{(2)}$.
All variable nodes that belong to a single connected component of
$G^{(2)}$
coalesce into a single supernode $v' \in V_*$ in $G_*$, with a
neighborhood that
consists of the union of the individual neighborhoods restricted to
$F_*$ (cf.
Definition~\ref{def:collapsed_graph}). As mentioned above, $G^{(2)}$
is a
random factor graph
with $\alpha R_2 n$ factor nodes of degree 2, and is in one-to-one
correspondence with a uniformly random graph. For $v' \in V_*$, we
denote by
$S(v')$ the number of variable nodes in $V$ in the component $v'$.
Lemma~\ref{lemma:properties_of_peelable}(i) implies that the branching
factor of $G^{(2)}$ obeys $2
\alpha R_2 \leq1- \eta$, that is, $G^{(2)}$ is subcritical. This
leads to the
following claim that follows immediately from a well-known result on
the size
of the largest connected component in a subcritical random graph
\cite{Bollo}.

\textit{Claim}~1: There exists $\const_2 = \const_2 (\eta)< \infty$,
$N_2=N_2(\eta) < \infty$ such that the following occurs for all $n >
N_2$. No
component $v' \in V_*$ is composed of more than $\const_2 \log n$ variable
nodes, that is, $ \max_{v'\in V_*}S(v') \leq\const_2 \log n$, with
probability at least $1-1/n$.

Let $G^{\sim2} \equiv(F_*, V, E \setminus E^{(2)})$, that is,
$G^{\sim2}$ is
the subgraph of $G$ induced by factors of degree greater than $2$ (with
isolated vertices retained).

From Poisson estimates on the node degree
distribution, we get the following.

\textit{Claim}~2: There exists $\const_3 = \const_3 (\eta,k)<
\infty$,
$N_3=N_3(\eta,k) < \infty$ such that the following occurs. For all $n
> N_3$,
no variable node $v \in V$ has degree larger than $\const_3 \log n$ in
$G^{\sim
2}$, that is, $\deg_{G^{\sim2}}(v) \leq\const_3 \log n$ for all $v
\in V$, with
probability at least $1-1/n$.

Note that we used $\alpha< 1$ [from Lemma~\ref
{lemma:properties_of_peelable}(i)] to avoid dependence on
$\alpha
$ in the
above claim.

Let
\begin{eqnarray*}
\Ev_n &\equiv& \bigl\{S\bigl(v'\bigr) \leq
\const_2 \log n \mbox{ for all } v' \in V_* \bigr\} \\
&&{}\cap
\bigl\{ \deg_{G^{\sim2}}(v) \leq\const_3 \log n \mbox{ for all }
v \in V \bigr\}.
\end{eqnarray*}
Using claims 1 and 2 above and a union bound, we deduce that $\Ev_n$
holds with probability at least $1-2/n$
for $n > N_4$, for some $N_4 = N_4(\eta,k)< \infty$.

Clearly, $G^{\sim2}$ is independent of $G^{(2)}$.
In particular, for $v \in V$ that is part of supernode $v' \in V_*$, we know
that $|S(v')|$ is independent of $G^{\sim2}$.
There is a slight dependence between the degree of
different variable nodes, but assuming $\Ev_n$, the effect of this is
small if we
only condition on $\polylog(n)$ nodes in $G_*$. This enables our bound
on the
size of balls in $G_*$.

Recall that the distribution of random variable $X_1$ is dominated by the
distribution of $X_2$, if there exists a coupling between $X_1$ and
$X_2$ such
that $X_1 \leq X_2$ with probability 1. In bounding the size of a ball of
radius $T_{\mathrm{ub}}$, we are justified in replacing degree distributions
by dominating
distributions, and in assuming that there are no loops.
%
%

Fixing a vertex $v\in V_*$, we construct the ball
$\Ball_{G_*}(v,T_{\mathrm{ ub}})$ sequentially through a breadth-first
search.
Choose $\ve= \eta/2$. For $n$ large enough, the distribution of
$|S(v')|$ is
dominated by the distribution of the number of nodes in a
Galton--Watson tree
with offspring distribution $\Po(2 \alpha R_2 + \ve)$. The
distribution of
$\deg_{G^{\sim2}}(v)$ is dominated by $\Po(\alpha(\sum_{l=3}^k l
R_l )+
\ve)$. In particular, the degree distribution of $G_*$ is dominated by
a geometric distribution $b_r \leq(1-\delta)^r/\delta$ for some
$\delta=
\delta(\eta, k)>0$.
Assuming $\Ev_n$, this also holds conditionally on the nodes revealed so
far, as long as the number of these is, say, $\polylog(n)$.

Thus, assuming $\Ev_n$, the number of nodes in a ball of
radius $T_{\mathrm{ ub}}= \const_1 \log\log n $ is dominated by the number
of nodes
in a Galton--Watson tree of depth $T_{\mathrm{ ub}}$ with offspring
distribution
$(b_r)_0^\infty$ satisfying $b_r \leq(1-\delta)^r/\delta$ for some
and $\theta\equiv\sum_{j=1}^\infty jb_j < \const_5$, for
$n \geq N_5$.
We deduce
from Lemma~\ref{lemma:GW_bound} that
%
%
\begin{equation}
\prob \Bigl[ \max_{v' \in V_*} \bigl|\Ball_{G_*}
\bigl(v', T_{\mathrm{ ub}}\bigr)\bigr | \leq (\log n)^{\const_6} \big|
\Ev_n \Bigr] \geq1- 1/n
\end{equation}
for some $\const_6=\const_6(\eta, k)<\infty$, where $|\Ball_{G_*}(v',
T_{\mathrm{ ub}})|$
denotes the number of super-nodes in $\Ball_{G_*}(v', T_{\mathrm{
ub}})$. But
given $\Ev_n$,
the size of components $v' \in V_*$ is uniformly bounded by $\const_2
\log n$.
Thus, conditioned on $\Ev_n$, we have $\max_{v' \in V_*} S(v',
T_{\mathrm{
ub}}) \leq
\const_2 (\log n)^{\const_6+1}$ with probability at least $1-1/n$. At this
point, we recall that $\prob[\Ev_n] > 1- 2/n$, and the result follows.
\end{pf*}

\section{Characterizing the periphery}
\label{secperrr}

Consider a factor graph $G$ when
it has a nontrivial $2$-core. Recall the definitions of the $2$-core,
backbone and periphery of a graph from Section~\ref{sec:ClusterConstruction}.
First, we note some of the properties of these subgraphs that will be
useful in
the proof of the main lemmas of this section.

As a matter of notation, for a bipartite graph $G$ chosen uniformly at random
from the set $\G(n, k, m)$ we denote by $\GP$ the periphery of $G$ and
by $G_{\mathrm p}$
(lower case subscript)
a subgraph of $G$ that is a potential candidate for being the periphery
of $G$.
Similarly, we denote by $\GB$ the backbone of $G$ and by $G_{\mathrm
b}$ a
subgraph of
$G$ that is a potential candidate for being the backbone of $G$.

\subsection{Proof of Lemma
\texorpdfstring{\protect\ref{lemma:UniformPeelable}}{3.9}: Periphery is conditionally a
uniform random graph}
\label{subsec:perip_conditional_random}

Lemma~\ref{lemma:UniformPeelable} states that if we fix the number of
nodes and the
check degree profile of the periphery of a graph $G$ chosen uniformly
at random from
the set $\G(n, k, m)$ then the periphery, $\GP$, is distributed
uniformly at random conditioned on being peelable. Since
the original graph $G$ is chosen uniformly at random, in order to prove
this lemma it is
enough to count, for each possible choice of the periphery $\GP$, the
number of
graphs $G$ that have the periphery $\GP$.

Before proving Lemma~\ref{lemma:UniformPeelable}, we first introduce
the concept
of a ``rigid'' graph and establish a monotonicity property for the backbone
augmentation procedure which was defined in Section~\ref
{sec:ClusterConstruction}. We use the notation $G \subseteq G'$ if
$G$ is a subgraph of $G'$.

%
\begin{lemma}\label{lem:monotonocity_bba}
Let $G= (F, V, E)$ be a bipartite graph and let $G_{\mathrm s}$ be the
subgraph of $G$ induced by some $F_{\mathrm s} \subseteq F$. Let
$F_{\mathrm
l}$ and $F_{\mathrm u}$ be subsets of $F$ such that $F_{\mathrm l}
\subseteq
F_{\mathrm u}$ and $F_{\mathrm l} \subseteq F_{\mathrm s}$. Let $\Bl
^{(0)}$ be the
subgraph induced by $F_{\mathrm l}$ (so $\Bl^{(0)}\subseteq G_{\mathrm s}$)
and let $\Bu^{(0)}$ be the subgraph induced by $F_{\mathrm u}$. Denote by
$\Bl^{(\infty)}$ the output of
the backbone augmentation process on $G_{\mathrm s}$ with the initial graph
$\Bl^{(0)}$ and
by $\Bu^{(\infty)}$ the output of the backbone augmentation process on
$G$ with the initial graph $B_{\mathrm s}^{(0)}$. Then $\Bl^{(\infty)}
\subseteq\Bu^{(\infty)}$.
\end{lemma}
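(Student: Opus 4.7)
The plan is to establish the containment $\Bl^{(\infty)} \subseteq \Bu^{(\infty)}$ by a direct induction on the round number, proving that $\Bl^{(t)} \subseteq \Bu^{(t)}$ for every $t \ge 0$, and then to pass to the limit, which is legitimate because both augmentation processes halt in finitely many rounds (they can only grow, and they live inside the finite graph $G$).

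For the base case $t=0$, the key observation is that since $F_{\rm l} \subseteq F_{\rm s}$, every edge of $G$ incident to a check node of $F_{\rm l}$ already lies in $G_{\rm s}$ by definition of an induced subgraph. Hence the subgraph of $G_{\rm s}$ induced by $F_{\rm l}$ coincides with the subgraph of $G$ induced by $F_{\rm l}$, and the latter is contained in $\Bu^{(0)}$ because $F_{\rm l} \subseteq F_{\rm u}$.

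The inductive step is where I will spend the bulk of the argument. Assuming $\Bl^{(t)} \subseteq \Bu^{(t)}$, I pick an arbitrary check node $a$ added to $\Bl^{(t+1)}$ during round $t+1$ of the lower process and show that $a$, together with its incident edges and variable neighbors, lies in $\Bu^{(t+1)}$. By the lower update rule, $a \in F_{\rm s} \setminus F(\Bl^{(t)})$ and $a$ has at most one variable neighbor in $G_{\rm s}$ outside $V(\Bl^{(t)})$. The crucial point is that because $a \in F_{\rm s}$, the neighborhoods of $a$ in $G_{\rm s}$ and in $G$ are identical. Combining this with the inductive hypothesis $V(\Bl^{(t)}) \subseteq V(\Bu^{(t)})$ and taking contrapositives, the variable neighbors of $a$ that lie outside $V(\Bu^{(t)})$ form a subset of those outside $V(\Bl^{(t)})$, so there is still at most one. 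Consequently, either $a \in F(\Bu^{(t)})$ already, or $a$ qualifies for inclusion in round $t+1$ of the upper process; either way $a$ and its neighborhood are in $\Bu^{(t+1)}$.

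The only subtlety worth flagging is the direction of the set inclusion when converting a bound like ``at most one neighbor outside the smaller backbone'' into the corresponding statement for the larger backbone, which is exactly the direction we need but could be misread if one is not careful. Beyond that, the monotone induction is mechanical, and no probabilistic or combinatorial estimates are required.
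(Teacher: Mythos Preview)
Your argument is correct and follows essentially the same route as the paper: induction on the round number $t$ to prove $\Bl^{(t)} \subseteq \Bu^{(t)}$, handling the base case via $F_{\rm l} \subseteq F_{\rm u}$ and the inductive step by checking that any check node added to $\Bl^{(t+1)}$ either already lies in $\Bu^{(t)}$ or qualifies for addition to $\Bu^{(t+1)}$. You are in fact slightly more explicit than the paper on one point, namely that for $a \in F_{\rm s}$ the neighborhood of $a$ in $G_{\rm s}$ coincides with its neighborhood in $G$ (because $G_{\rm s}$ is check-induced), which is exactly what makes the ``at most one neighbor outside'' condition transfer from the lower to the upper process.
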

%

The proof of Lemma~\ref{lem:monotonocity_bba} can be found in
the Appendix~\ref{app:periphery}.


\begin{definition}
Define a graph to be \emph{rigid} if its backbone is the whole graph.
We denote by
$\cR(n,k,m)$ the class of rigid graphs with $n$ variable nodes, and $m$
check nodes each of degree $k$.
\end{definition}

%
\begin{lemma}\label{th:bp_partition_uniqueness}
Consider a bipartite graph $G= (F, V, E)$ from the ensemble $\G
(n,k,m)$. For
some set of check nodes $F_{\mathrm b}\subseteq F$ denote by
$G_{\mathrm b}=
(F_{\mathrm b}, V_{\mathrm b},
E_{\mathrm b})$ the
subgraph induced by $F_{\mathrm b}$, and denote by $G_{\mathrm p}=
(F_{\mathrm p},
V_{\mathrm p}, E_{\mathrm p})$ the
subgraph of $G$ induced by the pair $(F_{\mathrm p}\equiv F\setminus
F_{\mathrm
b}, V_{\mathrm p}
\equiv
V\setminus V_{\mathrm b})$. Assume $G_{\mathrm b}$ and $G_{\mathrm p}$
satisfy the
following conditions:
\begin{itemize}
\item$G_{\mathrm p}$ is peelable,
\item$G_{\mathrm b}$ is rigid,
\item$|\da|\ge2, \forall a\in F_{\mathrm p}$.
\end{itemize}
Then $G_{\mathrm b}$ is the backbone of $G$ (and $G_{\mathrm p}$ is the
periphery).
\end{lemma}
%
%
\begin{pf}
If $G_{\mathrm b}$ is empty, the lemma is trivially true. Assume
$G_{\mathrm
b}$ is\break 
nonempty. We
prove this lemma in two steps. In the first step we prove that
$G_{\mathrm b}$
is a subgraph of $\GB$, the backbone of $G$. In the second step we show
that $\GB$ cannot
contain anything outside $G_{\mathrm b}$.

Since $G_{\mathrm b}$ is rigid, it contains a nonempty 2-core
$(G_{\mathrm
b})_\core$ and
the output
of the backbone augmentation procedure with initial graph $(G_{\mathrm
b})_\core$ is
$G_{\mathrm b}$ itself. Furthermore, $(G_{\mathrm b})_\core$ is part
of $\GC
$, the 2-core
of the
original graph $G$, since by definition a 2-core is the maximal
stopping set (cf. Definition~\ref{def:stopping_set})
and $(G_{\mathrm b})_\core$ is a stopping set in $G$. Hence, the
monotonicity of the
backbone augmentation procedure implies that $G_{\mathrm b}\subseteq
\GB$.

In the second step, we prove that $\GB$ cannot contain any node outside
$G_{\mathrm b}$.
First, note that $G_{\mathrm p}$ cannot contain any check node from the
2-core of
the original
graph~$G$. We prove this by contradiction. Suppose instead that $\tilde
{F}$ is the nonempty set of all the
check nodes from the 2-core of $G$ that are in $G_{\mathrm p}$. Let
$\tilde{V}$
be the
set of neighbors of $\tilde{F}$ in $G_{\mathrm p}$. The nodes in
$\tilde
{V}$ are
also part of the
2-core of $G$ and have degree at least 2 in the 2-core of $G$. Furthermore,
there is no edge incident from $F_{\mathrm b}$ to $V_{\mathrm p}$
because, by
definition,
$G_{\mathrm b}$ is check-induced. In particular, in the 2-core of
$G$, there is no other edge incident on variables in $\tilde{V}$
beyond the
ones coming from $\tilde{F}$. Hence, in the nonempty subgraph $\tilde
{G} \subseteq
G_{\mathrm p}$ induced by the check nodes in $\tilde{F}$ and all their
neighbors every
variable node has degree at least 2. This subgraph is then, by
definition,
a stopping set in $G_{\mathrm p}$. But by assumption $G_{\mathrm p}$ is
peelable and
cannot contain a stopping set. This is a contradiction that rules out
the existence
of a nonempty set $\tilde{F}$. Hence, the 2-core of $G$ is contained
entirely in $G_{\mathrm b}$ (recall that both $G_{\mathrm b}$ and the 2-core
are check-induced).

Let $B^{(\GC)}$ and $B^{(G_{\mathrm b})}$ be the output of the backbone
augmentation
procedure on $G$, once with initial subgraph given by the 2-core of $G$
and once
with the initial subgraph given by $G_{\mathrm b}$ (which contains the
2-core of
$G$). By
monotonicity, $B^{(\GC)} \subseteq B^{(G_{\mathrm b})}$. But the process
with the
initial subgraph $G_{\mathrm b}$ terminates immediately since, by
assumption, all check
node outside $G_{\mathrm b}$ have at least two neighbors in $G_{\mathrm p}$.
Therefore, $\GB
=B^{(\GC)}
\subseteq B^{(G_{\mathrm b})} = G_{\mathrm b}$. This completes our proof.
\end{pf}
%
%
It is easy to see that the converse of Lemma~\ref
{th:bp_partition_uniqueness} is
also true, as stated below.
%

\begin{remark}\label{rem:periphery_partition_conv}
If $G_{\mathrm b}=\GB$ is the backbone of $G$, then the subgraphs
$G_{\mathrm
b}$ and
$G_{\mathrm p}= G \setminus G_{\mathrm b}=\GP$ satisfy the condition
of Lemma~\ref{th:bp_partition_uniqueness}. Here, $G\setminus
G_{\mathrm b}$
denotes the subgraph of $G$ induced by
$(F\setminus F_{\mathrm b}, V\setminus V_{\mathrm b})$.
\end{remark}

Notice that the fact that the graph $G\setminus G_{\mathrm b}$ is peelable
follows from the connection between the peeling algorithm and \BPzero\
stated in Lemmas \ref{lemma:PeelingBP} and \ref{lemma:PeelingBPFP}. We
stated that the messages coming out
of the backbone are always $0$. From the check node update rule, an
incoming $0$
message to a check node can be dropped without changing any of the
outgoing messages as long as there is
at least one other incoming message. By definition, there is no edge between
variable nodes in the periphery and check nodes in the backbone. Furthermore,
all the check nodes in the periphery have at least two neighbors in the
periphery. Therefore, \BPzero\ on the periphery has the same messages as the
corresponding messages of \BPzero\ on the whole graph. In particular,
the fixed point
of \BPzero\ on the periphery is all $*$ messages which shows that the periphery
subgraph is peelable.
We now prove Lemma~\ref{lemma:UniformPeelable}.
%
\begin{pf*}{Proof of Lemma~\ref{lemma:UniformPeelable}}
Our goal is to characterize the probability of observing the periphery
of $G$
to be $G_{\mathrm p}= (F_{\mathrm p}, V_{\mathrm p}, E_{\mathrm p})$.
We use the
shorthand notation
$G\setminus G_{\mathrm p}$
to denote the subgraph of $G$ induced by the check-variable nodes pair
$(F\setminus F_{\mathrm p}, V\setminus V_{\mathrm p})$. Let $G_{\mathrm b}=
(F\setminus F_{\mathrm p},
V\setminus
V_{\mathrm p}, E_{\mathrm b}) = G\setminus G_{\mathrm p}$ and
$E_{\mathrm{pb}}=\{
(i,a)|i\in
V\setminus V_{\mathrm p},
a \in F_{\mathrm p}\}$ be a set of edges that satisfy the condition
$\deg_{E_{\mathrm p}}(a)+\deg_{E_\mathrm{pb}}(a) = k$ for all $a \in
F_{\mathrm p}$. As
before, we
denote by $\GB$ and $\GP$ the actual periphery and backbone of the
graph $G$.
Define the set of rigid graphs on $\nb$ variable nodes, $\mb$ check
nodes and
check degree $k$, $\cR(\nb,k,\mb)$, as
%
%
\begin{eqnarray}
\quad\hspace*{-1pt}&&\cR(\nb,k,\mb)
\nonumber
\\[-8pt]
\\[-8pt]
\nonumber
&&\hspace*{-2pt}\qquad= \bigl\{G_{\mathrm b}= (F_{\mathrm b},
V_{\mathrm b}, E_{\mathrm b}) \dvtx|F_{\mathrm b}| = \mb,
V_{\mathrm b}= \nb, |\da| = k\ \forall a\in F_{\mathrm b},
G_{\mathrm b}\mbox{ is rigid}\bigr\}.
\end{eqnarray}

By Lemma~\ref{th:bp_partition_uniqueness},
%
%
\begin{eqnarray}
&&\bigl\{G \in\G(n,k,m) \dvtx\GP=G_{\mathrm p}, \GB=G_{\mathrm b}\bigr\}
\nonumber
\\[-8pt]
\\[-8pt]
\nonumber
&&\qquad = \bigl\{G \in\G(n,k,m) \dvtx G_{\mathrm p}\subseteq G, G\setminus
G_{\mathrm
p}= G_{\mathrm b} , G_{\mathrm p}\in\cP, G_{\mathrm b}
\in\cR\bigr\},
\end{eqnarray}
and in particular,
%
%
\begin{eqnarray}
\label{eq:set_graph_gp}&& \bigl\{G \in\G(n,k,m) \dvtx\GP=G_{\mathrm p}\bigr\}
\nonumber
\\[-8pt]
\\[-8pt]
\nonumber
&&\qquad= \bigl\{G
\in\G(n,k,m) \dvtx G_{\mathrm p} \subseteq G, G_{\mathrm p}\in\cP, G
\setminus G_{\mathrm p}\in\cR\bigr\}.
\end{eqnarray}
From equation
\eqref{eq:set_graph_gp}, and counting all the choices for the subgraph
$G_{\mathrm b}=
G\setminus G_{\mathrm p}$, and the edges that connect $G_{\mathrm p}$ and
$G_{\mathrm b}$,
%
%
\begin{eqnarray}
&& \bigl|\bigl\{G \in\G(n,k,m) \dvtx\GP=G_{\mathrm p}\bigr\}\bigr|
\nonumber
\\
&&\qquad = \sum_{G_{\mathrm b}} \sum_{E_{\mathrm{pb}}}
\bigl|\bigl\{G \in\G(n,k,m) \dvtx G_{\mathrm p} \subseteq G, G_{\mathrm p}\in
\cP, G\setminus G_{\mathrm p}= G_{\mathrm b}, G_{\mathrm
b}\in\cR,\\
&&\hspace*{235pt}{} E
\setminus(E_{\mathrm p}\cup E_{\mathrm b}) = E_{\mathrm{pb}}\bigr\}\bigr|.\nonumber
\end{eqnarray}
For fixed $G_{\mathrm p}$ and $G_{\mathrm b}$, we can count the number
of ways
these two subgraphs
can be connected to each other. Letting $\bar{R}$ be the degree
profile of
$G_{\mathrm p}$, we have
%
%
\begin{eqnarray}
\label{eq:card_fixed_periphery}
&&\bigl |\bigl\{G \in\G(n,k,m) \dvtx\GP=G_{\mathrm p}\bigr\}\bigr|
\nonumber
\\[-8pt]
\\[-8pt]
\nonumber
&&\qquad = \sum_{G\setminus G_{\mathrm p}} \prod_{l=2}^k
{n-|V_{\mathrm p}| \choose k-l}^{|F_{\mathrm p}
|\bar{R}_l} \bI(G\setminus G_{\mathrm p}
\in\cR) \bI(G_{\mathrm
p}\in\cP).
\end{eqnarray}
We can rewrite this as
%
%
\begin{eqnarray}
&& \bigl|\bigl\{G \in\G(n,k,m) \dvtx\GP=G_{\mathrm p}\bigr\}\bigr|
\nonumber
\\[-8pt]
\\[-8pt]
\nonumber
&&\qquad = \prod_{l=2}^k {n-|V_{\mathrm p}|
\choose k-l}^{|F_{\mathrm p}|\bar{R}_l}\bigl |\cR\bigl(n - |V_{\mathrm p}|, k, m -|F_{\mathrm p}|\bigr)\bigr|
\bI(G_{\mathrm p}\in\cP).
\end{eqnarray}
It is clear that the cardinality of the set $\cR(\nb,k,\mb)$ is a
function of
only $\nb$ and~$\mb$. Hence,
%
%
\begin{equation}
\label{eq:num_graph_fix_periphery}
 \bigl|\bigl\{G \in\G(n,k,m) \dvtx\GP=G_{\mathrm p}\bigr\}\bigr| = Z
\bigl(n_{\mathrm p}, k, R^{\mathrm
p}, m_{\mathrm p}\bigr)
\bI(G_{\mathrm p} \in\cP),
\end{equation}
for some function $Z( \cdot,\cdot,\cdot,\cdot)$.
Since the graph $G$ itself was chosen uniformly at random from the set $
\G(n,k,m)$, this shows that conditioned on $(n_{\mathrm p}, R^{\mathrm p},
m_{\mathrm p})$, all
graphs $G_{\mathrm p}
\in\cP$ with $n_{\mathrm p}$ variable nodes, $m_{\mathrm p}$ check
nodes, and
check degree
profile $R^{\mathrm p}$ are equally likely to be observed.
\end{pf*}
%
%

\subsection{Proof of Lemma
\texorpdfstring{\protect\ref{lemma:periphery_peelable}}{3.12}: Periphery is exponentially peelable}
\label{subsec:perip_exp_peelable}

Let $G= (F, V, E)$ be a graph drawn uniformly at
random from $\G(n,k,\alpha n)$, and let $\GP= (\FP, \VP, \EP)$ be its
periphery. Recall the connection between \BPzero\ and the peeling
algorithm from
Section~\ref{sec:BP_DE}.
Let $Q$ be defined as in Theorem~\ref{thm:Main}, that is, $Q$ is the
largest positive solution of
$Q=1-\exp\{-k\alpha Q^{k-1}\}$.
In light of Lemma~\ref{lemma:FP_concentration_DE}, we define the
asymptotic degree profile pair of the periphery, $(\ba, \bR(x))$ as
follows
(recall that, from Lemma~\ref{lemma:PeelingBPFP}, the periphery does
include check
nodes receiving at most $k-2$ messages of type $0$).
%

\begin{definition}\label{def:periphery_deg_est}
%
%
\begin{eqnarray}
\bar{R}(x) & \equiv&\frac{1}{ 1 - Q^{k} - k (1-Q)Q^{k-1} } \cdot \sum_{l=2}^{k}
\pmatrix{k \cr l} (1- Q)^l Q^{k - l } x^l,
\\
\ba& \equiv&\alpha \biggl(\frac{1-Q^k - k
(1-Q)Q^{k-1}}{1-Q} \biggr).
\end{eqnarray}
\end{definition}

Unlike the backbone where all check nodes are of degree $k$, the
periphery can
have check nodes of degrees between $2$ and $k$. Among these, check
nodes of degree
$2$ are of importance to us since they can potentially form long strings.
Strings are particularly unfriendly structures for the peeling algorithm;
peeling takes linear time to peel such structures. In the next lemma,
we define
a parameter $\theta$ as a function of $Q$, which is the estimated
branching factor of the subgraph of the periphery induced by check
nodes of
degree $2$. Lemma~\ref{lem:bdd_branch_fact} proves that this branching factor
is less than one for all $\alpha\in(\ad(k), 1]$.

%
\begin{lemma}\label{lem:bdd_branch_fact}
Let $\theta\equiv\alpha k (k-1) (1-Q) Q^{k-2}$ with $Q$ as defined in
Theorem~\ref{thm:Main}. Then $\theta< 1$ for
all $\alpha\in(\ad(k), 1]$.
\end{lemma}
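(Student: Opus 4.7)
The plan is to recast $\theta$ as the derivative of the map $f(z) \equiv 1 - e^{-k\alpha z^{k-1}}$ at its largest fixed point $Q$, and then exploit a global parametrization of fixed points through $\alpha = \phi(Q)$ with $\phi(Q) \equiv -\log(1-Q)/(k Q^{k-1})$. Direct differentiation gives $f'(z) = k(k-1)\alpha z^{k-2}(1-f(z))$, so at the fixed point $z = Q$ we obtain $f'(Q) = k(k-1)\alpha Q^{k-2}(1-Q) = \theta$. Using $1-Q = e^{-k\alpha Q^{k-1}}$ to eliminate $\alpha$ and differentiating $\alpha = \phi(Q)$ leads to the key identity
\[
\phi'(Q) \;=\; \frac{1-\theta}{k\, Q^{k-1}(1-Q)}\,,
\]
so proving $\theta < 1$ is equivalent to proving $\phi'(Q) > 0$ at the relevant fixed point.

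Next I would analyze $\phi$ on $(0,1)$. Since $k \ge 3$, $\phi(Q) \to \infty$ both as $Q \to 0^+$ (where $\phi \sim 1/(kQ^{k-2})$) and as $Q \to 1^-$, so $\phi$ attains an interior minimum. The critical points of $\phi$ are the zeros of $\psi(Q) \equiv (k-1)(1-Q)(-\log(1-Q)) - Q$. A short computation shows $\psi(0)=0$, $\psi(1)=-1$, $\psi'(0)=k-2\ge 1$, and $\psi''(Q)=-(k-1)/(1-Q) < 0$. Strict concavity of $\psi$, combined with the boundary behaviour, forces $\psi$ to have exactly one zero $Q^* \in (0,1)$, with $\psi > 0$ on $(0,Q^*)$ and $\psi < 0$ on $(Q^*,1)$. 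Consequently $\phi$ is strictly decreasing on $(0,Q^*)$ and strictly increasing on $(Q^*,1)$, with its unique interior minimum at $Q^*$.

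Finally I would connect this picture to $\ad(k)$. By Eq.~(\ref{eq:AlphaDdef}), the condition $\alpha < \ad(k)$ says $z > f(z)$ for every $z \in (0,1)$; since $f$ is strictly increasing with $f(0) = 0$ and $f(1) < 1$, this is equivalent to $f$ having no fixed point in $(0,1)$, i.e. to $\alpha \notin \phi((0,1)) = [\phi(Q^*), \infty)$. Hence $\ad(k) = \phi(Q^*)$. For any $\alpha \in (\ad(k), 1]$ the equation $\phi(Q) = \alpha$ has exactly two solutions $Q_- \in (0, Q^*)$ and $Q_+ \in (Q^*, 1)$, and the largest positive fixed point of $f$ is $Q = Q_+$. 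Since $Q_+$ lies on the strictly increasing branch of $\phi$ we get $\phi'(Q) > 0$, and by the key identity above, $\theta < 1$. The main obstacle is the concavity/unique-minimum analysis of $\psi$ and the identification $\ad(k) = \phi(Q^*)$; both are elementary calculus once one sets up the correct parametrization, but the proof would not go through without verifying uniqueness of the critical point $Q^*$.
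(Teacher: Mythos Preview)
Your proof is correct and follows essentially the same route as the paper: both identify $\theta=f'(Q)$ for $f(z)=1-e^{-k\alpha z^{k-1}}$, and both reduce the borderline case $f'(Q)=1$ to the equation $-(1-Q)\log(1-Q)=Q/(k-1)$, whose unique solution (by concavity) pins down $\ad(k)$. The only organizational difference is that the paper first argues $f'(Q)\le 1$ at the largest fixed point and then excludes equality, whereas your parametrization $\alpha=\phi(Q)$ and the identity $\phi'(Q)=(1-\theta)/(kQ^{k-1}(1-Q))$ give the strict inequality in one stroke via monotonicity on the upper branch.
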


Proof of this lemma can be found in the Appendix~\ref{app:periphery}.

%
\begin{lemma}\label{lemma:periphery_peelable_av}
Let $Q$ be defined as in Theorem~\ref{thm:Main}. Then there exists
$\eta_1 = \eta_1(\alpha,k) > 0$ such that the pair $(\ba,
\bar{R})$ defined in Definition~\ref{def:periphery_deg_est} is
peelable at rate~$\eta_1$.
Further, $0\le f(z,\ba,\bR)\le(1-\eta_1)z$ for all $z\in(0,1]$.
\end{lemma}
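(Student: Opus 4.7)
My strategy is to reduce the density evolution map $f(z,\ba,\bar R) = 1 - e^{-\ba \bar R'(z)}$ for the periphery to the density evolution map $\phi(y) := 1 - e^{-\alpha k y^{k-1}}$ for the original graph, via an affine change of variable, and then exploit that $Q$ is the largest fixed point of $\phi$ in $[0,1]$.

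A direct binomial computation (using $l\binom{k}{l} = k\binom{k-1}{l-1}$ and the binomial theorem) yields
\[
\ba \bar R'(z) \;=\; \alpha k\bigl[(Q + (1-Q)z)^{k-1} - Q^{k-1}\bigr],
\]
the nontrivial factor $1 - Q^k - k(1-Q)Q^{k-1}$ in $\ba$ cancelling the reciprocal factor in $\bar R'$. Setting $y = Q + (1-Q)z \in [Q,1]$ and invoking $\phi(Q) = Q$, i.e.\ $e^{-\alpha k Q^{k-1}} = 1 - Q$, one obtains
\[
f(z, \ba, \bar R) \;=\; \frac{\phi(y) - \phi(Q)}{1 - Q} \;=\; s(y)\cdot z, \qquad s(y) := \frac{\phi(y) - \phi(Q)}{y - Q},
\]
extended continuously at $y = Q$ by $s(Q) := \phi'(Q)$. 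Therefore the lemma reduces to showing $\max_{y \in [Q,1]} s(y) \le 1 - \eta_1$ for some $\eta_1 = \eta_1(\alpha, k) > 0$.

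Since $Q$ is the largest fixed point of $\phi$ in $[0,1]$ and $\phi(1) = 1 - e^{-\alpha k} < 1$, one has $\phi(y) < y$ for $y \in (Q,1]$, hence $s(y) < 1$ pointwise there. By continuity and compactness the uniform bound reduces to the single strict inequality $s(Q) = \phi'(Q) < 1$. Differentiating $\phi$ and substituting $\alpha k = -\ln(1-Q)/Q^{k-1}$ (from $\phi(Q) = Q$),
\[
\phi'(Q) \;=\; \alpha k(k-1) Q^{k-2}(1-Q) \;=\; -(k-1)\,\frac{(1-Q)\ln(1-Q)}{Q} \;=:\; \psi(Q).
\]
An elementary calculation shows $\psi'(Q) = (k-1)(\ln(1-Q) + Q)/Q^2 \le 0$ on $(0,1)$ (since $\ln(1-Q) \le -Q$), so $\psi$ is strictly decreasing with $\psi(0^+) = k-1 \ge 2$ and $\psi(1^-) = 0$. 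Hence there is a unique $Q_c \in (0,1)$ with $\psi(Q_c) = 1$, and $\psi(Q) < 1$ if and only if $Q > Q_c$.

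It remains to establish $Q = Q(\alpha) > Q_c$ whenever $\alpha > \ad(k)$. Define $\alpha_c$ by $\alpha_c k Q_c^{k-1} = -\ln(1-Q_c)$, so $\phi_{\alpha_c}(Q_c) = Q_c$ and $\phi_{\alpha_c}'(Q_c) = 1$, i.e.\ the graphs of $y \mapsto \phi_{\alpha_c}(y)$ and $y = z$ are tangent at $Q_c$. One checks that $h_{\alpha_c}(y) := y - \phi_{\alpha_c}(y)$ is nonnegative on $[0,1]$ with zeros only at $0$ and $Q_c$; hence the strict inequality $z > \phi_\alpha(z)$ on $(0,1)$ from the definition of $\ad(k)$ fails at $\alpha = \alpha_c$ but holds for every $\alpha < \alpha_c$ by strict monotonicity of $\phi_\alpha$ in $\alpha$, so $\ad(k) = \alpha_c$. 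For $\alpha > \alpha_c$ the same monotonicity gives $\phi_\alpha(Q_c) > Q_c$, and combined with $\phi_\alpha(1) < 1$ the intermediate value theorem produces $Q(\alpha) \in (Q_c, 1)$. Thus $\phi'(Q(\alpha)) = \psi(Q(\alpha)) < 1$, yielding $\max_{[Q,1]} s =: 1 - \eta_1 < 1$. Iterating the recursion from $z_0 = 1$ then gives $z_t \le (1-\eta_1)^t \le (1-\eta_1)^t/\eta_1$ (after possibly shrinking $\eta_1$ so that $\eta_1 \le 1$), i.e.\ peelability at rate $\eta_1$. The main obstacle I anticipate is this last identification $\ad(k) = \alpha_c$: while geometrically clear, it demands careful tracking of the tangent-point structure of $(\phi_\alpha)$ as $\alpha$ crosses $\ad(k)$.
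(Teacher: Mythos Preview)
Your proposal is correct and takes essentially the same approach as the paper: both reduce $f(z,\ba,\bar R)$ to the original density evolution via the affine substitution $y=Q+(1-Q)z$, show that the only fixed point is $z=0$ because $Q$ is the largest fixed point of $\phi$, and then conclude by compactness once $f'(0)=\phi'(Q)<1$ is established. The only packaging difference is that the paper isolates the inequality $\phi'(Q)=\alpha k(k-1)Q^{k-2}(1-Q)<1$ as a separate lemma (Lemma~\ref{lem:bdd_branch_fact}), proved in the appendix by the same tangency/uniqueness argument you sketch for $\ad(k)=\alpha_c$; your ``one checks'' that $h_{\alpha_c}\ge 0$ with zeros only at $0,Q_c$ can be justified by noting $h_\alpha'$ has at most two zeros (since $z\mapsto \alpha k(k-1)z^{k-2}e^{-\alpha k z^{k-1}}$ is unimodal), so that $h_{\alpha_c}$ has at most three zeros counting multiplicity.
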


%
\begin{pf}
In view of the density evolution recursion (Definition~\ref
{eq:density_evolution}), define
\[
f(z) = 1- \exp\bigl(-\ba\bar{R}'(z)\bigr).
\]
We prove the lemma by showing
that $f'(0) = \theta<1$ and that $f(z) < z$ strictly
for $z\in(0, 1]$.

Using the definitions of $\ba$ and $\bar{R}(z)$, the function $f(z)$
can be written as
%
%
\begin{equation}
f(z) =1 - \exp \bigl(-\alpha k \bigl( \bigl(Q+(1- Q)z \bigr)^{k-1} -
Q^{k-1} \bigr) \bigr).
\end{equation}
By a straightforward calculation, and using
Lemma~\ref{lem:bdd_branch_fact}, we get
%
%
\begin{equation}
\label{eq:6_df} f'(0) = \ba\bar{R}'(0) \exp\bigl(-\ba
\bar{R}'(0)\bigr) = \alpha k (k-1) (1-Q) Q^{k-2} =
\theta<1.
\end{equation}
Assume $0 \le y \le1$ to be fixed point of $f$, that is,
%
%
\begin{equation}
y =1 - \exp \bigl(-\alpha k \bigl( \bigl(Q+(1- Q)y \bigr)^{k-1} -
Q^{k-1} \bigr) \bigr).
\end{equation}
Using the identity $Q = 1- \exp(- \alpha k Q^{k-1})$ and after some
calculation, we get
%
%
\begin{equation}
\label{eq:6_Q_contradiction} Q + (1-Q) y =1 - \exp \bigl(-\alpha k \bigl(Q+(1- Q)y
\bigr)^{k-1} \bigr).
\end{equation}
Equation~\eqref{eq:6_Q_contradiction} shows that $Q + (1-Q) y $ is a
fixed point of the original
density evolution recursion (\ref{eq:density_evolution}) with $R(x) =
x^k$. Since, by definition, $Q$ is the largest fixed point of that
recursion, $y = 0$ is the only fixed point of $f(z) = 1- \exp(-\ba
\bar{R}'(z))$ in the interval $[0, 1]$. Since $f'(0)<1$, we have
$f(z)<z$ for all $z\in(0,1]$ and, therefore, $f(z)/z<1$ for all
$z\in[0,1]$.
The claim follows by taking $\eta_1= 1-\sup_{z\in[0,1]}f(z)/z$,
with $\eta_1>0$ by continuity of $z\mapsto f(z)/z$ over the compact $[0,1]$.
\end{pf}
%

We can now prove Lemma~\ref{lemma:periphery_peelable}.
\begin{pf*}{Proof of Lemma~\ref{lemma:periphery_peelable}}
For any $\ve> 0$, by Lemmas \ref
{lemma:BP_gives_core_backbone_periphery} and \ref
{lemma:FP_concentration_DE}, we know that
%
%
\begin{eqnarray} \label{eq:parameters_near_baraR}
|\aP- \ba| &<& \ve,
\nonumber
\\[-8pt]
\\[-8pt]
\nonumber
\bigl| \RP_l - \bar{R}_l\bigr| &<& \ve\qquad \mbox{for } l \in\{2,
\ldots, k\},
\end{eqnarray}
hold w.h.p.

As before, let $f(z, \alpha, R)=1- \exp\{-\alpha R'(z) \}$. Using
$\RP
_0=\RP_1=0$ we obtain
that the function $f(z, \alpha, R)/z$ is an analytic function over
set $[0,1]^{k+2}$. By Lemma~\ref{lemma:periphery_peelable_av},
$f(z,\ba,\bR)/z\le1-\eta_1$. It follows that, for $\ve>0$ small enough,
$\partial f(z,\ba,\bR)/\partial z\le1-(\eta_1/2)$ using continuity
$\partial f/\partial z$ with respect to the other arguments of $f$. We
infer that the periphery is w.h.p. peelable at rate $\eta=\eta_1/2$.
This proves part (i). Part (ii) follows immediately from Lemma~\ref
{lemma:FP_concentration_DE}.
\end{pf*}

%
%

\section{Proof of Lemma \texorpdfstring{\protect\ref{lemma:core_few_low_weight}}{3.5}}
\label{app:proof_of_core_separation_lemma}

We find it convenient to work within the configuration model: we
assume here that $G$ is drawn uniformly at random from $\C(n,k,m)$.
The following fact is an immediate consequence of Lemma~\ref
{lemma:peeling_uniform_n1_n2}.

%
\begin{fact}\label{fact:core_uniform_nC_mC}
Assume $G$ is drawn uniformly at random from $\C(n,k,m)$, and denote by
$n_{\mathtt{C}},\mC$ the number of
variable and check nodes in the core of $G$.
Suppose $(n_{\mathtt{C}}=n_{\mathrm c},
\mC
= m_{\mathrm c})$ occurs with positive probability. Then
conditioned on $(n_{\mathtt{C}}=n_{\mathrm
c}, \mC=m_{\mathrm c})$, the core is drawn uniformly from
$\C(n_{\mathrm c}, k, m_{\mathrm c};0,\break n_{\mathrm c})$ [recall the
definition of
this ensemble in
equation~\eqref{eq:config_n1_n2_defined}].
\end{fact}

In words, the core is drawn uniformly from
$\C(n_{\mathtt{C}}, k, \mC)$
conditioned on all variable nodes having degree $2$ or more.

Now, it has been proved \cite{DemboFSS} that, w.h.p.
%
%
\begin{eqnarray}
\bigl|n_{\mathtt{C}}/n - \bigl(1 - \exp (-\alpha k \wh{Q}) (1+\alpha k \wh{Q})
\bigr) \bigr| &=& o(1), \label{eq:nC_limit}
\\
\bigl|\mC/n - \alpha Q^k \bigr| &=& o(1),\label{eq:mC_limit}
\end{eqnarray}
where $(Q, \wh{Q})$ is as defined in Theorem~\ref{thm:Main}. The above
bounds also follow from Lemmas \ref
{lemma:BP_gives_core_backbone_periphery} and \ref
{lemma:DegreeDensityEvolution}.

The kernel of the core system $\Score$ contains all vectors
$\ux$ with the following property. Let $V_{(1)} \subseteq\VC$ be the
subset of variables taking value $1$ in $\ux$
(i.e., the support of~$\ux$). Then the subgraph of $\GC$ induced by
$V_{(1)}$ has no check node with odd degree.

We will refer to such subgraphs as to \emph{even} subgraphs.
Explicitly, even subgraphs are variable-induced subgraphs such that no
check node has odd degree. We want characterize the even subgraphs of
$\GC$ having no more than $n\ve$ variable nodes, in terms of their size
and number. Lemma~\ref{lemma:even_subgraphs_characterization} in
Section~\ref{subsec:even_subgraphs_of_core} below allows us to do
this provided certain conditions are met. Our next lemma tells us that
the core meets these conditions w.h.p.

\begin{lemma}\label{lemma:core_is_good}
Fix $k$ and consider any $\alpha\in(\ad(k),\as(k))$. There exists
$\delta= \delta(\alpha,k)>0$ such that the following happens. Let $G$
be drawn uniformly from
$\C(n,k,\alpha n)$. Let $n_{\mathtt{C}}$ be the (random) number of variable nodes
in the core, $\mC$ be the number of check nodes in the core and $\aC
\equiv\mC/n_{\mathtt{C}}$. Let $\etaC
$ be the unique positive solution of
%
%
\begin{equation}
\frac{\etaC(e^{\etaC} -1)}{e^{\etaC} - 1 - \etaC}
= \aC k \label{eq:etaC_defn}
\end{equation}
and let $\theta_{\mathtt{2C}}\equiv
\etaC(k-1)/(e^{\etaC} -1)$. For any $\delta'>0$,
we have, w.h.p.:
\begin{longlist}[(iii)]
\item[(i)] $\theta_{\mathtt{2C}} \leq1-
\delta$.

\item[(ii)] $\aC\in[2/k+\delta,1]$.

\item[(iii)] $n_{\mathtt{C}}/n \geq(1 - \exp
(-\alpha k \wh{Q})(1+\alpha k \wh
{Q}))-\delta'$.
\end{longlist}
\end{lemma}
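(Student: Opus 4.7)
My approach is to first identify the deterministic limits of $\aC$ and $\etaC$ using the density evolution facts already in the excerpt, and then derive (i)--(iii) from those limits.

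Part (iii) is immediate from the cited concentration~\eqref{eq:nC_limit}. For (i) and (ii), I claim that $\etaC$ concentrates on the explicit deterministic value
\[
\etaC^{\rm det} \;:=\; k\alpha\, Q^{k-1} \;=\; k\alpha\wh Q\, .
\]
(This is the rate of the pre-conditioning Poisson degree distribution inside the core; it also equals the limiting number of $0$-valued \BPzero messages incoming to a variable, cf.\ Lemma~\ref{lemma:FP_concentration_DE}.) Combining~\eqref{eq:nC_limit} and~\eqref{eq:mC_limit} gives $\aC k \to k\alpha Q^k / [1-(1-Q)(1+k\alpha \wh Q)]$ \whp, and using the fixed-point identity $1-Q = e^{-k\alpha\wh Q} = e^{-\etaC^{\rm det}}$, a routine rearrangement verifies that $\etaC^{\rm det}$ satisfies~\eqref{eq:etaC_defn} with this limiting $\aC$ on the right-hand side. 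Since the map $\eta \mapsto \eta(e^\eta-1)/(e^\eta-1-\eta)$ is continuous and strictly increasing on $(0,\infty)$ (equal to $2$ as $\eta \downarrow 0$), concentration of $\aC$ transfers to $\etaC \to \etaC^{\rm det}$ by the continuous mapping theorem. Part (ii) then follows immediately: the upper bound $\aC \leq 1$ uses the cited fact $\rank(\bH) = m$ \whp (Section~\ref{sec:Definitions}), which forces the $\mC$ rows of $\Hcore$ to be linearly independent; and the lower bound uses that $Q \geq Q_* > 0$ for $\alpha \geq \ad(k)$ (with $Q_*$ the tangency value at $\alpha=\ad(k)$), so $\etaC^{\rm det}$ is bounded away from $0$ and the limit of $\aC k$ strictly exceeds $2$.

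For part (i), the substantive step is the algebraic identification: substituting $e^{\etaC^{\rm det}} = (1-Q)^{-1}$ and $\etaC^{\rm det} = k\alpha Q^{k-1}$ into the definition of $\t2C$ yields
\[
\t2C\big|_{\etaC = \etaC^{\rm det}} \;=\; k(k-1)\alpha\, Q^{k-2}(1-Q) \;=\; f'(Q), \qquad f(z) := 1 - e^{-k\alpha z^{k-1}}\, .
\]
Thus it suffices to prove $f'(Q) < 1$ strictly at the largest fixed point $Q$ of $f$. By~\eqref{eq:AlphaDdef}, for $\alpha > \ad(k)$ the function $g(z) := f(z) - z$ is positive somewhere in $(0,1)$. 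Since for $k \ge 3$ the derivative $f'$ is unimodal on $[0,1]$ with $f'(0)=0$, $g'$ has at most two sign changes on $(0,1)$; combined with $g(0)=0$, $g'(0) = -1$, $g(1)<0$, and $g > 0$ somewhere, $g$ must have exactly two zeros $Q_1 < Q_2 = Q$ in $(0,1)$, with a strictly positive local maximum between them attained strictly before $Q$. Hence $g$ is strictly decreasing at $Q$, i.e.\ $f'(Q) < 1$ strictly. Continuity together with concentration of $\etaC$ on $\etaC^{\rm det}$ will then yield $\t2C \leq 1 - \delta$ \whp.

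The hard part is the algebraic identification $\t2C|_{\etaC = \etaC^{\rm det}} = f'(Q)$, which translates a combinatorial branching factor into the derivative of density evolution at its stable fixed point; once this dictionary is in place, (i) reduces to the saddle-node/tangency structure of $f$ at $\ad(k)$, while (ii) and (iii) follow easily from the cited concentration results.
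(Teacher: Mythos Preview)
Your proposal is correct and follows essentially the same route as the paper: both identify $\etaC \to k\alpha\wh Q$ from the concentration estimates \eqref{eq:nC_limit}--\eqref{eq:mC_limit}, substitute $e^{\etaC^{\rm det}} = (1-Q)^{-1}$ to obtain $\t2C \to k(k-1)\alpha Q^{k-2}(1-Q)$, and then invoke the strict inequality $k(k-1)\alpha Q^{k-2}(1-Q) < 1$, with parts (ii) and (iii) handled exactly as you do. The only variation is in that last inequality: the paper packages it as Lemma~\ref{lem:bdd_branch_fact} and proves it by showing the simultaneous system $f(Q)=Q$, $f'(Q)=1$ has a \emph{unique} solution in $\alpha$ which must therefore coincide with $\ad(k)$, whereas you give a direct shape argument on $g=f-\mathrm{id}$ using unimodality of $f'$; both are short one-variable calculus arguments and neither offers a real advantage over the other.
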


The discussion in Section~\ref{subsec:even_subgraphs_of_core} throws
light on the definitions of $\etaC$ and
$\theta_{\mathtt{2C}}$ used.

\begin{pf*}{Proof of Lemma~\ref{lemma:core_is_good}}
From equations~\eqref{eq:nC_limit}, \eqref{eq:mC_limit}, we deduce that
$\etaC= \alpha k\wh{Q} +o(1)$ w.h.p., leading to
\[
\theta_{\mathtt{2C}} = \alpha k (k-1) Q^{k-2}(1-Q) + o(1) \leq1-
\delta
\]
for sufficiently small $\delta$, using Lemma~\ref{lem:bdd_branch_fact}.
Thus, we have established point (i).

Point (iii) and the lower bound in point (ii) are easy consequences
of equations~\eqref{eq:nC_limit}, \eqref{eq:mC_limit}. The upper
bound in
point (ii), $\aC\leq1$ w.h.p., follows directly from the fact that
for $\alpha< \as$, the system $\bH x = \ub$ has a solution for all
$\ub\in\{0,1\}^m$ w.h.p.
\end{pf*}

\begin{pf*}{Proof of Lemma~\ref{lemma:core_few_low_weight}}
Consider first $G\sim\C(n,k,m)$.
Applying Fact~\ref{fact:core_uniform_nC_mC} and Lemma~\ref
{lemma:core_is_good}, we deduce that, conditional on the number
of nodes, the core is $\GC\sim\C(n_{\mathrm c}, k, m_{\mathrm
c};0,n_{\mathrm
c})$ and satisfies
the conditions of Lemma~\ref{lemma:even_subgraphs_characterization}
proved below. By Lemma~\ref{lemma:even_subgraphs_characterization},
the elements of $\Lcore(\ve n)$ are in correspondence with simple
loops in the subgraph of $\GC$ induced by degree-$2$ variable
nodes. The sparsity bounds follows from Lemma~\ref
{lemma:even_subgraphs_characterization}.
The clam that they are, with high probability, disjoint, follows
instead from the fact that this random subgraph is subcritical (since
$2\alpha R_2<1$), and hence decomposes in trees and unicyclic components.

Using Lemma~\ref{lemma:Conf_implies_Uniform}, we deduce that the result
holds also for the
$G\sim\G(n,k,m)$ as required.
\end{pf*}

\subsection{Characterizing even subgraphs of the core}
\label{subsec:even_subgraphs_of_core}

This section aims at characterizing the small even subgraphs of the
core $\GC$.
For the sake of simplicity, we shall drop the subscript
$\mathtt{C}$ throughout
the subsection.

Fix $k$. Consider some $\alpha>2/k$. Let $\eta_* >0$ be defined
implicitly by
%
%
\begin{equation}
\frac{\eta_*(e^{\eta_*} -1)}{e^{\eta_*} - 1 - \eta_*}
= \alpha k. \label{eq:eta_star_defn}
\end{equation}
For $\alpha\in(2/k, \infty)$, we have $\eta_*(\alpha) > 0$ and
$\eta_*$ is an increasing function of $\alpha$ at fixed~$k$ \cite{DemboFSS}.

Consider a graph $G=(F,V,E)$ drawn uniformly at random from $\C(n, k,\break 
\alpha n; 0,n)$.
The rationale for this definition of $\eta_*$ is that the asymptotic
degree distribution of variable nodes in $G$ is
$\Po(\eta_*)$ conditioned on the outcome being greater than or equal
to $2$ [to be denoted below $\Po_{\ge2}(\eta_*)$].

We are interested in even subgraphs of $G$.

Consider the subgraph $G_2=(F, V^{(2)}, E^{(2)})$ of $G$ induced by
\textit{variable} nodes of degree $2$ (with all factor nodes retained).
The asymptotic branching factor this subgraph turns out to be $\theta_2
\equiv\eta_*(k-1)/(e^{\eta_*} -1)$.
We impose the condition $\theta_2 \leq1 -\delta$ for some $\delta>0$
(since this is true of the core). Note that $\theta_2$ is a decreasing
function of $\eta_*$, and hence a decreasing function of $\alpha$, for
fixed $k$.

First, we state a technical lemma that we find useful.

%
\begin{lemma}\label{lemma:nbrhood_bound}
Consider any $k$, any $\alpha\in(2/k, 1]$ and $\eps\in(0,1]$.
Then there exists $N_0 \equiv N_0(k, \eps) < \infty$ and $\const=
\const(k) < \infty$ such that the following occurs for all $n >
N_0$. Consider a graph $G=(F,V,E)$ drawn uniformly at random from
$\C(n, k, m;0,n)$, $m=n\alpha$. With probability at least $1-1/n$,
there is
no subset of variable nodes $V' \subseteq V$ such that $|V'| \leq\eps
n$ and the sum of the degrees of nodes in $V'$ exceeds $ \const\eps
\log(1/\eps) n$.
\end{lemma}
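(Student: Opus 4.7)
Proof plan. The result follows from a first-moment / Chernoff union bound combined with a Poissonization step to handle the conditioning present in $\C(n, k, m; 0, n)$.

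First I would Poissonize the degree distribution. The joint degree sequence $(d_1, \dots, d_n)$ in $\C(n, k, m; 0, n)$ coincides with that of $n$ i.i.d.\ $\Po_{\geq 2}(\eta_\star)$ variables conditioned on their sum equaling $mk$, where $\eta_\star$ is defined by \eqref{eq:eta_star_defn} so that the conditional mean is $\alpha k$; this is immediate from the fact that a multinomial is i.i.d.\ Poisson conditioned on the total, and that conditioning each coordinate on being $\geq 2$ can be absorbed into the Poisson parameter. Since the i.i.d.\ sum $\sum_i X_i$ is lattice (span $1$), has finite variance, and has mean $mk$, the local central limit theorem yields $\prob[\sum_i X_i = mk] \geq c/\sqrt n$ for some $c = c(k) > 0$. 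Consequently every event in $\C(n,k,m;0,n)$ has probability at most $O(\sqrt n)$ times its probability in the i.i.d.\ $\Po_{\geq 2}(\eta_\star)$ model.

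In the i.i.d.\ model, for a fixed $V' \subseteq V$ with $|V'| = s$, the sum $T = \sum_{v \in V'} X_v$ admits the Chernoff bound
\[
\prob[T \geq D] \;\leq\; \exp\bigl(-D\log\bigl(D/(e\eta_\star s)\bigr) + O(s)\bigr), \qquad D \geq 2\eta_\star s,
\]
obtained from the Poisson-like MGF of $\Po_{\geq 2}(\eta_\star)$ by the standard optimization $e^{t^\star} = D/(\eta_\star s)$. Setting $D = \const\,\eps\log(1/\eps)\,n$ with $\const = \const(k)$ to be chosen, I would use concavity of $t \mapsto t\log(n/t)$ on $[0, n/e]$ to conclude $D \geq \const\, s\log(n/s)$ uniformly in $s \leq \eps n$ (taking $\eps$ below some constant depending on $k$, say $\eps \leq 1/e$; the regime of $\eps$ bounded away from $0$ is handled by adjusting $N_0$ in light of the trivial total-degree bound $\sum_v d_v = mk$). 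Combined with $\binom{n}{s} \leq \exp(s\log(en/s))$, this yields
\[
\binom{n}{s}\,\prob[T \geq D] \;\leq\; \exp\bigl(-(\const\log(\const\log(1/\eps)/(e\eta_\star)) - 2)\,s\log(n/s)\bigr) \;\leq\; \exp(-3 s\log(n/s))
\]
once $\const$ is chosen sufficiently large (as a function of $k$). Summing over $s \in \{1, \ldots, \lfloor\eps n\rfloor\}$ gives a bound $O(n^{-2})$, and multiplying by the $O(\sqrt n)$ conditioning blow-up still yields probability at most $1/n$ for $n \geq N_0(k,\eps)$.

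The main obstacle is handling the conditioning: because degrees in $\C(n, k, m; 0, n)$ are not independent, a direct tail bound would require tracking the joint conditional law. The Poissonization trick combined with the LCLT lower bound on the conditioning event reduces this to a routine i.i.d.\ Chernoff-plus-union-bound computation; matching constants between the Chernoff exponent $\const\log(\const\log(1/\eps)/(e\eta_\star))$ and the combinatorial $\log(en/s)$ term is the only bookkeeping that requires care.
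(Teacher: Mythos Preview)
Your approach is correct and essentially identical to the paper's: both Poissonize the degree sequence via i.i.d.\ $\Po_{\geq 2}(\eta_*)$ variables, invoke a local CLT to bound the conditioning cost by $O(\sqrt{n})$, and then apply a Chernoff-plus-union bound. The paper's version is marginally simpler in that it only treats $|V'| = \lfloor\eps n\rfloor$ (since enlarging a subset can only increase its degree sum, smaller subsets are covered automatically), whereas you sum over all $s \leq \eps n$; this is a cosmetic difference, not a different route.
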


\begin{pf}
Let $\deg(i)$ be the degree of variable node $i \in V$. Let $X_i \sim
\Po_{\ge2}(\eta_*)$ be i.i.d.
for $i\in V$.
Then $(\deg(i))_{i=1}^n$ is distributed as $(X_i)_{i=1}^n$,
conditioned on $\sum_{i=1}^n X_i = mk$. Consider $V' = \{1, 2, \ldots,
l \}$. We have
\begin{eqnarray*}
\prob \Biggl\{\sum_{i=1}^l \deg(i) \geq
\gamma l \Biggr\} &= & \prob \Biggl\{\sum_{i=1}^l
X_i \geq\gamma l \bigg| \sum_{i=1}^n
X_i = mk \Biggr\}
\\
&\leq& \frac{
\prob \{\sum_{i=1}^l X_i \geq\gamma l  \}}{
 \prob \{\sum_{i=1}^n X_i = mk  \}}.
\end{eqnarray*}

Now, $n \E[X_i] = n \alpha k = mk$, by our choice of $\eta_*$ in
equation~\eqref{eq:eta_star_defn}. Since $\alpha\leq1$, we deduce that
$\eta_* \leq\const_1 = \const_1(k) < \infty$.
Using a local central limit theorem (CLT) for lattice random variables
(Theorem~5.4 of \cite{HallCLT}) we obtain $\prob \{\sum_{i=1}^n X_i
= mk  \} \geq\const_2 n^{-1/2}$ for some $\const_2 = \const_2(k)
>0$. A standard Chernoff bound yields
$\prob \{\sum_{i=1}^l X_i \geq\gamma l  \} \leq\exp \{ -
l \gamma\const_3  \}$, for some $\const_3(k) \in(0,1]$, provided
$\gamma> 2 \alpha k$. Thus, we obtain
%
%
\begin{equation}
\prob \Biggl\{\sum_{i=1}^l \deg(i) \geq
\gamma l \Biggr\} \leq n^{1/2} \exp \{ - l \gamma\const_3 \}
/ \const_2,
\end{equation}
provided $\gamma> 2 \alpha k$. We use $\gamma= \const' (1+ \log
(1/\eps
))$ with $\const' = 2 \alpha k/\const_3$. Take $l = \eps n$. The number
of different subsets of variable nodes of size $l$ is
${n\choose l} \leq(e/\eps)^l$ for $n \geq N_1$ for some $N_1 = N_1
(\eps) < \infty$. A union bound gives the desired result.
\end{pf}

%
\begin{lemma}
\label{lemma:even_subgraphs_characterization}
Fix $k\ge3$, and $\delta>0$ so that for any $\alpha\in
[2/k+\delta,1]$, we have $\theta_2(\alpha, k) \leq1-\delta$. Then,
for any $\delta'>0$, there exists $\eps=
\eps(\delta, k) >0$, $\const= \const(\delta, \delta', k) < \infty
$ and
$N_0 = N_0(\delta, \delta', k)< \infty$ such
that the following occurs for every $n > N_0$. Consider a graph $G=(F,V,E)$
drawn uniformly at random from $\C(n, k, \alpha n;0,n)$. With
probability at
least $1-\delta'$, both the following hold:
\begin{longlist}[(ii)]
\item[(i)] Consider minimal even
subgraphs consisting of only degree $2$ variable nodes. There are no
more than
$\const$ such subgraphs. Each of them is a simple cycle consisting of
no more
than $\const$ variable nodes.

\item[(ii)] Every even subgraph of $G$ with
less than $\eps n$ variable nodes contains only degree $2$ variable nodes.
\end{longlist}
\end{lemma}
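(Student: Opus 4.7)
My plan is to prove $(i)$ and $(ii)$ by two independent arguments: $(i)$ reduces minimal even subgraphs consisting of only degree-$2$ variables to simple cycles in an auxiliary subcritical multigraph, and $(ii)$ is a first-moment estimate in the configuration model that extends this analysis to general even subgraphs by tracking an ``excess'' parameter.

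\textbf{Part $(i)$.} Each degree-$2$ variable $v\in V^{(2)}$ has exactly two check-neighbors, so I build a multigraph $H$ on vertex set $F$ with one edge per degree-$2$ variable joining its two check-neighbors. An even subgraph of $G$ using only degree-$2$ variables then corresponds bijectively to a sub-multigraph of $H$ in which every vertex has even degree, i.e.\ an element of the cycle space of $H$; the minimal non-empty such elements are exactly the simple cycles of $H$. The hypothesis $\theta_2\le 1-\delta$ says that $H$ is a subcritical random multigraph, so standard Poisson-approximation results for short cycles in the configuration model (see e.g.\ \cite{Bollo,Wormaldshortcycles81}) give that the number of cycles of length $\ell$ is asymptotically $\Po(\theta_2^{\ell}/(2\ell))$, jointly independent over $\ell$. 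Choosing a cutoff $L=L(\delta,\delta')$ with $\sum_{\ell>L}\theta_2^{\ell}/\ell<\delta'/4$ and invoking Markov's inequality on the (bounded-expectation) total count of cycles of length $\le L$ yields, with probability at least $1-\delta'/2$, at most a constant number of simple cycles, each of length at most $L$.

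\textbf{Part $(ii)$.} Let $V'\subseteq V$ induce an even subgraph with $|V'|=s\le\eps n$, degree sum $D\equiv\sum_{v\in V'}\deg(v)$, and $f\equiv|\partial V'|$. Each check in $\partial V'$ has even degree $\ge 2$ in the subgraph, so $D$ is even and $f\le D/2$; set the excess $\xi\equiv D/2-s\ge 0$. Since all degrees are $\ge 2$, the subgraph contains a variable of degree $\ge 3$ if and only if $\xi\ge 1$ (parity of $D$ rules out $\xi=1/2$). An annealed first-moment computation in $\C(n,k,\alpha n;0,n)$ gives, for each $s,\xi$, a bound of the form
\begin{align*}
\E\bigl[\#\{V':\,|V'|=s,\ V'\text{ even},\ \xi(V')=\xi\}\bigr]\ \le\ \frac{\const}{s}\,\theta_2^{\,s}\,\Bigl(\frac{\const}{n}\Bigr)^{\xi},
\end{align*}
where the factor $\theta_2^{\,s}$ is the same subcritical suppression that drives part $(i)$, and each unit of excess forces an extra half-edge to land on a check already adjacent to $V'$, costing an additional factor $O(1/n)$. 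Summing first geometrically over $\xi\ge 1$ (which yields $O(1/n)$), then over $s\le\eps n$ (which yields $O(1)$ using $\theta_2<1$), bounds the total expected number of ``bad'' even subgraphs by $O(1/n)=o(1)$, and Markov's inequality concludes $(ii)$ with probability at least $1-\delta'/2$.

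\textbf{Main obstacle.} The heart of the proof is establishing the $(\const/n)^{\xi}$ factor in the first-moment bound of $(ii)$, cleanly decoupled from the $\theta_2^{\,s}$ subcriticality factor. This requires decomposing by the check-side degree profile $(2e_a)_{a\in F'}$ with $\sum_{a\in F'}2e_a=D$, using the asymptotic variable-degree law $\Po_{\ge 2}(\eta_*)$ (with $\eta_*$ bounded thanks to $\alpha\le 1$), and combining these with the configuration-model matching denominator. Lemma~\ref{lemma:nbrhood_bound} lets me pre-emptively restrict attention to $V'$ with $D=O(\eps\log(1/\eps)n)$, which tames the large-deviation tails in the combinatorial estimate.
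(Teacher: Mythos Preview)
Your proposal is correct and follows essentially the same approach as the paper: part $(i)$ via the Poisson approximation for short cycles in the subcritical auxiliary graph on checks, and part $(ii)$ via a first-moment bound indexed by size $s$ and excess $\xi$ (the paper's $l$ and $j$), yielding exactly your $\theta_2^{\,s}(\const/n)^{\xi}$ structure after invoking Lemma~\ref{lemma:nbrhood_bound} to truncate large $\xi$. The paper's implementation of the first-moment bound proceeds through generating-function coefficient extraction (terms $\term_1$--$\term_4$) with local CLT and saddle-point estimates rather than your heuristic ``each excess half-edge costs $O(1/n)$,'' but the skeleton and the final inequality are the same.
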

\begin{pf}
\emph{Part} (i):
Reveal the $m k$ edges of $G$ sequentially. The expected number of
nodes in
$V^{(2)}$, conditioned on the first $t$ edges revealed forms a
martingale with
differences bounded by $2$. Then, from Azuma--Hoeffding inequality
\cite{PanconesiBook}, we deduce that $|V^{(2)}|$ concentrates around
its expectation:
\[
\prob \bigl( \bigl| \bigl|V^{(2)}\bigr| - \E\bigl[\bigl|V^{(2)}\bigr|\bigr] \bigr| \geq\zeta
\sqrt {n} \bigr) \leq\exp\bigl( - \wh{\const}_1 \zeta^2
\bigr)
\]
for all $\zeta>0$, where $\wh{\const}_1 = \wh{\const}_1(k) > 0$.
The expectation can be computed for instance using the
Poisson representation as in the proof of Lemma~\ref
{lemma:nbrhood_bound}, yielding $|\E|V^{(2)}| -n
\eta_*^2/(2(e^{\eta_*}-1-\eta_*))| \le n^{3/4}$, for all $\alpha<1$,
$n\ge\wh{N}_0(k)$.
We deduce that for any $\delta_1 = \delta_1(\delta,k)>0$, we have
%
%
\begin{equation}
\prob \bigl(\bigl | \bigl|V^{(2)}\bigr|/n - \eta_*^2/\bigl(2
\bigl(e^{\eta_*}-1-\eta_*\bigr)\bigr)\bigr | \geq\delta_1 n \bigr)
\leq1/n
\end{equation}
for all $n>\wh{N}_1$, where $\wh{N}_1 = \wh{N}_1(\delta, k)< \infty$.

Now, condition on $|V^{(2)}|=n^{(2)}$, for some $n^{(2)}$ such that
%
%
\begin{equation}
\bigl| n^{(2)}/n - \eta_*^2/\bigl(2\bigl(e^{\eta_*}-1-
\eta_*\bigr)\bigr) \bigr| < \delta_1 n. \label{eq:n2_constraint}
\end{equation}
Note that by choosing $\delta_1$ small enough, we can ensure $n^{(2)} =
\Omega(n)$.
We are now interested in the check degree distribution $R^{(2)}$ in
$G_2$. Reveal the $2 n^{(2)}$ edges of $G_2$ sequentially. Consider $l
\in\{0,1, \ldots, k\}$. The expected number of check nodes with degree
$l$ in $G_{2}$, conditioned on the edges revealed thus far, forms a
martingale with differences bounded by $2$. Let $Z \sim\mbox
{Binom}(k,2n^{(2)}/(mk))$. We have $\E[R^{(2)}_l] = \prob(Z=l) +
O(1/n)$. Arguing as above for each $l \leq k$, we finally obtain
%
%
\begin{equation}
\prob \Biggl( \sum_{l=0}^k \bigl|
R^{(2)}_l - \prob(Z=l) \bigr| \geq \delta_1 n
\Biggr) \leq1/n
\end{equation}
for all $n>\wh{N}_2$, where $\wh{N}_2 = \wh{N}_2(\delta, k)< \infty$.

Now condition on both $n^{(2)}$ satisfying equation~\eqref{eq:n2_constraint}
and $R^{(2)}$ satisfying
\[
\sum_{l=0}^k\bigl | R^{(2)}_l
- \prob(Z=l)\bigr | < \delta_1.
\]
Let $\zeta$ be the branching factor of $G_2$ (i.e., of a graph that is
uniformly random conditional on the degree profile $R^{(2)}$).
Under the above conditions on $n^{(2)}$ and $R^{(2)}$, a straightforward
calculation implies that $\zeta$ is bounded above by $\theta_2 +
\delta
_2$, for some
$\delta_2 = \delta_2(\delta_1,k)$ such that $\delta_2 \rightarrow
0$ as
$\delta_1 \rightarrow0$. Thus, by selecting appropriately small
$\delta
_1$, we
can ensure that $\delta_2 \leq\delta/2$, leading to a bound of
$1-\delta/2$ on
the branching factor for all $n^{(2)}$, $R^{(2)}$ within the range specified
above.

Now we condition also on the degree sequence, that is, the sequence of check
node degrees in $G_2$. The factor graph $G_2$ can be naturally
associated to a graph, by replacing each variable node by an edge and
each check node by a vertex. This graph is distributed according to
the standard (nonbipartite) configuration model. Using \cite
{Wormaldshortcycles81},
Theorem~4, we obtain that the number of cycles of
length $l \in\{1, 2, \ldots, l_0\}$ for a constant $l_0$ are
asymptotically independent Poisson random variables, with
parameters\footnote{The model in \cite{Wormaldshortcycles81}
is slightly different from the configuration model for its treatment
of self-loops and double edges. However, the results and proof can be
adapted to the configuration model.}
\[
\lambda_l = \zeta^l /(2l)\qquad \mbox{for } \zeta= \Biggl[
\sum_{d=1}^k d(d-1)R^{(2)}(d)
\Biggr] \bigg/ \Biggl[\sum_{d=1}^k d
R^{(2)}(d) \Biggr].
\]
More precisely, for any constants $c_1, c_2, \ldots, c_{l_0} \in\N
\cup\{0\}$, we have
\[
\prob\bigl[\Ev_n(\uc)\bigr] = \prod_{l=1}^{l_0}
\prob\bigl(\Po(\lambda_l) = c_l\bigr) + o(1) ,
\]
where $\Ev_n(\uc)$ is the event that there are $c_l$ cycles of length
$l$ for $l \in\{1, 2, \ldots, l_0\}$ with all cycles disjoint from each
other, and $\uc=(c_l)_{l=1}^{l_0}$. Choosing $l_0$ large enough, we have
\[
\sum_{\uc\in\cN} \prob\bigl[\Ev_n(\uc)\bigr]
\geq1-\exp \Biggl(-\sum_{l=1}^\infty
\lambda_l \Biggr) - \delta/4 = 1-(1-\zeta)^{-1/2} -
\delta'/4,
\]
where $\cN= \{\uc\dvtx\uc\neq\underline{0}, c_l \leq l_0 \mbox{
for }
l\in\{1,2, \ldots, l_0\}\}$, for $n$ large enough.

On the other hand, we know that the probability of having no cycles in
$G_2$ is
$ (1-\zeta)^{-1/2} + o(1)$ under our assumption of $\zeta\leq1-
\delta/2$.
The argument for this was already outlined in the proof of Lemma~\ref
{lemma:peelability_implies_good}, cf. Section~\ref{sec:Peel12}:
the Poisson approximation of \cite{Wormaldshortcycles81} is used to
estimate the probability of having no cycles of length smaller than
$M$, while a simple first moment bound is sufficient for cycles of
length $M$ or larger.
Thus, with probability at least $1-\delta'/3$, we have no more than
$l_0^2$ cycles, disjoint and each of length no more than $l_0$.
Choosing $\const= l_0^2$, we obtain part (i) with probability at
least $1-\delta'/2$ for large enough $n$.

\emph{Part} (ii):
Let $m \equiv\alpha n$. Let $\N(G;l,j)$ be the number of even
subgraphs of $G$ induced by $l$ variable nodes such that the sum of
the degrees of the $l$ variable nodes is $2(l+j)$. We are interested
in $l \leq\eps n$ (we will choose $\eps$ later) and $j>0$. In
particular, we want to show that, for any $\delta'>0$,
%
%
\begin{equation}
\prob \Biggl\{\sum_{l=1}^{\eps n} \sum
_{j=1}^{mk/2} \N(G;l,j)>0 \Biggr\} \leq
\delta'/2.
\end{equation}
This immediately implies the desired result from linearity of
expectation and Markov inequality.

From Lemma~\ref{lemma:nbrhood_bound}, we deduce that
%
%
\begin{equation}
\prob \Biggl\{\sum_{l=1}^{\eps n} \sum
_{j=\eps' n}^{mk/2} \N(G;l,j)>0 \Biggr\} \leq1/n,
\end{equation}
for some $\eps'(\eps,k)$ with the property that $\eps' \rightarrow0$
as $\eps\rightarrow0$. Thus, we only need to establish
%
%
\begin{equation}
\sum_{l=1}^{\eps n} \sum
_{j=1}^{\eps' n} \E\bigl[\N(G;l,j)\bigr] \leq
\delta'/3, \label{eq:sufficient_bound}
\end{equation}
for all $n$ large enough, since the claim then follows from Markov inequality.

A straightforward calculation \cite{RiUBOOK,MM09} yields
\[
\E\bigl[\N(G;l,j)\bigr] = \frac{ {n\choose l} \term_1 \term_2 \term_3 }{
 {mk\choose2(l+j)} \term_4 },
 \]
where
\begin{eqnarray*}
\term_1 &=& \operatorname{coeff}\bigl[ \bigl(e^y -1-y
\bigr)^l ; y^{2(l+j)} \bigr],
\\
\term_2 &=& \operatorname{coeff}\bigl[ \bigl(e^y -1-y
\bigr)^{n-l} ; y^{mk -2(l+j)} \bigr],
\\
\term_3 &= &\operatorname{coeff}\biggl[ \biggl( \frac{(1+y)^k + (1-y)^k
}{2}
\biggr)^m ; y^{2(l+j)} \biggr],
\\
\term_4 &=& \operatorname{coeff}\bigl[ \bigl(e^y-1-y
\bigr)^n ; y^{mk} \bigr].
\end{eqnarray*}

It is useful to recall the following probabilistic representation of
combinatorial coefficients.
%

\begin{fact}\label{fact:coeff_poisson}
For any $\eta> 0$, we have
%
%
\begin{equation}
{\operatorname{coeff}\bigl[ \bigl(e^y -1 -y\bigr)^N;
y^M \bigr]} = \eta^{-M}\bigl(e^\eta-1 -\eta
\bigr)^N \prob \Biggl[\sum_{i=1}^N
X_i =M \Biggr],
\end{equation}
where $X_i \sim\Po_{\geq2} (\eta)$ are i.i.d. for $i\in\{1,\ldots,
M\}$.
\end{fact}

Consider $\term_4$. By definition, cf. equation~\eqref{eq:eta_star_defn},
$\eta_*$ is such that for
$X_i \sim\Po_{\geq2} (\eta_*)$ we have $\E[X_i] = \alpha k =
mk/n$. Moreover,
$\alpha\in[2/k+\delta,1]$ implies $\eta_* \in[\const_1, \const_2]$
for some $\const_1= \const_1(\delta, k) > 0$ and $\const_2= \const_2(
k) < \infty$. From $\eta_* \leq\const_2$ and using a local CLT for
lattice random variables \cite{HallCLT}, it follows that $\prob[\sum_{i=1}^n X_i = mk] \geq\const_3 /\sqrt{n}$ for some $\const_3 =
\const
_3(\delta,k)>0$. Thus, using Fact~\ref{fact:coeff_poisson}, we have
%
%
\begin{equation}
\term_4 \geq\eta_*^{-mk}\bigl(e^{\eta_*} -1 -\eta_*
\bigr)^n \const_3 n^{-1/2}. \label{eq:term4_bound}
\end{equation}

Now, consider $\term_2$. Again use $\eta= \eta_*$ in Fact~\ref
{fact:coeff_poisson}. From $\eta_* \geq\const_1$ and again using a
local CLT for lattice r.v.'s \cite{HallCLT}, we obtain $\prob[\sum_{i=1}^{n-l} X_i = mk-2(l+j)] \leq\const_4 /2\sqrt{n-l} \leq\const
_4/\sqrt{n}$ for some $\const_4= \const_4(\delta, k) < \infty$, since
$l \leq\eps n$. Thus, Fact~\ref{fact:coeff_poisson} yields
%
%
\begin{equation}
\term_2 \leq\eta_*^{-mk + 2(l+j)}\bigl(e^{\eta_*} -1 -\eta_*
\bigr)^{n-l} \const _4 n^{-1/2}. \label{eq:term2_bound}
\end{equation}

Fact~\ref{fact:coeff_poisson} yields that $\term_1$ can be bounded
above as
%
%
\begin{equation}
\term_1 \leq\eta^{-2(l+j)}\bigl(e^{\eta} -1 -\eta
\bigr)^{l} \label{eq:term1_bound}
\end{equation}
for any $\eta>0$. We will choose a suitable $\eta$ later.

Finally, for $\term_3$, similar to Fact~\ref{fact:coeff_poisson}, we
can deduce that
\[
\term_3 \leq \biggl( \frac{(1+\xi)^k +(1-\xi)^k }{2} \biggr)^m \xi
^{-2(l+j)}
\]
for all $\xi> 0$.
Now, it is easy to check that
\[
\frac{(1+\xi)^k +(1-\xi)^k }{2} \leq\exp \left\{\pmatrix{k
\cr
2} \xi^2 \right\},
\]
by comparing coefficients in the series expansions of both sides.
Choosing $\xi= \sqrt{(l +j)/(m{k\choose2})}$, we obtain
%
%
\begin{equation}
\term_3 \leq \biggl( \frac{em {k\choose2} }{l+j} \biggr)^{l+j}.
\label{eq:term3_bound}
\end{equation}
Finally, we have
%
%
\begin{equation}
\pmatrix{n
\cr
l} \leq\frac{n^l}{l!}, \qquad \pmatrix{mk
\cr
2(l+j)} \geq
\frac{(mk-2(l+j))^{2(l+j)}}{(2(l+j))!}. \label{eq:binom_coeff_bound}
\end{equation}
Putting together equations~\eqref{eq:term4_bound}, \eqref{eq:term2_bound},
\eqref{eq:term1_bound}, \eqref{eq:term3_bound} and \eqref
{eq:binom_coeff_bound}, we obtain
\begin{eqnarray*}
\E\bigl[\N(G;l,j)\bigr]  &\leq&\const_6 \cdot \frac{(e^{\eta} -1 - \eta)^l}{\eta^{2(l+j)}}
\cdot \frac{\eta_*^{2(l+j)}}{(e^{\eta_*} -1 - \eta_*)^l} \\
&&{}\times \biggl( \frac{e (k-1) (1 + C_5((l+j)/n))}{2(l+j)k} \biggr)^{l+j}
\cdot \frac{(2(l+j))!}{l! \alpha^l m^j},
\end{eqnarray*}
for some $\const_6 = \const_6(k, \delta) < \infty$.
Now, $N! \geq\const_7 \sqrt{N} (N/e)^N$ for all $N \in\mathbb{N}$,
for some $\const_7 >0$. Using this with $N=l+j$, we obtain
\begin{eqnarray*}
\biggl( \frac{e}{l+j} \biggr)^{l+j} \cdot\frac{(2(l+j))!}{l!} &\leq&
\frac{\sqrt{l+j}}{C_7} \cdot\frac{(2(l+j))!}{l!(l+j)! } \\
&\leq& \frac{\sqrt{l+j}}{C_7 l^j} \cdot
\pmatrix{2(l+j)
\cr
(l+j)} \leq \frac{ \const_8 2^{2(l+j)}}{l^j},
\end{eqnarray*}
for some $\const_8 < \infty$. Plugging back, we get
\[
\E\bigl[\N(G;l,j)\bigr]  \leq\const_9 (\term_5)^l
(\term_6)^j,
\]
where
\begin{eqnarray*}
\term_5 &=& 2 \theta_2 \frac{(e^{\eta} -1 - \eta)}{\eta^2} \bigl( 1 +
C_5\bigl((l+j)/n\bigr) \bigr),
\\
\term_6 &=& \frac{4(k-1) \eta_*^2}{m l \eta^2}.
\end{eqnarray*}
Without loss of generality, assume $\delta\leq0.1$. Now, we choose
$\eps= \eps(\delta, k)>0$
such that $\eps+ \eps' \leq\delta/(10\const_{5})$. We
choose $\eta= \eta(k)>0$ such that $(e^{\eta} -1 - \eta)\eta^{-2}
\leq
(1+\delta/10)/2$ [note that $(e^{\eta} -1 - \eta)\eta^{-2}
\rightarrow
1/2$ as
$\eta\rightarrow0$]. This leads to $\term_5 \leq1- \delta/2$ for all
$l \leq
\eps n$ and $j \leq\eps' n$, when we use $\theta_2 \leq1- \delta$. Also,
$\term_6 \leq\const_{10}/n$ for all $l$, $j$, for some $\const_{10}=
\const_{10}(k) < \infty$. Thus,
\[
\E\bigl[\N(G;l,j)\bigr] \leq\const_9 (1-\delta/2)^l
\biggl(\frac{\const
_{10}}{n} \biggr)^j.
\]
Summing over $j$ and $l$, we obtain
%
%
\begin{equation}
\sum_{l=1}^{\eps n} \sum
_{j=1}^{\eps' n} \E\bigl[\N(G;l,j)\bigr] \leq
\frac{\const_{11}}{n}
\end{equation}
for some $\const_{11}= \const_{11}(k, \delta) < \infty$. This implies
equation~\eqref{eq:sufficient_bound} for large enough $n$ as required.
\end{pf}

\section{Proof of Lemma \texorpdfstring{\lowercase{\protect\ref{lemma:core_sparse_basis}}}{3.8}:
A sparse basis for low-weight core solutions}
\label{app:core_basis}

For each $\uxC\in\Lcore(\ve n)$, we need to find
a sparse solution $\ux\in\cS_1$ that matches $\uxC$ on the
core. From Lemma~\ref{lemma:core_few_low_weight},
we know that w.h.p., $\uxC$ consists of all zeros except for a small
subset of variables. Indeed, we know from Lemma~\ref
{lemma:even_subgraphs_characterization}
that these variables correspond to a cycle of degree-$2$ variable
nodes. Although this is not used in the following, we shall
nevertheless refer to the set of variable nodes corresponding to an
element of $\Lcore(\ve n)$ as a cycle. Denote by $L_1$ the cycle
corresponding to $\uxC$. Recall that the noncore
$\GNC= (\FNC, \VNC, \ENC)$ is the subgraph of $G$ induced by $\FNC
= F
\setminus\FC$ and $\VNC= V \setminus\VC$. Suppose we set all
noncore variables to $0$. The set of violated checks consists of those
checks in $\FNC$ that have an odd number of neighbors in $L_1$.
We show that w.h.p., each such check can be satisfied by changing a
small number of noncore variables in its neighborhood to 1. To show
that this is possible, we make use of the belief propagation
algorithm described in Section~\ref{sec:BP_DE}.

Our strategy is roughly the following. Consider a violated check $a$.
We wish to set an odd number of its noncore neighboring variables to
$1$. But then, this may cause further checks to be violated, and so on.
A key fact comes to our rescue. If check node $a$ receives an incoming
$*$ message in round $T$, then we can find a subset of noncore variable
nodes in a $T$-neighborhood of $a$ such that if we set those variables
to $1$, check $a$ will be satisfied (with an odd number of neighboring
ones in the noncore) without causing any new violations. We do this for
each violated check. Now w.h.p., for suitable $T$, all violated checks
will receive at least one incoming $*$ by time $T$ (note that each
noncore check receives an incoming $*$ at the BP fixed point). Thus, we
can satisfy them all by setting a small number of noncore variables to $1$.

%
\begin{lemma}\label{lemma:NC_uniform_conditioned_on_GC_ECNC}
Consider $G$ drawn uniformly from $\G(n,k,m)$. Denote by $F^{(l)}
\subseteq\FNC$ the checks in the noncore having degree $l$ with
respect to the noncore, for $l\in\{1,2, \ldots, k\}$. Condition on the
core $\GC$, and $F^{(l)}$ for $l\in\{1,2, \ldots, k\}$. 
%
\begin{itemize}
\item
Then $\ECNC$ and $\GNC$ are independent of each other.
Here $\ECNC$ denotes the edges between core variables $\VC$ and noncore
checks $\FNC$.
\item
The edges in $\ECNC$ are distributed as follows: For each $a \in\FNC$,
if $a \in F^{(l)}$, its neighborhood in $\GC$ is a uniformly random
subset of $\VC$ of size $k-l$, independent of the others.
\item
Clearly, $(\GC, (F^{(l)})_{l=1}^k)$ uniquely determine the parameters
$(\nNC, \RNC, \mNC)$ of the noncore. The noncore $\GNC$ is drawn
uniformly at random from $\D(\nNC, \RNC,  \mNC)$ conditioned on being
peelable, that is, $\GNC$ is drawn uniformly at random from $\D(\nNC,
\RNC, \mNC)\cap\cP$.
\end{itemize}
\end{lemma}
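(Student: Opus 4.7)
The plan is a direct counting argument that exploits the uniform distribution of $G$ on $\G(n,k,m)$. The conceptual crux is that, once $\GC$ and $(F^{(l)})_{l=1}^k$ are fixed, specifying $G$ reduces to independently specifying the ``core-to-non-core'' edges $\ECNC$ and the non-core subgraph $\GNC$, subject only to a peelability constraint that involves $\GNC$ alone.

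First I would establish that, conditional on $(\GC, (F^{(l)})_l)$, the event $\{\GC \text{ is the } 2\text{-core of } G\}$ is equivalent to $\{\GNC \in \cP\}$. Since $\GC$ is check-induced and every core variable has degree at least $2$ in $\GC$, the core is a stopping set in $G$; in particular, core variables (and the $\ECNC$ edges incident to them) are never removed during peeling of $G$. Consequently, for any non-core check $a \in F^{(l)}$ its $k-l$ core-facing edges are inert for peeling purposes, and the peeling dynamics of $G$ restricted to $\VNC \cup \FNC$ coincide step-by-step with the peeling dynamics of $\GNC$. By Remark~\ref{rem:peeling_leaves_2core}, peeling of $G$ therefore leaves the residual graph $\GC$ if and only if $\GNC$ is annihilated, i.e.\ $\GNC \in \cP$.

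Next I would carry out the counting. Given $(\GC, (F^{(l)})_l)$, the sets $\VC$, $\FNC$, $\VNC$ are determined, and $G$ is specified by the remaining edges $\ECNC$ and $\ENC$. Because $\VC$ and $\VNC$ are disjoint, for each $a \in F^{(l)}$ one may independently choose $k-l$ distinct neighbors in $\VC$ (contributing to $\ECNC$) and $l$ distinct neighbors in $\VNC$ (contributing to $\ENC$, and hence to $\GNC$); the peelability constraint restricts only the $\ENC$ choices. The number of valid completions therefore factorizes as
\begin{equation*}
\Big(\prod_{l=1}^{k} \binom{|\VC|}{k-l}^{|F^{(l)}|}\Big) \cdot \big|\D(\nNC, \RNC, \mNC) \cap \cP\big|,
\end{equation*}
where $(\nNC, \RNC, \mNC)$ are determined by the conditioning. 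Since $G$ is uniform on $\G(n,k,m)$, every completion has the same conditional probability, which simultaneously yields all three claims: the conditional independence of $\ECNC$ and $\GNC$, the product-of-uniform-subsets law for $\ECNC$ across non-core checks, and the uniform law of $\GNC$ on $\D(\nNC, \RNC, \mNC) \cap \cP$.

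The only subtle point I anticipate is the rigorous justification of the equivalence ``$\GC$ is the $2$-core of $G$'' $\iff$ ``$\GNC$ is peelable'', which underpins the factorization; everything else is bookkeeping. This equivalence in turn rests on the check-induced structure of $\GC$ (which precludes $\FC$--$\VNC$ edges, so that non-core peeling does not interact with the core), together with Remark~\ref{rem:peeling_leaves_2core}.
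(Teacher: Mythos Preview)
Your proposal is correct and follows essentially the same bijection/counting approach as the paper's proof, which is a two-line argument asserting that, given $(\GC,(F^{(l)})_l)$, graphs $G$ are in bijection with pairs $(\ECNC,\GNC)$ where $\GNC\in\D(\nNC,\RNC,\mNC)\cap\cP$ and $\ECNC$ consists of a $(k{-}l)$-subset of $\VC$ for each $a\in F^{(l)}$. Your write-up is more careful: you explicitly justify the equivalence ``$\GC$ is the $2$-core of $G$'' $\iff$ ``$\GNC\in\cP$'' via the check-inducedness of $\GC$ (no $\FC$--$\VNC$ edges) and the resulting decoupling of the peeling dynamics, which the paper leaves implicit.
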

\begin{pf}
Each $G\in\G(n,k,m)$ with the given $(\GC, (F^{(l)})_{l=1}^k)$ has a
$\GNC$ corresponding to a unique element of $\D(\nNC, \RNC, \mNC
)\cap
\cP$ and $\ECNC$ corresponding to a subset of $\VC$ of size $k-l$ for
each $a \in F^{(l)}$, for $l \in\{ 1, \ldots, k\}$. The converse is
also true. This yields the result.
\end{pf}

\begin{pf*}{Proof of Lemma~\ref{lemma:core_sparse_basis}}
Take any sequence $(s_n)_{n \geq1} $ such that\break 
$\lim_{n\rightarrow\infty} s_n = \infty$ and $s_n \leq\ve n$.
If points (i), (ii) and (iii) in Lemma~\ref
{lemma:core_few_low_weight} hold, let
$V_{\mathrm{ cycle}}$ denote the
union of the supports of the solutions in $\Lcore(s_n)$.
Let
\begin{eqnarray*}
\Ev_1 & \equiv&\Ev_{1,a} \cap\Ev_{1,b}\cap
\Ev_{1,c},
\\
\Ev_{1,a}& \equiv&\bigl\{ \mbox{Points (i), (ii) and (iii) in
Lemma~\ref{lemma:core_few_low_weight} hold} \bigr\},
\\
\Ev_{1,b}& \equiv&\bigl\{ \mbox{$\bigl|F^{(l)}\bigr| \geq n/
\const_2$ for all $l \in\{1, 2, \ldots, k\}$} \bigr\},
\\
\Ev_{1,c}& \equiv&\{ \mbox{No variable in $V_{\mathrm{ cycle}}$ has
degree exceeding $\log s_n$} \}.
\end{eqnarray*}
(Note that these events are implicitly indexed by $n$.)
We argue that $\Ev_1$ holds w.h.p. for an appropriate choice of
$\const_2 =\const_2(k,\alpha)< \infty$. Indeed, Lemma~\ref
{lemma:core_few_low_weight} implies that $\Ev_{1,a}$ holds w.h.p.
Lemma~\ref{lemma:FP_concentration_DE} implies that $\Ev_{1,b}$ holds
w.h.p. for sufficiently large $\const_2$.
Finally, Lemma~\ref{lemma:NC_uniform_conditioned_on_GC_ECNC} and a
subexponential tail bound on the Poisson distribution ensure $\Ev
_{1,c}$ holds w.h.p.

Assume that $\Ev_1$ holds.
Let sets of variable nodes on the disjoint cycles corresponding to
elements of $\Lcore(\ve n)$ be denoted by $L_i$ for $i\in\{1, 2,
\ldots
, |\Lcore(\ve n)|\}$. Consider a cycle $L_i$. Denote by $a_{ij}$, $j
\in\{1, 2, \ldots, Z_i\}$, the checks in the noncore having an odd
number of neighbors in $L_i$. (Thus, $Z_i$ is the number of such
checks.) Call these \emph{marked} checks. Given $\Ev_1$, we know that
$Z_i \leq s_n \log s_n$, and that there are no more than $s_n^2 \log
s_n$ marked checks in total:
\[
\sum_{i=1}^{|\Lcore(\ve n)|} Z_i \leq
s_n^2 \log s_n.
\]

Define
\[
\Ev_2 \equiv\bigl\{ \mbox{No more than $n /s_n^3$
messages change after $T_n$ iterations of \BPzero} \bigr\}.
\]
By Lemma~\ref{lemma:BP_few_late_changes}, the event $\Ev_2$ holds
w.h.p.
provided $\lim_{n \rightarrow\infty} T_n = \infty$ and $s_n$ grows
sufficiently slowly with $n$ [for the given choice of $(T_n)_{n\geq1} $].

Let
\begin{eqnarray*}
&&B_{ij} \equiv\{ \mbox{Not all messages incoming to check
$a_{ij}$ have converged}\\
&&\hspace*{99pt} \mbox{to their fixed-point value in
$T_n$ iterations} \}.
\end{eqnarray*}

We wish to show that
%
%
\begin{equation}
\label{eq:no_bad_marked_checks} \bigcap_{i,j} B_{ij}^{\mathrm c}
\end{equation}
holds w.h.p.
We have
\[
\prob \biggl(\bigcup_{i,j} B_{ij}
\biggr) \leq \E_{(\GC, \ECNC)} \biggl[ \E \biggl[\ind[\Ev_1,
\Ev_2]\sum_{i,j} \ind[B_{ij}]
\Big| \GC, \ECNC \biggr] \biggr] + \prob\bigl[\Ev _1^{\mathrm
c}
\bigr]+ \prob\bigl[\Ev_2^{\mathrm c}\bigr].
\]

Given $\Ev_2$, we know that the number of checks for which an incoming
message changes after $T_n$ is no more than $n/s_n^3$. Suppose $a_{ij}
\in F^{(l)}$ is a marked check. Then we have
\[
\E \bigl[ \ind[\Ev_1, \Ev_2] \ind[B_{ij}] |
\GC, \ECNC \bigr] \leq\frac{n}{s_n^3 |F^{(l)}|} \leq\frac{1}{\const_2 s_n^3},
\]
since all check nodes in $F^{(l)}$ are equivalent with respect to the
noncore, from Lemma~\ref{lemma:NC_uniform_conditioned_on_GC_ECNC}.
We already know that under $\Ev_1$, the number of marked checks is
bounded by $s_n^2 \log s_n$. This leads to
\[
\prob \biggl(\bigcup_{ij}B_{ij} \biggr)
\leq\frac{\log s_n}{\const_2
s_n} + \prob\bigl[\Ev_1^{\mathrm c}\bigr]+
\prob\bigl[\Ev_2^{\mathrm c}\bigr] \stackrel{n \rightarrow \infty}
{\longrightarrow} 0,
\]
implying equation~\eqref{eq:no_bad_marked_checks} holds w.h.p.

Condition on $\GC$ and $\ECNC$. This identifies the marked
checks. Lemma~\ref{lemma:NC_uniform_conditioned_on_GC_ECNC} guarantees
us that all checks in $F^{(l)}$ are equivalent with respect to
$\GNC$. Suppose $\Ev_1$ holds. Define a ball of radius $t$ around a
check node as consisting of the neighboring variable nodes, and the
balls of radius $t$ around each of those variables. Similar to the
proof of
Lemma~\ref{lemma:peelability_implies_good}(iii), we can show that
%
%
\begin{equation}
\bigl|\Ball_{\GNC}(a_{ij}, T_n)\bigr| \leq
\const_3^{T_n} \label{eq:max_ball_size}
\end{equation}
holds with probability at least $1 - \const_4 \exp(- 2^{T_n}/\const_4)$,
for some $\const_3 = \const_3(\alpha, k) < \infty$ and
$\const_4 = \const_4(\alpha, k)< \infty$, for all marked checks
$a_{ij}$. Thus, the probability that this bound on ball size holds
simultaneously for all marked checks, by union bound, is at least $1 -
s_n^2 \log s_n \const_4 \exp(- 2^{T_n}/\const_4) \rightarrow1$ as $n
\rightarrow1$ provided $T_n \rightarrow\infty$ and $s_n$ grows
sufficiently slowly with $n$.

Suppose equation~\eqref{eq:no_bad_marked_checks} and $\Ev_1$ hold. Consider
any marked check $a_{ij}$ adjacent to $v \in L_i$ for any $L_i$. It
receives at least one incoming $*$ message at the \BPzero\ fixed point
and since $B_{ij}=0$, this is also true after $T_n$ iterations of
\BPzero. Hence, there is a subset of variables $V^{(ij)} \subseteq
\Ball
_{\GNC}(a_{ij},T_n)$, such that setting variables in $V^{(ij)}$ to $1$
satisfies $a_{ij}$ without violating any other checks. Define
\[
V^{(i)} \equiv\bigl\{v\dvtx v \mbox{ occurs an odd number of times in the
sets } \bigl(V^{(ij)}\bigr)_{j=1}^{Z_i} \bigr\}.
\]
It is not hard to verify that the vector $\ux_{{\mathrm c},i}$
with variables in $L_i \cup V^{(i)}$ set to one and all other variables
set to zero, is a member of $\cS_1$. If equation~\eqref{eq:max_ball_size}
holds for all marked checks, then we deduce that $|V^{(i)}| \leq\const
_3^{T_n} s_n \log s_n \leq c_n$ for $T_n$ and $s_n$ growing
sufficiently slowly with $n$. Thus, $\ux_{{\mathrm c},i} \in\cS_1$ is
$c_n$-sparse assuming these events, each of which occurs w.h.p. We
repeat this construction for every $L_i$.
\end{pf*}

\begin{appendix}\label{app}
\section{Proof of Lemma \texorpdfstring{\lowercase{\protect\ref{lemmabasic}}}{3.4}}
\label{app:Sparse_and_basis}

%
\begin{lemma}\label{lemma:BasicConstructionDistance}
Assume that $G$ has no $2$-core, and let
\[
\bK\equiv\left[ %
\pmatrix{\bH_{F,U}^{-1}
\bH_{F,W}
\vspace*{2pt}\cr
I_{(n-m)\times
(n-m)} }
 \right],
\]
where $U$ and $W$ are constructed as in Lemma~\ref{lemma:structural},
we order
the variables as $U$ followed by $W$, and the matrix inverse is taken over
$\GF[2]$. Then the columns of $\bK$ form a basis of the kernel of
$\cS
$, which
is also the kernel of $\bH$. In addition, if $\bK_{i,j}=1$, then $d_G(i,j)
\leq\TC$.
\end{lemma}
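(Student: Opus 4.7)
The plan is to prove the two assertions separately. For the basis claim, Lemma \ref{lemma:structural} shows that $\bH_{F,U}$ is block upper-triangular with identity diagonal blocks, hence invertible over $\GF[2]$, so $\bK$ is well-defined. For each $j \in W$, the corresponding column $k^{(j)}$ of $\bK$ satisfies $\bH k^{(j)} = \bH_{F,U}(\bH_{F,U}^{-1} \bH_{F,W} e_j) + \bH_{F,W} e_j = 0$ over $\GF[2]$, so $k^{(j)} \in \ker(\bH)$. The identity block occupying the $W$-rows of $\bK$ makes the $|W| = n - m$ columns linearly independent, and since $\bH_{F,U}$ being invertible forces $\rank(\bH) = m$, we have $\dim \ker(\bH) = n - m$, matching the column count.

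For the distance bound, the case $i \in W$ is immediate from the identity block, so suppose $i \in U$. Fix $j \in W$ with $\bK_{i,j} = 1$ and let $x$ denote the $j$-th column of $\bK$, viewed as a vector on $V$; then $\bH x = 0$ and $x_W = e_j$. For each factor $f \in F_t$ there is, by the staircase construction, a unique variable $u_f$ forming $U_t \cap \partial f$ (the first element of the column class $\cC_{i(f)}$), and the $f$-th row of $\bH$ rewrites as
\[
  x_{u_f} \;=\; \sum_{v \in \partial f \setminus \{u_f\}} x_v \qquad (\text{over } \GF[2]).
\]
Combined with the block relation $\partial f \subseteq V_t \cup V_{t+1} \cup \cdots \cup V_{\TC}$ (since $\bH_{F_s, V_t} = 0$ for $s > t$), this supplies the local recursion that I will unfold to construct a short walk.

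The walk is built by backtracking. Set $u_{f_0} := i \in U_{t_0}$; since $x_{u_{f_0}} = 1$, the identity at $f_0$ forces some $v_1 \in \partial f_0 \setminus \{u_{f_0}\}$ with $x_{v_1} = 1$. There are three sub-cases: if $v_1 = j$, the walk $i - f_0 - j$ is complete; if $v_1 \in W \setminus \{j\}$ we get the contradiction $x_{v_1} = 0$; otherwise $v_1 \in U$ equals $u_{f_1}$ for a unique $f_1 \in F_{t_1}$ and we iterate. The key strict monotonicity $t_{k+1} > t_k$ holds because the only member of $\partial f_k$ lying in $U_{t_k}$ is $u_{f_k}$ itself, so any $v_{k+1} \in V_{t_k} \setminus \{u_{f_k}\}$ necessarily lies in $W$ and must therefore equal $j$. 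Since a strictly increasing sequence in $\{1,\ldots,\TC\}$ has length at most $\TC$, the procedure terminates after $r \leq \TC$ steps and yields a walk $i = u_{f_0} - f_0 - u_{f_1} - f_1 - \cdots - f_{r-1} - j$ in $G$ using $r$ check nodes, so $d_G(i,j) \leq \TC$. The only delicate point is this strict-monotonicity step, which relies on the specific staircase ordering used to split $V_t = U_t \cup W_t$; everything else is bookkeeping.
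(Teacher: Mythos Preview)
Your proof is correct and takes a genuinely different route from the paper. The paper argues by induction on $\TC$: at the inductive step it writes $\bH_{F,U}^{-1}$ as a $2\times 2$ block matrix (separating the first peeling layer $U_1,W_1,F_1$ from the rest), multiplies out, and reduces to the inductive hypothesis on $\peel(G)$. You instead give a direct path-construction argument: you read the equation $\bH x=0$ at each factor $f_k$ as $x_{u_{f_k}}=\sum_{v\in\partial f_k\setminus\{u_{f_k}\}}x_v$, pick a neighbor with $x_v=1$, and iterate. The strict monotonicity of peeling levels along the walk follows exactly as you say from two facts: (a) $\partial f_k\subseteq\bigcup_{t\ge t_k}V_t$ by the block upper-triangular structure, and (b) the staircase structure forces $\partial f_k\cap U_{t_k}=\{u_{f_k}\}$, so any same-level neighbor lies in $W$ and must be $j$. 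This is a clean and more elementary argument than the paper's: it bypasses the explicit block-inverse formula entirely and makes transparent why the bound is $\TC$ (the level index can increase at most $\TC-1$ times). The paper's inductive approach, by contrast, makes the recursive structure of peeling more visible and is perhaps easier to adapt if one wanted finer control (e.g., tracking which column of $\bK$ has which support pattern).
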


\begin{pf}
A standard linear algebra result shows that $\bK$ is a basis for the kernel
of $\bH$. The bottom identity block of $\bK$ corresponds to the $(n-m)$
independent variables $w \in W$, and in this block a $1$ only occurs if
the row
and column correspond to the same variable, that is, for $i,j \in W$,
$\bK_{i,j}=1$
implies $i=j$, and thus $d_G(i,j)=0$. To prove the distance claim for the
upper block of $\bK$, we proceed by induction on $\TC$. For a variable
$u\in
U$ that is peeled along with factor node $a\in F$, we will reference
$u$ via
the factor node it was peeled with as $u_a$.
\begin{itemize}
\item\textit{Induction base}:
For $\TC=1$, $\bH_{F,U} = I_m$, and thus
\[
\bK= \left[ \pmatrix{\bH_{F,W}
\vspace*{2pt}\cr
I_{(n-m)\times
(n-m)} }
 \right].
\]
Since $\TC=1$, note that ever variable node must be connected to no
more than~$1$ factor node. Thus, $(\bH_{F,W})_{a,i}=1$ implies that factor node
$a$ was
connected to independent variable node $i$. Thus, variables $i$ and
$u_a$ are
both adjacent to factor~$a$, and consequently $d_G(u_a,i)=1$.

\item\textit{Inductive step}:
Assume that $\TC=T+1$ and consider the graph $\peel(G) =
(F_\peel,V_\peel,E_\peel)$ (recall that $\peel$ denoted the peeling
operator).
By construction $\TC(\peel(G))=T$, and thus by the
inductive hypothesis the columns of
\begin{eqnarray*}
\bK_{\peel(G)} &\equiv&\left[ \matrix{\tilde{\bK}
\vspace*{2pt}\cr
I_{((n-n_1)-(m-m_1))\times((n-n_1)-(m-m_1))} }
 \right]\\
 & \equiv&\left[
\matrix{\bH_{F_\peel,U_\peel}^{-1}
\bH_{F_\peel,W_\peel}
\vspace*{2pt}\cr
I_{((n-n_1)-(m-m_1))\times((n-n_1)-(m-m_1))} }
 \right],
\end{eqnarray*}
form a basis for the kernel of $\bH_{\peel(G)}$,
where $F_\peel$, $U_\peel$, and $W_\peel$ refer to the set of factor
nodes of
the factor graph $\peel(G)$, and their corresponding partition, respectively.
In addition, $(\bK_{\peel(G)})_{a,i} = 1$ only if $d_{\peel
(G)}(u_a,i)\leq T$.
To extend this basis to a basis for the kernel of $\bH$, note that
\begin{eqnarray*}
\bK\equiv\left[ %
\matrix{\bH_{F,U}^{-1}
\bH_{F,W}
\vspace*{2pt}\cr
I_{(n-m)\times
(n-m)} }
 \right] &=& \left[
\matrix{
\pmatrix{
\bH_{F_1,U_1} & \bH_{F_1,U_\peel}
\vspace*{2pt}\cr
0 & \bH_{F_\peel,U_\peel}}^{-1} \pmatrix{
\bH_{F_1,W_1} & \bH_{F_1,W_\peel}
\vspace*{2pt}\cr
0 & \bH_{F_\peel,W_\peel} }
\vspace*{2pt}\cr
I_{(n-m)\times(n-m)}}
 \right]
\\
&=& \left[ \matrix{ \pmatrix {I_{|U_1|} & -\bH_{F_1,U_\peel}\bH_{F_\peel,U_\peel}^{-1}
\vspace*{2pt}\cr
0 & \bH_{F_\peel,U_\peel}^{-1}}
\pmatrix{ \bH_{F_1,W_1} & \bH_{F_1,W_\peel}
\vspace*{2pt}\cr
0 & \bH_{F_\peel,W_\peel} }
\vspace*{2pt}\cr
I_{(n-m)\times(n-m)}}
 \right]
\\
&=& \left[ \matrix{
\pmatrix{\bH_{F_1,W_1} & \bH_{F_1,W_\peel}+\bH_{F_1,U_\peel}\tilde{\bK}
\vspace*{2pt}\cr
0 & \tilde{\bK} }
\vspace*{2pt}\cr
I_{(n-m)\times(n-m)}}
 \right].
\end{eqnarray*}
By construction if $(\bH_{F_1,W_1})_{a,i}=1$, then $d_G(u_a,i)=1\leq T$.
Consider the $(a,i)$ entry of the matrix $B \equiv
\bH_{F_1,W_\peel}+\bH_{F_1,U_\peel}\tilde{\bK}$. A necessary
condition for
$B_{a,i}=1$ is the existence of an edge between check node $a \in F_1$ and
independent variable node $i\in W\setminus W_1=W_\peel$ [i.e.,
$(\bH_{F_1,W_\peel})_{a,i}=1$], or the existence of both an edge
between $a \in
F_1$ and dependent variable node $j \in U_\peel$ that is in the basis for
independent variable $i$ [i.e., ($(\bH_{F_1,U_\peel})_{a,j}=1$,
$\tilde{\bK}_{j,i}=1$)].

We note that if $d_{\peel(G)}(u_a,i)\leq T$, then $d_G(u_a,i)\leq T$
also, since
$E_\peel\subset E$. Thus, if $(\bH_{F_1,U_\peel})_{a,j}=1$,
$\tilde{\bK}_{j,i}=1$, then $d_G(u_a,i) \leq T+1$. Similarly, if
$(\bH_{F_1,W_\peel})_{a,i}=1$, then $d_G(u_a,i)=1$ as in the base case.
Thus, if $\bK_{i,j}=1$, then $d_G(i,j) \leq T+1 = \TC$.\quad\qed
\end{itemize}
\noqed\end{pf}

A direct result of this is the sparsity bound given below.

%
\begin{lemma}\label{lemma:BasicConstructionHelper}
For $\bK$ constructed as in Lemma~\ref
{lemma:BasicConstructionDistance}, the
columns of $\bK$ form an $s$-sparse basis for the kernel of $\bH$, with
\[
s \leq\max_{i\in V} \bigl|\Ball_G(i,\TC)\bigr|.
\]
\end{lemma}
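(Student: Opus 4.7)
The plan is to read off the sparsity bound as an immediate corollary of Lemma \ref{lemma:BasicConstructionDistance}, relying only on its second conclusion about where nonzero entries of $\bK$ can sit. Linear independence and the spanning property are already handled by the first half of that lemma, so the only work left is a column-by-column Hamming weight count.

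First I would index the columns of $\bK$ by the independent variables $j \in W$. For any such column, the bottom block contributes exactly one nonzero entry, namely a $1$ at row $j$ (from the identity block). The nonzero entries of the top block occur at positions $i \in U$ with $\bK_{i,j} = 1$; by the distance bound from Lemma \ref{lemma:BasicConstructionDistance}, every such $i$ satisfies $d_G(i,j) \leq \TC$. Since $j$ itself lies at distance $0$ from itself, the support of the entire column (as a vector in $\{0,1\}^V$) is contained in $\{i \in V : d_G(i,j) \leq \TC\}$, which is exactly the variable node set of $\Ball_G(j,\TC)$.

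Second, I would take Hamming weights. The weight of column $j$ is bounded by $|\Ball_G(j,\TC)|$, so uniformly across all columns the weight is bounded by $\max_{j \in W}|\Ball_G(j,\TC)| \leq \max_{i \in V}|\Ball_G(i,\TC)|$. Combined with the basis property, this yields the claimed $s$-sparsity with $s \leq \max_{i \in V}|\Ball_G(i,\TC)|$, as desired.

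There is no real obstacle here beyond what was already absorbed into Lemma \ref{lemma:BasicConstructionDistance}: the substantive content is the distance bound proved there by induction on $\TC$, which tracks how the formula $\bH_{F,U}^{-1}\bH_{F,W}$ unfolds across one round of peeling and propagates adjacencies only by one step in $G$ per round.
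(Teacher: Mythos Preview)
Your proposal is correct and matches the paper's proof essentially line for line: both observe that Lemma \ref{lemma:BasicConstructionDistance} forces the support of column $j$ to lie inside $\Ball_G(j,\TC)$, then take the maximum over $j$. The only difference is that you spell out the identity-block contribution explicitly, which the paper leaves implicit.
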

\begin{pf}
By Lemma~\ref{lemma:BasicConstructionDistance}, $d_G(a,i)\leq\TC$ is a
necessary condition for $\bK_{a,i}=1$. Thus, for all $i\in W$, the $i$th
column of $\bK$ can only contain $1$'s on the entries that correspond to
variables at distance at most $\TC$ from $i$. The result follows by
taking a
union bound over all $i \in W$.
\end{pf}

\begin{pf*}{Proof of Lemma~\ref{lemmabasic}}
Let
\[
\widehat{\bK} = {\mathbb L}\left[ \matrix{
\bQ_{F_*,U_*}^{-1}\bQ_{F_*,W_*}
\vspace*{2pt}\cr
I_{(n-m)\times(n-m)}}
 \right],
\]
where the matrix inverse is taken over $\GF[2]$. If $G_* \neq G$, then
all degree $2$ check nodes constrain their adjacent variable nodes to
the same
value. Therefore, all variables in the same connected component take on
the same value in
a satisfying solution, that is, for all $v_*\in V_*$, if $\bH\ux= 0$,
then for all
$i \in v_*$, either $x_i=0$ or \mbox{$x_i=1$}. Consequently,
$\bH\ux=0$ if and only if $\ux= {\mathbb L}\ux_*$ for some $\ux_*$
such that $\bQ\ux_*=0$
Thus, $\{\ux^{(1)},\ldots,\ux^{(N)}\}$ is a basis for the kernel of
$\bH$ if and only if $\ux^{(i)} = {\mathbb L}\ux_*^{(i)}$ and
$\{\ux_*^{(1)},\ldots,\ux_*^{(N)}\}$ is a basis for the kernel of
$\bQ$.

Finally notice that ${\mathbb L}\ux_*$ has $|v_*|$ nonzero entries for each
$v_*\in V_*$ such that $\ux_{*,v_*}\neq0$. Thus, the sparsity bound
follows as a direct extension of the bound from Lemma~\ref
{lemma:BasicConstructionHelper}, and the columns of $\widehat\bK$
form an $s$-sparse basis
for the kernel of $\bH$, with
\[
s \leq\max_{v_*\in V_*}S\bigl(v_*,\TC(G_*)\bigr).
\]
\upqed\end{pf*}

\section{Proofs of technical lemmas in Section \texorpdfstring{\lowercase{\protect\ref
{sec:collapse_peeling_fast}}}{5}}
\label{app:proofs_of_technical}

\begin{pf*}{Proof of Lemma~\ref{lemma:properties_of_peelable}}
Let $\omega\equiv\alpha R'(1)$. Define $f(z) \equiv1 - \lambda
(1-\rho(z)) =
1- \exp(-\alpha R'(1)\rho(z))$. We obtain
%
%
\begin{equation}
f'(0) = 2 \alpha R_2. \label{eq:fprime0_lb}
\end{equation}
Now, we know that $z_t \rightarrow0$ as $t \rightarrow\infty$, it
follows that
$\lim_{t \rightarrow\infty} z_{t+1}/z_t \rightarrow f'(0)$.
We then deduce from peelability at rate $\eta$ that
%
%
\begin{equation}
f'(0)\leq1- \eta .\label{eq:fprime0_ub}
\end{equation}
Combining equations \eqref{eq:fprime0_lb} and \eqref{eq:fprime0_ub}, we
obtain the
desired result (i).

In order to prove (ii) notice that, for the pair to be peelable,
need $z\le1-\exp(-\alpha R'(z))$ for all $z\in[0,1]$, that is,
%
%
\begin{equation}
R'{}^{-1}(x)\le1-e^{-\alpha x}\qquad
\mbox{for all $x\in\bigl[0,R'(1)\bigr]$,} %
\end{equation}
where $R'{}^{-1}$ is the inverse mapping of $z\mapsto
R'(z)$. We next integrate the above over $[0,R'(1)]$, using
%
%
\begin{eqnarray}
\int_0^{R'(1)} R'{}^{-1}(x)
\, \de x &=& \int_{0}^1w R''(w)
\,\de w = R'(1)-1,
\\
\int_0^{R'(1)} \bigl(1-e^{-\alpha x}\bigr) \,
\de x &=& R'(1)- \frac{1}{\alpha}\bigl(1-e^{-\alpha R'(1)}\bigr).
\end{eqnarray}
We thus obtain
%
%
\begin{equation}
1\ge\frac{1}{\alpha} \bigl(1-e^{-\alpha R'(1)} \bigr),
\end{equation}
which yields $\alpha\le1-e^{-\alpha R'(1)}<1$.
\end{pf*}

\begin{pf*}{Proof of Lemma~\ref{lemma:peeling_uniform_n1_n2}}
We use the notation $\mathbf(G)=(m_l(G))_{l=2}^k$ whereby $m_l(G)$ is the
number of check nodes of degree $l$in $G$.
Let
\begin{eqnarray*}
n_{1}^{(t)}& \equiv& n_1(J_t),\qquad
n_{2}^{(t)} \equiv n_2(J_t),\qquad
\bm^{(t)} \equiv\bm(J_t),
\\
\alpha^{(t)}& \equiv& \Biggl(\sum_{l=2}^{k}
m_l^{(t)} \Biggr)\bigg/n,\qquad R_l^{(t)} \equiv
m_l^{(t)}\bigg/ \Biggl(\sum_{l'=2}^{k}
m_{l'}^{(t)} \Biggr) \qquad\mbox{for }l\in\{2,3, \ldots, k\}.
\end{eqnarray*}
Note that $R^{(t)}$ defined above is, in fact, the check degree profile
of $J_t$.

As above, let $\peel(\cdot)$
denote the operator corresponding to one round of synchronous peeling
[so that
$J_t = \peel^t(G)$]. Define the set
\[
S(G; \hat{\bm}, \hat{n}_1, \hat{n}_2) \equiv \bigl\{
\wh{G}\dvtx n_1(\wh{G})=\hat{n}_1, n_2(
\wh{G})=\hat{n}_2, \bm (\wh{G}) = \hat{\bm}, \peel(\wh{G})=G \bigr\}.
\]

We prove the result by induction. By definition, we know that $J_0= G$
is drawn
uniformly from the $\C(n,R,\alpha n)$. Suppose,
conditioned on $\bm^{(t)}, n_1^{(t)}, n_2^{(t)}$, the graph $J_t$ is
drawn uniformly from
$\C(n,R^{(t)}, \alpha^{(t)}n)$. Let the probability of each possible~$J_t$ [with
parameters $(\bm^{(t)}, n_1^{(t)}, n_2^{(t)})$] be denoted by $q(\bm
^{(t)}, n_1^{(t)}, n_2^{(t)})$.
Consider a candidate graph $G'$ with parameters $(\bm', n_1', n_2')$.
We have
\begin{eqnarray*}
\prob\bigl[J_{t+1}=G'\bigr] &=& \sum
_{J_t\dvtx\peel(J_t)=G'} \prob[J_t]
\\
&= &\sum_{\hat{\bm}, \hat{n}_1, \hat{n}_2} \sum_{J_t \in S(G'; \hat{\bm}, \hat{n}_1, \hat{n}_2)}
\prob[J_t]
\\
&=& \sum_{\hat{\bm}, \hat{n}_1, \hat{n}_2} q(\hat{\bm}, \hat{n}_1,
\hat{n}_2) \bigl|S\bigl(G'; \hat{\bm}, \hat{n}_1,
\hat{n}_2\bigr)\bigr|.
\end{eqnarray*}
A straightforward count yields
\begin{eqnarray*}
&&\bigl|S\bigl(G'; \hat{\bm}, \hat{n}_1,
\hat{n}_2\bigr)\bigr|\\
&&\qquad = \pmatrix{n - n_1' -
n_2'
\cr
\hat{n}_1} \cdot\Delta! \cdot
\coeff \bigl[ \bigl(e^z-1\bigr)^{n_1'}\bigl(e^z
\bigr)^{n_2'} ; z^{\Delta- \hat{n}_1} \bigr] \cdot\bI\bigl[
\hat{n}_2 = n_1'+n_2'
\bigr],
\end{eqnarray*}
where $\Delta\equiv\sum_{l=1}^k (\hat{m}_l - m_l')l$. Thus,
$\prob[J_{t+1}=G']$ depends on $G'$ only through $(\bm', n_1', n_2')$.
\end{pf*}

To simplify the proof of Lemma~\ref{lemma:peeling_fast_on_trees}, we first
prove a simple technical lemma.

%
\begin{lemma}\label{lemma:tree_leaves_collapsed}
Let $G = (F,V,E)$ be a factor graph that is a tree with no check node
of degree $1$ or $2$,
rooted at a variable node $v$, with $|V|>1$.
Then $|\{u\in V\dvtx\deg(u) \leq1, u \neq v\}| \geq|V|/2$, that is, at
least half of all variable nodes are leaves.
(Here, a leaf is defined as a variable node that is distinct from the
root and has degree at most $1$.)
\end{lemma}

\begin{pf}
We proceed by induction on the maximum depth $t$ of the tree $G$ rooted
at $v$.
\begin{itemize}
\item\textit{Induction base}:
For a tree of depth 1, let $c = \deg(v) > 0$.
Since all check nodes have degree $3$ or more, $G$ has $\Nl\geq2c$
leaves and $|V| = \Nl+ 1$. Clearly, $\Nl\geq|V|/2$.
\item\textit{Inductive step}:
Consider $G$ having depth $t+1$ and perform $1$
round of synchronous peeling, resulting in $\peel(G) = G' =
(F',V',E')$. Let
$\Nl'$ be the number of leaves in $V'$. The inductive hypothesis implies
$|V'| \leq2 \Nl'$, since $G'$ is also a tree. Since, by construction, every
factor node has degree at least $3$ in $G$, every leaf in $G'$ must have
at least $2$ leaves in $G$ as descendants, that is, $2\Nl' \leq\Nl$, where
$\Nl$ is the number of leaves in $G$. Combining these two inequalities yields
\[
|V| = \bigl|V'\bigr| + \Nl\leq2 \Nl' + \Nl\leq2\Nl,
\]
as desired.\quad\qed
\end{itemize}
\noqed\end{pf}

\begin{pf*}{Proof of Lemma~\ref{lemma:peeling_fast_on_trees}}
By Lemma~\ref{lemma:tree_leaves_collapsed}, if $G$ is a tree, at least one-half
of all variable nodes are leaves at every stage of peeling. Thus, $G$ is
peelable and $\TC(G) \leq\lceil\log_2 |V|\rceil$. (After $\lceil
\log_2
|V|\rceil-1$ rounds of peeling, we have $2$ or less variable nodes
remaining, and hence no checks. At most one more round of peeling leads
to annihilation.)

Now suppose $G$ is unicyclic. Each factor in the cycle has degree at
least $3$, hence it has a neighbor outside the cycle and must
eventually get peeled. Breaking ties arbitrarily, let $a$ be the first
factor in the cycle to be peeled, and let $u \in\partial a$ be the
variable node that ``causes'' it to get peeled (clearly $u$ is not in the
cycle). Let $t_u \leq\TC(G)$ be the
peeling round in which $u$ and $a$ are peeled. Consider the subtree
$G_u = (F_u, V_u, E_u)$ rooted at $u$ defined as follows: $G_u$ is the
maximal connected subgraph of $G$ that includes $u$, but not $a$. Using
Lemma~\ref{lemma:tree_leaves_collapsed} on this subtree and reasoning
as above, we have $t_u \leq\lceil\log_2 |V_u|\rceil\leq\lceil
\log_2
|V|\rceil$.

As at least one factor node in the unicycle is peeled in round $t_u$,
we must
have that $J_{t_u}$ is a tree or forest, which by Lemma~\ref
{lemma:tree_leaves_collapsed} can be peeled in at most $\lceil
\log_2
|V|\rceil$ additional iterations, since the number of variable nodes
in the
$J_{t_u}$ is at most $|V|$. Thus, $\TC(G) \leq t_u + \lceil\log_2
|V|\rceil$. Combining these two inequalities yields
\[
\TC(G) \leq t_u + \bigl\lceil\log_2 |V|\bigr\rceil\leq2\bigl\lceil
\log_2 |V|\bigr\rceil.
\]
\upqed\end{pf*}

\begin{pf*}{Proof of Lemma~\ref{lemma:GW_bound}}
The lemma can be derived from known results (see, e.g., \cite
{Branching}), but
we find it easier to provide an independent proof.

We use a generating function approach to prove the bound
%
%
\begin{equation}
\prob \bigl[ Z_T > (\beta\theta)^T \bigr] \leq2 \exp
\bigl(-\const(\beta/2)^{T}\bigr).
\end{equation}
Equation \eqref{eq:GWBound} follows (eventually for a different
constant $C$) via union bound.

Define $f(s) \equiv\E[s^{Z_1}] = \sum_{j=0}^\infty s^{j}
b_j$. By assumption, it is clear that $f(s)$ is finite for $s \in(0,
1/(1-\delta))$. Define $f^{(t)}(s) \equiv\E[s^{Z_t}]$ for $t \geq1$
[so that $f(s) = f^{(1)}(s)$]. It is well known that
%
%
\begin{equation}
\label{eq:leveltau_generatingfn} f^{(t)}(s) = f\bigl(f^{(t-1)}(s)\bigr)
\end{equation}
for $\tau\geq2$. It follows that $f^{(t)}(s)$ is finite for $s \in(0,
1/(1-\delta))$, and all $\tau\geq2$.

By dominated convergence $f$ is differentiable at $0$ with $f'(0) =
\theta$. Hence, there exists $\ve_0>0$ such that, for all $\ve\in
[0,\ve_0]$
%
%
\begin{equation}
f(1+\ve) \leq1 + 2 \theta\ve.
\end{equation}
By applying the recursion (\ref{eq:leveltau_generatingfn}) and the
fact that $f$ is monotone increasing, we obtain, for all $\ve\in
[0,\ve_0]$
obtain
%
%
\begin{equation}
f^{(T)}(1+\ve) \leq1 + (2 \theta)^T\ve.
\end{equation}
In particular setting $\ve= \ve_0/(2 \theta)^T$, we get
$f^{(T)}(1+\ve) \le1+\ve_0\le2$.

Finally, by Markov inequality,
\begin{eqnarray*}
\prob \bigl\{Z_T \geq(\beta\theta)^T \bigr\} &\leq& (1+
\ve)^{-(\beta\theta)^T} f^{(T)}(1+\ve)
\\
&\le&2 \biggl(1-\frac{\ve}{2} \biggr)^{(\beta\theta)^T}\le2 e^{-(\beta
\theta)^T/2},
\end{eqnarray*}
which completes the proof.
\end{pf*}
u
\section{Proof of Technical Lemmas of Section \texorpdfstring{\lowercase{\protect\ref{secperrr}}}{6}}
\label{app:periphery}

\begin{pf*}{Proof of Lemma~\ref{lem:monotonocity_bba}}
We prove this lemma by induction. Let $\Bl^{(t)}$ and $\Bu^{(t)}$ be
the result
of $t$ steps of backbone augmentation on graphs $G_{\mathrm s}$ and $G$
with initial
graphs $\Bl^{(0)}$ and $\Bu^{(0)}$, respectively. By assumption $\Bl^{(0)}
\subseteq\Bu^{(0)}$. Now assume $\Bl^{(t)} \subseteq\Bu^{(t)}$. It
is enough to
show that if $a \in\Bl^{(t+1)} \setminus\Bl^{(t)}$ then $a \in\Bu
^{(t+1)}$.
Since $a \in\Bl^{(t+1)} \setminus\Bl^{(t)}$, we know that $a \in G$
and has at most
one neighbor outside of $\Bl^{(t)} $. By induction assumption $\Bl^{(t)}
\subseteq\Bu^{(t)}$ and, therefore, $a$ has at most one neighbor outside
$\Bl^{(t)}$. Hence, either $a \in\Bu^{(t)}$ or it is added to $\Bu
^{(\infty)}$ at
step $t+1$.
\end{pf*}

\begin{pf*}{Proof of Lemma~\ref{lem:bdd_branch_fact}}
Define $f(x) = 1-\exp\{-k\alpha x^{k-1}\}$.
It follows immediately from the definition of
$\alpha_{\mathrm{d}}(k)$, that for $\alpha> \ad(k)$, we have $Q >
0$ and
$f'(Q) \le1$.
Furthermore, a straightforward calculation yields
%
%
\begin{equation}
f'(Q) = k(k-1)\alpha Q^{k-2} \exp\bigl\{-k\alpha
Q^{k-1}\bigr\}.
\end{equation}
It is therefore sufficient to exclude the case $f'(Q)=1$.
Solving the equations $f(Q) = Q$ and $f'(Q)=1$, we get the following equation
for $Q$:
%
%
\begin{equation}
-(1-Q)\log(1-Q) = \frac{Q}{k-1}, %
\end{equation}
which has a unique solution $Q_*(k)$ due to the concavity of the
left-hand side. We can then solve for $\alpha$ yielding the unique value
$\alpha=\alpha_*(k)$ such that $f(Q)=Q$ and $f'(Q)=1$ admits a
solution. On the other hand, these two equations are satisfied at
$\ad(k)$ by a continuity argument. It follows that $\ad(k)
=\alpha_*(k)$, and hence $f'(Q)<1$ for all $\alpha>\ad(k)$.
\end{pf*}
\end{appendix}

\section*{Acknowledgements}
While this paper was being finished, we became aware that Dimitris Achlioptas
and Michael Molloy concurrently obtained related results on the same
problem. The two papers are independent. Further, they use different
techniques and
establish somewhat different results.

%

%
%






\printaddresses
\end{document}